\def\t{{\theta}}
\def\l{{\lambda}}
\def\d{{\delta}}
\def\b{{\beta}}
\def\a{{\alpha}}
\def\e{{\varepsilon}}
\def\beq{\begin{equation}}
\def\eeq{\end{equation}}
\definecolor{purple}{rgb}{.5,0,1}
\newcommand{\Z}{{\mathbb Z}}
\newcommand{\R}{{\mathbb R}}
\newcommand{\C}{{\mathbb C}}
\newcommand{\D}{{\mathbb D}}
\newcommand{\PP}{{\mathbb P}}
\newcommand{\bbS}{{\mathbb S}}
\newcommand{\bbI}{{\mathbb I}}
\newcommand{\pD}{{\partial\mathbb{D}}}
\newcommand{\CA}{{\mathcal A}}
\newcommand{\CB}{{\mathcal B}}
\newcommand{\CC}{{\mathcal C}}
\newcommand{\CD}{{\mathcal D}}
\newcommand{\CE}{{\mathcal E}}
\newcommand{\CF}{{\mathcal F}}
\newcommand{\CG}{{\mathcal G}}
\newcommand{\CI}{{\mathcal I}}
\newcommand{\CJ}{{\mathcal J}}
\newcommand{\CL}{{\mathcal L}}
\newcommand{\CM}{{\mathcal M}}
\newcommand{\CR}{{\mathcal R}}
\newcommand{\CU}{{\mathcal U}}
\newcommand{\dd}{{\mathrm{d}}}
\newcommand{\SL}{{\mathrm{SL}}}
\newcommand{\GL}{{\mathrm{GL}}}
\newcommand{\SU}{{\mathrm{SU}}}
\renewcommand{\Im}{{\mathrm{Im}}}
\newcommand{\set}[1]{\left\{#1\right\}}
\newcommand{\norm}[1]{\left\Vert#1\right\Vert}
\newcommand{\vabs}[1]{\left\vert#1\right\vert}
\DeclareMathOperator{\supp}{{supp}}
\newcommand{\sprod}[2]{\left<#1,#2\right>}
\newcommand{\base}{K}
\newcommand{\lbase}{K}
\newtheorem{theorem}{Theorem}[section]
\theoremstyle{definition}
\newtheorem*{remark*}{Remark}
\theoremstyle{definition}
\newtheorem{remark}[theorem]{Remark}
\theoremstyle{definition}
\newtheorem{defi}[theorem]{Definition}
\newtheorem{lemma}[theorem]{Lemma}
\newtheorem{prop}[theorem]{Proposition}
\newtheorem{corollary}[theorem]{Corollary}
\newtheorem{claim}{Claim}
\numberwithin{theorem}{section}
\numberwithin{equation}{section}
\begin{document}

\author[V.\ Bucaj]{Valmir Bucaj}
\address{Department of Mathematics, Rice University, Houston, TX~77005, USA}
\email{valmir.bucaj@rice.edu}
\thanks{V.B., D.D., V.G., T.V.\ were supported in part by NSF grant DMS--1361625.}

\author[D.\ Damanik]{David Damanik}
\address{Department of Mathematics, Rice University, Houston, TX~77005, USA}
\email{damanik@rice.edu}

\author[J.\ Fillman]{Jake Fillman}
\address{Department of Mathematics, Virginia Polytechnic Institute and State University, 225 Stanger Street -- 0123, Blacksburg, VA~24061, USA}
\email{fillman@vt.edu}
\thanks{J.F.\ was supported in part by an AMS-Simons travel grant, 2016--2018}

\author[V.\ Gerbuz]{Vitaly Gerbuz}
\address{Department of Mathematics, Rice University, Houston, TX~77005, USA}
\email{vitaly.gerbuz@rice.edu}

\author[T.\ VandenBoom]{Tom VandenBoom}
\address{Department of Mathematics, Rice University, Houston, TX~77005, USA}
\email{tv4@rice.edu}

\author[F.\ Wang]{Fengpeng Wang}
\address{School of Mathematical Sciences, Ocean University of China, Qingdao, China 266100 and Department of Mathematics, Rice University, Houston, TX~77005, USA}
\email{fw13@rice.edu}
\thanks{F.W.\ was supported by CSC (No.201606330003) and NSFC (No.11571327).}

\author[Z.\ Zhang]{Zhenghe Zhang}
\address{Department of Mathematics, Rice University, Houston, TX~77005, USA}
\email{zzhang@rice.edu}
\thanks{Z.Z.\ was supported in part by an AMS-Simons travel grant, 2014--2016}

\title[Localization for the 1D Anderson model]{Localization for the One-Dimensional Anderson model via Positivity and Large Deviations for\\ the Lyapunov Exponent}

\begin{abstract}
We provide a complete and self-contained proof of spectral and dynamical localization for the one-dimensional Anderson model, starting from the positivity of the Lyapunov exponent provided by F\"urstenberg's theorem. That is, a Schr\"odinger operator in $\ell^2(\Z)$ whose potential is given by independent identically distributed (i.i.d.) random variables almost surely has pure point spectrum with exponentially decaying eigenfunctions and its unitary group exhibits exponential off-diagonal decay, uniformly in time.  This is achieved by way of a new result: for the Anderson model, one typically has Lyapunov behavior for all generalized eigenfunctions. We also explain how to obtain analogous statements for extended CMV matrices whose Verblunsky coefficients are i.i.d., as well as for half-line analogs of these models.
\end{abstract}

\maketitle

\setcounter{tocdepth}{1}
\tableofcontents

\section{Introduction}

\subsection{The Goal in a Nutshell}

This paper is centered around the following fundamental result:

\begin{theorem}[spectral localization for the 1D Anderson model]\label{t.main}
Consider the family $\{H_\omega\}_{\omega \in \Omega}$ of random Schr\"odinger operators, acting in $\ell^2(\Z)$ via
$$
[H_\omega \psi](n) = \psi(n+1) + \psi(n-1) + V_\omega(n) \psi(n),
$$
where the potential $V_\omega$ is given by independent identically distributed random variables. It is assumed that the common distribution has a compact support that contains at least two elements. Then, almost surely, $H_\omega$ is spectrally localized, that is, it has pure point spectrum with exponentially decaying eigenfunctions.
\end{theorem}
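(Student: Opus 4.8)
The plan is to deduce Theorem~\ref{t.main} from two consequences of F\"urstenberg's theorem: (i) uniform positivity and continuity of the Lyapunov exponent $L(E)$ on compact energy intervals, and (ii) a large deviation theorem (LDT) for the i.i.d.\ Schr\"odinger cocycle. Write $A_n(E,\o)$ for the $n$-step transfer matrix at energy $E$. By Schnol's theorem, the spectral measures of $H_\o$ are carried by the set of generalized eigenvalues — those $E$ for which $H_\o u = Eu$ has a nonzero polynomially bounded solution. Hence it suffices to establish the following: \emph{almost surely, for every generalized eigenvalue $E$, every polynomially bounded solution of $H_\o u = Eu$ decays exponentially at rate $L(E)$}. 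Granting this, all generalized eigenfunctions lie in $\ell^2(\Z)$ and are complete with respect to the spectral measures, so $H_\o$ has pure point spectrum with exponentially decaying eigenfunctions, i.e.\ spectral localization.

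I would first isolate the deterministic part of the argument. Fix $\o$, an energy $E$, and a nonzero polynomially bounded solution $u$, with $U(n) = (u(n+1),u(n))^\top = A_n(E,\o)U(0)$. If one knows the two-sided Lyapunov behavior $\tfrac1n\log\|A_n(E,\o)\| \to L(E) > 0$ as $n \to \pm\infty$, then Ruelle's theorem, applied to the fixed sequence of matrices, produces one-dimensional ``stable'' subspaces $E^s_{\pm}$ at the origin on which $\|A_n(E,\o)v\|$ decays like $e^{-|n|L(E)}$ (as $n \to \pm\infty$), with exponential growth off these lines. A polynomially bounded $u$ forces $U(0) \in E^s_+ \cap E^s_-$; if $E^s_+ \ne E^s_-$ this gives $U(0) = 0$, a contradiction, so $E^s_+ = E^s_-$ and $u$ decays exponentially at rate $L(E)$ at both ends. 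Thus the problem reduces to: \emph{almost surely, two-sided Lyapunov behavior holds at every generalized eigenvalue.} The engine for this is the LDT: for every compact $I \subset \R$ there are $C,\xi > 0$ and $N_0$ so that $\PP\{\,|\tfrac1n\log\|A_n(E,\o)\| - L(E)| > \e\,\} \le e^{-C\e^2 n}$ for $E \in I$, $n \ge N_0$, and small $\e$; I would derive this from the i.i.d.\ structure together with F\"urstenberg positivity, adapting the classical large deviation estimates for products of i.i.d.\ $\SL(2,\R)$ matrices (or by a subadditivity-plus-concentration argument). Combined with Cram\'er's rule, which expresses $G_\o^{[-L,L]}(E;x,y)$ as a ratio of products of transfer-matrix entries to $\det(H_\o^{[-L,L]} - E)$, the LDT yields for each fixed $E$ an event of probability $\ge 1 - e^{-cL}$ on which the finite-volume Green's function decays like $e^{-(L(E) - 2\e)|x-y|}$ for $|x-y|$ of order $L$.

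It remains to pass from this fixed-$E$ statement to \emph{all} generalized eigenvalues simultaneously, and crucially to do so for arbitrary compactly supported single-site distributions with at least two points — in particular for Bernoulli, where no Wegner estimate is available. Here is where I expect the real work and the main obstacle to lie. A polynomially bounded solution $u$ at energy $E$ is an approximate eigenfunction of $H_\o^{[-L,L]}$ with boundary error $O(1)$, so $\dist(E, \mathrm{spec}\, H_\o^{[-L,L]}) \le C L^{p}/\|\,u \text{ on } [-L,L]\,\|$; consequently, if $u$ fails to decay, then $E$ lies in a shrinking neighborhood of one of the $2L+1$ eigenvalues of the box. Using the a priori polynomial-in-$E$ bound on $\|A_n(E,\o)\|$ over $I$ to replace $I$ by an energy net of polynomial cardinality, the LDT and Borel--Cantelli show that, almost surely, for all large $L$ the set $\CB_L(\o)$ of energies at which Green's-function decay at scale $L$ fails is a union of $O(L)$ intervals of total length shrinking to $0$. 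One then bounds $\E[\mu_\o(\CB_L)]$ directly from the Green's-function (resp.\ rank-one) representation of the spectral measure — this is the step that substitutes for the missing Wegner bound and that must be carried out carefully to accommodate singular distributions — so that $\mu_\o$ charges $\limsup_L \CB_L(\o) \supseteq \{E : \text{two-sided Lyapunov behavior fails}\}$ with zero mass almost surely. Since by Schnol $\mu_\o$-a.e.\ $E$ is a generalized eigenvalue, two-sided Lyapunov behavior holds at $\mu_\o$-a.e.\ generalized eigenvalue, and combined with the deterministic reduction of the second paragraph this yields Theorem~\ref{t.main}. The main difficulty, as indicated, is the elimination-of-energies step without a Wegner estimate: controlling $\E[\mu_\o(\CB_L)]$ and coupling the probabilistic LDT to the spectral measure so that the argument survives for finitely supported potentials.
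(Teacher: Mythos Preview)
Your overall architecture matches the paper: F\"urstenberg gives positivity and continuity of $L$, a uniform LDT follows, Green's functions are controlled via Cram\'er's rule and transfer-matrix bounds, and the deterministic reduction via Ruelle's theorem is exactly right. The gap is in the step you yourself flag as the main difficulty: bounding $\E[\mu_\omega(\CB_L)]$ for singular single-site distributions. You propose to do this ``directly from the Green's-function (resp.\ rank-one) representation of the spectral measure,'' but this is precisely what fails without a Wegner estimate or an absolutely continuous component. For Bernoulli $\widetilde\mu$, the spectral measure $\mu_\omega$ is (a posteriori) pure point, so knowing that $\CB_L(\omega)$ is a union of $O(L)$ intervals of small \emph{Lebesgue} length says nothing about $\mu_\omega(\CB_L)$; rank-one/Simon--Wolff arguments that turn Green's-function decay into spectral-measure bounds require an a.c.\ component to average over. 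You have identified the right obstacle but not supplied a mechanism to cross it.

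The paper avoids estimating $\mu_\omega(\CB_L)$ altogether and instead proves the stronger statement that two-sided Lyapunov behavior holds at \emph{every} generalized eigenvalue (not just $\mu_\omega$-a.e.). The key device is an \emph{elimination of double resonances} that exploits independence rather than regularity of $\widetilde\mu$: if $u$ is a polynomially bounded solution with $u(\zeta)=1$, one shows (via two good Green-function windows on either side of $\zeta$, found through the averaged LDT) that $\|G^E_{T^\zeta\omega,[-N_1,N_2]}\| \ge e^{K^2}$ for suitable $N_1,N_2 \le K^9$. The bad event that this \emph{and} $\tfrac1m\log\|M_m^E(T^{\zeta+r}\omega)\| \le L(E)-\e$ both occur for some $K^{10}\le r\le \overline K$, $m\in\{K,2K\}$, depends on \emph{disjoint} blocks of coordinates (the box $[\zeta-N_1,\zeta+N_2]$ versus $[\zeta+r,\zeta+r+m)$), so its probability factors and is killed by the LDT alone---no Wegner needed. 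On the complementary full-measure set, the resonance at $\zeta$ forces lower bounds $\tfrac1m\log\|M_m^E(T^{\zeta+r}\omega)\| > L(E)-\e$ for all such $r$ and $m\in\{K,2K\}$, which is exactly the input to the Avalanche Principle, yielding $\liminf_n \tfrac1n\log\|M_n^E(\omega)\|\ge L(E)$. This independence-based decoupling is the missing idea in your proposal.
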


We will provide a complete and relatively elementary derivation of this result, starting from the classical F\"urstenberg theorem about products of random matrices. In particular, as opposed to all previously published proofs of this result, we will not appeal to multi-scale analysis (MSA) as a black box.  The advantages in so doing extend beyond a mere simplification of the proof: in fact, our methods establish improve on previous asymptotic estimates and can be applied to prove spectral localization in other one-dimensional models which were previously inaccessible.

\subsection{What This Paper Accomplishes} \label{ssec:accomplish}


The key realization in this paper relates Lyapunov behavior and the existence of polynomially bounded solutions to the time-independent Schr\"odinger equation
\begin{equation}\label{eq:deve}
u(n+1) + u(n-1) + V_\omega(n) u(n) = E u(n).
\end{equation}
This difference equation admits a two-dimensional solution space, as any two consecutive values of $u$ determine all other values. Fixing $(u(0),u(-1))^\top$ as the point of reference, the linear map taking this vector to $(u(n),u(n-1))^\top$ is given by the so-called transfer matrix $M_n^E(\omega)$. Ergodicity of the full shift implies that for each $E$, there are $L(E) \ge 0 $ and $\Omega^E_-, \Omega^E_+ \subseteq \Omega$ with $\mu(\Omega^E_-) = \mu(\Omega^E_+) = 1$ such that
$$
L(E) = \begin{cases} \lim_{n \to \infty} \frac{1}{n} \log \| M_n^E(\omega) \| & \text{ for } \omega \in \Omega^E_+, \\ \lim_{n \to - \infty} \frac{1}{|n|} \log \| M_n^E(\omega) \| & \text{ for } \omega \in \Omega^E_-. \end{cases}
$$
The number $L(E)$ or the function $L(\cdot)$ are called the \emph{Lyapunov exponent}. F\"urstenberg's theorem implies that, in fact, $L(E) > 0$ for every $E$. Now, if $\omega \in \Omega^E_+$ (resp., $\omega \in \Omega^E_-$), then due to a result of Osceledec, there is a one-dimensional subspace of the solution space in which every element decays exponentially at $\infty$ (resp., $- \infty$), while every linearly independent solution grows exponentially at $\infty$ (resp., $- \infty$). More precisely, the rate is given by the Lyapunov exponent, that is, the decaying solutions obey
$$
\lim_{n \to \pm \infty} \frac{1}{|n|} \log (|u(n)|^2 + |u(n-1)|^2)^{1/2} = - L(E),
$$
while growing solutions obey
$$
\lim_{n \to \pm \infty} \frac{1}{|n|} \log (|u(n)^2 + |u(n-1)|^2)^{1/2} = L(E).
$$
These decay/growth statements are sometimes referred to as \emph{Lyapunov behavior}.

Let $\CG(H_\omega)$ denote the set of energies $E$ for which the difference equation \eqref{eq:deve} admits a non-trivial solution $u$ satisfying a linear upper bound,
\begin{equation}\label{eq:polyBounds}
|u(n)| \le C_u (1 + |n|)
\end{equation}
with $C_u$ a $u$-dependent constant. Energies $E$ in $\mathcal{G}(H_\omega)$ are called \emph{generalized eigenvalues} of $H_\omega$, and the corresponding linearly bounded solutions $u$ are called \emph{generalized eigenfunctions}. It is a classical theorem of Schnol \cite{Schnol1981} and Simon \cite{Simon1981JFA} that
\[
\mathcal G (H_\omega) \subseteq \sigma(H_\omega)
\text{ and }
\chi_{\R \setminus \mathcal G (H_\omega)}(H_\omega)
=0.
\]
In particular, $\mathcal G (H_\omega)$ supports all spectral measures of $H_\omega$.

Our proof of Theorem \ref{t.main} evades MSA by proving that every generalized eigenfunction exhibits Lyapunov behavior:

\begin{theorem}
\label{t:LyapConvGenEvals}
For $\mu$-almost every $\omega\in \Omega$ and every $E \in \mathcal G(H_\omega)$, one has
\[
\lim_{n\to\infty} \frac{1}{n}\log\|M_n^E(\omega)\|
=
\lim_{n \to -\infty} \frac{1}{|n|} \log\| M_{n}^E(\omega) \|
=
L(E).
\]
\end{theorem}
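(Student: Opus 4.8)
The plan is to derive the theorem from a large deviation estimate for the Lyapunov exponent together with the one piece of extra information that distinguishes generalized eigenfunctions from arbitrary solutions, namely the a priori polynomial bound \eqref{eq:polyBounds}. The pointwise-in-$E$ statement is already contained in the Osceledec-type description recalled above, valid on the full-measure sets $\Omega^E_\pm$; the real task is to upgrade this to a single full-measure set of $\omega$ that works \emph{simultaneously} for every $E$ in the random set $\CG(H_\omega)$. Because $\CG(H_\omega)$ depends on $\omega$, one cannot restrict to a fixed countable set of energies and intersect the corresponding full-measure sets of $\omega$; every estimate must first be made uniform in $E$. Throughout I work on the deterministic compact set $\Sigma$ with $\sigma(H_\omega)=\Sigma$ almost surely, using that F\"urstenberg's theorem plus continuity of $L(\cdot)$ give $0<\inf_\Sigma L\le\sup_\Sigma L<\infty$, and that $\tfrac1n\log\|M_n^E(\omega)\|\le C_0$ for a deterministic $C_0$, all $n\ge 1$, $\omega\in\Omega$, and $E$ in a fixed complex neighborhood of $\Sigma$.

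\emph{Step 1: a uniform upper bound.} First I would prove that for $\mu$-a.e.\ $\omega$ one has $\limsup_{n\to\pm\infty}\tfrac1n\log\|M_n^E(\omega)\|\le L(E)$ simultaneously for all $E$ near $\Sigma$. Applying the large deviation estimate at a grid of energies and invoking Borel--Cantelli controls $\tfrac1n\log\|M_n^{E'}(\omega)\|$ to within $\e$ of $L(E')$ at every grid point once $n$ is large. The passage to an arbitrary energy cannot be made with the naive bound $\|M_n^E-M_n^{E'}\|\le C e^{C_0 n}|E-E'|$, whose constant is exponentially large; instead one uses that $E\mapsto\tfrac1n\log\|M_n^E(\omega)\|$ is, up to an $O(1/n)$ error, subharmonic on $\C$ (it is comparable to $\tfrac1{2n}\log$ of a sum of squared moduli of polynomials in $E$ of degree at most $n$, which is genuinely subharmonic), with Riesz mass bounded uniformly in $n$ and $\omega$ since $0\le\tfrac1n\log\|M_n^E(\omega)\|\le C_0$. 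That uniform bound on the Riesz mass is the key input converting the grid estimates into the claimed bound at every $E$. The case $n\to-\infty$ reduces to $n\to+\infty$ by reflecting the potential, which again yields an i.i.d.\ model.

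\emph{Step 2: the matching lower bound along generalized eigenvalues.} Fix $\omega$ in the full-measure set of Step 1, let $E\in\CG(H_\omega)$, and let $u$ be a generalized eigenfunction; set $v_0=(u(0),u(-1))^\top$, which is nonzero since $u\not\equiv 0$, normalized so that $\|v_0\|=1$. Then \eqref{eq:polyBounds} gives
\[
\|M_n^E(\omega)\,v_0\|=\bigl(|u(n)|^2+|u(n-1)|^2\bigr)^{1/2}\le C_u(1+|n|),\qquad n\in\Z.
\]
Since $\det M_n^E(\omega)=1$, the singular values of $M_n^E(\omega)$ are $\sigma_n$ and $\sigma_n^{-1}$ with $\sigma_n=\|M_n^E(\omega)\|$, so writing $v_0$ in the singular-value decomposition shows that the coefficient of $v_0$ along the leading right singular vector is at most $C_u(1+|n|)/\sigma_n$; hence $v_0$ is asymptotically forced onto the Osceledec contracting direction, and this is consistent only if $\sigma_n$ grows at the full rate $L(E)$ rather than sub-exponentially. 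To make this uniform over the random set $\CG(H_\omega)$, I would again bring in the large deviation estimate, now in the form that $\mu\{\omega:\|M_n^E(\omega)\|<e^{(L(E)-\e)n}\}$ is small, along a grid of energies: the polynomial bound is precisely what allows the energy-perturbation step to be carried out at a coarse enough grid scale for Borel--Cantelli to close the argument, where the crude Lipschitz bound would not. Steps 1 and 2 together show that both one-sided limits exist and equal $L(E)$ for every $E\in\CG(H_\omega)$; the statement at $-\infty$ is obtained symmetrically.

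I expect the main obstacle to be exactly the randomness of $\CG(H_\omega)$ described above, which rules out the naive ``fix $E$, discard a null set'' argument and forces all estimates to be uniform in $E$ before the generalized eigenfunction attached to a particular $\omega$ is used. The technical heart of attaining that uniformity is taming the exponentially large $E$-dependence of $M_n^E$: this is what the subharmonicity argument handles in Step 1 and what the a priori polynomial bound \eqref{eq:polyBounds} handles in Step 2.
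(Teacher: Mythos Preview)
Your Step~1 is sound and achieves the same upper bound as the paper, though by a different mechanism: you invoke subharmonicity of $E\mapsto\tfrac1n\log\|M_n^E(\omega)\|$ to interpolate from a grid, whereas the paper avoids subharmonicity entirely by using an averaging trick (Proposition~\ref{prop:meanLDTshifted_sigma}) over $n^2$ consecutive blocks of length $n$, which upgrades the LDT probability bound from $e^{-\eta n}$ to roughly $e^{-\eta n^2}$ and so beats a grid of cardinality $\Gamma^n$ via the crude Lipschitz bound in Lemma~\ref{lem:Fnlipschitz}. Both routes work here.

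Your Step~2, however, has a genuine gap. The singular-value argument you sketch shows only that \emph{if} $\sigma_n=\|M_n^E(\omega)\|$ is large then the component of $v_0$ along the expanding singular direction is small; it does not force $\sigma_n$ to be large in the first place. Indeed the paper explicitly notes (citing Gorodetski--Kleptsyn) that for a.e.\ $\omega$ there is a dense $G_\delta$ of energies $E\in\Sigma$ with $\liminf_n\tfrac1n\log\|M_n^E(\omega)\|=0$, and at any such $E$ a polynomially bounded $M_n^E(\omega)v_0$ is perfectly compatible with sub-exponential $\sigma_n$. Your proposed fix---running the LDT lower bound along a grid and interpolating---does not close either: lower bounds on $\|M_n^E\|$ do not interpolate between grid points (the matrix entries are polynomials in $E$ and can vanish between grid values), and the polynomial bound on $u$ gives no leverage on the \emph{matrix norm} at nearby energies.

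What the paper does instead is substantially different and is the technical heart of the argument. The existence of a generalized eigenfunction with $u(\zeta)=1$ is used, via Green-function estimates on well-chosen boxes (Claims~\ref{cl:grnfcts} and~\ref{claim:2}), to force a \emph{resonance} near the origin: $\|G^E_{T^\zeta\omega,[-N_1,N_2]}\|\ge e^{K^2}$ for some $N_1,N_2\le K^9$. The lower bound on $\|M_m^E(T^{\zeta+r}\omega)\|$ at far-away sites $r$ is then obtained not from any direct LDT argument but from the \emph{elimination of double resonances} (Proposition~\ref{prop:doubleRes}): independence of disjoint coordinate blocks makes the simultaneous occurrence of the near-origin resonance and a far-away norm deficit exponentially unlikely, and Borel--Cantelli removes it. Finally, the resulting pointwise lower bounds on $\|M_K^E\|$ and $\|M_{2K}^E\|$ over a long stretch are assembled into the global lower bound $\tfrac1n\log\|M_n^E(\omega)\|\ge L(E)-O(\varepsilon)$ via the Avalanche Principle (Lemma~\ref{l.avlanche-principle}). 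None of these three ingredients---the forced resonance, the double-resonance elimination, or the Avalanche Principle---appears in your proposal, and I do not see how to obtain the lower bound without them or a comparable substitute.
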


Our Theorem~\ref{t:LyapConvGenEvals} is precise with regard to the proof of spectral localization in a subtle way.  By Oseledec's theorem, for every $E$ we have Lyapunov behavior at both $\pm \infty$ for almost every $\omega$. Suppose we could turn this around and claim for almost every $\omega$ Lyapunov behavior at both $\pm \infty$ for \emph{every} $E$. Let us call this statement the \emph{Anderson localizer's dream}. Then spectral localization would immediately follow! Indeed, fix an $\omega$ from this full measure set. Since spectrally almost all $E$'s admit a polynomially bounded solution, and each solution is either exponentially increasing or decreasing at $\infty$ and either exponentially increasing or decreasing at $-\infty$, it follows that all polynomially bounded solutions must decay exponentially (at the Lyapunov rate) at both $\pm \infty$, and hence be genuine eigenfunctions. Thus, spectrally all energies are genuine eigenvalues and spectral localization (pure point spectrum with exponentially decaying eigenfunctions for almost all $\omega$'s) follows.

Alas, things aren't so easy. First of all, switching the order of the quantifiers only gives Lyapunov behavior at both $\pm \infty$ for almost every $\omega$ and (for example) Lebesgue almost every $E$, as a consequence of Fubini's theorem (applied to the product measure $\mu \times \mathrm{Leb}$). As briefly alluded to above, this is already sufficient to establish spectral localization thanks to spectral averaging if the single-site distribution has an absolutely continuous component. But it is decidedly not sufficient in the singular case. Second of all, and more importantly, the Anderson localizer's dream is even known to fail! Namely, Gorodetski and Kleptsyn have shown in \cite{GK17} that for almost all $\omega$'s, Lyapunov behavior fails for energies $E$ from a dense $G_\delta$ subset of the spectrum $\Sigma$ in the sense that
$$
0 = \liminf_{n \to \infty} \frac{1}{n} \log \| M_n^E(\omega) \| < \limsup_{n \to \infty} \frac{1}{n} \log \| M_n^E(\omega) \| = L(E)
$$
(and a similar statement on the left half line).\footnote{As we were completing the work on this paper we learned that \cite{GK17} also contains a new proof of spectral localization for the 1D Anderson model, which arises as a byproduct of their extension of the classical F\"urstenberg theorem.} In other words, the Anderson localizer's dream is a mirage that should not be chased.

However, in our Theorem~\ref{t:LyapConvGenEvals}, we have found the appropriate modification of the Anderson localizer's dream: the desired Lyapunov behavior holds for all generalized eigenvalues!  This is precisely the set of energies for which one needs Lyapunov behavior to be able to deduce spectral localization; in particular, Theorem~\ref{t:LyapConvGenEvals} implies Theorem~\ref{t.main}.

Notice also that one gets as a natural byproduct that the decay rate of the eigenfunctions is given by the Lyapunov exponent, which is certainly to be expected. However, the existing proofs of spectral localization for the Bernoulli Anderson model \cite{CKM87, SVW98} merely prove exponential decay without an attempt to optimize the decay rate.\footnote{As pointed out on p.46 of \cite{CKM87}, the decay rate of the eigenfunctions obtained via their MSA approach will be at least $\frac12 L(E)$, but no further possible improvements are discussed.} To summarize, in this paper we have found the natural statement and what we believe to be the natural proof of the phenomenon of spectral localization for the one-dimensional Anderson model.

\medskip

We are also able to establish exponential dynamical localization,
$$
\sup\limits_{t \in \R}
|\sprod{\delta_n}{e^{-itH_\omega}\delta_m}|
\lesssim e^{\epsilon |m|} e^{-\beta|n-m|};
$$
see Theorem~\ref{t:DL} below for the precise statement. Prior related works are \cite{DS01, GD98, GK01, GK04, GK06}. Here, $\epsilon$ is arbitrarily small and $\beta$ is arbitrarily close to the best possible decay rate -- given by the minimum of the Lyapunov exponent on the almost sure spectrum. Thus, both of our spectral and dynamical localization results are established with the correct decay rate.

Furthermore, there are models for which similar localization results are expected to hold, but for which no MSA exists, and hence there is in fact no known localization result. To illustrate this point we consider the class of CMV matrices.

A CMV matrix arises in the representation of the map $f(z) \mapsto zf(z)$ in $L^2(\partial \D, \dd\mu)$ relative to a suitable basis, where $\mu$ denotes a probability measure on the unit circle that does not admit a finite support. It is a five-diagonal semi-infinite matrix that is determined by a sequence of \emph{Verblunsky coefficients} $\{\alpha_n\}_{n \in \Z_+} \subset \D$, which arise as the recursion coefficients of the orthogonal polynomials associated with $\mu$, and the derived quantities $\rho_n = \left( 1 - |\alpha_n|^2 \right)^{1/2}$:
\begin{equation} \label{def:cmv}
\small
\mathcal C
=
\begin{bmatrix}
\overline{\alpha_0} & \overline{\alpha_1}\rho_0 & \rho_1\rho_0 &&& & \\
\rho_0 & -\overline{\alpha_1}\alpha_0 & -\rho_1 \alpha_0 &&& & \\
& \overline{\alpha_2}\rho_1 & -\overline{\alpha_2}\alpha_1 & \overline{\alpha_3} \rho_2 & \rho_3\rho_2 & & \\
& \rho_2\rho_1 & -\rho_2\alpha_1 & -\overline{\alpha_3}\alpha_2 & -\rho_3\alpha_2 &  &  \\
&&& \overline{\alpha_4} \rho_3 & -\overline{\alpha_4}\alpha_3 & \overline{\alpha_5}\rho_4 & \rho_5\rho_4 \\
&&& \rho_4\rho_3 & -\rho_4\alpha_3 & -\overline{\alpha_5}\alpha_4 & -\rho_5 \alpha_4  \\
&&&& \ddots & \ddots &  \ddots
\end{bmatrix}.
\end{equation}
This matrix defines a unitary operator in $\ell^2(\Z_+)$, and the spectral measure corresponding to $\mathcal C$ and the vector $\delta_0$ is given by $\mu$.

This sets up a one-to-one correspondence between measures $\mu$ and coefficient sequences $\{\alpha_n\}_{n \in \Z_+}$, which has been extensively studied in recent years, mainly due to the infusion of ideas from Simon's monographs \cite{S1, S2}.

Similarly, an \emph{extended CMV matrix} is a unitary operator on $\ell^2(\Z)$ defined by a bi-infinite sequence $\{\alpha_n\}_{n \in \Z} \subset \D$ in an analogous way:
\begin{equation} \label{def:extcmv}
\small
\mathcal E
=
\begin{bmatrix}
\ddots & \ddots & \ddots &&&&&  \\
\overline{\alpha_0}\rho_{-1} & -\overline{\alpha_0}\alpha_{-1} & \overline{\alpha_1}\rho_0 & \rho_1\rho_0 &&& & \\
\rho_0\rho_{-1} & -\rho_0\alpha_{-1} & -\overline{\alpha_1}\alpha_0 & -\rho_1 \alpha_0 &&& & \\
&  & \overline{\alpha_2}\rho_1 & -\overline{\alpha_2}\alpha_1 & \overline{\alpha_3} \rho_2 & \rho_3\rho_2 & & \\
& & \rho_2\rho_1 & -\rho_2\alpha_1 & -\overline{\alpha_3}\alpha_2 & -\rho_3\alpha_2 &  &  \\
& &&& \overline{\alpha_4} \rho_3 & -\overline{\alpha_4}\alpha_3 & \overline{\alpha_5}\rho_4 & \rho_5\rho_4 \\
& &&& \rho_4\rho_3 & -\rho_4\alpha_3 & -\overline{\alpha_5}\alpha_4 & -\rho_5 \alpha_4  \\
& &&&& \ddots & \ddots &  \ddots
\end{bmatrix}.
\end{equation}
From the point of view of orthogonal polynomials, the study of $\mathcal C$ is more natural, but when the Verblunsky coefficients are generated by an invertible ergodic map (such as for example the full shift of primary interest in this paper), the study of $\mathcal E$ is more natural.

CMV matrices have received a lot of attention in recent years, due to the close analogy with Jacobi matrices, which has led to a plethora of results for CMV matrices that should be regarded as the proper analog of existing Jacobi matrix results. However, the case of \emph{random} CMV matrices is not as well understood as the case of random Jacobi matrices. Specifically, there are no known analogs of the Kunz-Souillard method or of MSA. Of course, by what was said above, this has left the Bernoulli case entirely inaccessible. On the other hand, our approach carries over to the case of CMV matrices and hence we are able to prove the desired localization result for the CMV Bernoulli case.

Thus, as in the Schr\"odinger case we fix a single-site distribution that is compactly supported inside the open unit disk $\D$ (rather than $\R$ as above). This induces random sequences $\{\alpha_n(\omega)\}_{n \in \Z_+}$ and $\{\alpha_n(\omega)\}_{n \in \Z}$, as well as random CMV matrices $\mathcal C_\omega$ and random extended CMV matrices $\mathcal E_\omega$.

The following theorem is the CMV analog of Theorem~\ref{t.main}.

\begin{theorem}[spectral localization for random extended CMV matrices]\label{t:ALCMV}
Consider the family $\{\CE_\omega\}_{\omega \in \Omega}$ of random extended CMV matrices, acting in $\ell^2(\Z)$, where the Verblunsky coefficents are  given by independent identically distributed random variables. It is assumed that the topological support of the common distribution is a compact subset of $\D$ that contains at least two elements. Then, almost surely, $\CE_\omega$ is spectrally localized, that is, it has pure point spectrum with exponentially decaying eigenfunctions. Moreover, the rate of decay at energy $z$ is exactly $L(z)$.
\end{theorem}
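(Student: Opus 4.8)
The plan is to run, \emph{mutatis mutandis}, the same three-step program that yields Theorem~\ref{t.main}: (i) a F\"urstenberg-type positivity statement for the Lyapunov exponent; (ii) a Schnol-type theorem identifying the spectrally relevant parameters as generalized eigenvalues carrying polynomially bounded solutions; and (iii) the key dynamical input, that \emph{every} generalized eigenfunction exhibits Lyapunov behavior at $\pm\infty$. First I would set up the transfer-matrix cocycle. The Szeg\H{o} recursion associated with $\CE_\omega$ produces, for each $z\in\pD$, matrices $A^z(\alpha)\in\SU(1,1)$ (conjugate, via the Cayley transform, to $\SL(2,\R)$), and the $n$-step transfer matrix $M_n^z(\omega)$ is the corresponding product over the shift $T\colon\Omega\to\Omega$, which is invertible precisely because we work with the \emph{extended} CMV matrix $\CE_\omega$ (this is the reason the whole-line object, not the half-line object, is the natural one here). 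Ergodicity then gives a Lyapunov exponent $L(z)\ge 0$ together with full-measure sets $\Omega^z_\pm$ on which $\tfrac1{|n|}\log\|M_n^z(\omega)\|\to L(z)$ as $n\to\pm\infty$, and Oseledec's theorem provides the usual dichotomy (exponential decay at rate $L(z)$ along a one-dimensional subspace, exponential growth at rate $L(z)$ otherwise) for $\omega\in\Omega^z_\pm$.

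Next I would invoke the CMV version of F\"urstenberg's theorem: since the topological support of the Verblunsky distribution is a compact subset of $\D$ with at least two points, the subgroup of $\SU(1,1)$ generated by $\{A^z(\alpha)\}$, $\alpha$ ranging over the support, is, for every $z\in\pD$, non-compact and strongly irreducible, whence $L(z)>0$ for all $z\in\pD$. Here one must check these algebraic conditions uniformly in $z$ and rule out the (at most finitely many) degenerate Verblunsky configurations for which the monodromy could be reducible; this is where the two-element hypothesis is used. With positivity in hand, the CMV analog of the Schnol--Simon theorem (polynomially bounded solutions of the generalized eigenvalue equation $\CE_\omega u = z u$ exist for spectrally a.e.\ $z$, and the set $\CG(\CE_\omega)$ of such generalized eigenvalues supports every spectral measure of $\CE_\omega$) plays the same role as its Schr\"odinger counterpart.

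The heart of the matter is the CMV analog of Theorem~\ref{t:LyapConvGenEvals}: for $\mu$-a.e.\ $\omega$ and every $z\in\CG(\CE_\omega)$,
\[
\lim_{n\to\infty}\tfrac1n\log\|M_n^z(\omega)\| = \lim_{n\to-\infty}\tfrac1{|n|}\log\|M_n^z(\omega)\| = L(z).
\]
I would prove this exactly as in the Schr\"odinger case, the essential inputs being: (a) regularity of $z\mapsto L(z)$ --- here subharmonicity and real-analyticity on an annulus around $\pD$, which follows from the analytic dependence of $A^z(\alpha)$ on $z$ --- together with a uniform upper bound $\limsup_n \tfrac1n\log\|M_n^z(\omega)\|\le C$; and (b) large deviation estimates for $\tfrac1n\log\|M_n^z(\omega)\|$ around $L(z)$, uniform in $z\in\pD$, derived from positivity of $L$, the regularity in (a), and the avalanche-principle / Craig--Simon arguments used in the body of the paper. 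Granting these, one fixes $\omega$ in the full-measure set; for any generalized eigenfunction $u$ with generalized eigenvalue $z$, the polynomial bound \eqref{eq:polyBounds} together with the large deviation estimates (summable along a geometric sequence of scales) forces the solution vector to lie in the contracting Oseledec direction at each end, so $u$ decays exponentially at rate $L(z)$ at both $\pm\infty$. Thus every generalized eigenfunction is a genuine $\ell^2$ eigenfunction with sharp decay rate $L(z)>0$, and since the generalized eigenvalues carry all of the spectral measure, $\CE_\omega$ is spectrally localized with decay rate exactly $L(z)$ --- that is, Theorem~\ref{t:ALCMV}.

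The main obstacle I anticipate is item (b) together with the uniform-in-$z$ positivity: transporting the large deviation bounds to the $\SU(1,1)$ cocycle requires reproving the underlying harmonic-analysis estimates in the annular geometry around $\pD$ (rather than on a strip around $\R$), and keeping all constants uniform across $z\in\pD$ --- including near the finitely many Verblunsky values where the transfer matrices approach degeneracy --- is the delicate point. Everything else is a faithful transcription of the Schr\"odinger argument, with the pleasant simplification that $\pD$ is compact, so there are no issues with the behavior of $L$ at ``infinity'' in the spectral parameter.
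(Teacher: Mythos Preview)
There is a genuine gap in your treatment of F\"urstenberg positivity. You assert that the subgroup of $\SU(1,1)$ generated by $\{A^z(\alpha)\}$ is non-compact and strongly irreducible \emph{for every} $z\in\pD$, and then aim for an LDT uniform over the whole circle. In fact strong irreducibility can fail at finitely many spectral parameters: the paper shows (Proposition~\ref{p:CMVfurst:ii'}, following \cite[Theorem~10.4.15]{S2}) that condition~(ii') may break down on a set $\CD\subset\pD$ with $\#\CD\le 3$, determined by the geometry of $\supp\widetilde\mu$. These are exceptional values of $z$, not exceptional Verblunsky coefficients as you write, and for generic two-point supports $\CD$ is nonempty; they cannot simply be ``ruled out.'' At $z\in\CD$ you therefore have no F\"urstenberg input, no LDT, and no avalanche-principle machinery, so your uniform-in-$z$ program collapses exactly there.

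The paper's proof bypasses this by first establishing localization only on compact arcs $\CI\subset\pD\setminus\CD$ (Theorem~\ref{t:ALCMV2}), where the needed uniformity is available, then exhausting $\pD\setminus\CD$ by countably many such arcs, and finally disposing of the at most three points of $\CD$ via the Pastur-type fact that any fixed $z$ is almost surely not an eigenvalue of $\CE_\omega$. Your outline is missing precisely this two-step workaround. A secondary point: the paper's LDT (Section~\ref{sec:LDT}) is obtained by an elementary inductive argument exploiting independence and the contracting/strongly irreducible hypotheses, not via subharmonicity or annular harmonic analysis as you propose; your route is different and would itself require care near $\CD$, but the structural omission above is the real obstacle.
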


Due to the special interest to the orthogonal polynomial community, we also state explicitly the half-line version of the previous result:

\begin{theorem}[spectral localization for random CMV matrices]\label{t:ALCMVhalfline}
Consider the family $\{\CC_\omega\}_{\omega \in \Omega}$ of random CMV matrices, acting in $\ell^2(\Z_+)$, where the Verblunsky coefficents are  given by independent identically distributed random variables. It is assumed that the topological support of the common distribution is a compact subset of $\D$ that contains at least two elements. Then, almost surely, $\CC_\omega$ is spectrally localized, that is, it has pure point spectrum with exponentially decaying eigenfunctions.
\end{theorem}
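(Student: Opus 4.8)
The plan is to run the same argument that underlies Theorems~\ref{t.main} and \ref{t:ALCMV}, but on the half line, retaining only the analysis at $+\infty$. First I would set up the OPUC transfer-matrix formalism (Szeg\H{o} cocycle): for $z\in\pD$, solutions of $\CC_\omega u = zu$ are propagated by products $T_n^z(\omega)$ of $z$-dependent $2\times 2$ matrices built from $\alpha_0(\omega),\alpha_1(\omega),\dots$, and, because $\delta_0$ is cyclic for $\CC_\omega$, there is a unique-up-to-scalar solution $u^z$ satisfying the CMV boundary condition at the origin, obtained by feeding a prescribed initial vector into the cocycle (one must carry along both the CMV matrix $\CC_\omega$ and its sibling in the Gesztesy--Zinchenko formalism, cf.\ \cite{S1,S2}, but this is bookkeeping). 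Let $\CG(\CC_\omega)\subseteq\pD$ be the set of $z$ for which $u^z$ grows at most polynomially on $\Z_+$; the CMV version of the Schnol--Simon theorem (\cite{Schnol1981,Simon1981JFA}, in the OPUC formulation) gives $\CG(\CC_\omega)\subseteq\sigma(\CC_\omega)$ and $\chi_{\pD\setminus\CG(\CC_\omega)}(\CC_\omega)=0$, so all spectral measures of $\CC_\omega$ are carried by $\CG(\CC_\omega)$.

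The key input will be the half-line analog of Theorem~\ref{t:LyapConvGenEvals} --- the CMV version of which the paper already develops in order to prove Theorem~\ref{t:ALCMV} --- restricted to forward iterates: for $\mu$-a.e.\ $\omega$ and \emph{every} $z\in\CG(\CC_\omega)$,
\[
\lim_{n\to\infty}\frac1n\log\|T_n^z(\omega)\| = L(z) > 0 .
\]
Positivity of $L(z)$ for all $z\in\pD$ is F\"urstenberg's theorem for the Szeg\H{o} cocycle, which applies since the topological support of the single-site distribution contains at least two points. The upgrade from ``Lebesgue-a.e.\ $z$'' (which is all Fubini gives) to ``every generalized eigenvalue $z$'' is exactly the mechanism behind Theorem~\ref{t:LyapConvGenEvals}: a uniform large-deviation estimate for $\frac1n\log\|T_n^z(\omega)\|$ combined with H\"older continuity of $L$ and a Borel--Cantelli argument over a fine net of spectral parameters, controlling the $z$ at which the cocycle is anomalously small along infinitely many scales. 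Since this mechanism uses only the forward iterates $T_n^z(\omega)$, $n\to+\infty$, it transplants to the half line without change; alternatively, one observes that the extended-CMV proof of the $+\infty$ statement already concerns the Szeg\H{o} cocycle determined by $\alpha_0(\omega),\alpha_1(\omega),\dots$, and that passing to $\CC_\omega$ only changes the initial vector, not the cocycle, so the set of relevant ``bad'' $z$ is unchanged.

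Granting this, I would conclude exactly as in the deduction of Theorem~\ref{t.main} from Theorem~\ref{t:LyapConvGenEvals}. Fix $\omega$ in the full-measure set. For $z\in\CG(\CC_\omega)$, Oseledec's theorem for the Szeg\H{o} cocycle furnishes a one-dimensional subspace along which iterates contract at rate $L(z)$ and a complementary direction along which they expand at rate $L(z)$; a polynomially bounded solution can have no component in the expanding direction, so $u^z$ decays exponentially at rate exactly $L(z)$. Since $u^z$ already obeys the boundary condition at $0$, it lies in $\ell^2(\Z_+)$ and is a genuine eigenfunction. Hence every spectral measure of $\CC_\omega$ is supported on honest eigenvalues, i.e.\ $\CC_\omega$ has pure point spectrum with exponentially decaying eigenfunctions (with decay rate $L(z)$, as in Theorem~\ref{t:ALCMV}), as claimed. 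One cannot simply quote Theorem~\ref{t:ALCMV} via the standard direct-sum splitting of an extended CMV matrix: that splitting requires a Verblunsky coefficient of modulus one, which lies outside the support of the single-site distribution, so the self-contained half-line argument above is the natural route.

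The \emph{main obstacle} is the displayed half-line Lyapunov statement for every generalized eigenvalue --- specifically, verifying that the large-deviation and covering-net argument behind Theorem~\ref{t:LyapConvGenEvals} is genuinely one-sided and that each of its ingredients (uniform large deviations and H\"older continuity for the Szeg\H{o} cocycle, F\"urstenberg positivity for Verblunsky distributions supported in $\D$) carries over to OPUC; a secondary nuisance is the two-flavored CMV transfer-matrix bookkeeping, which must be handled so that ``polynomially bounded solution satisfying the boundary condition'' corresponds precisely to ``bounded cocycle orbit through the prescribed initial vector.''
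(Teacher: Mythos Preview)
Your overall strategy---run the CMV analog of Theorem~\ref{t:LyapConvGenEvals} restricted to forward iterates and conclude via Schnol/Oseledec exactly as in the Schr\"odinger case---is the paper's route as well (see Remark~\ref{r:CMVhalfline}), and your observation that one cannot reduce to Theorem~\ref{t:ALCMV} by a direct-sum splitting is well taken. However, there is a genuine gap.

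You assert that ``positivity of $L(z)$ for all $z\in\pD$ is F\"urstenberg's theorem for the Szeg\H{o} cocycle, which applies since the topological support of the single-site distribution contains at least two points.'' This is false in general for CMV matrices and is precisely where the CMV case diverges from the Schr\"odinger case. Unlike the Schr\"odinger transfer cocycle (where Theorem~\ref{t:positivityLE} gives strong irreducibility for every $E\in\R$), for the Szeg\H{o} cocycle the strong-irreducibility hypothesis of F\"urstenberg's theorem can fail at a finite set $\CD\subset\pD$ of cardinality at most three (Proposition~\ref{p:CMVfurst:ii'}). Consequently the uniform LDT, H\"older continuity, and the entire covering-net/double-resonance machinery in Sections~\ref{sec:LDT}--\ref{sec:localization} are only available on compact arcs $\CI\subset\pD\setminus\CD$, not on all of $\pD$. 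Your displayed half-line Lyapunov statement therefore cannot be claimed for \emph{every} $z\in\CG(\CC_\omega)$ in one shot.

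The paper's fix is Theorem~\ref{t:ALCMV2}: prove localization on each compact $\CI\subset\pD\setminus\CD$, then exhaust $\pD\setminus\CD$ by countably many such arcs and intersect the resulting full-measure sets; finally handle the (at most three) points of $\CD$ by the standard fact that any fixed $z\in\pD$ is almost surely not an eigenvalue. You need to insert exactly this two-step structure. A secondary point: your claim that the argument ``transplants to the half line without change'' is slightly too breezy---the elimination of double resonances (Proposition~\ref{prop:doubleRes}) is phrased with Green functions on intervals $[-N_1,N_2]$ around the reference site, and on the half line one must replace these by $[0,N_2]$ and keep all indices nonnegative, as the paper notes in Remark~\ref{r:CMVhalfline}. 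This is routine but is a change.
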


As mentioned above, most approaches to localization for one-dimensional Schr\"odinger operators have not been carried over to CMV matrices yet, and hence localization for CMV matrices was not known in the expected generality. Theorem~\ref{t:ALCMVhalfline} remedies this deficit and establishes localization in the expected generality. As above, one could ask about the case where the single site distribution is not compactly supported in $\D$ and our method should extend to these cases under the appropriate assumptions (ensuring, e.g., the existence of the Lyapunov exponent, which is absolutely fundamental to this approach). Theorem~\ref{t:ALCMVhalfline} was previously known only in situations where spectral averaging was applicable: the single site distribution was assumed to be absolutely continuous with respect to either Lebesgue measure on $\D$ or arc length on a circle centered at the origin and of radius smaller than one; compare \cite[Theorem~12.6.3]{S2} and \cite{T92}. Simon and Teplyaev did not consider extended CMV matrices, and hence Theorem~\ref{t:ALCMV} is technically speaking new even in the case of absolutely continuous single-site distributions. However, our main point here is that singular distributions cannot be handled using spectral averaging techniques, and in particular the Bernoulli case (when the support of the single-site distribution has cardinality two) requires the tools developed in this paper. Moreover, as in the Schr\"odinger case, our method also yields suitable results concerning exponential dynamical localization, see Theorem~\ref{t:DLCMV} below for the precise statement.

There is some related work on another class of random unitary operators \cite{HJS06, J05}, but their results also require the absolute continuity of the single-site distribution.

For the sake of completeness, we mention that our approach provides a half-line version of Theorem~\ref{t.main} as well.

\subsection{Background and Context}\label{ssec:bg}

Let us describe the context and the relevance of Theorem~\ref{t.main}. The Nobel Prize winning work of Philip Warren Anderson suggested that randomness leads to localization of quantum states in suitable energy regions that depend on the strength of the randomness. A particular signature of such a localization effect is a spectral localization statement, which asserts that the spectral type of the associated Schr\"odinger operator is pure point in suitable energy regions, and the eigenfunctions corresponding to the eigenvalues in these energy regions decay exponentially.

For the sake of concreteness, let us consider the standard Anderson model, which is just the $d$-dimensional generalization of the operator family considered in Theorem~\ref{t.main}. That is, given a probability measure $\widetilde \mu$ on $\R$ whose topological support is compact and contains at least two points, we consider the product space $\Omega = (\mathrm{supp} \, \widetilde \mu)^{\Z^d}$ and the product measure $\mu = {\widetilde \mu}^{\Z^d}$. For every $\omega \in \Omega$ and $n \in \Z^d$, we set $V_\omega(n) = \omega_n$. This defines, for $\omega \in \Omega$, a potential $V_\omega : \Z^d \to \R$, and in turn a Schr\"odinger operator
$$
[H_\omega \psi](n) = \sum_{|m-n|_1 = 1} \psi(m) + V_\omega(n) \psi(n)
$$
in $\ell^2(\Z^d)$. Standard ergodicity arguments show that the spectrum and the spectral type of $H_\omega$ are almost surely independent of $\omega$, that is, there exist sets $\Sigma, \Sigma_\mathrm{pp}, \Sigma_\mathrm{sc}, \Sigma_\mathrm{ac}$, and a set $\Omega_0 \subseteq \Omega$ of full $\mu$-measure such that for every $\omega \in \Omega_0$, we have $\sigma(H_\omega) = \Sigma$ and $\sigma_\bullet(H_\omega) = \Sigma_\bullet$, $\bullet \in \{ \mathrm{pp}, \mathrm{sc}, \mathrm{ac} \}$. It is not too hard to show that
$$
\Sigma = [-2d,2d] + \mathrm{supp} \, \widetilde \mu.
$$

The assumption that $\mathrm{supp} \, \widetilde \mu$ be compact ensures that these operators are bounded; the real-valuedness of the potential ensures that they are also self-adjoint. The boundedness is not crucial, and one could in fact consider probability measures $\widetilde \mu$ with unbounded support. However, the phenomenon of Anderson localization already occurs in the bounded case, and many authors limit their attention to this case -- as do we. Furthermore, the assumption that $\mathrm{supp} \, \widetilde \mu$ contain more than one point excludes the trivial case of a constant potential, for which the Anderson localization phenomenon is obviously impossible (the spectrum is purely absolutely continuous in this case).

The spectral signature of Anderson localization is now the following. There exists a set $\Sigma_\mathrm{AL} \subseteq \Sigma$, which is a finite union of non-degenerate intervals, so that $\Sigma_\mathrm{AL} \subseteq \Sigma_\mathrm{pp}$ (or, really, ``$=$'') and $\mathrm{int} \, \Sigma_\mathrm{AL} \cap \Sigma_\mathrm{sc} = \mathrm{int} \, \Sigma_\mathrm{AL} \cap \Sigma_\mathrm{ac} = \emptyset$. This means that almost surely the spectrum of $H_\omega$ is pure point on $\Sigma_\mathrm{AL}$, and hence $H_\omega$ has a set of eigenvalues that is dense in $\Sigma_\mathrm{AL}$. The additional feature is that the associated eigenfunctions decay exponentially. The size of $\Sigma_\mathrm{AL}$ relative to the size of $\Sigma$ depends on the dimension and the strength of the randomness. It is expected that $\Sigma_\mathrm{AL} = \Sigma$ for $d=1$ and $d=2$, and that in general we only have $\Sigma_\mathrm{AL} \subseteq \Sigma$ for $d \ge 3$. However, $\Sigma_\mathrm{AL} = \Sigma$ does hold in the case $d \ge 3$ when the randomness is strong enough. More specifically, each of the connected components of $\Sigma_\mathrm{AL}$ is a neighborhood of a boundary point of $\Sigma$. The length of these intervals grows with increasing randomness, up to a point where they cover all of $\Sigma$. From this strength of randomness onwards, we have $\Sigma_\mathrm{AL} = \Sigma$.

Notice that Theorem~\ref{t.main} is precisely the expected statement $\Sigma_\mathrm{AL} = \Sigma$ for the case $d=1$. The localization part of the expected statement for the case $d \ge 3$ is known for sufficiently regular $\widetilde \mu$ as well, but proving that $\overline{\Sigma \setminus \Sigma_\mathrm{AL}} = \Sigma_\mathrm{ac} \not= \emptyset$ for sufficiently small randomness in the case $d \ge 3$ is the main open problem in the study of random Schr\"odinger operators. The expected statement for $d=2$ is not known. The best known result in the case $d = 2$ is the same as the best known result for $d \ge 3$, but no better. That is, one only knows localization (under suitable assumptions on $\widetilde \mu$) in neighborhoods of the boundary points of $\Sigma$, but not in all of $\Sigma$, as is expected. This problem (``prove spectral localization for the two-dimensional Anderson model throughout the whole spectrum for any strength of randomness'') is the second main open problem in the field. The third main open problem is to establish the localization result in dimensions $d \ge 2$, which as pointed out above is known under suitable assumptions on $\widetilde \mu$, for \emph{any} single-site measure $\widetilde \mu$. For example, the case where $\widetilde \mu$ has a non-zero pure point part is not covered by known localization results in higher dimensions yet.

Proofs of spectral localization results come in two general flavors. Some of them use methods that are strictly one-dimensional, and others work in any dimension. Among the strictly one-dimensional proofs we mention the Kunz-Souillard approach \cite{DG16, DKS83, KS80} and the approach via spectral averaging \cite{S85, SW86}. Proofs that work in arbitrary dimension include those based on multi-scale analysis (MSA) \cite{DK89, EK16, FMSS85, FS83, GK01} and the fractional moment method (FMM) \cite{A94, AM93, ASFH01}. However, most of these approaches are limited in terms of the single-site distributions to which they apply. Except for the MSA approach, all others require $\widetilde \mu$ to have a non-trivial absolutely continuous component, or to even be purely absolutely continuous. This leaves MSA as the only method available in cases where $\widetilde \mu$ is purely singular. The special case where $\widetilde \mu$ is supported on precisely two points is the hardest; this is commonly referred to as the Bernoulli case, and the operator family in this case is called the Bernoulli Anderson model. Unfortunately, the approach based on MSA is by far the most complex and difficult among the available approaches. This makes the treatment of the Bernoulli case complex and difficult, even in one dimension!

On the other hand, the case of one dimension is special in that F\"urstenberg's theorem about products of random matrices provides compelling evidence that spectral localization holds for all single-site distributions in this case. It shows that for all $\widetilde \mu$, and throughout $\Sigma$, solutions of the generalized eigenvalue equation have a strong tendency to be either exponentially increasing or exponentially decreasing. Coupled with the general fact that all spectral measures are supported on the set of energies that admit polynomially bounded solutions, this should imply that all such polynomially bounded solutions are in fact exponentially decreasing, and hence are eigenfunctions, and hence the spectrum is pure point because the spectral measures are supported on such energies, which turn out to be eigenvalues by the argument above. The starting point of this argument, the exponential behavior of solutions, has no analog in higher dimensions, and this is the reason why spectral localization is known at all energies in one dimension, is not known in two dimensions, and is in fact expected to fail in general for dimensions greater than two.

Alas, the argument outlined in the previous paragraph has a flaw which has to do with exceptional sets and uncountable unions of zero-measure sets. Localization proofs in one dimension that are based on the output of F\"urstenberg's theorem (i.e., all proofs different from those using the Kunz-Souillard method) must address this flaw. That is, they do implement the general strategy, but they address the complications that arise when uncountable unions of exceptional sets of zero measure are taken.

When $\widetilde \mu$ has a non-trivial absolutely continuous component, this is taken care of in a very elegant way by spectral averaging. One of the fundamental properties of spectral averaging (namely that the average of spectral measures turns out to be Lebesgue measure) allows one to simply ignore sets of zero Lebesgue measure, and this shows in effect that the flaw is a non-issue in this case.

On the other hand, when $\widetilde \mu$ is singular, the only option up to this point has been to somehow verify the assumptions that are necessary to start a MSA throughout the spectrum, which consists of proving a Wegner estimate and establishing an initial length scale estimate. The former is very difficult to establish in the Bernoulli case, and the latter follows from the positive Lyapunov exponents provided by F\"urstenberg's theorem. Thus, the work necessary to deal with the one-dimensional case in full generality (i.e., including the Bernoulli case) mainly focused on establishing a Wegner estimate. This was accomplished, by different methods, in two papers: by Carmona, Klein and Martinelli in 1987 \cite{CKM87} and by Shubin, Vakilian and Wolff in 1998 \cite{SVW98}. Once all the ingredients are in place, the MSA machine produces the desired spectral localization statement, and hence Theorem~\ref{t.main}.

However, this way of proving Theorem~\ref{t.main} is somewhat unsatisfactory. MSA is an inductive scheme that produces, with large probability, exponential off-diagonal decay statements for Green's functions associated with finite-volume restrictions of the random operators for a sequence of interval lengths. That is, one uses the positivity of the Lyapunov exponent to verify the initial length scale estimate, only to then work hard to inductively prove exponential decay statements that are directly related to, and should in fact follow from, the positivity of the Lyapunov exponent! What one really ought to do is to make full use of what positive Lyapunov exponents actually provide.

It has therefore been a well-recognized problem in the random operator community to find a more direct way of going from the positive Lyapunov exponents provided by F\"urstenberg's theorem to the spectral localization statement contained in Theorem~\ref{t.main}, that is, to find a one-dimensional proof of this one-dimensional result. This is precisely what we accomplish in this paper.

\subsection{Strategy of the Proof}

The basic strategy of our localization proof follows the ideas from Bourgain and Schlag's localization proof for half-line Schr\"odinger operators whose potentials are generated by the doubling map on the circle \cite{bs}. In particular, the main ingredients of this approach are uniform positivity of the Lyapunov exponent (LE), a uniform large deviation theorem (LDT) for the same, and a version of a lemma regarding the \emph{elimination of double resonances}. Here, uniformity is with respect to the spectral parameter $E$.

\medskip

The remainder of the paper is organized as follows. In Section~\ref{s:positivity_continuity_LE}, we show continuity and uniform positivity of the Lyapunov exponent as function of $E$, starting with verifying the conditions of F\"urstenberg's theorem for $\mathrm{SL}(2,\R)$, formulated as Theorem~\ref{furst}. This section is largely expository and details are provided for the convenience of the reader.

In Section~\ref{sec:LDT}, starting again with the conditions of F\"urstenberg's theorem, we supply a relatively simple proof of a uniform LDT for the Lyapunov exponent. Such results are known for i.i.d.\ random matrices \cite{lP1982} and one-parameter families of i.i.d.\ matrices \cite{Tsay1999}. However, we wish to emphasize that this proof of the LDT is new. It makes the use of the independence of the underlying dynamics more transparent and may have the potential to be useful elsewhere. In fact, by the author of \cite{Z16}, a version of the LDT is currently being worked out for the model in \cite{Z16} where the potentials are generated by some strongly mixing dynamics.  Some of the arguments in Section~\ref{sec:LDT} may be directly used there.

In Section~\ref{sec:holder}, we supply a simple proof of H\"older continuity of the LE and the integrated density of states. This result is well-known (compare \cite[Th\'eor\`eme~3]{lP1989AIHP}), but the modern proof via the Avalanche Principle is simpler.

In Section~\ref{sec:green}, we obtain suitable upper bounds on the norms of the transfer matrices and relate them to estimates on the Green functions of finite-volume truncations of the full-line model.

In Section~\ref{sec:localization}, we complete the proof of Theorems~\ref{t.main}, \ref{t:LyapConvGenEvals}, and Theorem~\ref{t:DL}; the starting point is Proposition~\ref{prop:doubleRes} which is an appropriate formulation of the elimination of double resonances in this setting.

We wish to emphasize that the tools we develop in Section 5 and 6 are more general than the similar ones in \cite{bs}, and this eventually enables us to show exponential dynamical localization for the Bernoulli-Anderson model for the first time.

Finally, in Section~\ref{sec:CMV}, following the arguments from the proof of Theorems~\ref{t.main} and \ref{t:DL}, we prove Theorems~\ref{t:ALCMV} and \ref{t:ALCMVhalfline}, and the CMV version of an exponential dynamical localization result, Theorem~ \ref{t:DLCMV}.

\section{Positivity and Continuity of the Lyapunov Exponent} \label{s:positivity_continuity_LE}

In this section, we will introduce several classical results concerning positivity and continuity of the Lyapunov exponent for products of i.i.d.\ random $\SL(2,\R)$ matrices, which will be instrumental in our proof of Anderson Localization.

Consider a probability space $(\CA,\widetilde\mu)$, and let $(\Omega,T,\mu)$ be the full shift space generated by $(\CA,\widetilde\mu)$. In other words, $\Omega=\CA^{\Z}$, $\mu=\widetilde\mu^{\Z}$ and
\[
(T\omega)_n=\omega_{n+1},
\quad
\omega \in \Omega, \; n \in \Z.
\]
Consider a map $M:\CA\rightarrow \SL(2,\R)$. For simplicity, we assume this map is bounded, that is
$$
\sup_{\alpha \in \CA} \|M(\alpha) \|
<
\infty.
$$
This generates a map $\Omega \rightarrow \SL(2,\R)$, which we also denote by $M$, via
$$
M(\omega)
=
M(\omega_0),
\quad
\omega \in \Omega,
$$
which in turn induces an $\SL(2,\R)$-cocycle over $T$ in a canonical way:
\begin{equation} \label{eq:cocycle}
(T,M): \Omega\times\R^2 \rightarrow \Omega\times\R^2,\ (T,M)(\omega,\vec v)=(T\omega, M(\omega)\vec v).
\end{equation}
We define the iterates of $M$ over the skew product by $(T,M)^n = (T^n,M_n)$ for $n \in \Z$. One can check that
$$
M_n(\omega)
=
\begin{cases}
 M(T^{n-1}\omega)\cdots M(\omega) & n > 0, \\
 \bbI & n = 0, \\
 \left[ M_{-n} (T^n\omega)\right]^{-1} & n < 0.
 \end{cases}
$$
We can (and do) view $M_n$ as the product of $n$ i.i.d.\ random $\SL(2,\R)$ matrices with common distribution $\widetilde \mu$ when $n \in \Z_+$.

The Lyapunov exponent of the cocycle \eqref{eq:cocycle} is defined by
$$
L
=
L(T,M)
:=
\lim_{n\rightarrow\infty}\frac1n\int_{\Omega} \! \log\|M_n\| \, \dd\mu
=
\inf_{n \ge 1} \frac1n\int_{\Omega} \! \log\|M_n\| \, \dd\mu\ge 0.
$$
By Kingman's Subadditive Ergodic Theorem, one also has
$$
L
=
\lim_{n\rightarrow\infty}\frac1n \log\|M_n(\omega)\|
$$
for $\mu$-almost every $\omega \in \Omega$.

In the current setting, it is convenient to consider the Lyapunov exponent as a function of the probability measure on $\SL(2,\R)$. Concretely, through the probability space $(\CA,\widetilde\mu)$ and the map $M$, we obtain a probability measure on $\SL(2,\R)$ via $\nu = M_*\widetilde \mu$, that is, the push-forward measure of $\widetilde \mu$ under the map $M$. Thus in our setting, we also sometime write the Lyapunov exponent as $L = L(M_*\widetilde\mu)$.

We denote by $\R\PP^1$ the real projective line, that is, $\R\PP^1$ is the set of lines in $\R^2$ that pass through the origin. It clear that each $M\in\SL(2,\R)$ induces a map on $\R\PP^1$, which will again be denoted by $M$. We say that a subgroup $G \subset \SL(2,\R)$ is \emph{strongly irreducible} if there is no finite non-empty set $\CF \subset \R\PP^1$ such that $M(\CF) = \CF$ for all $M \in G$.

 The following (special case of a) deep theorem of F\"urstenberg is essential for our analysis.

\begin{theorem}[F\"urstenberg \mbox{\cite[Theorem~8.6]{f}}]\label{furst}
Let $\nu$ be a probability measure on $\SL(2,\R)$ that satisfies
$$
\int \! \log \| M \| \, \dd\nu(M) < \infty.
$$
Denote by $G_\nu$ the smallest closed subgroup of $\SL(2,\R)$ that contains $\supp \, \nu$.

Assume
\begin{itemize}
\item[(i)]  $G_\nu$ is not compact.
\item[(ii)] $G_\nu$ is strongly irreducible.
\end{itemize}

Then, $L > 0$.
\end{theorem}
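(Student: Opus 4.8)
The plan is to argue by contradiction: suppose $L=0$. Since $\R\PP^1$ is compact and the $\SL(2,\R)$-action on it is continuous, the Markov operator $Pf(\bar v)=\int f(M\bar v)\,\dd\nu(M)$ on $C(\R\PP^1)$ admits, by Krylov--Bogolyubov, a $\nu$-stationary probability measure $\eta$ (that is, $\nu*\eta=\eta$). Strong irreducibility forces $\eta$ to be non-atomic: otherwise the finite, non-empty set of atoms of maximal $\eta$-mass would be $\nu$-a.s.\ invariant, hence invariant under $G_\nu$, contradicting (ii). The key point, established in the next two paragraphs, is that when $L=0$ this $\eta$ is invariant under \emph{every} element of $G_\nu$. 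Granting that, a contradiction follows: pick $M_k\in G_\nu$ with $\|M_k\|\to\infty$ and take singular value decompositions $M_k=U_k\,\mathrm{diag}(s_k,s_k^{-1})\,V_k$ with $U_k,V_k$ orthogonal and $s_k=\|M_k\|$; passing to a subsequence so that $U_k\to U$, $V_k\to V$, one has $M_k\bar v\to U\bar e_1=:\bar p$ for every $\bar v$ outside the single point $\overline{V^{-1}\bar e_2}$, and since $\eta$ is non-atomic, dominated convergence gives $(M_k)_*\eta\to\delta_{\bar p}$ weakly -- incompatible with $(M_k)_*\eta=\eta$ and with $\eta$ being non-atomic.

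To show that stationarity upgrades to $G_\nu$-invariance, I would use a martingale together with a logarithmic-energy functional. Set $S_n=M(\omega_0)\cdots M(\omega_{n-1})$ and $\eta_n(\omega)=(S_n(\omega))_*\eta$. Stationarity makes $\omega\mapsto\int f\,\dd\eta_n(\omega)$ a bounded martingale for each $f\in C(\R\PP^1)$, so for $\mu$-a.e.\ $\omega$ the measures $\eta_n(\omega)$ converge weakly to a random probability measure $\eta_\infty(\omega)$ with barycenter $\int\eta_\infty\,\dd\mu=\eta$; and the factorization $S_n(\omega)=M(\omega_0)\,S_{n-1}(T\omega)$ gives $\eta_\infty(\omega)=(M(\omega_0))_*\eta_\infty(T\omega)$. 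Now let $W(\sigma):=\iint -\log\sin\rho(\bar v,\bar w)\,\dd\sigma(\bar v)\,\dd\sigma(\bar w)\in[0,+\infty]$, with $\rho$ the angular metric on $\R\PP^1$; this functional is nonnegative, weakly lower semicontinuous, and strictly convex, because $-\log\sin\rho$ is conditionally strictly positive definite (its nonconstant Fourier coefficients on the circle are positive). Since $|\det S|=1$ gives $\sin\rho(S\bar v,S\bar w)=\frac{\|v\|\,\|w\|\,\sin\rho(\bar v,\bar w)}{\|Sv\|\,\|Sw\|}$, one obtains the identity $W(S_*\sigma)=W(\sigma)+2\int\log\tfrac{\|Sv\|}{\|v\|}\,\dd\sigma(\bar v)$. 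Coupling this with the telescoping-and-stationarity identity $\int_\Omega\int_{\R\PP^1}\log\tfrac{\|S_nv\|}{\|v\|}\,\dd\eta(\bar v)\,\dd\mu=n\,\lambda(\eta)$, where $\lambda(\eta):=\iint\log\tfrac{\|Mv\|}{\|v\|}\,\dd\nu(M)\,\dd\eta(\bar v)$ (finite by the log-moment hypothesis), one gets $\E_\mu[W(\eta_n)]=W(\eta)+2n\,\lambda(\eta)$.

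Granting that $W(\eta)<\infty$, letting $n\to\infty$ in this last identity forces $\lambda(\eta)\ge0$ (as $W\ge0$), while $\log\tfrac{\|S_nv\|}{\|v\|}\le\log\|S_n\|$ gives $\lambda(\eta)\le L=0$; hence $\lambda(\eta)=0$ and $\E_\mu[W(\eta_n)]=W(\eta)$ for every $n$. By Fatou and lower semicontinuity, $\E_\mu[W(\eta_\infty)]\le W(\eta)$; by convexity, $W(\eta)=W\bigl(\int\eta_\infty\,\dd\mu\bigr)\le\E_\mu[W(\eta_\infty)]$. So equality holds throughout, and strict convexity forces $\eta_\infty(\omega)=\eta$ for $\mu$-a.e.\ $\omega$. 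Feeding this into $\eta_\infty(\omega)=(M(\omega_0))_*\eta_\infty(T\omega)$ shows $M_*\eta=\eta$ for $\nu$-a.e.\ $M$; since $\{M:M_*\eta=\eta\}$ is a closed subgroup of $\SL(2,\R)$, it contains $G_\nu$, which is the contradiction sought.

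The step I expect to be the main obstacle is proving $W(\eta)<\infty$ -- a quantitative regularity statement for the stationary measure, e.g.\ that it assigns mass $o(1/\log(1/r))$ to balls of radius $r$. This is the one place where strong irreducibility must be used quantitatively rather than only to exclude atoms; it can be extracted from the classical regularity theory of stationary measures for irreducible random matrix products, but a fully self-contained treatment would want a short direct argument. The remaining ingredients -- existence and non-atomicity of $\eta$, the martingale convergence, the two telescoping identities, (strict) convexity of $W$, and the rank-one limiting argument -- are routine.
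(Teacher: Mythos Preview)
The paper does not prove F\"urstenberg's theorem; it is stated as a black-box classical input with a citation to \cite{f}, and the paper's own work begins downstream of it (verifying its hypotheses for the Anderson cocycle, then building the LDT and localization machinery on top). So there is no in-paper proof to compare your proposal against, and the proposal must be judged on its own.

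Your overall architecture is one of the classical routes and is correct in outline: produce a non-atomic $\nu$-stationary $\eta$, show that $L=0$ forces $\eta$ to be invariant under all of $G_\nu$, then contradict non-compactness via a rank-one limit. The martingale convergence $\eta_n\to\eta_\infty$, the relation $\eta_\infty(\omega)=(M(\omega_0))_*\eta_\infty(T\omega)$, the transformation law for $\sin\rho$ under $\SL(2,\R)$, the Fourier computation giving strict convexity of $W$, and the final rank-one argument are all fine.

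The gap you flag, however, is genuine and not merely technical. Non-atomicity of $\eta$ does \emph{not} by itself force $W(\eta)<\infty$: there are non-atomic measures on the circle with infinite logarithmic energy. Without finiteness, the identity $\E_\mu[W(\eta_n)]=W(\eta)+2n\lambda(\eta)$ is vacuous ($\infty=\infty$), and the strict-convexity step has nothing to act on. More seriously, the quantitative regularity of stationary measures that would yield $W(\eta)<\infty$ --- H\"older or even log-H\"older bounds on $\eta(B_r)$ --- is, in the standard literature (Le Page, Guivarc'h), \emph{derived from} $L>0$. Invoking such regularity here is therefore in danger of circularity, and I do not know a short self-contained argument that extracts finite log-energy from strong irreducibility and non-compactness alone, prior to positivity of $L$.

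The classical proofs sidestep this by replacing your energy $W$ with a quantity that is automatically finite under the log-moment hypothesis. F\"urstenberg's original argument uses the boundary/entropy formalism (the cocycle $\log\frac{dM_*\eta}{d\eta}$ is integrable precisely because $\int\log\|M\|\,d\nu<\infty$), and the treatment in Bougerol--Lacroix (the reference \cite{bougerollacroix} the paper relies on elsewhere) argues more directly with the martingale $\eta_n$ and convergence of $S_n/\|S_n\|$ to a rank-one operator, never needing an energy bound. Either substitution would repair your argument; as it stands, the step from ``$\E_\mu[W(\eta_n)]$ is constant in $n$'' to ``$\eta_\infty$ is deterministic'' rests on an unproved and possibly circular regularity hypothesis.
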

\begin{remark*}
Under condition (i), strong irreducibility of $G_\nu$ is equivalent to:
\begin{itemize}
\item[(ii')] There is no set $\CF \subseteq \R\PP^1$ of cardinality $1$ or $2$ such that $M(\CF) = \CF$ for all $M \in G_\nu$.
\end{itemize}
\end{remark*}

From this, one can immediately deduce global positivity of the Lyapunov exponent for the Anderson model as soon as the single-site distribution has at least two points in its support. We now precisely define the Anderson model and set up our notation.

\begin{defi}
Suppose that our probability space $(\CA, \widetilde \mu)$ consists of a compact set of real numbers; that is, we assume henceforth that
\[
\CA
=
\supp\widetilde\mu
\subset
\R
\]
and that $\CA$ is compact. Then, for each $\omega \in \Omega$,
\[
[H_\omega \psi](n)
=
\psi(n-1) + \psi(n+1) + \omega_n\psi(n),
\quad
n \in \Z, \; \psi \in \ell^2(\Z),
\]
defines a bounded self-adjoint operator on $\ell^2(\Z)$. For each $E \in \R$, we define the map $M^E:\CA \to \SL(2,\R)$ via
\[
M^E(\alpha)
=
\begin{bmatrix}
E - \alpha & -1 \\ 1 & 0
\end{bmatrix},
\quad
\alpha \in \CA,
\]
which may be extended to $\Omega$ as above. Then it is straightforward to verify that $M^E_n(\omega)$ is the $n$-step transfer matrix of the eigenvalue equation $H_\omega\phi = E\phi$. More specifically, $H_\omega \phi = E \phi$ if and only if
\[
\begin{bmatrix}
\phi(n) \\ \phi(n-1)
\end{bmatrix}
=
M_n^E(\omega)
\begin{bmatrix}
\phi(0) \\ \phi(-1)
\end{bmatrix}
\text{ for all } n \in \Z.
\]
The induced measure on $\SL(2,\R)$ will be denoted by $\nu_E = M_*^E \widetilde\mu$, and the Lyapunov exponent at energy $E$ is then defined and denoted by $L(E) := L(\nu_E)$.
\end{defi}

Defining
\begin{equation} \label{eq:almostSureSpecDef}
\Sigma
:=
\CA + [-2,2]
=
\set{x+y : x \in \CA, \; y \in [-2,2]},
\end{equation}
one can check that $\sigma(H_\omega) = \Sigma$ for $\mu$-a.e.\ $\omega \in \Omega$, e.g.\ by using generalized eigenfunctions.

Clearly, if $\#\CA = 1$, then the theory is quite trivial. Concretely, if $\CA$ consists of the single point $a \in \R$, then $\Omega$ contains only the constant sequence $\omega_n \equiv a$; in this case $\sigma(H_\omega) = \Sigma = [a-2,a+2]$, and the spectral type is purely absolutely continuous. Henceforth, we adopt the standing nontriviality assumption that $\#\CA \geq 2$.

\begin{theorem}\label{t:positivityLE}
In the Anderson model, $\nu_E = M^E_* \widetilde \mu$ satisfies assumptions {\rm(}i{\rm)} and {\rm(}ii{\rm)} from Theorem~\ref{furst} for every $E \in \R$. In particular, we have $L(E) > 0$ for every $E \in \R$.
\end{theorem}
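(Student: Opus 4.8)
The plan is to verify hypotheses (i) and (ii) of Theorem~\ref{furst} for $\nu_E$ directly from the explicit form of the transfer matrices, using only the standing assumption $\#\CA\ge 2$; the conclusion $L(E)>0$ is then immediate. The integrability requirement $\int\log\|M\|\,\dd\nu_E(M)<\infty$ needs no comment, since $M^E$ is continuous on the compact set $\CA$ and hence $\nu_E$ is compactly supported in $\SL(2,\R)$.

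The whole argument will rest on one observation. Fix $\alpha_1\ne\alpha_2$ in $\CA$. Both $M^E(\alpha_1)$ and $M^E(\alpha_2)^{-1}$ lie in $G_{\nu_E}$, and a one-line matrix multiplication gives
\[
B \;:=\; M^E(\alpha_1)\,M^E(\alpha_2)^{-1} \;=\; \begin{bmatrix} 1 & \alpha_2-\alpha_1 \\ 0 & 1 \end{bmatrix},
\]
which is a nontrivial shear because $\alpha_2-\alpha_1\ne 0$. Consequently $B^k$ has off-diagonal entry $k(\alpha_2-\alpha_1)$, so $\{B^k : k\in\Z\}$ is an unbounded subset of $G_{\nu_E}$; this establishes hypothesis (i). Moreover, for every $k\ge 1$ the matrix $B^k$ is again a nontrivial shear, whence its only fixed point in $\R\PP^1$ is the line $\R e_1$, where $e_1=(1,0)^\top$.

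For hypothesis (ii) I would invoke the Remark following Theorem~\ref{furst}: given (i), it suffices to exclude a $G_{\nu_E}$-invariant set $\CF\subseteq\R\PP^1$ with $\#\CF\in\{1,2\}$. Any such $\CF$ is invariant under $B$, which therefore permutes the finite set $\CF$; hence $B$ or $B^2$ fixes every element of $\CF$. Since no positive power of $B$ fixes any line other than $\R e_1$, this forces $\CF=\{\R e_1\}$. But $M^E(\alpha_1)\,e_1=(E-\alpha_1,\,1)^\top$ is not a scalar multiple of $e_1$, so $\R e_1$ is not $M^E(\alpha_1)$-invariant, contradicting the $G_{\nu_E}$-invariance of $\CF$. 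Thus both hypotheses of Theorem~\ref{furst} hold, and $L(E)>0$ for all $E\in\R$.

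I do not expect a genuine obstacle here: the entire argument reduces to the identity displayed above, the essential point being simply to recognize that the shear $B$ belongs to $G_{\nu_E}$. The only place requiring a little care is to phrase the strong-irreducibility step using solely elements known to lie in $G_{\nu_E}$ — namely $B$, its powers, and $M^E(\alpha_1)$ — together with the Remark's reduction to invariant sets of cardinality at most two.
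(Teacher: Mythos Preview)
Your proof is correct and follows essentially the same approach as the paper: both identify the shear $B=M^E(\alpha_1)M^E(\alpha_2)^{-1}$ (the paper writes $A=M_aM_b^{-1}$), use its powers for non-compactness, and use the fact that its only fixed direction is $\R e_1$ to reduce any finite invariant set to $\{\R e_1\}$. The only cosmetic differences are that the paper uses the convergence $A^nV\to V_1$ (valid for finite $\CF$ of any cardinality) where you use the permutation/$B^2$ argument via the Remark, and that the paper rules out $\{\R e_1\}$ with the lower shear $A'=M_a^{-1}M_b$ while you use $M^E(\alpha_1)$ directly; both are equally valid.
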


\begin{proof}
Fix $E \in \R$. Since the support of the single-site distribution has cardinality at least two, it follows that $\nu_E$ also has at least two points in its support. Thus, $G_{\nu_E}$ contains at least two distinct elements of the form
$$
M_x
=
\begin{bmatrix}
 x & -1 \\ 1 & 0
\end{bmatrix},
$$
say, $M_a$ and $M_b$ with $a \neq b$. Note that
$$
A
=
M_a M_b^{-1}
=
\begin{bmatrix}
1 & a-b \\ 0 & 1
\end{bmatrix}
\in G_{\nu_E}.
$$
Taking powers of the matrix $A$, we see that $G_{\nu_E}$ is not compact.

Now, consider $V_1 := \mathrm{span}(\vec e_1)$, the projection of $\vec e_1 := (1,0)^\top$ to $\R\PP^1$. Then $A V_1 = V_1$ and, for every $V \in \R\PP^1$, $A^n V$ converges to $V_1$. Thus, if there is a nonempty finite invariant set of directions $\CF \subseteq \R\PP^1$, one must have $\CF = \{V_1\}$. However, we also have
$$
A'
=
M_a^{-1} M_b = \begin{bmatrix}1 & 0 \\ a-b & 1 \end{bmatrix}  \in G_\nu
$$
and $A' V_1 \neq V_1$. Thus, $G_{\nu_E}$ is strongly irreducible, so conditions (i) and (ii) of Theorem~\ref{furst} are met. Consequently, $L(E) = L(\nu_E) > 0$ by F\"urstenberg's theorem.
\end{proof}

The main application of positive Lyapunov exponents is that one obtains precise asymptotic statements about orbits under cocycle iterates. The following deterministic theorem supplies what we need.

\begin{theorem}[Ruelle \cite{Ruelle1979}]
\label{t:ruelle}
Suppose $A^{(n)} \in \SL(2,\R)$ obey
$$
\lim_{n \to \infty} \frac{1}{n} \log \| A^{(n)} \| = 0
$$
and
$$
\lim_{n \to \infty} \frac{1}{n} \log \| A^{(n)} \cdots A^{(1)} \| = L > 0.
$$
Then there exists a one-dimensional subspace $V \in \R\PP^1$ such that
$$
\lim_{n \to \infty} \frac{1}{n} \log \| A^{(n)} \cdots A^{(1)} \vec v \|
=
\begin{cases}
-L  &  \vec v \in V \setminus \{0\}\\
L   &  \vec v \in \R^2 \setminus V.
\end{cases}
$$
\end{theorem}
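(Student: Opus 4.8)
The plan is to prove this via the singular value decomposition of the partial products, showing that their "most contracted directions" converge, with the limiting line being the desired subspace $V$. Write $B_n \defeq A^{(n)}\cdots A^{(1)}$. Since $\det B_n = 1$, the singular values of $B_n$ are $s_n \defeq \|B_n\|$ and $s_n^{-1}$, and for $n$ large (where $s_n > 1$) there are orthonormal vectors $e_n, f_n \in \R^2$, unique up to sign, with $\|B_n e_n\| = s_n^{-1}$ (most contracted) and $\|B_n f_n\| = s_n$ (most expanded). The elementary fact that drives everything is: for a unit vector $v = \cos\theta\, e_n + \sin\theta\, f_n$ one has $\|B_n v\|^2 = s_n^{-2}\cos^2\theta + s_n^2\sin^2\theta$; in particular $\|B_n v\| \ge s_n|\sin\theta|$, so whenever $\|B_n v\| \le \delta\, s_n$ one gets $|\sin\theta| \le \delta$, i.e.\ $v$ lies within angle $\approx\delta$ of $\pm e_n$.

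Next I would show that $(e_n)$ converges in $\R\PP^1$. Fix $\epsilon \in (0, L/3)$; by the two hypotheses there is $N$ with $\|A^{(n)}\| \le e^{\epsilon n}$ and $e^{(L-\epsilon)n} \le s_n \le e^{(L+\epsilon)n}$ for all $n \ge N$. Since $B_{n+1} = A^{(n+1)}B_n$, we have $\|B_{n+1}e_n\| \le \|A^{(n+1)}\|\, s_n^{-1}$ while $\|B_{n+1}\| = s_{n+1}$, so the ratio $\|B_{n+1}e_n\|/\|B_{n+1}\|$ is at most $e^{(3\epsilon - 2L)n + O(1)}$, which is summable in $n$ because $3\epsilon - 2L < 0$. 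By the elementary fact above, this ratio dominates the angle in $\R\PP^1$ between $\mathrm{span}(e_n)$ and $\mathrm{span}(e_{n+1})$; hence $(e_n)$ is Cauchy in $\R\PP^1$, converges to some line $V$, and in fact $\dist_{\R\PP^1}(e_n, V) \le C e^{(3\epsilon - 2L)n}$. (Here one fixes unit representatives with $\sprod{e_n}{e_{n+1}} \ge 0$ in order to speak of convergence of vectors rather than merely of lines.)

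With $V$ in hand, the growth rates fall out. For $v$ spanning $V$, write $v = \cos\phi_n\, e_n + \sin\phi_n\, f_n$ with $|\sin\phi_n| \le C e^{(3\epsilon - 2L)n}$ and $\cos\phi_n \to 1$; then $\|B_n v\|^2 = s_n^{-2}\cos^2\phi_n + s_n^2\sin^2\phi_n$, where $s_n^{-2}\cos^2\phi_n$ is of order $e^{-2Ln}$ up to factors $e^{\pm 2\epsilon n}$ and, because the exponential smallness of $\sin\phi_n$ beats the factor $s_n^2 \le e^{2(L+\epsilon)n}$, the term $s_n^2\sin^2\phi_n$ is $\le C e^{(-2L + c\epsilon)n}$; letting $\epsilon \downarrow 0$ yields $\tfrac1n\log\|B_n v\| \to -L$. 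For $v \notin V$, the angle $\psi_n$ between $v$ and $e_n$ converges to the nonzero angle between $v$ and $V$, so $|\sin\psi_n|$ is bounded below for large $n$; then $\|B_n v\| \ge s_n|\sin\psi_n| \ge c\, e^{(L-\epsilon)n}$, while trivially $\|B_n v\| \le s_n\|v\| \le e^{(L+\epsilon)n}\|v\|$, so $\tfrac1n\log\|B_n v\| \to L$.

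The crux is the second step: the convergence of the most contracted directions is the only place where both hypotheses are genuinely used, and the one point requiring care is checking that the contraction rate $2L > 0$ really dominates the subexponential errors coming from $\|A^{(n)}\| = e^{o(n)}$ and from $s_n$ being $e^{(L+o(1))n}$ rather than exactly $e^{Ln}$ — once the summability $\sum_n e^{(3\epsilon - 2L)n} < \infty$ is secured, everything else is routine bookkeeping.
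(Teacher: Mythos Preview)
Your argument is correct and is essentially the standard proof of this result via the singular value decomposition of the partial products. Note that the paper itself does not supply a proof of this theorem: it simply refers the reader to \cite{Ruelle1979} and \cite[Theorem~2.8]{Dam:Random}, and your write-up closely matches the argument found in the latter reference.
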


For a proof of Theorem~\ref{t:ruelle}, see \cite{Ruelle1979} or \cite[Theorem~2.8]{Dam:Random}.
\medskip

For the Anderson model, we will also want to know that $L(E)$ is a continuous function of $E$, which follows easily from a theorem that goes back to F\"urstenberg and Kifer in the 1980s. Given a sequence of Borel probability measures $\{\nu_k\}$ supported in $\SL(2,\R)$, we say $\nu_k$ converges to $\nu$ \emph{weakly and boundedly} if
\begin{equation} \label{eq:boundedconv}
\int_{\|M\| \geq N} \! \log^+\|M\| \, \dd\nu_k(M)
+
\int_{\|M\| \geq N} \! \log^+\|M\| \, \dd\nu(M)
\to 0
\end{equation} as $N \to \infty$, uniformly in $k$ and
\[
\int \! f \, \dd\nu_k\to \int \! f \, \dd\nu
\]
for all $f\in C_c(\SL(2,\R),\C)$,  the space of continuous complex-valued functions on $\SL(2,\R)$ with compact support. In \eqref{eq:boundedconv}, we use $\log^+(x) = \max(\log x,0)$ to denote the positive part of $\log x$. We first state a (special case of a) theorem of F\"urstenberg and Kifer, which is another cornerstone for our analysis.

\begin{theorem}[F\"urstenberg, Kifer \mbox{\cite[Theorem~B]{fk}}] \label{t:furst2}
Let $\nu$ be a probability measure on $\SL(2,\R)$ for which
\[
\mathrm{Fix}(G_\nu)
:=
\set{V \in \R\PP^1 : MV = V \text{ for every } M \in G_\nu}
\]
contains at most one element. Then, if $\nu_k \to \nu$ weakly and boundedly, it holds that
\[
\lim_{k\to\infty}L(\nu_k)
=
L(\nu).
\]
\end{theorem}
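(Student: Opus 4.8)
The plan is to reduce the statement to the variational characterization of the Lyapunov exponent in terms of invariant measures on $\Omega \times \R\PP^1$ (or, in the i.i.d.\ setting, in terms of $\nu$-stationary measures on $\R\PP^1$), combined with an upper-semicontinuity argument for $L$ and a separate lower-semicontinuity argument that uses the hypothesis on $\mathrm{Fix}(G_\nu)$. Recall that for a probability measure $\eta$ on $\SL(2,\R)$ with $\int \log^+\|M\|\,\dd\eta < \infty$, Furstenberg's formula gives
\[
L(\eta) = \sup_{\theta} \int_{\SL(2,\R)} \int_{\R\PP^1} \log \frac{\|M v\|}{\|v\|} \, \dd\theta(\bar v)\, \dd\eta(M),
\]
where the supremum runs over $\eta$-stationary measures $\theta$ on $\R\PP^1$, and in fact the supremum is attained. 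The first step is to establish the upper bound $\limsup_k L(\nu_k) \le L(\nu)$: here one uses the subadditive/infimum formula $L(\eta) = \inf_n \tfrac1n \int \log\|M_n\|\,\dd\eta^{\otimes n}$, notes that for each fixed $n$ the functional $\eta \mapsto \tfrac1n\int \log\|M_n\|\,\dd\eta^{\otimes n}$ is continuous under weak-and-bounded convergence (the integrand $\log\|M^{(n)}\cdots M^{(1)}\|$ is continuous and the bounded-convergence condition \eqref{eq:boundedconv} controls the tails, so one can truncate and pass to the limit), and therefore $\limsup_k L(\nu_k) \le \inf_n \tfrac1n \int \log\|M_n\|\,\dd\nu^{\otimes n} = L(\nu)$. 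This half needs no hypothesis on $\mathrm{Fix}(G_\nu)$.

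The second and harder step is the lower bound $\liminf_k L(\nu_k) \ge L(\nu)$. Pick a subsequence (not relabeled) along which $L(\nu_k) \to \liminf_k L(\nu_k)=:\ell$, and for each $k$ choose a $\nu_k$-stationary measure $\theta_k$ on $\R\PP^1$ realizing $L(\nu_k)$. Since $\R\PP^1$ is compact, pass to a further subsequence so that $\theta_k \to \theta$ weakly for some probability measure $\theta$ on $\R\PP^1$; using the weak-and-bounded convergence $\nu_k \to \nu$ one checks that $\theta$ is $\nu$-stationary (the stationarity relation $\theta = \int M_*\theta\,\dd\nu(M)$ passes to the limit, testing against continuous functions on $\R\PP^1$ and using that pushforward under the $\R\PP^1$-action depends continuously on $M$). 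Then, again via \eqref{eq:boundedconv} to handle the tails of $\log\|M\|$ and the continuity of $(\bar v, M)\mapsto \log(\|Mv\|/\|v\|)$ on the relevant region, one passes to the limit in the integral to get
\[
\ell = \lim_k \int\!\!\int \log\frac{\|Mv\|}{\|v\|}\,\dd\theta_k\,\dd\nu_k = \int\!\!\int \log\frac{\|Mv\|}{\|v\|}\,\dd\theta\,\dd\nu \le L(\nu),
\]
the last inequality since $\theta$ is $\nu$-stationary and $L(\nu)$ is the supremum over such $\theta$. Combined with the upper bound this forces $L(\nu_k)\to L(\nu)$.

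The point where the hypothesis $\#\mathrm{Fix}(G_\nu) \le 1$ genuinely enters — and what I expect to be the main obstacle — is making the limiting integral identity above actually valid: in general $L(\nu) = \int\!\!\int \log(\|Mv\|/\|v\|)\,\dd\theta\,\dd\nu$ can \emph{fail} for a stationary $\theta$ that is a combination of an ``expanding'' and a ``contracting'' component (this is exactly the degenerate mechanism behind discontinuity of $L$ at reducible cocycles, e.g.\ diagonal ones with two fixed directions carrying opposite exponents). The Furstenberg--Kifer hypothesis rules this out: when $G_\nu$ has at most one common fixed direction, either $G_\nu$ is strongly irreducible (so there is a unique stationary measure, it has no atoms, and the formula with the supremum is automatically the formula for \emph{every} stationary measure), or $G_\nu$ has exactly one fixed direction $\bar v_0$ and the analysis of stationary measures shows any stationary $\theta$ is a convex combination $t\delta_{\bar v_0} + (1-t)\theta'$ with $\theta'$ the stationary measure of the ``quotient'' dynamics, and in either case the barycenter computation $\int\!\!\int \log(\|Mv\|/\|v\|)\,\dd\theta\,\dd\nu$ equals $L(\nu)$ because the two pieces cannot carry oppositely-signed Lyapunov exponents — there is no second invariant direction along which to hide a negative exponent. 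So the real content is the classification of $\nu$-stationary measures under the one-fixed-direction hypothesis, which is itself part of Furstenberg's circle of ideas (Theorem~\ref{furst} and its refinements) and which we would invoke. For the application in this paper we only need the corollary that $E \mapsto L(E)$ is continuous for the Anderson model, where by Theorem~\ref{t:positivityLE} each $G_{\nu_E}$ is even strongly irreducible, so $\mathrm{Fix}(G_{\nu_E}) = \emptyset$ and the hypothesis of Theorem~\ref{t:furst2} holds trivially; continuity of $E \mapsto \nu_E$ in the weak-and-bounded sense is immediate from the explicit continuous (indeed polynomial) dependence of $M^E(\alpha)$ on $E$ together with the uniform bound $\sup_{\alpha\in\CA}\|M^E(\alpha)\| < \infty$ on compact $E$-intervals.
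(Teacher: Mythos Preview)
The paper does not prove Theorem~\ref{t:furst2}; it is quoted from \cite{fk} and used as a black box, so there is no in-paper proof to compare against.

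On its own merits, your outline follows the Furstenberg--Kifer strategy, but there is a logical slip and then a genuine gap. The slip: your displayed chain ends with $\ell \le L(\nu)$, i.e.\ $\liminf_k L(\nu_k) \le L(\nu)$ --- the \emph{same} direction as your upper bound from the infimum formula --- so the sentence ``combined with the upper bound this forces $L(\nu_k)\to L(\nu)$'' is simply false as written. What you actually need, and what you then correctly flag as the main obstacle, is the equality $\int\!\!\int \log(\|Mv\|/\|v\|)\,\dd\theta\,\dd\nu = L(\nu)$ for the limit stationary measure $\theta$. The gap is in your justification of that equality. It is not true that under $\#\mathrm{Fix}(G_\nu) \le 1$ every $\nu$-stationary measure realizes the top exponent: if $\mathrm{Fix}(G_\nu) = \{V_0\}$ and $\int \log(\|Mv_0\|/\|v_0\|)\,\dd\nu(M) < 0$ (e.g.\ $\nu$ supported on generic upper-triangular $\SL(2,\R)$ matrices with contracting top-left entry), then $\delta_{V_0}$ is $\nu$-stationary and yields the bottom exponent $-L(\nu)$, not $L(\nu)$. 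So beyond classifying the stationary measures, one must argue that the limit $\theta = \lim_k \theta_k$ of \emph{maximizing} stationary measures cannot put mass on $\{V_0\}$; that is the actual work in \cite{fk}, and your heuristic (``no second invariant direction along which to hide a negative exponent'') misidentifies the obstacle --- the dangerous direction is precisely the unique fixed $V_0$, and one has to rule out collapse of the $\theta_k$ onto it.
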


Applying this theorem to the Anderson model, we obtain continuity of $L$ as a function of $E \in \R$.

\begin{theorem}\label{t:continuityLE}
In the Anderson model, $L(E)$ is continuous as a function of $E$. In particular, $L$ is uniformly positive in the sense that
\begin{equation}\label{eq:UPLE}
\gamma
:=
\inf_{E \in \R} L(E)> 0.
\end{equation}
\end{theorem}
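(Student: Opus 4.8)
The plan is to derive Theorem~\ref{t:continuityLE} directly from the two preceding theorems: the continuity statement of F\"urstenberg--Kifer (Theorem~\ref{t:furst2}) and the positivity statement (Theorem~\ref{t:positivityLE}). The argument splits naturally into two parts: (1) verifying the hypotheses of Theorem~\ref{t:furst2} so as to obtain continuity of $E \mapsto L(E)$, and (2) upgrading pointwise positivity to uniform positivity over all of $\R$.

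For the continuity part, fix $E_0 \in \R$ and let $E_k \to E_0$; I would show $\nu_{E_k} \to \nu_{E_0}$ weakly and boundedly. The bounded-convergence condition \eqref{eq:boundedconv} is immediate: the map $E \mapsto M^E(\alpha)$ is continuous and, since $\CA$ is compact and the $E_k$ lie in a bounded set, the matrices $M^{E_k}(\alpha)$ are uniformly bounded in norm, so the integrals in \eqref{eq:boundedconv} vanish identically for $N$ large. Weak convergence against $f \in C_c(\SL(2,\R),\C)$ follows because $\nu_{E} = M^E_* \widetilde\mu$ and $\alpha \mapsto f(M^E(\alpha))$ converges uniformly on $\CA$ to $\alpha \mapsto f(M^{E_0}(\alpha))$ as $E \to E_0$ (uniform continuity of $f$ composed with the jointly continuous $(E,\alpha)\mapsto M^E(\alpha)$ on a compact set), so one may pass to the limit under $\int \cdot\, \dd\widetilde\mu$. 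It then remains to check that $\mathrm{Fix}(G_{\nu_{E_0}})$ has at most one element; but this is exactly what was shown in the proof of Theorem~\ref{t:positivityLE}: there we exhibited $A' \in G_{\nu_{E_0}}$ with $A' V_1 \ne V_1$, while $V_1$ is the only candidate for a globally fixed direction, so in fact $\mathrm{Fix}(G_{\nu_{E_0}}) = \emptyset$. Hence Theorem~\ref{t:furst2} applies and $L(E_k) \to L(E_0)$, giving continuity at $E_0$; since $E_0$ was arbitrary, $L$ is continuous on $\R$.

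For the uniform positivity part, observe first that $L(E) = 0$ for all $E \notin \Sigma$: outside the almost-sure spectrum the cocycle is uniformly hyperbolic and, more elementarily, one does not even need this — what matters is that $\Sigma = \CA + [-2,2]$ is compact. On the compact set $\Sigma$, the function $L$ is continuous (by the first part) and strictly positive at every point (by Theorem~\ref{t:positivityLE}), hence attains a positive minimum $\gamma_0 := \min_{E \in \Sigma} L(E) > 0$. Since I also know $\inf_{E \in \R} L(E)$ is achieved on $\Sigma$ — because away from $\Sigma$ the Lyapunov exponent only increases (indeed off the spectrum there is uniform exponential growth; alternatively, one checks directly that $L(E) \ge \gamma_0$ for $|E|$ large using that the transfer matrices have a dominant eigenvalue bounded below) — we conclude $\gamma := \inf_{E \in \R} L(E) = \gamma_0 > 0$. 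If one prefers to avoid any claim about the behavior of $L$ off $\Sigma$, it suffices to note that localization and all estimates in the paper concern $E$ in a compact neighborhood of $\Sigma$, so one may simply restrict attention to such a compact set $K \supset \Sigma$ and take $\gamma := \min_{E \in K} L(E) > 0$.

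The only genuinely delicate point is making sure the weak-and-bounded convergence hypothesis of Theorem~\ref{t:furst2} is correctly verified, in particular the requirement that the tails in \eqref{eq:boundedconv} go to zero \emph{uniformly in $k$}; but this is painless here precisely because of the standing boundedness assumption on $M$ (equivalently, compactness of $\CA$), which forces the measures $\nu_{E_k}$ to be supported in a fixed compact subset of $\SL(2,\R)$ once the $E_k$ are confined to a bounded interval. There is no real obstacle — the work has all been done in the positivity proof (the construction of $A$ and $A'$) and in the statements of Theorems~\ref{t:furst2} and~\ref{t:positivityLE}; the present theorem is essentially a compactness argument gluing them together.
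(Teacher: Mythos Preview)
Your continuity argument is essentially identical to the paper's: verify weak-and-bounded convergence of $\nu_{E_k}$ to $\nu_{E_0}$ using the uniform compact support (which handles \eqref{eq:boundedconv} trivially) and dominated/uniform convergence for the weak part, then invoke Theorem~\ref{t:furst2} after noting that $\mathrm{Fix}(G_{\nu_{E_0}}) = \emptyset$ from the proof of Theorem~\ref{t:positivityLE}.

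The uniform-positivity paragraph, however, contains a genuine error and a gap. The assertion ``$L(E) = 0$ for all $E \notin \Sigma$'' is simply false --- outside the spectrum the cocycle is uniformly hyperbolic, which means $L(E) > 0$ there, not $L(E) = 0$. You seem to notice this mid-sentence and abandon it, but what you replace it with (``the infimum over $\R$ is achieved on $\Sigma$ because away from $\Sigma$ the Lyapunov exponent only increases'') is true but not obvious; the paper in fact proves exactly this statement separately in Remark~\ref{r:LEconcave} using the Thouless formula and concavity of $L$ on each complementary interval, which is more work than needed here. Your fallback of restricting to a compact $K \supset \Sigma$ does not prove the theorem as stated, since \eqref{eq:UPLE} asserts $\inf_{E \in \R} L(E) > 0$.

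The paper's route is the one you bury in a parenthetical: it shows directly that $L(E) \to +\infty$ as $|E| \to \infty$, which together with continuity forces the global infimum to be attained on some compact set and hence to be positive. This is elementary (for $|E|$ large, each one-step matrix $M^E(\alpha)$ has trace $|E - \alpha| > 2$, hence is hyperbolic with dominant eigenvalue growing in $|E|$), and it is the cleanest way to close the argument. You should promote that observation to your main line and drop the incorrect and unjustified claims.
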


\begin{proof}
Let $E \in \R$ be given, and, as above, let $\nu_E = M^E_*\widetilde\mu$ denote the induced measure on $\SL(2,\R)$. By Theorem~\ref{t:positivityLE}, $\nu_E$ is strongly irreducible; in particular, $\mathrm{Fix}(G_{\nu_E})$ is empty, so $\nu_E$ satisfies the assumption of Theorem~\ref{t:furst2}. Consequently, to prove continuity of $L(E)$ in $E$, it suffices to show that $\nu_{E_n} \to \nu_{E}$ weakly and boundedly whenever $E_n \to E$.

Given a sequence $E_n \to E$, one can verify that there is a uniform compact subset of $\SL(2,\R)$ that simultaneously supports $\nu_E$ and every $\nu_{E_n}$, so \eqref{eq:boundedconv} follows. Thus, we only need to show the ``weakly'' part. By dominated convergence, we obtain
$$
\lim_{n \to \infty}\int \! f \, \dd\nu_{E_n}
=
\lim_{n \to \infty} \int \! f\circ M^{E_n} \, \dd \widetilde\mu
=
\int \! f\circ M^{E} \, \dd\widetilde\mu
=
\int \! f \, \dd\nu_{E}
$$
for every $f\in C(\SL(2,\R),\C)$, which concludes the proof of continuity.

Combining continuity of $L$ with Theorem~\ref{t:positivityLE}, we see that $L(E)$ is uniformly bounded from below away from zero on any compact set. On the other hand, one can check that
$$
\lim_{|E|\to\infty}L(E) = +\infty,
$$
so \eqref{eq:UPLE} follows.
\end{proof}

\begin{remark}\label{r:LEconcave}
We have defined $\gamma$ to be the global minimum of $L(E)$ over $E \in \R$. However, it turns out that
\begin{equation} \label{eq:UPLEspec}
\gamma = \min_{E \in \Sigma} L(E),
\end{equation}
that is, $L$ achieves its minimum value on $\Sigma$. To see this, one may use the Thouless formula \cite{AvrSim1983DMJ, CraSim1983DMJ, Thouless1972JPhysC}:
\begin{equation} \label{eq:thouless}
L(E)
=
\int_\R \log|E - x| \, \dd N(x).
\end{equation}
In \eqref{eq:thouless}, $\dd N$ denotes the \emph{density of states measure} (DOS) associated with the family $\{H_\omega\}_{\omega\in\Omega}$, which is defined by
\[
\int \! g(E) \, \dd N(E)
=
\int_\Omega \! \langle \delta_0, g(H_\omega) \delta_0 \rangle \, \dd\mu(\omega)
\]
for bounded measurable functions $g$. The DOS is a Borel probability measure with $\supp \dd N = \Sigma$ \cite{AvrSim1983DMJ}. Then, if $E \notin \Sigma$, $L$ is differentiable at $E$ with
\[
\frac{\dd L}{\dd E}(E)
=
\int_\R \frac{\dd N(x)}{E-x}.
\]
This follows by using dominated convergence and noting that $E \notin \Sigma$ allows one to uniformly bound $h^{-1}(\log|E + h - x| - \log|E-x|)$ over $x \in \Sigma$ for sufficiently small $h$. Applying this argument again, one gets
\[
\frac{\dd^2 L}{\dd E^2}(E)
=
-\int_\R \frac{\dd N(x)}{(E-x)^2}
<
0.
\]
Then, since $L$ is continuous on $\R$, and $L(E) \to \infty$ as $|E| \to \infty$, \eqref{eq:UPLEspec} follows. The reader should note that the argument used to deduce differentiability of $L$ no longer works if $E \in \Sigma  = \supp\dd N$, so we only get smoothness of $L$ outside of $\Sigma$.
\end{remark}

The last piece of information that we will need to run our arguments is a statement to the effect that, for any sequence of unit vectors $v_n \in \R^2$, $\|M^E_n(\omega) v_n \|$ grows like $e^{nL(E)}$ as $n \to \infty$ ($\mu$-almost surely); this statement will enable us to prove an initial scale estimate that we can then use to inductively prove a Large Deviation Theorem. To do this, we must verify one final technical hypothesis on the group $G_{\nu_E}$.

\begin{defi}\label{d:contraction} Given a subset $G$ of $\mathrm{GL}(d, \mathbb{\R})$, we say that $G$ is \emph{contracting} if there exists a sequence $\{g_n\}_{n = 1}^\infty$ in $G$ for which $\displaystyle \|g_n\|^{-1}g_n$ converges to a rank-one operator.
\end{defi}

\begin{prop}\label{p:contraction}
In the Anderson model, $G_{\nu_E}$ is contracting for every $E$.
\end{prop}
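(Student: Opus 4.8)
The plan is to exhibit an explicit sequence inside $G_{\nu_E}$ whose normalizations converge to a rank-one matrix; most of the work has in fact already been carried out in the proof of Theorem~\ref{t:positivityLE}. Fix $E \in \R$ and, using the standing assumption $\#\CA \ge 2$, pick distinct $a, b \in \CA$ and set $c := a - b \ne 0$. As observed there, the unipotent element
\[
A := M_a M_b^{-1} = \begin{bmatrix} 1 & c \\ 0 & 1 \end{bmatrix}
\]
belongs to $G_{\nu_E}$, and since $G_{\nu_E}$ is a group, so does $A^n = \begin{bmatrix} 1 & nc \\ 0 & 1 \end{bmatrix}$ for every $n \ge 1$. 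I would take $g_n := A^n$ as the required sequence.

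It then remains to check that $\|g_n\|^{-1} g_n$ converges to a rank-one operator. From $\|A^n \vec e_2\| = \sqrt{1 + n^2 c^2} \le \|A^n\| \le \sqrt{2 + n^2 c^2}$ (the right-hand bound by comparison with the Frobenius norm) one sees that $\|A^n\| \to \infty$ and that $nc/\|A^n\| \to \mathrm{sgn}(c)$ while the remaining entries of $\|A^n\|^{-1} A^n$ tend to $0$; hence
\[
\|A^n\|^{-1} A^n \longrightarrow \begin{bmatrix} 0 & \mathrm{sgn}(c) \\ 0 & 0 \end{bmatrix}
\qquad \text{as } n \to \infty,
\]
the convergence being entrywise and therefore in operator norm, as we are in a finite-dimensional space. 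The limiting matrix is nonzero and has rank one, so $G_{\nu_E}$ is contracting in the sense of Definition~\ref{d:contraction}.

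Since everything reduces to this short computation, there is no substantive obstacle here; the only points worth care are that one must feed into Definition~\ref{d:contraction} an element of $G_{\nu_E}$ that is genuinely unbounded in norm — the unipotent $A$ does the job, exactly as in the non-compactness half of Theorem~\ref{t:positivityLE} — and that one should confirm the normalized limit is of rank exactly one (rather than rank two or zero), which the explicit form above makes transparent.
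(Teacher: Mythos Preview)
Your proof is correct and follows essentially the same approach as the paper's own proof: both take $g_n = A^n$ with $A = M_a M_b^{-1}$ and observe that the normalized sequence converges to a rank-one matrix. Your version is in fact slightly more careful, tracking the sign of $c = a - b$ in the limit and supplying the norm estimates that the paper leaves implicit.
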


\begin{proof}
Given $E$, let $a$, $b$, and $A$ be as in the proof of Theorem~\ref{t:positivityLE} , and take $g_n = A^n$. It is easy to check that
$$
\|g_n\|^{-1}g_n
\to
\begin{bmatrix} 0 & 1\\ 0 & 0\end{bmatrix},
$$
which concludes the proof.
\end{proof}

\begin{prop}
\label{prop:LEconvergentVects}
For any $E$ and any convergent sequence $\set{\vec v_n}_{n=1}^\infty$ of unit vectors in $\R^2$,
$$
\lim_{n\to\infty}\frac{1}{n}\log \left\| M_n^E (\omega)\vec v_n \right\|
=
L(E)
$$
for almost every $\omega\in\Omega$.
\end{prop}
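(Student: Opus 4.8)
The plan is to reduce the statement about the moving sequence $\vec v_n$ to the classical fixed-vector statement together with a uniform control on how far $M_n^E(\omega)$ can contract a direction away from the Oseledec subspace. Fix $E$ and write $v_\infty = \lim_n \vec v_n$. First I would recall from Theorems~\ref{t:positivityLE} and \ref{t:ruelle} (applied with $A^{(j)} = M^E(T^{j-1}\omega)$, whose norms are $\mu$-a.s.\ subexponential since $M$ is bounded) that for $\mu$-a.e.\ $\omega$ there is a one-dimensional ``stable'' subspace $V(\omega) \in \R\PP^1$ such that $\tfrac1n \log \|M_n^E(\omega)\vec v\| \to -L(E)$ for $\vec v \in V(\omega)\setminus\{0\}$ and $\to L(E)$ for $\vec v \notin V(\omega)$. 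If the limiting direction $v_\infty$ is \emph{not} in $V(\omega)$, then since $V(\omega)$ is a closed condition and $\vec v_n \to v_\infty$, all but finitely many $\vec v_n$ lie in a fixed small cone around $v_\infty$ that misses a neighborhood of $V(\omega)$; a soft compactness/continuity argument (the angle between $M_n^E(\omega)\vec v_n$ and $M_n^E(\omega)v_\infty$ is controlled because both are images of nearby unit vectors under the same matrix) then gives $\tfrac1n\log\|M_n^E(\omega)\vec v_n\| \to L(E)$, which is the desired conclusion since $L(E) > 0$ means the lower bound $\|M_n^E(\omega)\vec v_n\| \ge \|M_n^E(\omega)\| / \|[M_n^E(\omega)]^{-1}\| \cdot (\text{stuff})$ need not even be invoked.

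The real issue is the \emph{measure-zero but still problematic} event $\{v_\infty \in V(\omega)\}$, and more subtly the event that $v_\infty \notin V(\omega)$ but the $\vec v_n$ approach $V(\omega)$ faster than the cocycle expands — this is exactly why the contraction hypothesis (Proposition~\ref{p:contraction}) and the strong-irreducibility/non-compactness input are needed. The standard way around this, following Bougerol--Lacroix and the treatment of Ruelle-type statements, is: the event $\{v_\infty \in V(\omega)\}$ has $\mu$-measure zero for each \emph{fixed} $v_\infty$, because $V(\omega)$ is a single random direction whose law has no atoms — the stationary measure on $\R\PP^1$ associated to the contracting, strongly irreducible group $G_{\nu_E}$ is non-atomic, and $V(\omega)$ (or rather its distribution under the ``backward'' dynamics) is absolutely continuous with respect to this stationary measure. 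Hence, after discarding a $\mu$-null set depending on $v_\infty$ (equivalently on the sequence, since only the limit matters), we are in the good case $v_\infty \notin V(\omega)$ almost surely, and the soft argument of the previous paragraph applies.

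Concretely the key steps, in order, are: (1) invoke Theorem~\ref{t:ruelle} to get $V(\omega)$ and Lyapunov behavior for fixed vectors, a.s.; (2) show the distribution of $V(\cdot)$ is non-atomic on $\R\PP^1$ using the contraction property and strong irreducibility of $G_{\nu_E}$ (this is where Proposition~\ref{p:contraction} enters), so that $\mu(\{v_\infty \in V(\omega)\}) = 0$; (3) on the complementary full-measure set, observe that $\vec v_n \to v_\infty \notin V(\omega)$ forces, for $n$ large, $\vec v_n$ to stay in a cone bounded away from $V(\omega)$, so the Lyapunov behavior is stable under this perturbation — quantitatively, writing $\vec v_n = c_n v_\infty + w_n$ with $w_n \to 0$ and $v_\infty \notin V(\omega)$, the component of $M_n^E(\omega)\vec v_n$ along the unstable direction still grows like $e^{nL(E)}$ and dominates, so the limit is $L(E)$. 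The main obstacle is step (2): proving the stable direction $V(\omega)$ has a non-atomic law requires genuinely invoking Furstenberg-type ergodic theory for the projective action (uniqueness and non-atomicity of the stationary measure under conditions (i), (ii) of Theorem~\ref{furst} plus contraction), rather than just the positivity statement $L(E) > 0$; everything else is soft linear algebra and the already-established Ruelle theorem.
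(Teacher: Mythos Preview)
Your outline is correct and essentially reconstructs the Bougerol--Lacroix argument the paper cites as a black box (the paper gives no independent proof, simply invoking \cite[pp.~53--54, Corollary~3.4]{bougerollacroix} via strong irreducibility and contraction). One small sharpening for your step~(3): decompose $\vec v_n$ in the basis $\{v_\infty,\vec s\}$ with $\vec s\in V(\omega)$ a unit stable vector, so that the remainder $w_n=b_n\vec s$ satisfies $\|M_n^E(\omega)w_n\|=|b_n|\,e^{-n(L(E)+o(1))}$ and is genuinely dominated by $c_n\|M_n^E(\omega)v_\infty\|\sim e^{nL(E)}$; with a generic perpendicular $w_n\to 0$ you only get $\|M_n w_n\|\le \|M_n\|\,\|w_n\|$, and since $\|M_n v_\infty\|/\|M_n\|=e^{o(n)}$ is not a priori bounded below, that cruder bound does not close the argument by itself.
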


\begin{proof}
This follows from \cite[pp.\ 53--54, Corollary~3.4]{bougerollacroix}, which may be invoked because $G_{\nu_E}$ is strongly irreducible and contracting.
\end{proof}

Proposition~\ref{prop:LEconvergentVects} represents the starting point of our new proof of the Large Deviation Theorem. We want to say a few words about this proposition since it is already a highly nontrivial result. First, Proposition~\ref{prop:LEconvergentVects} trivially holds true if the Lyapunov exponent is zero, that is, when $L=0$. Consequently, the nontrivial part of this proposition lies in the case when $L>0$.

From Oseledec's Multiplicative Ergodic Theorem, we know that if $L>0$, the cocycle \eqref{eq:cocycle} has two invariant sections (i.e.\ measurable maps $\Lambda^s,\Lambda^u:\Omega \to \R \PP^1$ with $A\Lambda^\bullet = \Lambda^\bullet \circ T$ for $\bullet \in \{s,u\}$) that are called \emph{stable} and \emph{unstable sections}. For almost every $\omega$, vectors drawn from the stable subspace $\Lambda^s(\omega)$ will contract exponentially fast in forward time (under the cocycle map) with the rate $L$, while vectors drawn from the unstable subspace $\Lambda^u(\omega)$ contract in backward time. Away from the stable direction, every vector grows exponentially in forward time with the rate $L$. Thus, Propsition~\ref{prop:LEconvergentVects} is a more general and sophisticated version of the following statement: for products of i.i.d.\ random $\SL(2,\R)$ matrices obeying the conditions of Theorem~\ref{furst} and ~\ref{t:furst2}, every nonzero vector in $\R^2$ is not in the stable direction of the phase $\omega$ with probability one.

\section{Large Deviation Estimates for Products of i.i.d.\ Matrices}
\label{sec:LDT}

For all $E\in\R$, Theorem~\ref{t:positivityLE} and Proposition~\ref{p:contraction} imply that $G = G_{\nu_E}$ is noncompact, strongly irreducible, and contracting. From Theorem~\ref{t:continuityLE}, we know that $L$ is a continuous function of $E$ (which is a consequence of continuity of the cocycle map and the contracting property). We will use these properties to deduce a suitable uniform (in $E$) Large Deviation Theorem (LDT).

Henceforth, define
\[
\hat{\Sigma}
:=
[-\kappa,\kappa],
\quad
\kappa := 2 + \max_{\alpha \in \CA}|\alpha|
\]
In particular,  $\hat\Sigma$ is a compact interval containing the almost sure spectrum, $\Sigma$, defined in \eqref{eq:almostSureSpecDef}. The goal of this section is to supply a simple proof of the following uniform LDT using strong irreducibility and contractivity of $G_{\nu_E}$.

\begin{theorem}\label{t:LDT}
For any $\e>0$, there exist $C = C(\e) > 0$, $\eta = \eta(\e) > 0$ such that
$$
\mu\left\{\omega \in \Omega: \left|\frac1n\log\left\|M_n^E(\omega) \right\| - L(E) \right| \geq \e\right\}
\leq
Ce^{-\eta n}
$$
for all $n \in \Z_+$ and all $E \in \hat{\Sigma}$.
\end{theorem}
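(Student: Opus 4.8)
The plan is to prove Theorem~\ref{t:LDT} in two stages: first establish an \emph{initial scale estimate} for a single, fixed (large) length $N$, and then \emph{bootstrap} it to all scales via an almost-superadditivity argument that exploits the independence of the underlying i.i.d.\ structure. For the initial estimate, fix $\e>0$. By Proposition~\ref{prop:LEconvergentVects}, for every $E$ and every convergent sequence of unit vectors $\vec v_n$, one has $\frac1n\log\|M_n^E(\omega)\vec v_n\|\to L(E)$ almost surely; combined with Kingman's theorem (which gives $\frac1n\log\|M_n^E(\omega)\|\to L(E)$ a.s.\ and the $\inf$ formula for $L$), I would first deduce that for each fixed $E$ there is an $N=N(E,\e)$ with $\mu\{\,|\tfrac1N\log\|M_N^E\|-L(E)|\ge\e/2\,\}$ as small as desired. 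The content here is that this $N$ can be chosen \emph{uniformly in $E\in\hat\Sigma$}: this uses compactness of $\hat\Sigma$, continuity of $L$ (Theorem~\ref{t:continuityLE}), continuity of the cocycle $E\mapsto M^E$, and the fact that the constants in Proposition~\ref{prop:LEconvergentVects} / the Furstenberg--Kifer setup vary continuously (one covers $\hat\Sigma$ by finitely many intervals on each of which the estimate holds).

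Next, the bootstrap. Write $u_n(\omega,E)=\frac1n\log\|M_n^E(\omega)\|$. The transfer-matrix cocycle relation $M_{n+m}^E(\omega)=M_m^E(T^n\omega)M_n^E(\omega)$ gives subadditivity $u_{n+m}\le \frac{n}{n+m}u_n + \frac{m}{n+m}u_m\circ(\text{shift})$, but to get a one-sided \emph{large deviation} bound one also needs a matching lower bound on most of the probability space, which is where Proposition~\ref{prop:LEconvergentVects} enters in an essential way: applied to the unit vector $\vec v_n$ pointing in the most expanded direction of $M_n^E(\omega)$ (which converges along subsequences), it controls $\|M_{n+m}^E(\omega)\|\ge \|M_m^E(T^n\omega)\,\widehat{M_n^E(\omega)}\|\cdot\|M_n^E(\omega)\|$ from below. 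The key point is that $M_n^E(\omega)$ and $M_m(T^n\omega)$ depend on \emph{disjoint blocks} of coordinates of $\omega$, hence are independent; so once the initial-scale set $B_N=\{\omega:|u_N(\omega,E)-L(E)|\ge\e/2\}$ has measure $\le\theta<1$ (with $\theta$ as small as we wish by choosing $N$ large), writing $n=kN+r$ and concatenating $k$ independent blocks of length $N$, a standard large-deviation/Bernstein-type count over the $\sim 2^k$ sign patterns of ``good vs.\ bad blocks'' yields $\mu\{|u_n(\cdot,E)-L(E)|\ge\e\}\le C e^{-\eta k}\le C' e^{-\eta' n}$, with $C,\eta$ depending only on $\e$ (through $N,\theta$) and not on $E$.

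The main obstacle, and the step I would spend the most care on, is making the concatenation argument honestly produce a \emph{two-sided} deviation bound with uniform constants. The upper-deviation direction ($u_n$ too large) is soft: it follows from subadditivity of $\log\|\cdot\|$ together with the fact that the expected block contribution is $\le L(E)+\e/2$ and the blocks are i.i.d., via Azuma/Hoeffding or a direct exponential Chebyshev estimate on the bounded random variables $\log\|M_N^E(T^{jN}\omega)\|$ (bounded because $\sup_\alpha\|M^E(\alpha)\|<\infty$ on $\hat\Sigma$). The lower-deviation direction ($u_n$ too small) is the delicate one: a product of matrices can be much smaller in norm than the ``typical'' growth would suggest if the expanding directions of consecutive blocks are nearly orthogonal to each other, so one cannot simply multiply lower bounds. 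The resolution is to use Proposition~\ref{prop:LEconvergentVects} to guarantee that, with overwhelming probability, the most-expanded direction of the accumulated product $M_{jN}^E(\omega)$ is \emph{not} in the contracting direction of the next block $M_N^E(T^{jN}\omega)$ — more precisely, that $\|M_N^E(T^{jN}\omega)\,\vec v_{jN}\|\ge e^{N(L(E)-\e/2)}$ for the relevant unit vector $\vec v_{jN}$ with high probability, uniformly — and then independence of the blocks lets one multiply these high-probability lower bounds. Keeping every constant independent of $E\in\hat\Sigma$ throughout requires that all the auxiliary estimates (the rate of convergence in Proposition~\ref{prop:LEconvergentVects}, the spectral gap controlling the contracting property, the modulus of continuity of $L$) be taken uniformly on the compact set $\hat\Sigma$, which I would justify once at the outset by a compactness-and-continuity argument and then invoke freely.
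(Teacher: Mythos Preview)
Your proposal is correct and follows essentially the same route as the paper: an initial-scale estimate made uniform in $E$ via continuity and compactness, followed by a block decomposition $n = kN + r$ in which independence of disjoint coordinate blocks is exploited, with the crucial observation that the lower-deviation direction requires the \emph{vectorwise} estimate from Proposition~\ref{prop:LEconvergentVects} (since the intermediate direction $\vec v_{jN}$ depends only on earlier blocks, hence is independent of the next one). The paper organizes this slightly differently---it first proves the full vectorwise LDT (Proposition~\ref{pr:almost_independence}) via a combinatorial count over ``bad'' index sets $\CJ\subseteq[0,P-1]$ rather than Azuma/Hoeffding, and only then passes to the norm version by testing against $\vec e_1,\vec e_2$---but the substance is the same.
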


We begin by proving a first step towards this estimate. Let us denote the set of all unit vectors in $\R^2$ by
\[
\bbS^1
=
\set{\vec v \in \R^2 : |v_1|^2 + |v_2|^2 = 1}.
\]

\begin{prop} \label{pr:ldtfirststep}
For any $\e > 0$, $0<\delta<1$, and $E \in \hat\Sigma$, there exists $N = N(\e,\delta,E)$ such that
\begin{equation} \label{eq:LDT:firststep}
\mu\left\{\omega:\left|\frac1n \log \left\| M_n^E(\omega) \vec v \right\| - L(E) \right| \geq \e \right\}
<
\delta,
\end{equation}
for all $n \geq N$ and $\vec v \in \bbS^1$.
\end{prop}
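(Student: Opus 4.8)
The plan is to deduce Proposition~\ref{pr:ldtfirststep} from Proposition~\ref{prop:LEconvergentVects} together with a uniformity/compactness argument on the sphere $\bbS^1$. The crux is that Proposition~\ref{prop:LEconvergentVects} already gives, for each \emph{fixed} unit vector $\vec v$, that $\frac1n\log\|M_n^E(\omega)\vec v\| \to L(E)$ for $\mu$-a.e.\ $\omega$; by Egorov's theorem (or simply the definition of a.e.\ convergence plus a bounded-convergence estimate) this upgrades, for each fixed $\vec v$, to: for every $\e>0,\delta>0$ there is $N(\e,\delta,E,\vec v)$ with $\mu\{\omega:|\frac1n\log\|M_n^E(\omega)\vec v\|-L(E)|\ge\e\}<\delta$ for all $n\ge N$. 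So the real content is to make $N$ \emph{independent of} $\vec v\in\bbS^1$. First I would record the deterministic two-sided bound $\bigl|\log\|M_n^E(\omega)\vec v\|\bigr|\le n\log\Gamma_E$ valid for all unit $\vec v$, where $\Gamma_E := \max_{\alpha\in\CA}\|M^E(\alpha)\|$ (bounded uniformly for $E$ in the compact set $\hat\Sigma$); this is because $\|M_n^E(\omega)\|\le \Gamma_E^{\,n}$ and $\|M_n^E(\omega)\vec v\|\ge \|M_n^E(\omega)^{-1}\|^{-1}\ge \Gamma_E^{-n}$. This Lipschitz-in-$\vec v$ (on $\bbS^1$) control of $\frac1n\log\|M_n^E(\omega)\vec v\|$, with constant bounded by $\log\Gamma_E$ independent of $n$ and $\omega$, is what lets a finite net do the job.

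The main step is then a net argument. Fix $\e,\delta,E$. Choose a finite $\tfrac{\e}{3\log\Gamma_E}$-net $\{\vec v_1,\dots,\vec v_K\}$ of $\bbS^1$ (here $K=K(\e,E)$, and $K$ can be taken uniform over $E\in\hat\Sigma$ since $\Gamma_E$ is bounded above and below there). For each $j$, Proposition~\ref{prop:LEconvergentVects} applied to the constant sequence $\vec v_n\equiv\vec v_j$ (together with the Egorov-type upgrade above) yields $N_j$ with $\mu\{\omega:|\frac1n\log\|M_n^E(\omega)\vec v_j\|-L(E)|\ge \e/3\}<\delta/K$ for $n\ge N_j$. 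Set $N=\max_j N_j$. Now for an arbitrary $\vec v\in\bbS^1$, pick $\vec v_j$ with $|\vec v-\vec v_j|<\e/(3\log\Gamma_E)$; since $\vec x\mapsto \log\|M_n^E(\omega)\vec x\|$ restricted to $\bbS^1$ has Lipschitz constant at most... actually the cleaner bound is $\bigl|\log\|M_n^E(\omega)\vec v\|-\log\|M_n^E(\omega)\vec v_j\|\bigr|\le \log\bigl(1+\Gamma_E^{2n}|\vec v-\vec v_j|\bigr)$ -- hmm, that is too weak. The correct comparison is: writing $\vec v = c\,\vec v_j + \vec w$ with $|c|\ge$ (bounded below) when $\vec v$ is close to $\vec v_j$, one has $\|M_n^E\vec v\|\ge |c|\,\|M_n^E\vec v_j\| - \|M_n^E\|\,|\vec w|$, and the subtracted term is of order $\Gamma_E^{2n}|\vec v-\vec v_j|$ relative to $\|M_n^E\vec v_j\|$ (using $\|M_n^E\vec v_j\|\ge\Gamma_E^{-n}$), which is \emph{not} controlled by a fixed-size net. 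So the honest approach must instead exploit that the \emph{bad} set is what we are bounding: for $n\ge N_j$, off a set of measure $<\delta/K$ we have $\|M_n^E(\omega)\vec v_j\|\ge e^{n(L(E)-\e/3)}$, and then $\|M_n^E(\omega)\vec v\|\ge |c|e^{n(L(E)-\e/3)} - \Gamma_E^{2n}|\vec w|$; this is useless when $|\vec w|$ is a fixed constant. Therefore I would instead feed Proposition~\ref{prop:LEconvergentVects} directly with a \emph{$\vec v$-dependent convergent sequence}: given $\vec v\in\bbS^1$, the constant sequence works, so really the right statement to prove is that $N$ can be chosen uniformly, and the tool for \emph{that} is Proposition~\ref{prop:LEconvergentVects} in its full strength -- it handles convergent sequences $\vec v_n\to\vec v$, which is precisely what rules out the net-boundary pathology. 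Concretely, suppose \eqref{eq:LDT:firststep} fails uniformly: then there are $n_k\to\infty$ and unit vectors $\vec v^{(k)}$ with $\mu\{\omega:|\frac1{n_k}\log\|M_{n_k}^E(\omega)\vec v^{(k)}\|-L(E)|\ge\e\}\ge\delta$. By compactness of $\bbS^1$, pass to a subsequence so that $\vec v^{(k)}\to\vec v^\infty$. Define a full sequence $\vec w_n$ by interpolating so that $\vec w_{n_k}=\vec v^{(k)}$ and $\vec w_n\to\vec v^\infty$ -- wait, $\vec v^{(k)}\to\vec v^\infty$ already, so just set $\vec w_n = \vec v^{(k)}$ for $n_{k}\le n< n_{k+1}$; then $\vec w_n\to\vec v^\infty$. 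By Proposition~\ref{prop:LEconvergentVects}, $\frac1n\log\|M_n^E(\omega)\vec w_n\|\to L(E)$ a.e., hence in measure, so $\mu\{\omega:|\frac1{n}\log\|M_{n}^E(\omega)\vec w_{n}\|-L(E)|\ge\e\}\to 0$; restricting to $n=n_k$ contradicts the lower bound $\delta$. This is the clean argument and it is the step I'd organize the proof around.

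Let me also note the one place needing care: Proposition~\ref{prop:LEconvergentVects} is stated for a \emph{convergent} sequence of unit vectors, and the interpolated sequence $\vec w_n$ above is indeed convergent (to $\vec v^\infty$), so the hypothesis is met; no separate approximation lemma is needed. The deterministic bound $\Gamma_E^{-n}\le\|M_n^E(\omega)\vec v\|\le\Gamma_E^{n}$ is still worth stating since it guarantees the random variable $\frac1n\log\|M_n^E(\omega)\vec v\|$ is uniformly bounded, which is what legitimizes passing from a.e.\ convergence to convergence in measure (and is reused later in Section~\ref{sec:LDT}). I expect the main obstacle is purely expository -- resisting the temptation to do a naive net argument (which fails, as the exponential growth $\Gamma_E^{2n}$ beats any fixed net spacing) and instead routing everything through the full strength of Proposition~\ref{prop:LEconvergentVects} for convergent sequences, which is exactly why that proposition was stated in that generality.
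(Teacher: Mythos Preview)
Your proposal is correct and, after the exploratory detour through the (correctly abandoned) net argument, lands on essentially the same proof as the paper: argue by contradiction, extract $n_k\to\infty$ and unit vectors $\vec v^{(k)}$ with bad sets of measure $\ge\delta$, pass to a subsequence so $\vec v^{(k)}\to\vec v^\infty$, and then invoke Proposition~\ref{prop:LEconvergentVects} to force the bad-set measures to $0$. The only cosmetic difference is that the paper routes the contradiction through $L^1$-convergence (dominated convergence, using the uniform bound $|n^{-1}\log\|M_n\vec v\||\le\log\Gamma$) and the lower bound $\int_{A_k}\ge\e_0\delta_0$, whereas you use a.e.\ $\Rightarrow$ in-measure convergence directly; your explicit interpolation $\vec w_n=\vec v^{(k)}$ for $n_k\le n<n_{k+1}$ is precisely the detail needed to apply Proposition~\ref{prop:LEconvergentVects} along the subsequence, which the paper leaves implicit.
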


\begin{proof}
Suppose for the sake of establishing a contradiction that the statement of the proposition is false; that is, suppose that there exist $E_0 \in \hat \Sigma$, $\delta_0 \in (0,1)$, $\e_0 > 0$, a sequence of integers $n_k \to \infty$ and unit vectors $\vec v_{k} \in \bbS^1$ such that
\begin{equation} \label{eq:ldtfirststep:badWs}
\mu\! \set{ \omega : \left|\frac{1}{n_k}\log \left\| M_{n_k}^{E_0}(\omega)\vec v_k \right\| - L(E_0) \right| \geq \e_0 }
\geq
\delta_0
\text{ for every } k \in \Z_+.
\end{equation}
Since $E_0$ plays no role in the argument, we suppress it from the notation for the remainder of the proof. By passing to a subsequence, we may assume that the vectors $\vec v_{k}$ converge to a vector $\vec v_\infty \in \bbS^1$. Consequently, since $n_k^{-1} \log\|M_{n_k}\vec v_k\| - L$ is uniformly bounded on $\Omega$, we may apply Proposition~\ref{prop:LEconvergentVects} and dominated convergence to get
\begin{equation} \label{eq:ldtfirststep:lyapConv}
\lim_{k \to \infty} \int_\Omega \left|\frac{1}{n_k}\log\|M_{n_k}(\omega)\vec v_k\| - L\right| \, \dd\mu(\omega)
=
0.
\end{equation}

On the other hand, letting $A_k$ denote the set on the left-hand side of \eqref{eq:ldtfirststep:badWs}, we get
\[
\int_\Omega \left|\frac{1}{n_k}\log\|M_{n_k}\vec v_{k}\| - L\right| \, \dd\mu
\geq
\int_{A_k} \left|\frac{1}{n_k}\log\|M_{n_k}\vec v_{k}\| - L\right| \, \dd\mu
\geq
\e_0\d_0
\]
for every $k \in \Z_+$, which contradicts \eqref{eq:ldtfirststep:lyapConv}.
\end{proof}

Our next goal is to extend this proposition to a neighborhood of $E$, so we need to estimate how changes in $E$ perturb $M_n$. To do this, we define
\begin{equation} \label{eq:FnOmegaDef}
F_n(\omega,E)
=
\frac{1}{|n|}\log\|M_n^E(\omega)\|,
\quad
n \in \Z,  \; \omega \in \Omega, \; E \in \R,
\end{equation}
where we adopt the convention $F_0 \equiv 1$. Throughout the paper, there will be various uniform bounds that can be controlled in terms of bounds on the single-step matrices. Hence, we introduce
\begin{equation} \label{eq:oneStepTMBoundDef}
\Gamma
:=
\sup\set{
\left\| M^E(\alpha) \right\| : E \in \hat\Sigma, \; \alpha \in \CA}.
\end{equation}
Since $\Omega$ and $\hat \Sigma$ are compact and $M^E(\omega)$ is continuous as a function of $(E,\omega) \in \hat\Sigma \times \Omega$, it follows that $\Gamma$ is finite. From the definitions, it is easy to see that
\[
|F_n(\omega,E)|
\leq
\log\Gamma \text{ for all } n \in \Z, \; \omega \in \Omega, \; E \in \hat\Sigma.
\]
Combining this with \eqref{eq:UPLE}, we get the uniform bounds
\begin{equation} \label{eq:unifLEbounds}
\gamma
\leq
L(E)
\leq
\log\Gamma
\text{ for all } E \in \hat\Sigma.
\end{equation}

The following lemma is a straightforward calculation.

\begin{lemma} \label{lem:Fnlipschitz}
We have
\begin{align}
\label{eq:MnDiffBound}
\left\| M_n^E(\omega) - M_n^{E'}(\omega') \right\|
& \leq
n \Gamma^{n-1}\left(|E-E'|+ \max_{0\leq j < n}|\omega_j - \omega_j'| \right).
\end{align}
for all $E, E' \in \Sigma$, $\omega, \omega' \in \Omega$, and $n \in \Z_+$. Therefore,
\begin{align}
\label{eq:Fnderiv}
|F_n(\omega,E) - F_n(\omega',E')|
& \leq
\Gamma^{n-1}\left( |E-E'|+ \max_{0\leq j < n}|\omega_j - \omega_j'| \right)
\end{align}
for all $E$, $E'$, $\omega$, $\omega'$, and $n$. Furthermore, with
\[
L_n(E)
:=
\int_\Omega \! F_n(\omega,E) \, \dd\mu(\omega),
\]
one has
\begin{equation} \label{eq:LnLipsch}
|L_n(E) - L_n(E')|
\le
\Gamma^{n-1} |E-E'|.
\end{equation}
\end{lemma}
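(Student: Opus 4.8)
The statement is Lemma~\ref{lem:Fnlipschitz}, which asks for three elementary bounds: a bound on $\|M_n^E(\omega) - M_n^{E'}(\omega')\|$, a Lipschitz-type bound on $F_n$, and a Lipschitz bound on $L_n$. The plan is to prove \eqref{eq:MnDiffBound} first, then derive \eqref{eq:Fnderiv} from it using the Lipschitz estimate for $\log\|\cdot\|$, and finally obtain \eqref{eq:LnLipsch} by integrating \eqref{eq:Fnderiv}.

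For \eqref{eq:MnDiffBound}, I would use the standard telescoping identity for products: writing $M_n^E(\omega) = B_{n-1}\cdots B_0$ with $B_j = M^E(\omega_j)$ and $M_n^{E'}(\omega') = B_{n-1}'\cdots B_0'$ with $B_j' = M^{E'}(\omega_j')$, one has
\[
M_n^E(\omega) - M_n^{E'}(\omega')
= \sum_{j=0}^{n-1} B_{n-1}\cdots B_{j+1}\,(B_j - B_j')\,B_{j-1}'\cdots B_0'.
\]
Each factor $\|M^E(\omega_j)\|$ and $\|M^{E'}(\omega_j')\|$ is at most $\Gamma$ (by \eqref{eq:oneStepTMBoundDef}), so each of the $n$ summands is bounded by $\Gamma^{n-1}\|B_j - B_j'\|$. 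Since
\[
B_j - B_j' = M^E(\omega_j) - M^{E'}(\omega_j') = \begin{bmatrix} (E-E') - (\omega_j - \omega_j') & 0 \\ 0 & 0 \end{bmatrix},
\]
we get $\|B_j - B_j'\| \le |E-E'| + |\omega_j - \omega_j'| \le |E-E'| + \max_{0\le j<n}|\omega_j - \omega_j'|$, and summing over $j$ yields the factor $n$. This gives \eqref{eq:MnDiffBound}.

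For \eqref{eq:Fnderiv}, the point is that $x \mapsto \log x$ is Lipschitz with constant $1$ on $[1,\infty)$ (note $\|M\| \ge 1$ for $M \in \SL(2,\R)$), so $|\log\|M_n^E(\omega)\| - \log\|M_n^{E'}(\omega')\|| \le \|M_n^E(\omega) - M_n^{E'}(\omega')\|$; dividing by $n$ and invoking \eqref{eq:MnDiffBound} kills the leading $n$ and produces $\Gamma^{n-1}(|E-E'| + \max_{0\le j<n}|\omega_j-\omega_j'|)$, as claimed. (One should handle the $n=0$ case separately, where both sides vanish by the convention $F_0 \equiv 1$.) Finally, \eqref{eq:LnLipsch} follows by taking $\omega' = \omega$ in \eqref{eq:Fnderiv}, so that the $\max|\omega_j - \omega_j'|$ term drops out, and integrating over $\omega \in \Omega$ against $\dd\mu$, using $\mu(\Omega) = 1$.

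There is no real obstacle here; the only mild subtlety is making sure one uses a submultiplicative matrix norm with $\|M\| \ge 1$ on $\SL(2,\R)$ (true for the operator norm since $\det M = 1$ forces the largest singular value to be $\ge 1$), which is what legitimizes both the one-step bound $\|B_j\| \le \Gamma$ feeding into every telescoped term and the Lipschitz constant $1$ for $\log$ on the relevant range. Everything else is bookkeeping with the telescoping sum and the triangle inequality.
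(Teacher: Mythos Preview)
Your proof is correct and follows essentially the same approach as the paper's (commented-out) argument: a telescoping-sum estimate for $\|M_n^E(\omega)-M_n^{E'}(\omega')\|$, then the elementary inequality $|\log a - \log b|\le |a-b|$ for $a,b\ge 1$ (valid since $\|M\|\ge 1$ for $M\in\SL(2,\R)$), and finally integration for $L_n$. The only cosmetic difference is that the paper telescopes separately in $\omega$ and in $E$ (using $M^{E+\delta}(\alpha+\delta)=M^E(\alpha)$) and then combines via the triangle inequality, whereas you telescope in both variables at once; your route is slightly more direct.
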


\begin{prop}
\label{pr:ldt1.5step}
For any $\e > 0$, $0 < \delta < 1$, and $E \in \hat\Sigma$, there exists $N' = N'(\e, \delta, E)$ such that, for every integer $n \geq N'$, there exists $\rho = \rho(n) > 0$ with the property that
\begin{equation} \label{eq:LDT:1.5step}
\mu\left\{\omega:\left|\frac1n\log \left\| M_n^{E'}(\omega)\vec v \right\| - L(E')\right| \geq \e \right\}
<
\delta
\end{equation}
for all unit vectors $\vec v \in \bbS^1$ whenever $|E-E'| < \rho$ and $E' \in \hat \Sigma$.
\end{prop}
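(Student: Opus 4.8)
The plan is to bootstrap this from Proposition~\ref{pr:ldtfirststep} using the quantitative transfer-matrix estimates of Lemma~\ref{lem:Fnlipschitz} together with continuity of $L$ (Theorem~\ref{t:continuityLE}). First I would apply Proposition~\ref{pr:ldtfirststep} with $\e$ replaced by $\e/2$ and the same $\delta$, obtaining a threshold $N' = N'(\e,\delta,E) := N(\e/2,\delta,E)$ so that, for every integer $n \ge N'$ and every $\vec v \in \bbS^1$,
\[
\mu\left\{\omega : \left|\tfrac1n\log\|M_n^E(\omega)\vec v\| - L(E)\right| \ge \tfrac\e2\right\} < \delta .
\]
Now fix such an $n$. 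The task is to produce $\rho = \rho(n) > 0$ with the property that whenever $E' \in \hat\Sigma$ satisfies $|E-E'| < \rho$, both $\tfrac1n\log\|M_n^{(\cdot)}(\omega)\vec v\|$ and $L(\cdot)$ change by at most $\e/4$ when $E$ is replaced by $E'$, \emph{uniformly} in $\omega \in \Omega$ and $\vec v \in \bbS^1$. The desired bound \eqref{eq:LDT:1.5step} then follows from the triangle inequality, because the event appearing there at parameter $E'$ is contained in the displayed $(\e/2)$-event at parameter $E$, whose measure is less than $\delta$.

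The step requiring the most care is the first perturbation, since a priori $\log\|M_n^{E'}(\omega)\vec v\|$ could be very negative; here one uses the $\SL(2,\R)$ structure. For any unit vector $\vec v$, any $\omega$, and any $E \in \hat\Sigma$ one has
\[
\|M_n^E(\omega)\vec v\| \ge \big\|M_n^E(\omega)^{-1}\big\|^{-1} = \big\|M_n^E(\omega)\big\|^{-1} \ge \Gamma^{-n},
\]
so the logarithm stays bounded below. Combining this with the elementary inequality $|\log x - \log y| \le |x-y|/\min(x,y)$, valid for $x,y>0$, and with the bound $\big|\,\|M_n^{E'}(\omega)\vec v\|-\|M_n^{E}(\omega)\vec v\|\,\big| \le \|M_n^{E'}(\omega)-M_n^{E}(\omega)\| \le n\Gamma^{n-1}|E-E'|$ from \eqref{eq:MnDiffBound}, I get
\[
\left|\tfrac1n\log\|M_n^{E'}(\omega)\vec v\| - \tfrac1n\log\|M_n^{E}(\omega)\vec v\|\right| \le \Gamma^{2n-1}|E-E'|
\]
for all $\omega$ and $\vec v$. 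For the second perturbation, continuity of $L$ on $\R$ (Theorem~\ref{t:continuityLE}) gives $\rho_1>0$ with $|L(E)-L(E')| < \e/4$ whenever $|E-E'| < \rho_1$. Setting $\rho(n) := \min\bigl\{\rho_1,\ \e/(4\Gamma^{2n-1})\bigr\}$ then makes both perturbations at most $\e/4$.

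Finally I would assemble the pieces: fix $E' \in \hat\Sigma$ with $|E-E'| < \rho(n)$ and a unit vector $\vec v$. If $\omega$ lies in the event in \eqref{eq:LDT:1.5step}, i.e.\ $|\tfrac1n\log\|M_n^{E'}(\omega)\vec v\| - L(E')| \ge \e$, then by the triangle inequality and the two estimates just obtained, $|\tfrac1n\log\|M_n^{E}(\omega)\vec v\| - L(E)| \ge \e - \tfrac\e4 - \tfrac\e4 = \tfrac\e2$, so $\omega$ lies in the displayed $(\e/2)$-event at $E$, which has measure $<\delta$. This proves \eqref{eq:LDT:1.5step}. The one genuine point to flag is that the Lipschitz constant $\Gamma^{2n-1}$ is exponential in $n$, so $\rho$ must be permitted to shrink as $n$ grows; removing this scale dependence, i.e.\ passing from a neighborhood-and-fixed-scale statement to the truly uniform LDT of Theorem~\ref{t:LDT}, is exactly what the subsequent argument in this section must arrange.
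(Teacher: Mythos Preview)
Your proof is correct and follows essentially the same approach as the paper's: apply Proposition~\ref{pr:ldtfirststep} with $\e/2$, then choose $\rho(n)$ small enough using continuity of $L$ together with a Lipschitz bound on $\tfrac1n\log\|M_n^{(\cdot)}(\omega)\vec v\|$, and conclude via the containment of events. You are in fact slightly more careful than the paper in deriving the Lipschitz constant $\Gamma^{2n-1}$, since the paper invokes Lemma~\ref{lem:Fnlipschitz} (which is stated for matrix norms, where $\|M_n^E(\omega)\|\ge 1$) directly on the vector norm $\|M_n^E(\omega)\vec v\|$, which can be as small as $\Gamma^{-n}$.
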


\begin{proof}
Fix $\e > 0$, $0 < \delta < 1$, and $E \in \hat\Sigma$, and put $N' = N(\e/2, \delta, E)$ from Proposition \ref{pr:ldtfirststep}.  Let $n \geq N$ be given, and let $\rho = \rho(n)>0$ be chosen so that
\[
|L(E) - L(E')|
<
\e/4
\text{ whenever } E' \in \hat \Sigma \text{ and } |E - E'| < \rho,
\]
which can be done by continuity of the Lyapunov exponent and compactness of $\hat\Sigma$. If necessary, shrink $\rho$ to ensure that
\[
\rho \leq \frac{\e}{4\Gamma^{n-1}},
\]
where $\Gamma$ is as in \eqref{eq:oneStepTMBoundDef}. For any $\vec v \in \bbS^1$, we have
\begin{equation} \label{eq:ldt1.5step:pertContainment}
\set{\omega:
\left| \frac1n\log \|M_n^{E'}(\omega) \vec v \| - L(E') \right| \geq \e}
\subset
\set{\omega:
\left| \frac1n\log \|M_n^E(\omega) \vec v \| - L(E)\right| \geq \e/2}
\end{equation}
whenever $E' \in \hat \Sigma$ and $|E-E'| < \rho$. Concretely, if $\omega$ lies in the complement of the right-hand side of \eqref{eq:ldt1.5step:pertContainment}, we have
\begin{align*}
\left|\frac1n\log\|M_n^{E'}(\omega) \vec v\| - L(E')\right|
& \leq
\Gamma^{n-1}|E-E'| + \left|\frac1n\log\|M_n^E(\omega)\vec v \| - L(E')\right| \\
& \leq
\frac{\e}{4} + \left|\frac1n\log\|M_n^E(\omega)\vec v \| - L(E)\right| + \left|L(E) - L(E')\right| \\
& <
\varepsilon,
\end{align*}
where the first inequality uses Lemma~\ref{lem:Fnlipschitz}. Thus, \eqref{eq:ldt1.5step:pertContainment} holds and the conclusion of the proposition follows from Proposition~\ref{pr:ldtfirststep} and our choice of $N$.
\end{proof}

In what follows, we will need to use the following discrete Chebyshev-type inequality.

\begin{lemma} \label{l:large_average_arithm}
Let $P \in \Z_+$ and $a_1,\ldots,a_P \in \R$ be given. Given $L < \max_j a_j \leq B$ and $\delta > 0$, define
\[
\CJ_{L,\delta}
=
\set{j : a_j > L + \delta}.
\]
If
\begin{equation} \label{eq:average_greater}
\frac{1}{P} \sum_{j=1}^P a_j
\geq
L + \e,
\end{equation}
for some $\e > 0$, then
\begin{equation} \label{eq:avGrtr:CJcardBound}
\#\CJ_{L,\delta}
\geq
P \frac{\e - \delta}{B - L - \delta}
\end{equation}
for every $0 < \delta < B - L$.
\end{lemma}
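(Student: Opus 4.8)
The plan is to bound the sum $\sum_{j=1}^{P} a_j$ from above by splitting the index set $\{1,\dots,P\}$ into $\CJ_{L,\delta}$ and its complement, and then to play this upper bound against the lower bound supplied by the hypothesis \eqref{eq:average_greater}. This is nothing more than a pigeonhole (Chebyshev/Markov-type) argument adapted to arithmetic means, so I expect the proof to be short and the ``main obstacle'' to be essentially nonexistent.

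First I would record the two elementary per-term bounds. For $j \in \CJ_{L,\delta}$ I only use the global bound $a_j \le B$, which holds for every $j$ by the standing hypothesis $\max_j a_j \le B$; thus $\sum_{j \in \CJ_{L,\delta}} a_j \le B\,\#\CJ_{L,\delta}$. For $j$ in the complement, the definition of $\CJ_{L,\delta}$ gives $a_j \le L + \delta$, so $\sum_{j \notin \CJ_{L,\delta}} a_j \le (P - \#\CJ_{L,\delta})(L+\delta)$.

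Next I would add these two estimates and compare with \eqref{eq:average_greater}, which reads $\sum_{j=1}^{P} a_j \ge P(L+\e)$. Writing $J := \#\CJ_{L,\delta}$ for brevity, this yields
\[
P(L+\e) \;\le\; JB + (P-J)(L+\delta) \;=\; J(B-L-\delta) + P(L+\delta),
\]
and hence $P(\e - \delta) \le J(B-L-\delta)$. Since the hypothesis $0 < \delta < B - L$ makes $B-L-\delta > 0$, dividing through gives exactly \eqref{eq:avGrtr:CJcardBound}.

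The only points requiring any attention --- and they are minor --- are the sign of the factor $B-L-\delta$, which is precisely where the constraint $\delta < B-L$ enters, and the degenerate regime $\e \le \delta$: there the right-hand side of \eqref{eq:avGrtr:CJcardBound} is non-positive while the left-hand side $\#\CJ_{L,\delta}$ is a cardinality, so the inequality holds trivially. I do not anticipate any genuine difficulty.
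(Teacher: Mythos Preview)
Your proof is correct and follows essentially the same argument as the paper: split the sum over $\CJ_{L,\delta}$ and its complement, bound the pieces by $mB$ and $(P-m)(L+\delta)$ respectively, compare with $P(L+\e)$, and solve for $m$. Your remark about the degenerate case $\e \le \delta$ is a nice addition that the paper leaves implicit.
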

\begin{proof}
Assume that \eqref{eq:average_greater} holds and let $m := \#\CJ_{L,\delta}$. Using \eqref{eq:average_greater} and splitting the sum over $\CJ_{L,\delta}$ and its complement, one obtains
\begin{equation*}
P(L+\e)
\leq
\sum\limits_{j = 1}^{P} a_j
=
\sum\limits_{j\in \CJ_{L,\delta}} a_j + \sum\limits_{j \notin \CJ_{L,\delta}} a_j
\le
mB + (P-m)(L+\delta).
\end{equation*}
Solving for $m$, we get \eqref{eq:avGrtr:CJcardBound}.
\end{proof}

 We may now combine our foregoing work to fashion the final stepping stone before proving the main LDT: a vectorwise uniform LDT. Compare \cite[Theorem~4]{Tsay1999}.
\begin{prop}\label{pr:almost_independence}
For every $\varepsilon>0$ there exist constants $C,\eta > 0$ such that
\begin{equation*}
\mu \set{\omega: \vabs{\frac{1}{n}\log \norm{M_n^E(\omega)\vec{v}} - L(E)} \geq \varepsilon} \leq Ce^{-\eta n}
\end{equation*}
for every $n \in \Z_+$, $\vec v \in \bbS^1$, and $E \in \hat\Sigma$.
\end{prop}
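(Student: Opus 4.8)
The plan is to upgrade the fixed-scale statement of Proposition~\ref{pr:ldt1.5step} to an exponential estimate via an ``almost-independence across blocks'' argument. Fix $\e > 0$. By compactness of $\hat\Sigma$ and Proposition~\ref{pr:ldt1.5step}, I can cover $\hat\Sigma$ by finitely many intervals and thereby extract a \emph{single} length scale $m \in \Z_+$ and a single $\delta$ (to be chosen small, e.g.\ $\delta < 2^{-10}$) such that
\[
\mu\left\{\omega:\left|\tfrac1m\log \| M_m^{E}(\omega)\vec v \| - L(E)\right| \geq \tfrac{\e}{4} \right\}
<
\delta
\]
for all $E \in \hat\Sigma$ and all $\vec v \in \bbS^1$. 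The point of having this uniform in $E$ and $\vec v$ is that it applies to every ``shifted'' block: writing $n = Pm + r$ with $0 \le r < m$, I chop the orbit into $P$ consecutive blocks of length $m$ and estimate $\log\|M_n^E(\omega)\vec v\|$ by telescoping, $\log\|M_n^E(\omega)\vec v\| = \sum_{j=0}^{P-1} \log\|M_m^E(T^{jm}\omega)\,\hat u_j\| + O(\log\Gamma)$, where $\hat u_j$ is the unit vector in the direction of $M_{jm}^E(\omega)\vec v$ and the error term $r\log\Gamma$ is absorbed once $n$ is large. Crucially, the $j$-th block depends only on the coordinates $\omega_{jm}, \dots, \omega_{(j+1)m-1}$, so the blocks are genuinely independent as $\omega$ ranges over the product space — the only subtlety is that the vector $\hat u_j$ fed into the $j$-th block depends on the \emph{earlier} coordinates. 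But since the fixed-scale estimate is uniform over all unit vectors, conditioning on $\omega_0,\dots,\omega_{jm-1}$ (which determines $\hat u_j$) still leaves the $j$-th block ``good'' with conditional probability $> 1-\delta$.

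The mechanism is then: call block $j$ \emph{bad} (for $\omega$) if $\big|\tfrac1m\log\|M_m^E(T^{jm}\omega)\hat u_j\| - L(E)\big| \geq \tfrac{\e}{4}$. By the conditional estimate above and a standard martingale/iterated-expectation bound, the number of bad blocks $B(\omega)$ among the first $P$ stochastically dominates from above $\mathrm{Binomial}(P,\delta)$; hence $\mu\{B(\omega) \geq \alpha P\} \le \binom{P}{\alpha P}\delta^{\alpha P} \le (e\delta/\alpha)^{\alpha P}$ for any threshold $\alpha \in (0,1)$, which is exponentially small in $P$ (hence in $n$) once $\delta < \alpha/e$. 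On the complementary event, at most $\alpha P$ blocks are bad, so $\big|\tfrac1n\sum_j (\tfrac1m\log\|M_m^E(T^{jm}\omega)\hat u_j\| - L(E))\big|$ is bounded by $\tfrac{\e}{4} + 2\alpha P \log\Gamma / n \cdot m$, which I make $< \tfrac{\e}{2}$ by choosing $\alpha$ small (depending only on $\e, \Gamma$), and then $<\e$ after adding the $O((\log\Gamma)/n)$ boundary error and absorbing it for $n \geq N_0(\e)$. For the finitely many small $n < N_0$ the trivial bound $|F_n| \le \log\Gamma$ plus $L(E) \le \log\Gamma$ makes the probability zero once $2\log\Gamma < \e$... when $\e$ is small this fails, so instead I simply enlarge $C$ so that $Ce^{-\eta n} \geq 1$ for all $n < N_0$, which handles those trivially. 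Tracking the dependence, $\eta$ depends on $\delta$ and $\alpha$ (hence on $\e, \Gamma$) and $m$, and $C$ absorbs the finitely many exceptional scales.

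\textbf{The main obstacle} is the correct handling of the vector-dependence across blocks, i.e.\ making rigorous that independence of the \emph{matrix} blocks survives even though the input vector $\hat u_j$ to block $j$ is a complicated function of $\omega_0,\dots,\omega_{jm-1}$. The clean way is to work with the filtration $\CF_j = \sigma(\omega_0,\dots,\omega_{jm-1})$ and observe that, because the fixed-scale bound from Proposition~\ref{pr:ldt1.5step} is uniform in $\vec v \in \bbS^1$, one has $\mu\big(\text{block }j\text{ bad}\ \big|\ \CF_j\big) < \delta$ almost surely — the conditional law of $(\omega_{jm},\dots,\omega_{(j+1)m-1})$ is just $\widetilde\mu^{\otimes m}$ regardless of the conditioning, and $\hat u_j$ is $\CF_j$-measurable, so we may apply the uniform bound with $\vec v = \hat u_j$ pointwise. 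From there the supermartingale $\prod_{j<k}(\mathbf 1_{\text{block }j\text{ good}} + \text{something})$, or more simply a direct induction on $P$ using $\mathbb E[\,\cdot\,|\CF_{P-1}]$, yields the binomial domination. A secondary, purely bookkeeping, nuisance is that $\hat u_j$ is only defined when $M_{jm}^E(\omega)\vec v \neq 0$; since every $M^E(\alpha) \in \SL(2,\R)$ is invertible this is automatic, so no exceptional set arises there. Once these two points are settled, the remaining estimates are the routine block-counting computations sketched above, and the uniformity in $E \in \hat\Sigma$ is inherited directly from the uniformity already built into Proposition~\ref{pr:ldt1.5step} via the compactness covering.
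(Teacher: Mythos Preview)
Your proposal is correct and follows essentially the same strategy as the paper: both decompose $n$ into blocks of a fixed length, use the uniformity in $\vec v$ from Proposition~\ref{pr:ldt1.5step} together with conditional independence of the blocks (which you phrase via the filtration $\CF_j$ and the paper writes out explicitly as iterated integrals over $\Omega_1,\ldots,\Omega_m$) to obtain the bound $\mu(A_{\CJ})\le\delta^{\#\CJ}$, and then control the number of bad blocks by the same binomial-type sum. The only organizational difference is that the paper runs the argument at a fixed $E$ first and invokes compactness of $\hat\Sigma$ at the very end, whereas you propose to extract a single uniform block-length $m$ up front---be slightly careful there, since the $\rho(n)$ in Proposition~\ref{pr:ldt1.5step} shrinks with $n$, so it is cleaner to allow finitely many block-lengths $m_k$ (one per covering interval) and take $\eta=\min_k\eta_k$, which is equivalent to the paper's ordering.
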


\begin{proof}
Let $\Gamma$ be defined as in \eqref{eq:oneStepTMBoundDef}, put $B = \log\Gamma + 1$, fix $\e \in (0,1)$, and notice that
\[
\e
<
B-L(E)
\text{ for every } E \in \hat\Sigma
\]
by \eqref{eq:unifLEbounds}. Motivated by Lemma~\ref{l:large_average_arithm}, we define $\xi > 0$ by
\[
\xi
:=
\frac{\e/4}{B-\gamma - \frac{\e}{4}},
\]
and observe that
\begin{equation} \label{eq:almostIndep:xidef}
\xi
\leq
\frac{\e/4}{B-L(E) - \frac{\e}{4}} \text{ for every } E \in \hat\Sigma
\end{equation}
by \eqref{eq:unifLEbounds}. In particular, $\xi$ depends on $\e$, but not on $E$. Now, fix $\delta > 0$ small enough that $1+\delta e \leq e^{\xi/2}$, and let $E \in \hat\Sigma$ be given. Then, put $N = N'(\e/4, \d; E)$ and $\rho = \rho(N)$ as in Proposition~\ref{pr:ldt1.5step}, and let $E' \in (E-\rho,E+\rho)$ and $n \in \Z_+$ be given. Writing $n = NP + r$ with $P\in\Z_+$ and $0 \le r < N$, one can check that
\begin{equation*}
\log\|M_n^{E'}(\omega)\vec v\|
=
\sum^{P-1}_{p=0} \log\left\| M_N^{E'}(T^{pN}\omega) \vec{v}_p \right\|
+ \log\left\| M_r^{E'}(T^{PN}\omega)\vec{v}_P \right\|,
\end{equation*}
where
$$
\vec v_p
=
\vec v_p(\omega,E')
:=
\frac{M_{pN}^{E'}(\omega)\vec v}{\|M_{pN}^{E'}(\omega)\vec v\|},
\quad
0 \le p \le P.
$$
We first deal with the set
\begin{equation*}
\CB^{+}_n
=
\CB^+_n(E',\e,\vec v)
:=
\left\{\omega:\frac1n\log\left\| M_n^{E'}(\omega)\vec v \right\| - L(E') \geq \e\right\}.
\end{equation*}
For $n$ large enough, one has
\begin{equation*}
\CB^{+}_n
\subseteq
\set{\omega: \frac{1}{P}\sum^{P-1}_{p=0}\frac{1}{N}\log\| M_N^{E'}(T^{pN}\omega) \vec{v_p}(\omega,E')\|
\geq
L(E') + \frac{\varepsilon}{2}}.
\end{equation*}
In light of this, Proposition~\ref{pr:ldt1.5step}, and our choice of $\xi$ in \eqref{eq:almostIndep:xidef}, we obtain
\begin{equation}\label{eq:B1_intersection}
\CB^{+}_n
\subseteq
\bigcup_{\substack{\CJ \subset [0,P-1]\cap \Z \\ \#\CJ \geq \xi P}}\;
\bigcap\limits_{p\in\CJ}
A_p,
\end{equation}
where $A_p = A_p(E',\e,\vec v)$ is given by
\[
A_p
:=
\set{\omega: \frac{1}{N}\log\|M_N^{E'}(T^{pN}\omega)\vec{v_p}(\omega,E')\| \geq L(E') + \frac{\varepsilon}{4}}.
\]
Thus, it remains to bound the measure of sets of the form
\[
A_\CJ = \bigcap_{p \in \CJ} A_p
\]
with $\CJ \subset [0,P-1]\cap\Z$ a set having cardinality at least $\xi P$. To that end, we notice that whether or not $\omega \in A_p$ depends only on the coordinates $(\omega_{0},\omega_1,\ldots,\omega_{(p+1)N-1})$. To capture this dependence, we introduce the following grouping of coordinates. Suppose $\#\CJ = m \geq \xi P$, write $\CJ = \set{p_1<p_2<\ldots<p_m}$, and define
\[
\vec \omega_j
:=
\left(\omega_{(p_{j-1} + 1)N} \ldots \omega_{(p_j + 1)N-1} \right),
\quad
1 \leq j \leq m,
\]
where we take $p_0 = -1$ by convention. Thus, we obtain the following grouping of the coordinates of $\omega$:
\[
\omega
=
(\ldots, \underbrace{\omega_0,\ldots,\omega_{(p_1+1)N-1}}_{\vec{\omega}_1\in\Omega_1 := \CA^{(p_1+1)N}},
\ldots,
\underbrace{\omega_{(p_{m-1}+1)N},\ldots,\omega_{(p_m+1)N-1}}_{\vec{\omega}_m\in\Omega_m := \CA^{(p_m-p_{m-1})N}},\ldots).
\]
Denoting the $(p_j-p_{j-1})N$-fold product of $\widetilde \mu$ with itself on $\Omega_{j}$ by $\widetilde \mu_j$, \eqref{eq:B1_intersection} gives us
\begin{align}
\nonumber
\mu \left(A_\CJ \right)
& =
\int\limits_{\Omega} \! \prod\limits_{p\in\CJ}\chi_{A_p}(\omega) \, \dd \mu(\omega)\\
\label{eq:intersection_estimate_step_1}
& =
\int\limits_{\Omega_1} \! \cdots \! \int\limits_{\Omega_m} \! \left[ \prod_{j=1}^m \chi_{A_{p_j}}(\vec{\omega}_1,\ldots,\vec\omega_j) \right] \,
\dd\widetilde\mu_m(\vec{\omega}_m) \,  \cdots \, \dd\mu_1(\vec{\omega}_1).
\end{align}
In the innermost integral the ``important variables'' that govern the growth of $\norm{M_N \vec v_{p_m}}$ correspond to the last $N$ coordinates of $\vec\omega_m$. So we further split $\vec\omega_m$ into its last $N$ entries and first $(p_m- p_{m-1}-1)N$ entries. Write $\vec\sigma = (\sigma_1,\ldots,\sigma_N)$ for the terminal $N$ coordinates of $\vec\omega_m$, and let $\vec\omega_m^1$ denote the remaining initial $(p_m-p_{m-1}-1)N$ coordinates of $\vec\omega_m$. Then, our goal is to estimate
\begin{equation}\label{eq:intersection_estimate_step_2}
\int\limits_{\Omega_m} \! \chi_{A_{p_m}}(\vec{\omega}_1,\ldots,\vec{\omega}_m) \, \dd\widetilde\mu_m(\vec{\omega}_m)
=
\int\limits_{\Omega_m^1} \int\limits_{\CA^N} \!
\chi_{A_{p_m}}(\vec{\omega}_1,\ldots,\vec{\omega}_m^1,\vec\sigma) \,
\dd\widetilde\mu^N(\vec\sigma) \, \dd\widetilde\mu^{(p_m-p_{m-1}-1)N}(\vec{\omega}_m^1),
\end{equation}
uniformly over $\vec\omega_1,\ldots,\vec\omega_{m-1}$. Note that in the definition of $A_{p_m}$, $M_N(T^{p_m N}\omega)$ depends only on $\vec\sigma$, and $\vec v_{p_m}$ depends only on $(\vec{\omega}_1,\vec{\omega}_2,\ldots,\vec{\omega}_m^1)$. Consequently,
\begin{align*}
\int\limits_{\CA^N} \! \chi_{A_{p_m}}(\vec{\omega}_1,\vec{\omega}_2,\ldots,\vec{\omega}_m^1,\vec\sigma) \, \dd\widetilde\mu^N(\vec \sigma)
=
\widetilde\mu^N\set{\vec\sigma: \frac{1}{N}\log\|M_N^{E'}(\vec\sigma)\vec{v}_{p_m} \| \geq L + \frac{\varepsilon}{4}}
<
\delta
\end{align*}
for our choice of $N$ by  Proposition~\ref{pr:ldt1.5step}. Now plugging back into \eqref{eq:intersection_estimate_step_2} gives us
\begin{equation*}
\int\limits_{\Omega_m} \! \chi_{A_{p_m}}(\vec{\omega}_1,\vec{\omega}_2,\ldots,\vec{\omega}_m) \, \dd\widetilde\mu_m(\vec{\omega}_m)
<
\delta.
\end{equation*}
Inductively applying the same argument $m$ times, we get
\begin{equation} \label{eq:ACJmeasBound}
\mu(A_\CJ) \leq\delta^m
\text{ whenever } \CJ \subset [0,P-1]\cap \Z \text{ and } \#\CJ = m.
\end{equation}
Bounding the measure of $\CB_n^{+}$ is now a matter of counting and our choice of $\delta$.  Namely, if we write $I = [0,P-1] \cap \Z$, \eqref{eq:B1_intersection} and \eqref{eq:ACJmeasBound} imply
\begin{align*}
\mu(\CB_n^{+})
\leq
\sum_{\substack{\CJ \subseteq I \\ \#\CJ \ge \xi P}} \delta^{\#\CJ}
\le
e^{-\xi P} \sum_{\CJ \subseteq I} (\delta e)^{\#\CJ}
=
e^{-\xi P}(1+\delta e)^P
\leq
e^{-\xi P/2}
\end{align*}
by our choice of $\delta$. Taking
\[
\eta_0
=
\eta_0(E,\e)
:=
\frac{\xi}{3N},
\]
we find that
\[
\mu(\CB_n^{+})
\leq
e^{-\xi P/2}
\leq
e^{-\eta_0 n}
\]
for all sufficiently large $n$ (note that the largeness condition depends solely on $E$ and $\e$). One bounds the $\mu$-measure of
\begin{equation*}
\CB^{-}_n
=
\CB^-_n(E',\e,\vec v)
:=
\left\{\omega:\frac1n\log\|M^{E'}_n(\omega)\vec v\|-L(E') \leq -\e\right\}.
\end{equation*}
similarly and obtains
\begin{equation*}
\mu(\CB^{-}_n) \leq  e^{-\eta_0 n}
\end{equation*}
for all sufficiently large $n$ by following the argument used to estimate $\mu(\CB_n^+)$. Naturally, this yields
\begin{equation} \label{eq:almostthere}
\mu \set{\omega: \vabs{\frac{1}{n}\log \norm{M^{E'}_n(\omega)\vec{v}} - L(E')} \geq \varepsilon}
=
\mu(\CB_n^+ \cup \CB_n^-)
\leq
2e^{-\eta_0 n}
\end{equation}
for $n$ sufficiently large.

Thus, for each $E$ in $\hat\Sigma$, we find $\eta_0 = \eta_0(E,\e)$, $\rho = \rho(E,\e)$, and $N'' = N''(E,\e)$, so that \eqref{eq:almostthere} holds for all $E' \in (E-\rho,E+\rho)$, $n \geq N''$, and $\vec v \in \bbS^1$.  Then, we obtain the conclusion of the theorem via a straightforward compactness argument (using compactness of $\hat\Sigma$).

\end{proof}

\begin{proof}[Proof of Theorem~\ref{t:LDT}]
Let $E \in \hat\Sigma$ and $\e > 0$ be given. Notice that
\begin{equation}\label{eq:main_abs_value_open}
\left\{\omega: \left|\frac1n\log\|M_n^E(\omega)\|-L\right| \geq \e\right\}
=
\CB_n^+ \cup \CB_n^-,
\end{equation}
where $\CB_n^\pm = \CB_n^\pm(E,\e)$ are given by
\[
\CB_n^\pm
:=
\set{\omega: \pm\left( \frac1n\log\|M_n^E(\omega)\|- L \right) \geq \e}.
\]
In terms of the sets from Proposition~\ref{pr:almost_independence}, one has
\[
\CB_n^+(E,\e) = \bigcup\limits_{\vec v \in \bbS^1} \CB_n^+(E,\e,\vec v),
\quad
\CB_n^-(E,\e) = \bigcap\limits_{\vec v \in \bbS^1} \CB_n^-(E,\e,\vec v).
\]
Thus, denoting the standard basis in $\R^2$ by $\{\vec e_1, \vec e_2\}$, we get $\CB^-_n(E,\e) \subseteq \CB_n^-(E,\e,\vec e_1)$, and hence
\[
\mu\left(\CB_n^-(E,\e)\right)
\leq
\mu\left(\CB_n^-(E,\e,\vec e_1)\right)
\leq
Ce^{-\eta_1 n}
\]
by Proposition~\ref{pr:almost_independence}.

It remains to estimate the measure of $\CB_n^+(E,\e)$. Since
\[
\norm{M_n(\omega)}
\leq
\sqrt{2} \max_{j = 1,2} \norm{M_n(\omega)\vec e_j},
\]
we get
\[
\CB_n^+(E,\e)
\subseteq
\CB_n^+(E,\e/2,\vec e_1) \cup \CB_n^+(E,\e/2,\vec e_2)
\]
for all $n$ large enough that $n\e \geq \log 2$, which in turn gives
\begin{equation*}
\mu(\CB_n^+(E,\e))
\leq
Ce^{-\eta_2 n} + Ce^{-\eta_3 n}.
\end{equation*}
Choosing $\eta = \min\set{\eta_1,\eta_2,\eta_3}$, the theorem follows.
\end{proof}

\section{H\"older Continuity of the Lyapunov Exponent}
\label{sec:holder}

The purpose of this section is to supply a simple proof of the H\"older continuity of the Lyapunov exponent of the Anderson model. In essence, once we have a uniform lower bound on the Lyapunov exponent as in \eqref{eq:UPLE} and a uniform LDT as in Theorem~\ref{t:LDT}, H\"older continuity of $L$ follows from the approach developed by Goldstein--Schlag \cite{goldstein}.

We will concentrate on the regularity of $L$ on the interval $\hat\Sigma$. By Remark~\ref{r:LEconcave}, $L$ is already a smooth function of $E$ away from $\hat\Sigma$.

\begin{theorem}\label{t:HolderContinuity}
There exist constants $C > 0, \beta > 0$ depending solely on $\widetilde{\mu}$ such that
\begin{equation} \label{eq:holder}
|L(E)-L(E')|
\leq
C|E-E'|^\beta
\end{equation}
for all $E,E' \in \hat\Sigma$.
\end{theorem}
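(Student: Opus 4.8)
The plan is to follow the Goldstein--Schlag scheme that bootstraps H\"older continuity of the Lyapunov exponent from a uniform lower bound on $L$ together with the uniform large deviation estimate of Theorem~\ref{t:LDT}. The starting point is the almost-subadditivity relation for the finite-scale averages $L_n(E) = \int_\Omega F_n(\omega,E)\,\dd\mu$. Concretely, writing $M_{m+n}^E(\omega) = M_m^E(T^n\omega) M_n^E(\omega)$ and using the Lipschitz bound $\|M_n^E(\omega)\| \le \Gamma^n$, one has the two-sided comparison
\begin{equation*}
L_{m+n}(E) \le \frac{m}{m+n} L_m(E) + \frac{n}{m+n} L_n(E),
\end{equation*}
and, in the reverse direction, a bound of the form $L_n(E) \le L_{2n}(E) + C/n$ whose proof uses the LDT: on the set where $\tfrac1n\log\|M_n^E(\omega)\|$ and $\tfrac1n\log\|M_n^E(T^n\omega)\|$ are both within $\e$ of $L_n(E)$ (which has measure $\ge 1 - 2Ce^{-\eta n}$ by Theorem~\ref{t:LDT} and the triangle inequality $L_n \to L$), one has $\|M_{2n}^E(\omega)\| \ge c\,\|M_n^E(T^n\omega)\|\,\|M_n^E(\omega)\|$ by a standard argument controlling the angle between the relevant vectors, while on the exceptional set one uses the deterministic bound $|F_{2n}| \le \log\Gamma$. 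Iterating and summing a geometric-type series yields $|L_n(E) - L(E)| \le C/n$ for all $n$ and all $E \in \hat\Sigma$, with $C$ depending only on $\widetilde\mu$; this is the key quantitative input.

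Next I would combine this rate of convergence with the Lipschitz-in-$E$ estimate \eqref{eq:LnLipsch} from Lemma~\ref{lem:Fnlipschitz}, namely $|L_n(E) - L_n(E')| \le \Gamma^{n-1}|E-E'|$. For $E, E' \in \hat\Sigma$ and any $n \in \Z_+$ we then get
\begin{equation*}
|L(E) - L(E')| \le |L(E) - L_n(E)| + |L_n(E) - L_n(E')| + |L_n(E') - L(E')| \le \frac{2C}{n} + \Gamma^{n-1}|E-E'|.
\end{equation*}
The point is that this holds for \emph{every} $n$, so we optimize: given $E \ne E'$ with $|E-E'|$ small, choose $n$ so that $\Gamma^n \approx |E-E'|^{-1/2}$, i.e. $n \approx \tfrac{1}{2\log\Gamma}\log\frac{1}{|E-E'|}$. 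With this choice the first term is $O\!\big(1/\log\frac{1}{|E-E'|}\big)$ and the second is $O(|E-E'|^{1/2})$, which only gives a logarithmic modulus of continuity — not H\"older. To upgrade to a genuine power, one iterates the estimate: the bound $|L(E)-L(E')| \lesssim 1/\log\frac1{|E-E'|}$ is itself fed back, or (following Goldstein--Schlag and Bourgain's book) one sharpens the convergence-rate step to an estimate on the modulus of continuity of $L_n$ at scale comparable to its own convergence rate, running a multi-scale/telescoping argument that at each stage improves the exponent; the fixed point of this improvement is a positive $\beta$ depending only on $\eta$ and $\log\Gamma$.

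The main obstacle, and the step that requires genuine care, is exactly this last bootstrapping: passing from the soft estimate (rate $1/n$ in the scale, Lipschitz constant $\Gamma^n$ in energy) to an honest H\"older exponent. The naive single-scale optimization only produces $|\log|E-E'||^{-1}$ decay; extracting a power law requires the iterative Goldstein--Schlag mechanism, in which one proves inductively a statement of the shape ``$L_n$ is H\"older-$\beta_k$ on scale $\delta_k$ with constant depending on $k$'' and shows the exponents $\beta_k$ do not degenerate to $0$. The bookkeeping — tracking how the LDT exponent $\eta$ survives the iteration and verifying the constants depend only on $\widetilde\mu$ (through $\Gamma$, $\gamma$, and the $\e$-dependence of $C(\e),\eta(\e)$ in Theorem~\ref{t:LDT}) — is where the real work lies; everything else (almost-additivity, the Lipschitz bound, the angle estimate) is routine given the machinery already assembled in Sections~\ref{sec:LDT} and the earlier parts of the excerpt. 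I would cite \cite{goldstein} (and the exposition in Bourgain--Goldstein or Bourgain's book) for the abstract iteration and carry out only the model-specific verifications.
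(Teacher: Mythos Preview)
Your proposal has a genuine gap at exactly the point you flag as ``the main obstacle.'' The difficulty is not merely bookkeeping: the rate $|L_n(E)-L(E)|\le C/n$ that you aim for is simply too weak to yield a H\"older exponent, and the ``iterative Goldstein--Schlag mechanism'' you invoke to repair this is not what Goldstein--Schlag actually do. There is no bootstrap on H\"older exponents $\beta_k$ in their scheme; rather, the core of \cite{goldstein} is the \emph{Avalanche Principle} (Lemma~\ref{l.avlanche-principle} here), which you never use. That lemma, combined with the uniform LDT at scales $\ell$ and $2\ell$, gives the much stronger three-term relation
\[
|L(E) + L_n(E) - 2L_{2n}(E)| \le Ce^{-cn}
\]
uniformly in $E\in\hat\Sigma$ (this is Lemma~\ref{l.well-approximate-LE}). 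The exponential decay here is the whole point: once you have it, a single triangle-inequality step with \eqref{eq:LnLipsch} gives
\[
|L(E)-L(E')|\le C^n|E-E'| + Ce^{-cn},
\]
and choosing $n\sim \tfrac{1}{3\log C}\log\tfrac{1}{|E-E'|}$ produces a genuine power $|E-E'|^\beta$ directly, with no iteration.

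Your sketched reverse inequality ``$\|M_{2n}\|\ge c\|M_n(T^n\cdot)\|\,\|M_n\|$ via angle control on the good set'' is morally a two-block Avalanche Principle, but carried out at a single scale it only yields the polynomial rate you state, not the exponential one needed. The missing idea is to apply the Avalanche Principle to a long chain of $n$ blocks of length $\ell$ (with $n$ exponentially large in $\ell$) so that the error term $Cn/\lambda$ in \eqref{avalanche-principle} is still exponentially small in $\ell$; telescoping along a super-exponentially growing sequence of scales then sums to $Ce^{-cn}$. Without this device your argument stalls at log-continuity, and the vague ``feed back and iterate'' paragraph does not close the gap.
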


\begin{remark*}
Another important object in the spectral analysis of the Anderson model is the accumulation function of the density of states measure, called the \emph{integrated density of states} (IDS):
\[
N(E)
=
\int_{(-\infty,E]} \!\dd N.
\]
By the Thouless formula, the IDS is (almost) the Hilbert transform of $L$; consequently, one can deduce quantitative continuity estimates for the IDS from such estimates on $L$. In particular, one can deduce H\"older continuity of $N$ as a function of $E$ from Theorem~\ref{t:HolderContinuity} with the same choice of $\beta$.
\end{remark*}

\begin{corollary}[{\cite[Th\`eor\`eme~3]{lP1989AIHP}}]
There exists a constant $C > 0$ such that
\[
|N(E) - N(E')|
\leq
C |E-E'|^\beta
\]
for all $E,E' \in \R$, where $\beta$ is as in Theorem~\ref{t:HolderContinuity}.
\end{corollary}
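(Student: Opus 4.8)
\emph{Strategy.} The Thouless formula \eqref{eq:thouless} realizes $L$ as the logarithmic potential of the density of states measure $\dd N$, and the Hilbert transform of $\dd N$ is, up to an affine normalization, the integrated density of states. Since passage to a harmonic conjugate preserves H\"older continuity with any exponent $\beta\in(0,1)$ (Privalov's theorem), the bound \eqref{eq:holder} for $L$ should transfer verbatim, with the same $\beta$, to $N$. Two painless reductions make this precise. First, it suffices to prove the estimate for $E,E'$ in a fixed compact interval $I$ with $\hat\Sigma\subset\mathrm{int}\,I$: indeed $\supp\dd N=\Sigma\subseteq\hat\Sigma$, so $N$ is constant on each connected component of $\R\setminus\hat\Sigma$, and for general $E,E'$ one may replace any argument lying outside $I$ by the nearest endpoint of $I$ without changing the corresponding value of $N$ or increasing $|E-E'|$. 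Moreover $L$ is real-analytic on $\R\setminus\hat\Sigma$ by Remark~\ref{r:LEconcave}, hence Lipschitz on $I\setminus\hat\Sigma$, so Theorem~\ref{t:HolderContinuity} gives that $L|_I$ is $\beta$-H\"older. Second, we may assume $\beta\in(0,1)$, since $L$ is nonconstant on $\hat\Sigma$ (it is uniformly positive and tends to $+\infty$) and we are in any case free to decrease $\beta$.

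\emph{The conjugate-function identity.} For $\Im z>0$ set
\[
G(z):=\int_\R \log(z-x)\,\dd N(x),
\]
with the principal branch of the logarithm. Because $\dd N$ is a compactly supported probability measure, $G$ is analytic on the upper half-plane, $G(z)-\log z\to 0$ as $|z|\to\infty$, and $G$ extends continuously to $\R$ from above. By construction $\Re G(E+i0)=L(E)$, while
\[
\Im G(E+i0)=\int_\R \arg(E+i0-x)\,\dd N(x)=\pi\,\dd N\bigl((E,\infty)\bigr)=\pi\bigl(1-N(E)\bigr)
\]
for every $E$ at which $N$ is continuous, and then for all $E$ once $N$ is normalized to be right-continuous. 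Hence
\[
N(E')-N(E)=\frac{1}{\pi}\Bigl(\Im G(E+i0)-\Im G(E'+i0)\Bigr),
\]
so the corollary is precisely the statement that the boundary values $\Im G$ inherit the $\beta$-H\"older bound enjoyed by the boundary values $\Re G=L$ of the analytic function $G$.

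\emph{Conclusion and the main obstacle.} This last implication is the classical fact that conjugation is bounded on $C^\beta$ for $0<\beta<1$, preserving the exponent. To invoke it literally one passes to a bounded setting: near $I$ one writes $G=G_0+G_1$, where $G_0(z)=\log(z-z_0)$ (for a suitable $z_0$ off the relevant arc) absorbs the logarithmic growth and is smooth up to $\R$ there, and $G_1$ is bounded and analytic; a conformal map of a neighborhood onto the unit disk then reduces the claim to Privalov's theorem on $\T$. I expect the only genuine work to be this routine bookkeeping — boundary regularity of the conformal map and checking that the splitting and the change of variables do not leave the H\"older class — none of it deep. (A fully self-contained but longer alternative avoids boundary complex analysis: \eqref{eq:thouless} yields the elementary estimate $N([E-\delta,E+\delta])\le\frac{2}{\log 2}\bigl(L(E+i\delta)-L(E)\bigr)$, after which it remains to show that $L$, extended to $\C$, is $\beta$-H\"older on a complex neighborhood of $\hat\Sigma$; this follows by re-running Sections~\ref{s:positivity_continuity_LE}--\ref{sec:holder} for the complexified cocycles $M_n^E$ with $E\in\C$ near $\R$, the one substantive extra ingredient being a complex version of the large deviation Theorem~\ref{t:LDT}. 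There the obstacle is redoing the LDT for complex energies, which is feasible but more laborious than citing Privalov.)
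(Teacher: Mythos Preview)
Your proposal is correct and follows essentially the same approach as the paper. The paper's own proof is just a citation: it invokes the Thouless formula and refers to the argument in Goldstein--Schlag \cite[Theorem~6.1]{goldstein}, which is precisely the harmonic-conjugate/Privalov argument you outline (write $L$ and $\pi(1-N)$ as the real and imaginary boundary values of the analytic function $G(z)=\int\log(z-x)\,\dd N(x)$, then use that conjugation preserves $C^\beta$ for $0<\beta<1$). Your reductions to a compact interval and to $\beta<1$ are appropriate, and your parenthetical alternative via the elementary estimate $N([E-\delta,E+\delta])\lesssim L(E+i\delta)-L(E)$ is also standard, though as you note it would require redoing the LDT for complex energies.
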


\begin{proof}
This follows from uniform positivity of $L$, Theorem~\ref{t:HolderContinuity}, and standard arguments using the Thouless Formula, \eqref{eq:thouless}. See, e.g., the proof of \cite[Theorem~6.1]{goldstein}, particularly the argument on pp.\ 175--176.
\end{proof}

The key ingredient in the approach of Goldstein--Schlag is the \emph{Avalanche Principle} \cite{goldstein}. Basically, the Avalanche Principle permits us good control on the norm of a product of $\SL(2,\R)$ matrices provided we have suitable estimates on consecutive pairwise products. A bit more precisely, if we consider a product like
\[
A
=
\prod^1_{j=n}A^{(j)}
=
A^{(n)} A^{(n-1)} \cdots A^{(1)},
\]
then, if $\|A^{(j+1)}A^{(j)}\|$ is not too small compared with $\|A^{(j+1)}\|\cdot\|A^{(j)}\|$, then the most contracted direction of $A^{(j+1)}$ is not too close to the most contracted direction of $(A^{(j)})^{-1}$. If this holds for each $1\le j<n$, then one has good norm control for the product $A$.

The Avalanche Principle allows one to move from small scales of matrix products to large scales in an inductive fashion. In particular, in Lemma~\ref{l.well-approximate-LE}, we are able to relate finite-step Lyapunov exponents at different scales using positivity of $L$ and the LDT.

\begin{lemma}[Avalanche Principle]
\label{l.avlanche-principle}
Let $A^{(1)},\ldots, A^{(n)}$ be a finite sequence in $\mathrm{SL}(2,\R)$ satisfying the following conditions:
\begin{align}\label{condition-AP}
&\min_{1\le j\le n}\|A^{(j)}\|\ge \l > n,\\ \label{condition-AP2}
&\max_{1\le j<n}\left|\log \|A^{(j+1)}\|+\log \|A^{(j)}\|-\log\|A^{(j+1)}A^{(j)}\|\right|<\frac12\log\l.
\end{align}
Then
\begin{equation} \label{avalanche-principle}
\left|\log\|A^{(n)}\ldots A^{(1)}\|+\sum_{j=2}^{n-1}\log\|A^{(j)}\|-\sum_{j=1}^{n-1}\log\|A^{(j+1)}A^{(j)}\|\right|\le C\frac{n}{\l}.
\end{equation}
\end{lemma}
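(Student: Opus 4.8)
The plan is to prove the Avalanche Principle by induction on $n$, reducing the estimate for a product of $n$ matrices to the estimate for a product of $n-1$ matrices together with a single ``amalgamation'' step. The heart of the matter is a quantitative $\SL(2,\R)$ geometry lemma: for $A, B \in \SL(2,\R)$ with $\|A\|, \|B\|$ large, the quantity $\log\|AB\| - \log\|A\| - \log\|B\|$ is small precisely when the most expanded direction of $B$ (equivalently, the most contracted direction of $B^{-1}$) is far from the most contracted direction of $A$; quantitatively, one has
\begin{equation*}
\bigl| \log\|AB\| - \log\|A\| - \log\|B\| - \log|\sin\angle(v^+(B), v^-(A))| \bigr| \le \frac{C}{\|A\|^2\|B\|^2},
\end{equation*}
where $v^+(B)$ is the most expanded direction of $B$ and $v^-(A)$ is the most contracted direction of $A$, and the angle is measured in $\R\PP^1$. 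I would first establish this singular-value computation by writing $A$ and $B$ in terms of their singular value decompositions (Cartan decomposition $KAK$) and expanding the norm of the product directly. Condition \eqref{condition-AP2} then says, via this formula, that $|\sin\angle(v^+(A^{(j+1)}), v^-(A^{(j)}))| \ge c/\sqrt{\lambda}$ roughly, i.e.\ consecutive ``bad'' directions stay uniformly separated.

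Next, using this separation I would show that the most contracted direction of the partial product $A^{(j)} \cdots A^{(1)}$ stays close to $v^-(A^{(j)})$ — with exponentially small (in $j$, with base controlled by $1/\lambda$) error — and symmetrically that its most expanded direction stays close to $v^+(A^{(1)})$. Concretely, applying $A^{(j+1)}$ to the partial product, the new most contracted direction is a perturbation of $v^-(A^{(j+1)})$ controlled by the ratio (second singular value)/(first singular value) of $A^{(j+1)}\cdots A^{(1)}$, which is $\le \lambda^{-(j-1)}$-ish by the inductive norm estimate; crucially the perturbation does not destroy the angle separation needed at the next step because $\lambda > n$ makes the accumulated errors summable and much smaller than $1/\sqrt\lambda$. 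This is the step I expect to be the main obstacle: one must carefully track how the most contracted/expanded directions of the growing product drift, show the drift is geometrically summable, and confirm that the hypothesis $\lambda > n$ (together with the $\frac12\log\lambda$ slack in \eqref{condition-AP2}) is exactly enough to keep every intermediate pair well-separated, so that the two-matrix formula can be applied at each of the $n-1$ junctions with a uniformly controlled error.

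With the direction-control in hand, I would telescope: write
\begin{equation*}
\log\|A^{(n)}\cdots A^{(1)}\| = \sum_{j=1}^{n-1} \Bigl( \log\|A^{(j+1)}\cdots A^{(1)}\| - \log\|A^{(j)}\cdots A^{(1)}\| \Bigr) + \log\|A^{(1)}\|,
\end{equation*}
and apply the two-matrix asymptotic formula to each summand with $A = A^{(j+1)}$ and $B = A^{(j)}\cdots A^{(1)}$, using that $v^+(B)$ is within $O(\lambda^{-(j-1)})$ of $v^+(A^{(j)}\cdots A^{(1)})$'s relevant direction — actually here one wants $v^-(A) = v^-(A^{(j+1)})$ against the most expanded direction of $B$, which by the previous step is $O(\lambda^{-(j-1)})$-close to $v^+(A^{(j)})$. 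Thus each $\log\|A^{(j+1)}\cdots A^{(1)}\| - \log\|A^{(j)}\cdots A^{(1)}\|$ equals $\log\|A^{(j+1)}\| + \log|\sin\angle(v^+(A^{(j)}), v^-(A^{(j+1)}))|$ up to an error $O(\lambda^{-1})$ (the direction drift and the $\|A\|^{-2}\|B\|^{-2}$ term both being $\le$ const$/\lambda$). Finally I would use the two-matrix formula \emph{in reverse} on the pairs $A^{(j+1)}A^{(j)}$ to replace $\log|\sin\angle(v^+(A^{(j)}), v^-(A^{(j+1)}))|$ by $\log\|A^{(j+1)}A^{(j)}\| - \log\|A^{(j+1)}\| - \log\|A^{(j)}\|$, again up to $O(\lambda^{-1})$. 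Summing the $n-1$ identities, the $\log\|A^{(j+1)}\|$ and single-matrix terms rearrange into $\sum_{j=2}^{n-1}\log\|A^{(j)}\|$ on the left and $\sum_{j=1}^{n-1}\log\|A^{(j+1)}A^{(j)}\|$ on the right, while the total error is $(n-1)\cdot O(\lambda^{-1}) = O(n/\lambda)$, which is \eqref{avalanche-principle}. The bookkeeping of signs and the verification that the constant $C$ is absolute (independent of $n$, $\lambda$, and the matrices) is routine once the geometric lemma and the direction-stability estimate are established.
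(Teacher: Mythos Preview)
The paper does not prove this lemma; it simply cites \cite[Proposition~2.2]{goldstein} (Goldstein--Schlag). Your proposed argument---the two-matrix singular-value identity relating $\log\|AB\| - \log\|A\| - \log\|B\|$ to $\log|\sin\angle(v^+(B),v^-(A))|$, followed by inductive control of the drift of the most expanded/contracted directions of the partial products, and then a telescoping sum---is precisely the strategy of the original Goldstein--Schlag proof that the paper defers to. So your approach is correct and is, in substance, the proof the paper invokes by reference.

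One small caution on the bookkeeping: the error term in the two-matrix formula is not quite $C\|A\|^{-2}\|B\|^{-2}$ but rather of order $\|A\|^{-2} + \|B\|^{-2}$ (this comes out of the explicit $KAK$ computation you describe), and the drift of the expanded direction of the growing product is controlled by the reciprocal of the \emph{product} norm, not by $\lambda^{-(j-1)}$ exactly. Neither point damages the argument---both errors are still $O(\lambda^{-1})$ at each step and sum to $O(n/\lambda)$---but when you write this out in full you will want to state the geometric lemma with the correct error and verify that the direction-stability step gives a bound that, combined with condition~\eqref{condition-AP2}, keeps the relevant sine uniformly bounded below at every junction. That is exactly the point where the hypothesis $\lambda > n$ and the factor $\tfrac12$ in \eqref{condition-AP2} earn their keep, as you correctly anticipate.
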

\noindent See \cite[Proposition 2.2]{goldstein} for a proof of Lemma~\ref{l.avlanche-principle}.

\begin{lemma}\label{l.well-approximate-LE}
There are constants $c,C > 0$ that depend only on $\widetilde \mu$ with the property that
\begin{equation} \label{eq:approxLE}
|L(E)+L_n(E)-2L_{2n}(E)|
\leq
Ce^{-cn}
\end{equation}
for all $n \in \Z_+$ and every $E \in \hat\Sigma$.
\end{lemma}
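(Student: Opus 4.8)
The plan is to apply the Avalanche Principle (Lemma~\ref{l.avlanche-principle}) to a product of $2n$ one-step transfer matrices, with the $n$-step matrices playing the role of the building blocks, and then integrate the resulting estimate against $\mu$. More precisely, fix $E \in \hat\Sigma$ and an integer $n$. For $\omega \in \Omega$, consider the $2n$-step transfer matrix written as a product of two $n$-step matrices: $M_{2n}^E(\omega) = M_n^E(T^n\omega)\, M_n^E(\omega)$. We want to invoke the Avalanche Principle with the two-term sequence $A^{(1)} = M_n^E(\omega)$, $A^{(2)} = M_n^E(T^n\omega)$. The two hypotheses \eqref{condition-AP} and \eqref{condition-AP2} require, respectively, that both factors have norm at least some $\lambda > 2$, and that $\bigl|\log\|A^{(1)}\| + \log\|A^{(2)}\| - \log\|A^{(2)}A^{(1)}\|\bigr| < \tfrac12\log\lambda$; the conclusion \eqref{avalanche-principle} in the two-term case is vacuous, so I would instead use the Avalanche Principle with \emph{many} blocks. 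Let me restate: write $m = 2n$ or, better, chop $M_{Ln}^E(\omega)$ into $L$ blocks of length $n$ and apply the AP; but since the target only involves scales $n$ and $2n$, the cleanest route is the two-scale comparison used by Goldstein--Schlag, which I describe now.

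Here is the approach I would actually carry out. Using the LDT (Theorem~\ref{t:LDT}) and uniform positivity $\gamma > 0$ (Theorem~\ref{t:continuityLE}), fix $\e$ small compared with $\gamma$ and set $\lambda = e^{n(\gamma - \e)}$, so that $\lambda > 2n$ for $n$ large. Define the ``good set'' $G_n \subseteq \Omega$ of those $\omega$ for which $\bigl|\tfrac1n\log\|M_n^E(\omega)\| - L(E)\bigr| < \e$ and $\bigl|\tfrac1n\log\|M_n^E(T^n\omega)\| - L(E)\bigr| < \e$ and $\bigl|\tfrac1{2n}\log\|M_{2n}^E(\omega)\| - L(E)\bigr| < \e$; by the LDT and subadditivity of $\mu$, $\mu(\Omega \setminus G_n) \le 6 C e^{-\eta n}$. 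For $\omega \in G_n$, the hypotheses of the two-block Avalanche Principle hold: \eqref{condition-AP} holds because $\|M_n^E(\cdot)\| \ge e^{n(L(E)-\e)} \ge e^{n(\gamma-\e)} = \lambda > 2n$ for large $n$, and \eqref{condition-AP2} holds because the single splitting error is $\bigl|\log\|M_n^E(T^n\omega)\| + \log\|M_n^E(\omega)\| - \log\|M_{2n}^E(\omega)\|\bigr| \le 2n(L(E)+\e) - 2n(L(E)-\e) = 4n\e < \tfrac12 n(\gamma-\e) = \tfrac12\log\lambda$ once $\e$ is chosen small enough relative to $\gamma$. The Avalanche Principle conclusion \eqref{avalanche-principle} with $n=2$ factors reads $\bigl|\log\|M_{2n}^E(\omega)\| - \log\|M_n^E(T^n\omega)\,M_n^E(\omega)\|\bigr| \le C/\lambda$, which is trivially true; so the two-block case gives nothing. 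I therefore switch to the genuinely useful version: apply the AP with $L$ blocks to relate scale $n$ to scale $Ln$, and separately to scale $2n$; but the standard trick (see \cite[proof of Theorem~6.1]{goldstein}) is cleaner — use the AP telescoping identity directly to compare $L_{2n}$ with $L_n$ by writing $\log\|M_{2n}^E(\omega)\| \approx \log\|M_n^E(T^n\omega)\| + \log\|M_n^E(\omega)\|$ up to the splitting error, which after integration gives $|2L_{2n}(E) - 2L_n(E)| \lesssim$ (error from the bad set), and then bootstrap. Let me make this precise in the next paragraph, as that is the crux.

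The precise mechanism: on $G_n$ we do not have a small splitting error (it is only $O(n\e)$), so a single split is too lossy; instead I would follow Goldstein--Schlag and compare scales $n$ and $2n$ via an \emph{intermediate} telescoping. Split $M_{2n}^E(\omega)$ into four blocks of length $n/2$ (assume $n$ even), apply the Avalanche Principle with these four blocks of common length $n/2$, for which the building-block norm lower bound is $e^{(n/2)(\gamma-\e)}$ and the pairwise-product errors are each $O(n\e)$, which is $< \tfrac14(n/2)(\gamma-\e)$ for $\e$ small. The AP conclusion then expresses $\log\|M_{2n}^E(\omega)\|$ as $\sum(\text{pairwise products of length } n) - \sum(\text{middle blocks of length }n/2)$ up to an error $O(n/\lambda) = O(n e^{-(n/2)(\gamma-\e)})$, which after integration over $G_n$ relates $L_{2n}$ to $L_n$ and $L_{n/2}$; combined with the same identity at the halved scale, one telescopes to get $|L(E) + L_n(E) - 2L_{2n}(E)| \le C e^{-cn}$, where $L(E) = \lim_k L_{2^k n}(E)$ supplies the limiting term. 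The contribution of $\Omega\setminus G_n$ is controlled because $|F_n|$ and $|F_{2n}|$ are uniformly bounded by $\log\Gamma$ (from \eqref{eq:unifLEbounds}-type bounds), so the bad set contributes at most $(\log\Gamma)\cdot\mu(\Omega\setminus G_n) \le C(\log\Gamma) e^{-\eta n}$. The main obstacle is bookkeeping: one must choose $\e$ small enough (depending only on $\gamma$, hence only on $\widetilde\mu$) that \eqref{condition-AP2} holds at every intermediate scale simultaneously, and must carefully track that all constants $c, C$ depend only on $\widetilde\mu$ and not on $E \in \hat\Sigma$ — this uniformity is exactly what Theorem~\ref{t:LDT} and \eqref{eq:UPLE} were set up to provide, so it should go through, but the AP application requires $n \ge n_0(\widetilde\mu)$ and small $n$ must be absorbed by enlarging $C$.
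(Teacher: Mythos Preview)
Your four-block idea is on the right track and, once corrected, gives a valid alternative to the paper's argument --- but the telescoping direction as you state it is wrong, and the paper uses a different Avalanche-Principle setup.

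\textbf{What the paper does.} For the target scale $n$ (their ``$\ell$''), the paper applies the Avalanche Principle with \emph{many} blocks of size $n$: it writes $M_{nN}^E$ as a product of $N$ blocks with $N \sim e^{\eta n/5}$, verifies \eqref{condition-AP}--\eqref{condition-AP2} on an LDT-good set of measure $\ge 1 - e^{-\eta n/4}$, and integrates the AP output. Because $N$ is huge, the edge coefficients $(N-1)/N$, $(N-2)/N$ are essentially $1$, and one obtains $|L_{nN} + L_n - 2L_{2n}| \le Ce^{-cn}$ in one stroke. A second telescoping along a super-exponentially growing sequence $n = n_1 \ll n_2 \ll \cdots$ then replaces $L_{nN}$ by $L$.

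\textbf{Your approach and its gap.} Splitting $M_{2n}^E$ into four pieces of size $n/2$ and integrating the AP output (after handling the bad set) yields $|2L_{2n} - 3L_n + L_{n/2}| \le Ce^{-cn/2}$, i.e.\ $|\delta_n - \delta_{n/2}| \le Ce^{-cn/2}$ with $\delta_m := 2L_{2m} - L_m$. To reach $L = \lim_k \delta_{2^k n}$ you must iterate \emph{upward}: take block size $m = n, 2n, 4n, \ldots$, obtain $|\delta_{2m} - \delta_m| \le Ce^{-cm}$, and sum the geometric series. Your ``same identity at the halved scale'' runs downward, where the errors $Ce^{-c2^{-k}n}$ grow and $\delta_{2^{-k}n}$ has no reason to approach $L$. (Your remark that $L(E) = \lim_k L_{2^k n}(E)$ suggests you actually intend the upward direction; make that consistent.) Also, the good set $G_n$ you wrote is tailored to the abandoned two-block split; for the four-block version you need LDT control of $M_{n/2}^E(T^{jn/2}\omega)$ for $j=0,1,2,3$ and of $M_n^E(T^{jn/2}\omega)$ for $j=0,1,2$.
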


\begin{proof}
Let $E \in \hat\Sigma$ be given; all estimates in the present argument will be uniform over $E \in \hat\Sigma$, so we will suppress it from the notation, writing $L, L_n, M_n$ in place of $L(E)$, $L_n(E)$, $M_n^E$. Throughout the argument, we let $C$ denote a large $\hat\Sigma$-dependent constant. It is straightforward to verify that the value of $C$ increases only finitely many times as the argument progresses and that it can indeed be chosen uniformly over $E \in \hat\Sigma$. Now, choose $\e > 0$ small enough that
\begin{equation} \label{eq:approxLE:epsChoice}
0
<
\frac{4\e}{\gamma-\e}
<
\frac12,
\end{equation}
where $\gamma$ is the constant from \eqref{eq:UPLE}, and pick $\eta > 0$ small enough that $\eta < 4(\gamma - \e)$ and the conclusion of Theorem~\ref{t:LDT} holds for this choice of $\eta$ and $\varepsilon$. Given $n \in \Z_+$ large, choose $\ell\in\Z_+$ so that
\begin{equation} \label{eq:approxLE:lognSimEll}
e^{\frac\eta{5}\ell}\le n\le e^{\frac\eta{4}\ell}<e^{(L-\e)\ell},
\end{equation}
where the final inequality follows from our choice of $\eta$. For each $\omega\in\Omega$, consider
\[
A^{(j)}(\omega)
:=
M_\ell\left(T^{(j-1)\ell}\omega\right),\ 1\le j \le n.
\]
From Theorem~\ref{t:LDT}, by choosing $n$, hence $\ell$, large, we obtain that there is an exceptional set $\CB = \CB(n)$ with
\[
\mu(\CB)
\leq
e^{-\frac14\eta\ell}
\]
such that
\begin{equation}\label{eq:approxLE:ellStepEst}
\left|\frac1\ell\log\|A^{(j)}(\omega)\| - L \right|,\
\left|\frac1{2\ell}\log\|A^{(j+1)}(\omega)A^{(j)}(\omega)\| - L \right|
<
\e
\end{equation}
whenever $\omega \notin \CB$ and $1 \le j \le n$. Consequently, we have
\[
\|A^{(j)}(\omega)\|\ge e^{(L-\e)\ell} > n
\]
for all $\omega \in \Omega \setminus \CB$ and $1 \le j \le n$. Moreover, for $1 \le j < n$, we get
\[
\left|\log \|A^{(j+1)}(\omega)\|+\log \|A^{(j)}(\omega)\|-\log\|A^{(j+1)}(\omega)A^{(j)}(\omega)\|\right|
<
4\e\ell
\]
from \eqref{eq:approxLE:ellStepEst}. Thus, for $\omega \in \Omega\setminus \CB$, conditions~\eqref{condition-AP} and \eqref{condition-AP2} of Lemma~\ref{l.avlanche-principle} are fulfilled upon taking $\l= \exp((L-\e)\ell)$, where we have used \eqref{eq:approxLE:epsChoice} to verify that \eqref{condition-AP2} holds true. Consequently, we obtain the conclusion of \eqref{avalanche-principle} for each $\omega\notin\CB$, which reads:
$$
\left|
\log\|M_{\ell n}(\omega)\|
+ \sum_{j=2}^{n-1} \log\| M_{\ell}(T^{(j-1)\ell}\omega) \|
- \sum_{j=1}^{n-1} \log\|M_{2\ell}(T^{(j-1)\ell}\omega) \|
\right|
\le
C\frac{n}{\l}.
$$

Dividing the inequality above by $\ell n$, integrating it against $\mu$, and splitting the domain of integration into $\Omega\setminus\CB$ and $\CB$, we get
$$
\left|L_{\ell n} + \frac{n-2}{n}L_\ell - \frac{2(n-1)}{n}L_{2\ell} \right|
\leq
C\left(\l^{-1}\ell^{-1}+\mu(\CB)\right)
\leq
Ce^{-\frac14\eta\ell}.
$$
Thus, we obtain
\begin{equation} \label{avalanche-consequence1}
\left|L_{\ell n} + L_\ell - 2L_{2\ell} \right|
\leq
\frac{C}n + Ce^{-\frac14\eta\ell}
\leq
Ce^{-\frac15\eta\ell}.
\end{equation}
Notice that \eqref{avalanche-consequence1} holds for any sufficiently large $n$ and $\ell$ that are related via \eqref{eq:approxLE:lognSimEll}. Thus, we now can build up estimates on large scales inductively. To make this precise, let us define a relation $\gg$ by declaring
\[
y \gg x
\iff
e^{\frac{\eta}{5} x}
\leq
\frac{y}{x}
\leq
\frac{1}{2} e^{\frac{\eta}{4}x}.
\]
For the first scale, we choose $n_1 \in \Z_+$ large and $n_2 \gg n_1$ a multiple of $n_1$; applying \eqref{avalanche-consequence1} with $\ell = n_1$ and $n = n_2/n_1$ yields
\begin{equation} \label{avalanche-consequence2}
\left|L_{n_2} + L_{n_1} - 2L_{2n_1}\right|
\leq
Ce^{-\frac15\eta n_{1}}.
\end{equation}
and
\begin{equation}\label{avalanche-consequence2.5}
\left|L_{2n_2} + L_{n_1} - 2L_{2n_1}\right|
\leq
Ce^{-\frac15\eta n_{1}}.
\end{equation}
Combining \eqref{avalanche-consequence2} and \eqref{avalanche-consequence2.5} gives us
\begin{equation} \label{avalanche-consequence3}
|L_{2n_2} - L_{n_2}|
\leq
Ce^{-\frac15\eta n_{1}}.
\end{equation}
Inductively, choosing $n_{s+1}\gg n_{s}$ such that $n_s | n_{s+1}$, we get \eqref{avalanche-consequence2} and \eqref{avalanche-consequence3} with the pair $(n_1,n_2)$ replaced by  $(n_s,n_{s+1})$, which in turn yields
\[
|L_{n_{s+1}} - L_{n_s}|
\leq
C e^{-\frac{1}{5} \eta n_{s-1}},
\quad \text{for every }
s \geq 2.
\]
Putting these estimates together, we obtain
\begin{align*}
0
\le
L_{n_2} - L
&\le
\sum_{s = 2}^\infty |L_{n_{s+1}} - L_{n_{s}} |\\
&\le
\sum_{s = 1}^\infty C e^{-\frac15\eta n_{s}}\\
& \leq
Ce^{-\frac15\eta n_1}.
\end{align*}
Consequently, we obtain \eqref{eq:approxLE} upon replacing $L_{n_2}$ by $L$ in \eqref{avalanche-consequence2}.
\end{proof}

Now we are ready to prove Theorem~\ref{t:HolderContinuity}.

\begin{proof}[Proof of Theorem~\ref{t:HolderContinuity}]
Let $E,E' \in \hat\Sigma$ be given. Recall that
\[
|L_n(E)-L_n(E')|
\leq
\Gamma^n|E-E'|
\]
by \eqref{eq:LnLipsch}. Combining this with \eqref{eq:approxLE}, we get
\begin{equation} \label{eq:LEHolderest1}
|L(E)-L(E')|
\leq
C^n|E-E'|+Ce^{-cn}
\end{equation}
for all large $n$, where $C,c > 0$ are suitable constants. H\"older continuity of $L$ then follows by choosing $n$ well. More precisely, with
$$
n
=
\left\lfloor \frac1{3\log C}\log\frac1{|E-E'|} \right\rfloor,
\quad
\beta
=
\min\left\{\frac{2}{3}, \frac{c}{3\log C} \right\},
$$
\eqref{eq:LEHolderest1} yields
\[
|L(E)-L(E')|
\leq
C|E-E'|^\beta,
\]
 which proves Theorem~\ref{t:HolderContinuity}.
\end{proof}

\section{Estimating Transfer Matrices and Green Functions}
\label{sec:green}

In the present section, we work out the next main thrust of the localization proof, namely, suitable upper bounds on Green functions of finite-volume truncations of $H_\omega$. In the 1D setting, the truncated Green functions are intimately connected with the transfer matrices; hence, the main technical result of the section is actually an estimate on the transfer matrices. We use the LDT to prove bounds on transfer matrices on blocks of length $n$ on a full-measure subset of $\Omega$, at the price of averaging over $n^2$ consecutive blocks. In fact, we prove a ``centered'' version of this result that allows us to shift the result to the center of localization (once we know that the eigenfunctions are localized, that is). We then parlay this result into an upper bound on the Green functions. Recall the function $F_n$ defined in \eqref{eq:FnOmegaDef}, that is,
\[
F_n(\omega,E)
=
\frac{1}{|n|} \log\|M_n^E(\omega)\|.
\]

\begin{lemma}\label{lem:mean_ldt}
For any $\e > 0$, there exists $n_0 = n_0(\e)$ such that
\begin{equation}\label{eq:mean_ldt}
\mu\set{\omega: \vabs{L(E) - \frac{1}{r}\sum\limits_{s = 0}^{r-1} F_n(T^{sn+\ell} \omega, E)} \geq \e}
\leq
e^{-\frac{\e r}{2}}
\end{equation}
for every $\ell \in\Z$, $E\in \hat\Sigma$, $r\in \Z_+$, and every $n \geq n_0$.
\end{lemma}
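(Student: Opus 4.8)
The plan is to recognize the sum in \eqref{eq:mean_ldt} as a sum of \emph{independent, identically distributed, bounded} random variables and then run a Chernoff-type large deviation argument, using the uniform LDT (Theorem~\ref{t:LDT}) to obtain sharp control on the relevant exponential moments. Concretely, set $X_s(\omega) := F_n(T^{sn+\ell}\omega,E) = \tfrac1n\log\|M_n^E(T^{sn+\ell}\omega)\|$ for $0\le s\le r-1$. Since $M_n^E(\tau)$ depends only on the coordinates $\tau_0,\dots,\tau_{n-1}$, the variable $X_s$ depends only on $\omega_{sn+\ell},\dots,\omega_{sn+\ell+n-1}$; as $s$ ranges over $\{0,\dots,r-1\}$ these coordinate blocks are pairwise disjoint, so the $X_s$ are independent by the product structure of $\mu=\widetilde\mu^{\Z}$, and shift-invariance of $\mu$ gives that each $X_s$ has the same law as $F_n(\cdot,E)$, hence common mean $L_n(E)=\int_\Omega F_n(\omega,E)\,\dd\mu$. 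Moreover, since $M_n^E(\tau)\in\SL(2,\R)$ forces $\|M_n^E(\tau)\|\ge1$ while $\|M_n^E(\tau)\|\le\Gamma^n$ with $\Gamma$ as in \eqref{eq:oneStepTMBoundDef}, we have the deterministic bound $0\le X_s\le\log\Gamma$.

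The heart of the matter is a uniform bound on exponential moments: for each $\e>0$ there are $\lambda=\lambda(\e)>0$ and $n_0=n_0(\e)$ such that
\[
\int_\Omega e^{\pm\lambda(F_n(\omega,E)-L(E))}\,\dd\mu(\omega)\le e^{\lambda\e/2}
\qquad\text{for all } n\ge n_0 \text{ and } E\in\hat\Sigma.
\]
To prove this, split $\Omega$ into the ``good'' set $G=\{\omega:|F_n(\omega,E)-L(E)|<\e/4\}$ and its complement: on $G$ the integrand is at most $e^{\lambda\e/4}$, while on $\Omega\setminus G$ one has $|F_n-L(E)|\le\log\Gamma$ (using $0\le F_n\le\log\Gamma$ and, by \eqref{eq:unifLEbounds}, $0\le L(E)\le\log\Gamma$), so the integrand is at most $e^{\lambda\log\Gamma}$, and $\mu(\Omega\setminus G)\le Ce^{-\eta n}$ by Theorem~\ref{t:LDT} applied with parameter $\e/4$, where $C,\eta$ depend on $\e$ but not on $E$. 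Hence the integral is at most $e^{\lambda\e/4}+e^{\lambda\log\Gamma}Ce^{-\eta n}$; taking $\lambda$ large enough that $2\le e^{\lambda\e/4}$ and then $n_0$ large enough that $e^{\lambda\log\Gamma}Ce^{-\eta n_0}\le e^{\lambda\e/4}$ yields the claimed bound. This is precisely where the LDT enters and where uniformity over $\hat\Sigma$ is inherited.

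With the moment bound in hand, Markov's inequality together with independence gives, for the upper tail,
\[
\mu\Bigl\{\tfrac1r\textstyle\sum_{s=0}^{r-1}\bigl(X_s-L(E)\bigr)\ge\e\Bigr\}
\le e^{-\lambda r\e}\prod_{s=0}^{r-1}\int_\Omega e^{\lambda(X_s-L(E))}\,\dd\mu
\le e^{-\lambda r\e}e^{\lambda r\e/2}=e^{-\lambda r\e/2},
\]
and symmetrically the lower tail $\{\tfrac1r\sum_s(X_s-L(E))\le-\e\}$ has measure at most $e^{-\lambda r\e/2}$; the event in \eqref{eq:mean_ldt} is their union, so it has measure at most $2e^{-\lambda r\e/2}$, and one enlarges $\lambda=\lambda(\e)$ a final time (so that $(\lambda-1)\e/2\ge\log2$, hence $2e^{-\lambda r\e/2}\le e^{-\e r/2}$ for every $r\ge1$) and adjusts $n_0$ accordingly. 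I expect the only genuine subtlety to be the exponential-moment estimate: a naive concentration bound (Hoeffding for variables bounded by $\log\Gamma$) would only give something like $e^{-c\e^2 r/(\log\Gamma)^2}$, which is weaker than the asserted $e^{-\e r/2}$ when $\e$ is small; the point is that the LDT forces $F_n$ to lie within $\e/4$ of $L(E)$ off a set of exponentially small measure, so the potentially large values (up to $\log\Gamma$) are too rare to affect the exponential moment -- this is exactly what buys the clean constant, at the modest cost of letting $n_0$ depend on $\e$.
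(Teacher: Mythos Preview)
Your proof is correct and follows essentially the same approach as the paper: use independence of the blocks $F_n \circ T^{sn+\ell}$ to factor the exponential moment, then invoke the uniform LDT (Theorem~\ref{t:LDT}) to bound $\int e^{\text{(deviation)}}\,\dd\mu$ by splitting into good and bad sets. The paper's version is marginally cleaner in that it applies the exponential Chebyshev inequality with $\lambda=1$ directly to $\sum_s |L - F_n\circ T^{sn}|$ (after the pointwise triangle inequality), which handles both tails at once and delivers the constant $e^{-\e r/2}$ without the need to tune $\lambda$ or absorb the union-bound factor of~$2$.
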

\begin{proof}
Since $\mu$ is $T$-invariant, we only need to deal with $\ell = 0$. Fix $E \in\hat\Sigma$ and $\e > 0$, and suppress $E$ from the notation. By the triangle inequality and Chebyshev's inequality, we have
\begin{align*}
\mu\set{\omega: \vabs{L - \frac{1}{r}\sum\limits_{s = 0}^{r-1} F_n(T^{sn} \omega)} \geq \e}
& \leq
\mu\set{\omega: \frac{1}{r}\sum\limits_{s = 0}^{r-1} \vabs{L - F_n(T^{sn} \omega)} \geq \e} \\
& \leq
e^{-r\e}
\int_\Omega \! \exp\left( \sum_{s=0}^{r-1} \left| L - F_n \circ T^{sn} \right| \right) \, \dd\mu.
\end{align*}
Since $F_n\circ T^{sn}$ and $F_n \circ T^{s'n}$ depend on disjoint sets of coordinates whenever $s \neq s'$, it follows $e^{|L-F_n\circ T^{sn}|}$ are indpendent random variables on $\Omega$ for distinct $s$. Thus,
\[
\int_\Omega \! \exp\left( \sum_{s=0}^{r-1} \left| L - F_n \circ T^{sn} \right| \right) \, \dd\mu=\prod_{s=0}^{r-1} \int_\Omega \! e^{|L-F_n\circ T^{sn}|} \, \dd\mu
\]
which in turn implies
\begin{align}
\nonumber
\mu\set{\omega: \vabs{L - \frac{1}{r}\sum\limits_{s = 0}^{r-1} F_n(T^{sn} \omega)} \geq \e}
& \leq
e^{-r\e} \prod_{s=0}^{r-1} \int_\Omega \! e^{|L-F_n\circ T^{sn}|} \, \dd\mu
 \\
 \label{eq:sample_mean_deviation}
& =
\left(e^{-\e}\int_{\Omega} \! e^{\vabs{L - F_n}} \, \dd\mu \right)^{r},
\end{align}
where the second line follows from $T$-invariance of $\mu$.

It remains to bound the integral on the right-hand side of \eqref{eq:sample_mean_deviation}. To that end, take $\d < \e/2$, split $\Omega$ into the regions where $|L - F_n| \ge \d$ and $|L - F_n| < \d$, and then apply Theorem~\ref{t:LDT} to estimate the measure of the former region:
\begin{equation} \label{eq:meanLDT:integralEst}
\int_{\Omega} \! e^{\vabs{L - F_n(\omega)}} \, \dd\mu(\omega)
\leq
C \Gamma^2 e^{-\eta n} + e^\d.
\end{equation}
Since $\d < \e/2$, we may choose $n_0 = n_0(\e)$ large enough that
\begin{equation} \label{eq:meanLDT:n0Choice}
C \Gamma^2 e^{-\eta n_0} + e^\d
<
e^{\frac{\e}{2}}.
\end{equation}
In view of \eqref{eq:sample_mean_deviation}, \eqref{eq:meanLDT:integralEst}, and \eqref{eq:meanLDT:n0Choice}, we have
\begin{equation*}
\mu\set{\omega: \vabs{L - \frac{1}{r}\sum\limits_{s = 0}^{r-1} F_n(T^{sn} \omega)} \geq \e} \leq e^{-\frac{\e r}{2}}
\end{equation*}
for all $n \geq n_0$ and all $r \in \Z_+$.
\end{proof}

Using the lemma, we can get a full-measure set of $\omega \in \Omega$ on which one can control transfer matrices across blocks of length $n$, if one is willing to average over $n^2$ consecutive blocks.

\begin{prop} \label{prop:meanLDTshifted_sigma}
For any $0 < \e < 1$, there exists a subset $\Omega_+ = \Omega_+(\e) \subset \Omega$ of full $\mu$-measure such that the following statement holds true. For every $\e \in (0,1)$ and every $\omega \in \Omega_+(\e)$, there exists $\widetilde n_0 = \widetilde n_0(\omega,\e)$ such that
\begin{equation}\label{eq:partsum1}
\left|L(E) - \frac{1}{n^2}\sum_{s=0}^{n^2-1} F_{n}\left(T^{\zeta_0 + sn}(\omega),E\right) \right|
<
\e
\end{equation}
for all $n,\zeta_0 \in \Z$ with $n \geq \max\left( \widetilde n_0, (\log(|\zeta_0|+1))^{2/3} \right)$ and all $E \in \hat \Sigma$.
\end{prop}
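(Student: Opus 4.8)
The plan is to deduce the statement from Lemma~\ref{lem:mean_ldt} via a Borel--Cantelli argument, discretizing the continuum of energies $E \in \hat\Sigma$ by a net whose mesh shrinks with $n$ and absorbing all admissible shifts $\zeta_0$ into a single union at each scale. Fix $\e \in (0,1)$ and put $\e_1 = \e/3$. Let $n_0 = n_0(\e_1)$ be the constant from Lemma~\ref{lem:mean_ldt}, and let $\delta_0 > 0$ be such that $|L(E) - L(E')| < \e_1$ whenever $E, E' \in \hat\Sigma$ with $|E - E'| < \delta_0$; such a $\delta_0$ exists because $L$ is uniformly continuous on the compact interval $\hat\Sigma$ by Theorem~\ref{t:continuityLE} (one could alternatively quote the explicit H\"older modulus from Theorem~\ref{t:HolderContinuity}). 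For each integer $n \ge n_0$, fix a finite set $\CN_n \subset \hat\Sigma$ that is $\min\bigl(\delta_0,\, \e_1\Gamma^{1-n}\bigr)$-dense in $\hat\Sigma$; one may take $\#\CN_n \le C\,\Gamma^{n}$ for a constant $C = C(\e,\widetilde\mu)$, since $\hat\Sigma$ is a bounded interval.

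Next, for each $n \ge n_0$, each $\zeta_0 \in \Z$ with $|\zeta_0| \le e^{n^{3/2}}$, and each $E' \in \CN_n$, introduce the event
\[
B_{n,\zeta_0,E'}
\;=\;
\Bigl\{\, \omega \in \Omega : \Bigl| L(E') - \tfrac{1}{n^2}\sum_{s=0}^{n^2-1} F_n\bigl(T^{\zeta_0 + sn}\omega,\, E'\bigr) \Bigr| \ge \e_1 \,\Bigr\},
\]
and let $B_n$ be the union of the $B_{n,\zeta_0,E'}$ over all such $\zeta_0$ and $E'$. Lemma~\ref{lem:mean_ldt}, applied with $r = n^2$, $\ell = \zeta_0$, and precision $\e_1$, gives $\mu(B_{n,\zeta_0,E'}) \le e^{-\e_1 n^2/2}$. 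Since there are at most $3\,e^{n^{3/2}}$ admissible shifts and at most $C\,\Gamma^n$ net points at scale $n$, we obtain $\mu(B_n) \le 3C\,e^{n^{3/2}}\,\Gamma^{n}\, e^{-\e_1 n^2/2}$, and the right-hand side is summable in $n$. By the Borel--Cantelli lemma, $\Omega_+(\e) := \Omega \setminus \limsup_{n} B_n$ has full $\mu$-measure, and for each $\omega \in \Omega_+(\e)$ there is $\widetilde n_0(\omega,\e) \ge n_0$ such that $\omega \notin B_n$ for all $n \ge \widetilde n_0(\omega,\e)$.

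Finally, one transfers the conclusion from the net to all of $\hat\Sigma$. Let $\omega \in \Omega_+(\e)$ and let $n,\zeta_0 \in \Z$ satisfy $n \ge \max\bigl(\widetilde n_0(\omega,\e),\, (\log(|\zeta_0|+1))^{2/3}\bigr)$; the latter condition is exactly $|\zeta_0| \le e^{n^{3/2}} - 1$, so $\omega \notin B_{n,\zeta_0,E'}$ for every $E' \in \CN_n$. Given $E \in \hat\Sigma$, choose $E' \in \CN_n$ with $|E - E'| \le \min(\delta_0, \e_1\Gamma^{1-n})$ and estimate
\[
\Bigl| L(E) - \tfrac{1}{n^2}\sum_{s} F_n(\cdot,E) \Bigr|
\le |L(E) - L(E')|
+ \Bigl| L(E') - \tfrac{1}{n^2}\sum_{s} F_n(\cdot,E') \Bigr|
+ \tfrac{1}{n^2}\sum_{s} \bigl| F_n(\cdot,E') - F_n(\cdot,E) \bigr|,
\]
where the sums run over $0 \le s \le n^2 - 1$ and the argument of each $F_n$ is $T^{\zeta_0+sn}\omega$. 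The first term is $< \e_1$ by the choice of $\delta_0$, the second is $< \e_1$ since $\omega \notin B_{n,\zeta_0,E'}$, and the third is at most $\Gamma^{n-1}|E-E'| \le \e_1$ by the Lipschitz bound \eqref{eq:Fnderiv}; hence the total is $< 3\e_1 = \e$, which is \eqref{eq:partsum1}. The only delicate point is the counting in the Borel--Cantelli step: at scale $n$ there are exponentially many admissible shifts ($\sim e^{n^{3/2}}$) and exponentially many net points ($\sim \Gamma^n$), so summability hinges on Lemma~\ref{lem:mean_ldt} producing a bound of order $e^{-\e_1 n^2/2}$ --- that is, on averaging over $n^2$, rather than $n$, consecutive blocks, which is precisely why the proposition is stated with $n^2$ blocks.
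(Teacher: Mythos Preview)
Your proof is correct and follows essentially the same route as the paper's own argument: discretize $\hat\Sigma$ by a net of mesh $\sim \Gamma^{-n}$, apply Lemma~\ref{lem:mean_ldt} with $r=n^2$ at each net point and each admissible shift $|\zeta_0|\le e^{n^{3/2}}$, sum the resulting bounds, and run Borel--Cantelli. The only cosmetic difference is that the paper controls $|L(E)-L(E')|$ via the H\"older bound of Theorem~\ref{t:HolderContinuity}, whereas you invoke uniform continuity on the compact $\hat\Sigma$ (and note the H\"older option); both are valid, and your version is marginally more self-contained.
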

\begin{proof}
The key realization in this proof is that one can obtain good control of the averages of the functions $F_n(\omega,E)$ globally over $E \in \hat\Sigma$ by controlling $F_n$ on suitable finite subsets of $\hat\Sigma$ whose cardinality can in turn be bounded via the perturbative estimates from Lemma~\ref{lem:Fnlipschitz} and compactness of $\hat\Sigma$.

For a given large enough $n$ (we will determine largeness later), we first consider sets $\CB_{n,\zeta_0} = \CB_{n,\zeta_0}(\varepsilon)$ where \eqref{eq:partsum1} fails to hold:
\begin{align*}
\CB_{n,\zeta_0}
:=
\left\{\omega : \sup_{E \in \hat\Sigma}
\left|L(E) - \frac{1}{n^2}\sum_{s=0}^{n^2-1} F_n(T^{\zeta_0 + sn}\omega,E) \right|
\geq
\e \right\}.
\end{align*}

Given $0 < \delta \leq 1/2$, define the na\"ive grid
\[
\Sigma_0
:=
\left[ \hat\Sigma \cap \left( 2 \delta \, \Z \right) \right] \cup \set{\pm \kappa}.
\]
It is straightforward to check that $\Sigma_0$ is $\delta$-dense in $\hat\Sigma$ in the sense that
\begin{equation} \label{eq:prop82:deltaDense_centered}
\hat\Sigma
\subseteq
\bigcup_{t \in \Sigma_0} [t-\delta,t+\delta].
\end{equation}
Moreover, we may estimate the cardinality of $\Sigma_0$ via
\[
\# \Sigma_0
\leq
\frac{\kappa}{\delta} + 3
\leq
\frac{2\kappa}{\delta};
\]
note that we used $\delta \leq 1/2$ and $\kappa \ge 2$ in the second step. Now, fix $0 < \e < 1$ and let $\Gamma$ denote the uniform bound on $\|M\|$ from \eqref{eq:oneStepTMBoundDef}. Taking $\delta = \e (3 \Gamma^{n})^{-1}$ in the discussion above, we may produce a finite set $\Sigma_0 \subset \hat{\Sigma}$ which is $\e (3 \Gamma^{n})^{-1}$-dense in $\hat\Sigma$ in the sense of \eqref{eq:prop82:deltaDense_centered}, with cardinality bounded above by
\begin{equation} \label{eq:cardinality_centered}
\#\Sigma_0
\leq
\frac{6\kappa \Gamma^{n}}{\e}.
\end{equation}
If necessary, enlarge $n$ to ensure that
\begin{equation}\label{eq:holderApplication}
C \left(\frac{\e}{3\Gamma^n} \right)^\beta
<
\frac{\e}{3},
\end{equation}
where $C$ and $\beta$ are from Theorem~\ref{t:HolderContinuity}.
Then \eqref{eq:Fnderiv} and Theorem~\ref{t:HolderContinuity} yield
\begin{align*}
\CB_{n,\zeta_0}
\subset
\bigcup_{E \in \Sigma_0}
\left\{\omega: \left|L(E) - \frac{1}{n^2}\sum_{s=0}^{n^2-1} F_{n} \left(T^{\zeta_0 + sn}\omega,E \right) \right|
\geq
\frac{\e}{3} \right\}.
\end{align*}
Consequently, by taking $n$ large enough that $n \geq n_0(\e/3)$ and \eqref{eq:holderApplication} holds and using $T$-invariance of $\mu$, one obtains
\begin{equation} \label{eq:supersmall}
\mu(\CB_{n,\zeta_0})
\leq
\frac{6 \kappa \Gamma^{n}}{\e} \exp\left(-\frac{\e n^{2}}{6} \right)
\end{equation}
by Lemma~\ref{lem:mean_ldt}. Now, with
\[
\CB_{n}
:=
\bigcup\limits_{|\zeta_0| \leq e^{n^{3/2}}} \CB_{n, \zeta_0},
\]
it is clear that $\mu\left(\CB_{n}\right) \leq e^{-c\e n^2}$ for $n$ large, so $\Omega_+ := \Omega \setminus \limsup \CB_{n}$ satisfies $\mu(\Omega_+) = 1$ by the Borel--Cantelli Lemma. Naturally, for each $\omega \in \Omega_+$, we can find $\widetilde n_0 = \widetilde{n}_0(\omega, \e)$ large enough that $\omega \notin \CB_n$ whenever $n \ge \widetilde n_0$. In other words,
\[
\omega \notin \CB_{n, \zeta_0} \mbox{ whenever } n\geq \widetilde{n}_0(\omega, \e) \mbox{ and } |\zeta_0| \leq e^{n^{3/2}}.
\]
Changing the order of $n$ and $\zeta_0$, the statement above clearly implies
\[
\omega \notin \CB_{n,\zeta_0}
\mbox{ whenever } \zeta_0 \in\Z \mbox{ and } n \geq \max\left( \widetilde n_0, (\log(|\zeta_0|+1))^{2/3} \right).
\]
By the definition of $B_{n,\zeta_0}$, we obtain the statement of the proposition.
\end{proof}

We are now in a position to estimate the finite-volume Green functions. Before stating the estimate, we fix some notation. Let $\Lambda = [a,b] \cap \Z$ be a finite subinterval of $\Z$, and denote by $P_\Lambda:\ell^2(\Z) \to \ell^2(\Lambda)$ the canonical projection. We denote the restriction of $H_\omega$ to $\Lambda$ by
$$
H_{\omega,\Lambda}
:=
P_\Lambda H_\omega P_\Lambda^*.
$$
For any $E \notin \sigma(H_{\omega,\Lambda})$, define
$$
G_{\omega,\Lambda}^E
:=
(H_{\omega,\Lambda}-E)^{-1},
$$
to be the resolvent operator associated to $H_{\omega,\Lambda}$.  Like $H_{\omega,\Lambda}$, $G_{\omega,\Lambda}^E$ has a representation as a finite matrix; denote its matrix elements by $G_{\omega,\Lambda}^E(m,n)$, that is,
$$
G_{\omega,\Lambda}^E(m,n)
:=
\left\langle \delta_m , G_{\omega,\Lambda}^E \delta_n \right\rangle,
\quad
n,m \in \Lambda.
$$
Additionally, for $N \in \Z_+$, let us define $H_{\omega,N} := H_{\omega,[0,N)}$ to be the restriction of $H_\omega$ to the box $\Lambda_N := [0,N)\cap \Z.$  We will likewise use the same notation for the associated resolvent $G_{\omega,N}^E$.
\medskip

Using Cramer's rule, we know that
\begin{equation} \label{eq:greenstodet}
G_{\omega,N}^E(j,k)
=
\frac{\det[H_{\omega,j} - E] \det[H_{T^{k+1}\omega, N-k-1} - E]}{\det[H_{\omega,N} - E]}
\end{equation}
 for any $0 \leq j \leq k \leq N-1$ and $E \notin \sigma(H_{\omega,N})$, where one interprets $\det[H_{\omega,0} - E] = 1$.

Another relation that will be important in what follows is
\begin{align}
\label{eq:transfertodet}
M_N^E(\omega)
=
\begin{bmatrix}
\det(E - H_{\omega,N}) & -\det(E - H_{T\omega,N-1})\\
\det(E - H_{\omega,N-1}) & -\det(E - H_{T\omega,N-2})
\end{bmatrix},
\quad
N \ge 2.
\end{align}
This is a standard fact, which one may prove inductively.

In particular, since the norm of a matrix majorizes the absolute value of any of its entries, we obtain
\begin{equation} \label{eq:greenBoundsFromTMBounds}
\left| G_{\omega,N}^E(j,k) \right|
\leq
\frac{\|M_{j}^E(\omega)\| \|M_{N-k}^E(T^{k}\omega)\|}{|\det[H_{\omega,N} - E]|}
\end{equation}
for all $0 \leq j \leq k \leq N-1$ by combining \eqref{eq:greenstodet} and \eqref{eq:transfertodet}. Thus, it is straightforward to transform estimates on transfer matrices into estimates on Green functions of truncations of $H_\omega$; to complete our goal of estimating Green functions, we will use Proposition~\ref{prop:meanLDTshifted_sigma} to estimate transfer matrix norms, and then apply \eqref{eq:greenBoundsFromTMBounds}.

\begin{corollary}\label{cor:LEandGreenEst_general}
Given $\varepsilon \in (0,1)$ and $\omega \in\Omega_+(\varepsilon)$, there exists $\widetilde n_1 = \widetilde n_1(\omega,\varepsilon)$ so that the following statements hold true.  For all $E \in \hat\Sigma$, we have
\begin{equation}\label{eq:FiniteLyapUpperEstimate_general}
\frac{1}{n}\log
\norm{M_n^E(T^{\zeta_0}\omega)}
\leq
L(E) + 2\varepsilon
\end{equation}
whenever $n, \zeta_0 \in \Z$ satisfy $n \geq \max\left( \widetilde n_1, \log^2(|\zeta_0| + 1) \right)$.

Moreover, for all  $n, \zeta_0 \in \Z$ with $n\ge \e^{-1}\max\left( \widetilde n_1, 2\log^2(|\zeta_0|+1)\right)$, we have
\begin{equation}\label{eq:greenEst_K_general}
\left| G_{T^{\zeta_0} \omega,n}^E(j,k) \right|
\leq
\frac{\exp[(n - |j - k|)L(E) + C_0 \e n]}{|\det[H_{T^{\zeta_0} \omega,n} - E]|}
\end{equation}
for all $E \in \hat \Sigma \setminus \sigma(H_{T^{\zeta_0} \omega,n})$ and all $j,k\in [0,n)$, where $C_0$ is a constant that only depends on $\widetilde\mu$.
\end{corollary}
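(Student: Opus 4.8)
The plan is to derive both displays of Corollary~\ref{cor:LEandGreenEst_general} as essentially immediate consequences of Proposition~\ref{prop:meanLDTshifted_sigma}, using subadditivity of $\log\|\cdot\|$ to pass from the block-average control provided there to a genuine upper bound on a single long transfer matrix, and then feeding the resulting transfer-matrix bound into the Cramer-type inequality \eqref{eq:greenBoundsFromTMBounds}.

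First I would fix $\varepsilon\in(0,1)$ and $\omega\in\Omega_+(\varepsilon)$ and let $\widetilde n_0(\omega,\varepsilon)$ be as in Proposition~\ref{prop:meanLDTshifted_sigma}. Given $n,\zeta_0$ with $n\ge\max(\widetilde n_1,\log^2(|\zeta_0|+1))$ (with $\widetilde n_1$ to be chosen below), apply Proposition~\ref{prop:meanLDTshifted_sigma} \emph{at scale $m$}, where $m=\lfloor\sqrt n\rfloor$ or a comparable choice, so that $n$ is (up to a bounded remainder) a multiple $m\cdot m^2$-type product of $m$; more precisely, write $n = mq + r$ with $0\le r<m$ and note $q$ is roughly $m$, so the average of the $q$ values $F_m(T^{\zeta_0+sm}\omega,E)$ is controlled by Proposition~\ref{prop:meanLDTshifted_sigma} once $m\ge\widetilde n_0$ and $m\ge(\log(|\zeta_0|+1))^{2/3}$ — the latter is guaranteed by $n\ge\log^2(|\zeta_0|+1)$ since then $m\gtrsim\log^{1}(|\zeta_0|+1)\ge(\log(|\zeta_0|+1))^{2/3}$. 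By submultiplicativity of the norm,
\[
\log\|M_n^E(T^{\zeta_0}\omega)\|
\le
\sum_{s=0}^{q-1}\log\|M_m^E(T^{\zeta_0+sm}\omega)\| + \log\|M_r^E(T^{\zeta_0+qm}\omega)\|
\le
m\sum_{s=0}^{q-1}F_m(T^{\zeta_0+sm}\omega,E) + r\log\Gamma.
\]
The average of the $F_m$'s is at most $L(E)+\varepsilon$ by \eqref{eq:partsum1}, so the first term is at most $mq(L(E)+\varepsilon)\le n(L(E)+\varepsilon)$, and dividing by $n$ gives $\tfrac1n\log\|M_n^E(T^{\zeta_0}\omega)\|\le L(E)+\varepsilon + \tfrac{r\log\Gamma}{n}\le L(E)+2\varepsilon$ once $n$ is large enough that $\tfrac{m\log\Gamma}{n}\le\varepsilon$, i.e.\ for $n\ge\widetilde n_1(\omega,\varepsilon)$ with $\widetilde n_1$ chosen to absorb this and the condition $m\ge\widetilde n_0$. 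This proves \eqref{eq:FiniteLyapUpperEstimate_general}. (One should double-check that \eqref{eq:partsum1} is being invoked with the correct hypothesis $m\ge\max(\widetilde n_0,(\log(|\zeta_0|+1))^{2/3})$; tracking how the threshold on $n$ translates to a threshold on $m=\lfloor\sqrt n\rfloor$ is the one genuinely fiddly bookkeeping point.)

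For the Green function bound, I would simply insert \eqref{eq:FiniteLyapUpperEstimate_general} into \eqref{eq:greenBoundsFromTMBounds}. Fix $n,\zeta_0$ with $n\ge\varepsilon^{-1}\max(\widetilde n_1,2\log^2(|\zeta_0|+1))$ and $0\le j\le k\le n-1$ (the case $j\ge k$ being symmetric, or handled by the transpose relation). Applying \eqref{eq:FiniteLyapUpperEstimate_general} to $M_j^E(T^{\zeta_0}\omega)$ and to $M_{n-k}^E(T^{\zeta_0+k}\omega)$ — here the shift is $\zeta_0+k$, and since $|k|\le n$ one has $\log^2(|\zeta_0+k|+1)\lesssim\log^2(|\zeta_0|+1)+\log^2 n$, which is $\le n$ for $n$ large, so the hypothesis of \eqref{eq:FiniteLyapUpperEstimate_general} is met for both factors, possibly after enlarging $\widetilde n_1$ — we get $\|M_j^E(T^{\zeta_0}\omega)\|\le e^{j(L(E)+2\varepsilon)}$ and $\|M_{n-k}^E(T^{\zeta_0+k}\omega)\|\le e^{(n-k)(L(E)+2\varepsilon)}$, when $j$ and $n-k$ are themselves large; for the finitely many small values of $j$ or $n-k$ one uses instead the crude bound $\|M_p^E\|\le\Gamma^p$, which costs at most an additional $C\log\Gamma$ in the exponent independent of $n$, hence is absorbable into $C_0\varepsilon n$ after a final enlargement of the threshold. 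Multiplying, the numerator of \eqref{eq:greenBoundsFromTMBounds} is bounded by $e^{(j+n-k)(L(E)+2\varepsilon)+O(\log\Gamma)} = e^{(n-|j-k|)L(E) + 2\varepsilon n + O(\log\Gamma)}$ using $j+(n-k)=n-(k-j)=n-|j-k|$, and choosing $C_0$ so that $2\varepsilon n + O(\log\Gamma)\le C_0\varepsilon n$ for all $n$ in range (this is where the factor $\varepsilon^{-1}$ in the hypothesis on $n$ is used) yields \eqref{eq:greenEst_K_general}.

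The main obstacle is not any single estimate but the careful choreography of the various scale thresholds: Proposition~\ref{prop:meanLDTshifted_sigma} is stated at a block scale with a $(\log(|\zeta_0|+1))^{2/3}$ condition, whereas the corollary wants conditions phrased in terms of $n\gtrsim\log^2(|\zeta_0|+1)$ and $n\gtrsim\varepsilon^{-1}\log^2(|\zeta_0|+1)$; one must verify that after setting the block scale to $\sim\sqrt n$ these translate correctly, that the shift $\zeta_0+k$ appearing in the second transfer-matrix factor still satisfies the required inequality, and that the remainder terms $r\log\Gamma$ and the crude bounds on short matrix products are genuinely $o(\varepsilon n)$ or at worst $O(\log\Gamma)$, hence absorbable. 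None of this is deep, but it is the part where an error would most plausibly hide, so I would write it out with explicit (if non-optimal) constants.
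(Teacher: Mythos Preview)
Your overall strategy is right, but the block scale in the first step is off. Proposition~\ref{prop:meanLDTshifted_sigma} controls the average of \emph{exactly $m^2$} consecutive values $F_m(T^{\zeta_0+sm}\omega,E)$, $0\le s<m^2$, so the total span covered is $m^3$. To match a product of length $n$ you therefore need $m^3\approx n$, i.e.\ $m=\lceil n^{1/3}\rceil$, not $m\approx\sqrt n$. With your choice $m=\lfloor\sqrt n\rfloor$ one has $q\approx m$, but the proposition gives no information about an average over only $m$ (rather than $m^2$) blocks, so \eqref{eq:partsum1} cannot be invoked as you describe. The paper takes $m=\lceil n^{1/3}\rceil$; then $0\le m^3-n\le 3m^2$, and submultiplicativity together with \eqref{eq:partsum1} yield
\[
\frac1n\log\|M_n^E(T^{\zeta_0}\omega)\|\le \frac{3m^2\log\Gamma}{n}+\frac{m^3}{n}(L(E)+\varepsilon)\le L(E)+2\varepsilon
\]
once $\widetilde n_1$ is large enough. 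Your threshold bookkeeping then goes through with $m\sim n^{1/3}$, since $n\ge\log^2(|\zeta_0|+1)$ gives $m\gtrsim(\log(|\zeta_0|+1))^{2/3}$ directly.

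For the Green function estimate, your plan of applying \eqref{eq:FiniteLyapUpperEstimate_general} directly to $M_j^E(T^{\zeta_0}\omega)$ and $M_{n-k}^E(T^{\zeta_0+k}\omega)$ and handling sub-threshold $j$, $n-k$ by the crude bound $\Gamma^p$ does work, but the case split is avoidable. The paper uses a padding trick: with $h:=\lceil\varepsilon n\rceil$ write
\[
\|M_j(T^{\zeta_0}\omega)\|\le\|M_{j+h}(T^{\zeta_0-h}\omega)\|\cdot\|[M_h(T^{\zeta_0-h}\omega)]^{-1}\|,
\]
using $\|A^{-1}\|=\|A\|$ for $\SL(2,\R)$. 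Both factors on the right now have length at least $h\ge\varepsilon n\ge\widetilde n_1$, and one checks $h\ge\log^2(|\zeta_0-h|+1)$ under the stated hypotheses, so \eqref{eq:FiniteLyapUpperEstimate_general} applies to each, giving $\|M_j(T^{\zeta_0}\omega)\|\le e^{jL(E)+C_0\varepsilon n}$ for \emph{every} $0\le j<n$ uniformly. The same device handles $M_{n-k}(T^{\zeta_0+k}\omega)$. Your version is correct in substance, but note that ``finitely many small values'' is misleading (the number of sub-threshold $j$'s depends on $\omega$ and $\zeta_0$), and the claim that the crude bound costs only ``$C\log\Gamma$ independent of $n$'' is wrong: it costs up to $(\text{threshold})\cdot\log\Gamma\le C\varepsilon n\log\Gamma$, and it is precisely this bound by $\varepsilon n$ (coming from the hypothesis $n\ge\varepsilon^{-1}\max(\widetilde n_1,2\log^2(|\zeta_0|+1))$) that makes the error absorbable into $C_0\varepsilon n$.
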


\begin{proof}
Fix $\e \in (0,1)$, $\omega \in \Omega_+(\e)$, and $E \in \hat\Sigma$. As usual, our estimates are independent of the energy, so we suppress $E$ from the notation. Choose $\widetilde n_1 \in \Z_+$ large enough that
\begin{equation}
\label{eq:AveragingCorollary_general_N_Largeness}
\widetilde n_1
\geq
\max\left( \widetilde n_0(\omega,\e)^3, 4\e^{-1}, 2986 \right), \quad
\text{and} \quad
\frac{12 \log\Gamma}{\widetilde n_1^{1/3} - 3} < \e.
\end{equation}
Given $n,\zeta_0 \in \Z$ with $n \geq \max\left(\widetilde n_1 ,\log^2(|\zeta_0|+ 1) \right)$, we want to apply Proposition~\ref{prop:meanLDTshifted_sigma} with $m = \lceil n^{1/3}\rceil$. Notice that $0 \le m^3 -n \leq 3m^2$. Thus, by submultiplicativity of the matrix norm and unimodularity of the transfer matrices, we have
\begin{equation}\label{eq:PartialLyapCor}
\|M_{n}(T^{\zeta_0}\omega)\|
\leq
\Gamma^{3m^2}\prod_{s = 0}^{m^2-1} \|M_m(T^{\zeta_0 + sm} \omega)\|.
\end{equation}
By our choice of $\widetilde n_1$, we have $m \geq \widetilde n_0$ and $m \geq (\log(|\zeta_0|+1))^{2/3}$. Thus, combining \eqref{eq:PartialLyapCor} with Proposition~\ref{prop:meanLDTshifted_sigma}  and \eqref{eq:AveragingCorollary_general_N_Largeness}, a direct computation shows that
\begin{equation}\label{eq:lognorm_estimate_general}
\begin{split}
\frac{1}{n}\log\|M_{n}(T^{\zeta_0} \omega)\|
& \leq
\frac{3m^2 \log \Gamma}{n} + \frac{m^3}{n}(L+\e) \\
& \leq
L + \frac{6\log\Gamma}{m-3} + \frac{m}{m-3} \e \\
& \leq
L + 2\e,
\end{split}
\end{equation}
which yields \eqref{eq:FiniteLyapUpperEstimate_general}.

Now suppose $n\ge \e^{-1}\max \left( \widetilde n_1, 2 \log^2(|\zeta_0|+1)) \right)$ and put $h := \lceil \varepsilon n\rceil$. For $j \ge 0$, we have
\begin{equation*}
\norm{ M_j(T^{\zeta_0} \omega) }
\leq
\norm{ M_{j + h}( T^{\zeta_0 - h} \omega ) }
\norm{ [M_h( T^{\zeta_0 - h} \omega )]^{-1} }.
\end{equation*}
Clearly $j \ge 0$ and $h \ge \widetilde n_1$. Moreover, by our choice of $\widetilde n_1$ and the relation between $n$ and $\zeta_0$, a direct computation shows $h \geq \log^2(|\zeta_0 - h| + 1)$. 
 Thus we can apply \eqref{eq:lognorm_estimate_general} to estimate the norms on the right hand side and obtain
\begin{equation}\label{eq:TransferMatrixFromGreenF1}
\norm{ M_j(T^{\zeta_0} \omega) }
\leq
e^{(j + 2h)L(E) + 2\e(j + 2h)}
\leq
e^{jL(E) + C_0\varepsilon n},
\end{equation}
where $C_0$ is a constant that depends only on $\widetilde \mu$.

Naturally, one can also apply the analysis above to estimate the transfer matrix $M_{n-k}(T^{\zeta_0 + k} \omega)$ with $ j\le k \le n-1$ as well. Using \eqref{eq:AveragingCorollary_general_N_Largeness} and the relationships among $h$, $\zeta_0$ and $k$, a direct computation shows that $\log^2(|\zeta_0 - k + h| + 1)\le h$. Thus, \eqref{eq:lognorm_estimate_general} yields
\begin{align}
\nonumber
\| M_{n-k}(T^{\zeta_0 + k} \omega) \|
& \leq
\left\| M_{n-k+h}(T^{\zeta_0 + k - h} \omega) \right\|
\left\| \left[ M_{h}(T^{\zeta_0 + k - h} \omega)\right]^{-1}\right \| \\
\label{eq:TransferMatrixFromGreenF2}
& \leq
\exp\left[(n-k)L(E) + C_0 \e n\right].
\end{align}
Combining equations \eqref{eq:TransferMatrixFromGreenF1} and \eqref{eq:TransferMatrixFromGreenF2} with the observation \eqref{eq:greenBoundsFromTMBounds}, one sees that for a suitable choice of $C_0$
\begin{align*}
|G_{T^{\zeta_0} \omega,n}^E(j,k)|
& \leq
\frac{\|M_{j}(T^{\zeta_0}\omega)\| \|M_{n-k}(T^{\zeta_0 + k} \omega)\|}{|\det[H_{T^{\zeta_0} \omega,n} - E)]|} \\
& \leq
\frac{\exp[(n - |j - k|)L(E) + C_0 \e n]}{|\det[H_{T^{\zeta_0} \omega,n} - E)]|},
\end{align*}
for all $E \in \hat \Sigma \setminus \sigma(H_{T^{\zeta_0} \omega,n})$ and all $0 \le j \le k < n$. The case $j \geq k$ follows because $H$ is self-adjoint and $E$ is real.
\end{proof}

\section{Proof of Spectral and Exponential Dynamical Localization}\label{sec:localization}

In the present section, we conclude the proof of Theorem~\ref{t:LyapConvGenEvals}, which, as discussed in the Introduction, implies Theorem~\ref{t.main} by standard reasoning. We also state and prove Theorem~\ref{t:DL}, which contains our exponential dynamical localization result.

The key remaining cornerstone is supplied by Proposition~\ref{prop:doubleRes}, which is a version of an argument usually referred to as the \emph{elimination of double resonances}. We note that double resonances appear frequently in, and are one of the most subtle parts of, the mathematical analysis of Anderson localization. In particular, from the proof of the present paper, or the proof of \cite{bg,bs}, one may see that, for one-dimensional ergodic Schr\"odinger operators, uniform positivity and uniform LDT of the Lyapunov exponent, together with the elimination of double resonances, imply Anderson localization for suitable parameters. Moreover, the set of parameters (phases, frequencies) for which  Anderson localization holds true depends exactly on the bad set that is eliminated by the exclusion of double resonances.

It is also interesting to note that, in some sense, the elimination of double resonances detects the randomness of the base dynamics in a very sensitive manner. For instance, a version of Proposition~\ref{prop:doubleRes} was developed by Bougain-Goldstein for real-analytic quasiperiodic potentials in \cite{bg}, where the authors invoked suitable complexity bounds for \emph{semi-algebraic} sets. Bourgain-Schlag \cite{bs} considered operators with strongly mixing potentials, where the elimination becomes a bit easier due to the strong mixing property. Obviously the Anderson model considered in the present paper is closer to the one in \cite{bs}, and the elimination process is even more transparent because of the independence of the potential values.

After we prove Proposition~\ref{prop:doubleRes}, this result is then used to supply estimates that enable us to run the Avalanche Principle and prove positivity (and existence) of \emph{non-averaged} Lyapunov exponents
\[
L(E,\omega)
:=
\lim_{n\to\infty} \frac{1}{n} \log \|M_n^E(\omega)\|.
\]
Once we have existence and positivity of $L(E,\omega)$ for a full-measure set of $\omega\in\Omega$ and a suitably rich ($\omega$-dependent!) set of energies, we are able to deduce the modified Anderson localizer's dream. Thus, the Avalanche Principle in some sense plays the role of MSA in our arguments. As discussed in the Introduction, the Anderson localizer's dream itself is false, that is, one does not have positivity (or even existence!) of $L(E,\omega)$ for all $E$ and a uniform full-measure set of $\omega \in \Omega$. After proving the modified Anderson localizer's dream, we can then make a second pass through the localization argument to obtain better estimates; in particular, we obtain ``centered'' versions of the localization estimates with constants that depend only on $\omega$ and the center of localization. This supplies a sufficient input to deduce almost-sure exponential dynamical localization.

For $N \in \Z_+$, we define
\[
\overline{N}
:=
\left\lfloor N^{\log N}\right\rfloor
=
\left\lfloor e^{(\log N)^2} \right\rfloor,
\]
which is a super-polynomially and subexponentially growing function of $N$.
\medskip

We now introduce the set of double resonances. Given $\e > 0$ and $N \in \Z_+$, let $\CD_N = \CD_N(\e)$ denote the set of all those $\omega \in \Omega$ such that
\begin{equation} \label{eq:doubleRes:GreenLB_centered}
\|G_{T^\zeta\omega,[-N_1, N_2]}^E\|
\geq
e^{K^2}
\end{equation}
and
\begin{equation} \label{eq:doubleRes:FNupperBound_centered}
|F_m(T^{\zeta + r}\omega,E)|
\leq
L(E)-\e.
\end{equation}
for some choice of $\zeta \in \Z$, $K \ge \max(N, \log^2(|\zeta|+1))$, $0\le N_1,N_2 \leq K^9$, $E \in \Sigma$, $K^{10} \le r \leq \overline K$, and $m \in \{K,2K\}$ ($F_m$ is as defined in \eqref{eq:FnOmegaDef}).

\begin{prop} \label{prop:doubleRes}
For all $0 < \e < 1$, there exist constants $C > 0$ and $\widetilde\eta > 0$ such that
\[
\mu(\CD_N(\e))
\leq
Ce^{-\widetilde\eta N}
\]
for all $N \in \Z_+$.
\end{prop}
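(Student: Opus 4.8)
The plan is to exploit the large gap---built into the definition of $\CD_N(\e)$---between the finite block $[\zeta-N_1,\zeta+N_2]$ of coordinates that governs the Green-function bound \eqref{eq:doubleRes:GreenLB_centered} and the block of $m\le 2K$ coordinates near $\zeta+r$ (with $r\ge K^{10}$) that governs the transfer-matrix bound \eqref{eq:doubleRes:FNupperBound_centered}. Since $r\ge K^{10}>K^9\ge N_2$, these two coordinate blocks are disjoint for $K$ large, so after conditioning on the first block the second event becomes independent of it and is controlled by the uniform (in $E$) LDT. Because $\mu(\CD_N(\e))\le 1$ trivially, it suffices to establish the bound for all sufficiently large $N$, hence large $K$.

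First I would note that the restriction $H_{T^\zeta\omega,[-N_1,N_2]}$ is self-adjoint, so $\|G^E_{T^\zeta\omega,[-N_1,N_2]}\|=\dist\!\big(E,\sigma(H_{T^\zeta\omega,[-N_1,N_2]})\big)^{-1}$; hence \eqref{eq:doubleRes:GreenLB_centered} forces $E$ to lie within $e^{-K^2}$ of one of the at most $N_1+N_2+1\le 2K^9+1$ eigenvalues $E_j(\omega)$ of this restriction, all of which belong to $\hat\Sigma=[-\kappa,\kappa]$ because $\|H_{\omega,\Lambda}\|\le\kappa$. Combining the Lipschitz estimate \eqref{eq:Fnderiv} for $F_m$ in the energy variable (the Lipschitz constant there is at most $\Gamma^{2K}$, which is dominated by $e^{K^2}$ for $K$ large) with the H\"older continuity of $L$ (Theorem~\ref{t:HolderContinuity}), I can transfer the anti-resonance \eqref{eq:doubleRes:FNupperBound_centered} from $E$ to the nearby eigenvalue at a cost of $\e/2$: for $K$ large, $\omega\in\CD_N(\e)$ (with its witnessing parameters) implies $|F_m(T^{\zeta+r}\omega,E_j(\omega))|\le L(E_j(\omega))-\e/2$ for the appropriate index $j$. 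This reduces the task to bounding, for each fixed choice of $\zeta,K,N_1,N_2,r,m$ and each $j$, the measure of $\big\{\omega:|F_m(T^{\zeta+r}\omega,E_j(\omega))|\le L(E_j(\omega))-\e/2\big\}$.

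Next I would carry out the independence argument. The eigenvalue $E_j(\omega)$ depends only on $\{\omega_i:\zeta-N_1\le i\le\zeta+N_2\}$, whereas $F_m(T^{\zeta+r}\omega,\cdot)$ depends only on $\{\omega_i:\zeta+r\le i\le\zeta+r+m-1\}$, and for $K$ large these two index sets are disjoint. Conditioning on the $\sigma$-algebra generated by the left-block coordinates freezes $E_j(\omega)$ to some deterministic $e_0\in\hat\Sigma$, and by $T$-invariance of $\mu$ the conditional probability of the event equals $\mu\{\omega:|F_m(\omega,e_0)|\le L(e_0)-\e/2\}\le\mu\{\omega:|F_m(\omega,e_0)-L(e_0)|\ge\e/2\}\le Ce^{-\eta m}$ by the uniform LDT (Theorem~\ref{t:LDT}), where $C,\eta$ depend only on $\e$ and $\widetilde\mu$ and the bound is uniform in $e_0\in\hat\Sigma$. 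Taking expectations and then a union bound over $j$ and over $N_1,N_2$ (at most $CK^{27}$ choices) bounds the contribution of each fixed $(\zeta,K,r,m)$ by $CK^{27}e^{-\eta K}$. It remains to count: $m$ gives a factor $2$, $r$ ranges over at most $\overline K=\lfloor e^{(\log K)^2}\rfloor$ values, and $\zeta$ over at most $2e^{\sqrt{K}}$ values (since $K\ge\log^2(|\zeta|+1)$); all of these sub-exponential factors are absorbed by $e^{-\eta K}$ for $K$ large, so summing over $K\ge N$ yields $\mu(\CD_N(\e))\le\sum_{K\ge N}Ce^{-\eta K/2}\le C'e^{-\widetilde\eta N}$, after enlarging the constant to cover the finitely many small $N$.

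The step I expect to be the main obstacle is the independence/conditioning argument: one must verify carefully that the coordinates determining the finite-volume eigenvalues $E_j(\omega)$ are genuinely disjoint from those determining $M_m^E(T^{\zeta+r}\omega)$---which is exactly what the scale separation $K^{10}$ versus $K^9$ in the definition of $\CD_N(\e)$ is designed to ensure---so that after conditioning the random resonant energy becomes a fixed parameter to which the \emph{uniform} (in $E$) LDT of Theorem~\ref{t:LDT} applies. The remaining points are a hierarchy-of-scales bookkeeping: $e^{-K^2}$ must dominate the Lipschitz constant $\Gamma^{2K}$ in the energy-transfer step, and the polynomial and sub-exponential multiplicities ($K^{27}$, $e^{(\log K)^2}$, $e^{\sqrt{K}}$) must be dominated by $e^{-\eta K}$ in the final count.
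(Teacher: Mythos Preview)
Your proposal is correct and follows essentially the same approach as the paper's proof: reduce to a nearby eigenvalue of the finite-volume operator via the spectral theorem, transfer the anti-resonance from $E$ to that eigenvalue using the Lipschitz bound \eqref{eq:Fnderiv} and the H\"older continuity of $L$, exploit the disjointness of the coordinate blocks (guaranteed by $r\ge K^{10}>K^9\ge N_2$) to make the resonant energy deterministic after conditioning, apply the uniform LDT (Theorem~\ref{t:LDT}), and then sum the sub-exponential multiplicities over $j,N_1,N_2,r,m,\zeta,K$. The paper phrases the conditioning step as an explicit Fubini integration over the inner coordinate block $\vec\omega''$, but the content is identical to your conditioning argument.
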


\begin{proof}
Define auxiliary ``bad sets'' for fixed $\zeta$ and $K$:
\begin{equation*}
\CD_{K,\zeta} = \set{\omega: \eqref{eq:doubleRes:GreenLB_centered}, \eqref{eq:doubleRes:FNupperBound_centered}\text{ are satisfied for some choice of } E,N_1,N_2,r,m \text{ as above}}.
\end{equation*}
Fix $\e \in (0,1)$ and begin by noticing that
\begin{equation}\label{eq:split-into-N-k-sets_centered}
\CD_{K,\zeta}
\subset
\bigcup_{K^{10} \le r \le \overline{K}}\;\;
\bigcup_{0 \leq N_1,N_2\leq K^{9}}
\widetilde\CD_1(N_1,N_2,r,\zeta) \cup \widetilde\CD_2(N_1,N_2,r,\zeta),
\end{equation}
where $\widetilde\CD_j(N_1,N_2,r,\zeta)$ denotes the collection of all $\omega \in \Omega$ for which there exists $E \in \Sigma$ such that \eqref{eq:doubleRes:GreenLB_centered} and \eqref{eq:doubleRes:FNupperBound_centered} hold with $m=jK$. We will estimate $\mu(\widetilde\CD_1)$. The estimates for $\widetilde\CD_2$ are completely analogous. To that end, suppose $\omega\in\widetilde\CD_1(N_1,N_2,r,\zeta)$, i.e.\ \eqref{eq:doubleRes:GreenLB_centered} and \eqref{eq:doubleRes:FNupperBound_centered} hold for some $E \in \Sigma$. By the spectral theorem, there exists $E_0 \in \sigma(H_{T^\zeta\omega,[-N_1,N_2]})$ with
\begin{equation}
\label{eq:doubleRes:DtildeUB}
|E - E_0|
\leq
\left\|G^E_{T^\zeta\omega,[-N_1,N_2]} \right\|^{-1}
\leq
e^{-K^{2}}.
\end{equation}
On the other hand, choosing $K$ large enough that $\Gamma^{K} e^{-K^{2}}\leq \frac{\e}{4}$ and $ Ce^{-\beta K^{2}} \leq \frac{\e}{4}$ (where $C,\beta$ are from \eqref{eq:holder}), we get
\begin{align*}
F_{K}(T^{\zeta+r}\omega,E_0)
& \leq
F_{K}(T^{\zeta+r}\omega, E) + \frac{\e}{4}\\
& \leq
L(E) - \e + \frac{\e}{4}\\
& \leq
L(E_0) - \frac{\e}{2},
\end{align*}
where we have used Lemma~\ref{lem:Fnlipschitz} in the first line, \eqref{eq:doubleRes:FNupperBound_centered} in the second line, and Theorem~\ref{t:HolderContinuity} in the final line.
Thus, when $K$ is large enough, we get
\begin{equation} \label{eq:doubleRes:DtildeUB_centered}
\widetilde\CD_1(N_1,N_2,r,\zeta)
\subseteq
\hat \CD_1(N_1,N_2,r,\zeta) \text{ for all } N_1, N_2, r, \text{ and } \zeta,
\end{equation}
where $\hat \CD_1 = \hat \CD_1(N_1,N_2,r,\zeta)$ denotes the set of all $\omega \in \Omega$ such that
\[
F_{K}(T^{\zeta+r}\omega,E_0)
\leq
L(E_0) - \frac{\e}{2}
\]
for some $E_0\in\sigma\left(H_{T^\zeta\omega,[-N_1,N_2]} \right)$.

Now, the conditions \eqref{eq:doubleRes:GreenLB_centered} and \eqref{eq:doubleRes:FNupperBound_centered} depend only on a finite number of entries of $\omega$. Concretely, we notice that $G_{T^\zeta\omega,[-N_1,N_2]}^E$ depends on $\vec\omega' := (\omega_{\zeta-N_1},\ldots,\omega_{\zeta + N_2})$, while $M_{K}^E(T^{\zeta+r} \omega)$ depends only on $\vec\omega'' := (\omega_{\zeta+r},\ldots,\omega_{\zeta + r+K-1})$. In particular, \eqref{eq:doubleRes:GreenLB_centered} and \eqref{eq:doubleRes:FNupperBound_centered} depend on independent sets of random variables. Consequently, we obtain
\begin{align*}
\int_{\CA^{[\zeta+r,\zeta+r+\base)}} \! \chi_{\hat\CD_1}(\omega) \, \dd\widetilde\mu^{\base}(\vec\omega'')
\leq
C(N_1+N_2+1) e^{-\eta_1 \base}
\end{align*}
for each fixed choice of $\vec\omega' \in \CA^{[-N_1 + \zeta, N_2 + \zeta]}$. Then, since membership in $\hat \CD$ is determined entirely by coordinates in $\zeta+[-N_1,N_2]$ and $\zeta+[r,r+\base)$, we have
\begin{align*}
\mu(\hat\CD_1)
& =
\int_\Omega \! \chi_{\hat\CD_1}(\omega) \, \dd\mu(\omega) \\
& =
\int\limits_{\CA^{\zeta+[-N_1,N_2]}} \int\limits_{\CA^{\zeta+[r,r+\base)}} \! \chi_{\hat\CD}(\omega) \, \dd\widetilde\mu^{\base}(\vec\omega'') \, \dd\widetilde\mu^{N_1+N_2+1}(\vec\omega') \\
& \leq
C \base^{9} e^{-\eta_1 \base} \\
& \leq
Ce^{- \eta_2 \base}.
\end{align*}
Thus, we obtain $\mu(\widetilde\CD_1(N_1,N_2,r,\zeta)) \leq Ce^{-\eta_2 \base}$ by applying \eqref{eq:doubleRes:DtildeUB_centered}. Applying similar reasoning to $\widetilde\CD_2$, one can estimate $\mu(\widetilde \CD_2(N_1,N_2,r,\zeta) ) \leq Ce^{-\eta_3\base}$; putting everything together yields:
\begin{align*}
\mu(\CD_{K,\zeta})
& \leq
\sum_{0 \leq N_1,N_2 \leq \base^{9}} \;
\sum_{K^{10} \le r \le \overline{K}}
\left(\mu(\widetilde\CD_1(N_1,N_2,r)) + \mu(\widetilde\CD_2(N_1,N_2,r)) \right) \\
& \leq
C \base^{18} \overline{\base} e^{-\eta_4 \base} \\
& \leq
Ce^{-2\widetilde\eta \base},
\end{align*}
for some suitable choice of $\widetilde\eta$. Changing the order of $K$ and $\zeta$, we have
\[
\CD_N
=
\bigcup_{\zeta\in\Z}\ \bigcup_{K\ge\max\{N,\log^2(|\zeta|+1)\}}\CD_{K,\zeta}
\subseteq
\bigcup_{K \ge N} \bigcup_{|\zeta| \le e^{\sqrt K}} \CD_{K,\zeta}.
\]
Then, the estimates above yield
\[
\mu(\CD_N)
\leq
\sum_{K \ge N} (2e^{\sqrt K}+1) Ce^{-2\widetilde\eta K}
\leq
Ce^{-\widetilde\eta N}
\]
for large enough $N$. Adjusting the constants to account for small $N$ concludes the proof.
\end{proof}

By Proposition~\ref{prop:doubleRes}, the set
\[
\Omega_{-}
=
\Omega_{-}(\e)
:=
\Omega \setminus \limsup \CD_N(\e),
\]
has full $\mu$-measure. We are now in a position to define the full-measure set upon which Anderson localization holds. First, put
\[
\Omega_0
=
\Omega_\Sigma \cap
\bigcap_{\e \in (0,1)} \Omega_+(\e) \cap \Omega_-(\e),
\]
where $\Omega_+(\e)$ is as defined in Proposition~\ref{prop:meanLDTshifted_sigma} and $\Omega_\Sigma$ denotes the set of $\omega \in \Omega$ for which $\sigma(H_\omega) = \Sigma$. In essence, $\omega \in \Omega_+$ gives us upper bounds on $F_N(\omega,E)$ on $\hat\Sigma$, while $\omega \in \Omega_-$ will give us lower bounds on $F_N$, provided we can prove an estimate like \eqref{eq:doubleRes:GreenLB_centered}. Then, we define
\[
\Omega_*
=
\Omega_0 \cap \CR[\Omega_0],
\]
where $\CR$ denotes the reflection $[\CR\omega]_n = \omega_{-1-n}$.
It is straightforward to verify that $\Omega_*$ has full $\mu$-measure.
\smallskip

We will need the following standard formula that relates solutions of the difference equation \eqref{eq:deve} and truncated Green functions at energy $E$. Suppose $a \leq b$ are integers, and $n\in \Lambda := [a,b] \cap \Z$. If $u$ is a solution of the difference equation $Hu = Eu$ in the sense of \eqref{eq:deve} and $E \notin\sigma(H_{\Lambda})$, then
\begin{equation}\label{eq:deveGreen}
u(n)
=
-G^E_{\Lambda}(n,a)u(a-1) - G^E_{\Lambda}(n,b)u(b+1).
\end{equation}

We now have all of the necessary pieces to do the following two things:

\begin{enumerate}
\item  Prove the modified Anderson localizer's dream. That is, we will show that generalized eigenfunctions exhibit Lyapunov behavior for almost every $\omega \in \Omega$ and every $E\in \CG(H_\omega)$.

\item The following version of SULE (semi-uniformly localized eigenfunctions), from which we will deduce exponential dynamical localization for $\mu$-a.e.\ $\omega$.
\end{enumerate}

\begin{theorem}[SULE]\label{t:sule}
For every $\delta > 0$ and $\omega \in \Omega_*$, there exist constants $C_\delta$ and $C_{\omega,\delta}$ such that
 for every eigenfunction $u$ of $H_\omega$ and every $n \in \Z$
\begin{equation}\label{sule}
|u(\bar \zeta+n)|
\leq
C_{\omega,\delta}
\|u\|_\infty
e^{C_\delta \log^{22}(|\bar \zeta|+1)}e^{-(1-\delta)L(E)|n|}
	\end{equation}
	for some $\bar \zeta$ that depends on $u$.
\end{theorem}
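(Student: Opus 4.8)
\textbf{Proof proposal for Theorem~\ref{t:sule}.}

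The plan is to establish the result in two stages, following the two-pass strategy announced in the introduction to Section~\ref{sec:localization}. In the first pass I would prove the modified Anderson localizer's dream (Theorem~\ref{t:LyapConvGenEvals}): for $\omega \in \Omega_*$ and every generalized eigenvalue $E \in \CG(H_\omega)$, the non-averaged Lyapunov exponent $L(E,\omega)$ exists and equals $L(E)$ at both $\pm\infty$. The key mechanism is that the membership of $\omega$ in the full-measure set $\Omega_-(\e) = \Omega \setminus \limsup \CD_N(\e)$ forbids \emph{double resonances}: if a generalized eigenfunction $u$ were abnormally large on some block $[-N_1,N_2]$ centered near a shift $\zeta$ (which forces, via \eqref{eq:deveGreen} and \eqref{eq:greenBoundsFromTMBounds}, a lower bound of the form \eqref{eq:doubleRes:GreenLB_centered} on the truncated Green function), then $\omega$ avoiding $\CD_N(\e)$ for all large $N$ rules out the simultaneous occurrence of \eqref{eq:doubleRes:FNupperBound_centered}, i.e.\ forces $F_m(T^{\zeta+r}\omega,E) > L(E) - \e$ for $m \in \{K,2K\}$ and all $r$ in the window $[K^{10},\overline K]$. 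Meanwhile $\omega \in \Omega_+(\e)$ supplies the matching \emph{upper} bound $F_m \le L(E) + 2\e$ from Corollary~\ref{cor:LEandGreenEst_general}. The polynomial boundedness \eqref{eq:polyBounds} of the generalized eigenfunction guarantees that at infinitely many scales $u$ cannot be too small on both ends of a block simultaneously, so one of the two Green function matrix elements appearing in \eqref{eq:deveGreen} must be large; this is precisely the input that triggers the elimination-of-double-resonances dichotomy. Feeding the two-sided control $L(E) - \e \le F_m \le L(E) + 2\e$ at the pair of scales $(K,2K)$ into the Avalanche Principle (Lemma~\ref{l.avlanche-principle}) then bootstraps from scale $K$ to all larger scales, exactly as in the proof of Lemma~\ref{l.well-approximate-LE} but now \emph{for the fixed} $\omega$ rather than on average, yielding $\lim_n \frac1n\log\|M_n^E(\omega)\| = L(E)$; reflection via $\CR$ and $\Omega_* = \Omega_0 \cap \CR[\Omega_0]$ handles the $-\infty$ side. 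By Ruelle's theorem (Theorem~\ref{t:ruelle}) and the Schnol--Simon fact that spectral measures live on $\CG(H_\omega)$, every such generalized eigenfunction decays like $e^{-L(E)|n|}$ at both ends, hence is a genuine eigenfunction; this proves Theorems~\ref{t:LyapConvGenEvals} and~\ref{t.main}.

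For the second pass, now knowing that the eigenfunctions exist and decay exponentially, I would redo the Green-function estimate but \emph{centered at the localization center} $\bar\zeta$ of a given eigenfunction $u$ with eigenvalue $E$, where $\bar\zeta$ is chosen (say) as a site where $|u|$ is comparable to $\|u\|_\infty$. Given $n \in \Z$, apply \eqref{eq:deveGreen} with $\Lambda$ a block of length $\sim |n|$ placed between $\bar\zeta$ and $\bar\zeta + n$, so that $u(\bar\zeta+n)$ is bounded by $\|u\|_\infty$ times the relevant Green function matrix element $|G^E_{\Lambda}(\cdot,\cdot)|$. Here is where the \emph{centered} Corollary~\ref{cor:LEandGreenEst_general} enters: the bound \eqref{eq:greenEst_K_general} holds with $\zeta_0$ essentially $\bar\zeta$ once $n \gtrsim \e^{-1}\log^2(|\bar\zeta|+1)$, and its constants depend only on $\omega$ (through $\widetilde n_1$) and on $\bar\zeta$ through the factor $\log^2(|\bar\zeta|+1)$. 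The denominator $|\det[H_{T^{\bar\zeta}\omega,n}-E]|$ must be controlled from below — equivalently, one needs that $E$ is not too close to the spectrum of the truncation — and this is again where avoiding $\CD_N$ pays off: a small denominator would produce a resonance of exactly the type excluded. Tracking the windows $K \ge \max(N,\log^2(|\zeta|+1))$, $K^{10}\le r\le \overline K = \lfloor e^{(\log K)^2}\rfloor$ through the double-resonance bookkeeping, the cumulative loss in the exponent is of the form $C_\delta \log^{22}(|\bar\zeta|+1)$ — the power $22$ coming from composing the $\log^2$ from the shift estimates with the $\overline{(\cdot)}$ and $(\cdot)^9$, $(\cdot)^{10}$ powers appearing in the definition of $\CD_N$ — while the decay rate degrades from $L(E)$ to $(1-\delta)L(E)$ by absorbing the $C_0\e n$ error term of \eqref{eq:greenEst_K_general} into $\delta L(E)|n|$ upon choosing $\e$ small relative to $\delta\gamma$. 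Assembling these gives \eqref{sule}.

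I expect the main obstacle to be the first pass: carefully extracting, from mere polynomial boundedness of a generalized eigenfunction, the assertion that at a dense-in-scale set of blocks the Green function is forced to be large enough to trigger \eqref{eq:doubleRes:GreenLB_centered} with the specific threshold $e^{K^2}$, and then verifying that the remaining free parameters $(N_1,N_2,r,m,\zeta)$ can genuinely be chosen within the exact ranges prescribed in the definition of $\CD_N(\e)$. This is the delicate interplay between the a priori growth rate of $u$, the size $e^{K^2}$ of the resonance, and the scale $\overline K$ over which one searches for a ``good'' block on which to restart the Avalanche Principle; getting the quantifiers and scale relations to line up is the technical heart of the argument. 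The bookkeeping that produces the precise exponent $\log^{22}$ in \eqref{sule} is routine but tedious, and the Avalanche-Principle bootstrap in the second half is essentially a rerun of Lemma~\ref{l.well-approximate-LE}, so neither of those should present conceptual difficulty once the resonance-elimination step is in place.
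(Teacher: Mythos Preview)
Your two-pass outline matches the paper's strategy, but two key mechanisms are missing or misdescribed.

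\textbf{First pass.} You correctly flag as the main obstacle the task of forcing $\|G^E_{T^\zeta\omega,[-N_1,N_2]}\| \ge e^{K^2}$ from the existence of a generalized eigenfunction, but you do not supply the mechanism, and your description (``$u$ cannot be too small on both ends of a block'') has the logic reversed. What the paper does (Claims~1 and~2 in the proof of Theorem~\ref{t:LyapConvGenEvals}) is: use Proposition~\ref{prop:meanLDTshifted_sigma} to locate, among $K^6$ consecutive blocks of length $K^3$ on each side of $\zeta$, intervals $\Lambda_1 \subset [-K^9,0)$ and $\Lambda_2 \subset [0,K^9]$ on which the transfer matrix is \emph{large}; via \eqref{eq:transfertodet} this makes the determinant large, so by \eqref{eq:greenEst_K_general} the Green function on $\Lambda_i$ \emph{decays}. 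Then \eqref{eq:deveGreen} plus the polynomial bound on $u$ forces $|u(\zeta+\ell_i)| \le e^{-2K^2}$ at the midpoints $\ell_i$. Only now does one apply \eqref{eq:deveGreen} on the interval $[\ell_1+1,\ell_2-1]\ni 0$, and the combination $|u(\zeta)|=1$ with $|u(\zeta+\ell_i)|\le e^{-2K^2}$ forces the Green function on that interval to exceed $e^{K^2}$. The point is to \emph{manufacture} smallness of $u$ at well-chosen endpoints; polynomial boundedness alone yields only a polynomial lower bound on $G$, far short of $e^{K^2}$.

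\textbf{Second pass.} Here there is a genuine misconception. You write that the denominator lower bound ``is again where avoiding $\CD_N$ pays off: a small denominator would produce a resonance of exactly the type excluded''. But $\omega\notin\CD_N$ does \emph{not} prevent the Green function from being large; it only says that if it is large, then $F_m(T^{\zeta+r}\omega,E)>L(E)-\e$ at distant blocks. The paper's second pass is not a shortcut around this: it reruns the entire first-pass argument (Claims~1, 2, elimination of double resonances, Avalanche Principle) with $\zeta$ replaced by the localization center $\bar\zeta$. The payoff --- which your proposal omits entirely --- is that $u$ is now a genuine $\ell^2$ eigenfunction normalized so that $|u(\bar\zeta)|=\|u\|_\infty=1$, hence $C_u=1$ in Claim~2, and therefore the threshold $N_0$ above which all estimates hold depends only on $(\omega,\e)$ and not on $u$ (Remark~\ref{r:N0:indep:u}). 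This is precisely what makes the SULE constants uniform in the eigenfunction. The Avalanche output then yields the transfer-matrix lower bound for $\ell\in[K^{11}+K^{10},\overline K]$; via \eqref{eq:transfertodet} this supplies the determinant lower bound you need, and then \eqref{eq:greenEst_K_general} and \eqref{eq:deveGreen} give $|u(\bar\zeta+n)|\le e^{-(1-\delta)L(E)n}$ for $n$ in overlapping windows covering all $n\gtrsim K^{11}$. Since $K\sim\e^{-1}\log^2(|\bar\zeta|+1)$, the threshold is $\sim\log^{22}(|\bar\zeta|+1)$, and the trivial bound $|u|\le 1$ below threshold produces the factor $e^{C_\delta\log^{22}(|\bar\zeta|+1)}$. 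So the exponent $22=2\times 11$, with the $11$ coming from the Avalanche range $\ell\ge K^{11}$, not from composing with $\overline{(\cdot)}$ or the $K^9$ box size.
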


It turns out that these two can be done via two passes through the same argument. We begin with a generalized eigenfunction $u$ of $H_\omega$ at energy $E\in \CG(H_\omega)$, obeying \eqref{eq:polyBounds}. Since $u$ cannot vanish identically, we can pick $\zeta\in\Z$ such that $u(\zeta) \neq 0$ and normalize $u$ so that $u(\zeta) = 1$. In fact, at this stage, we may choose $\zeta=0$ or $1$. Since $\omega \in\Omega_0$, we have $\omega \in \widetilde\Omega^{(N)} := \Omega_+(\e) \setminus\CD_N(\e)$ for all $N\ge N_0$, where $N_0$ is sufficiently large.

We will then use Proposition~\ref{prop:doubleRes}, the bounds on the norms of the Green functions, and the Avalanche Principle in a somewhat subtle fashion. Initially, since we are dealing with $\omega$ in the good set $\widetilde\Omega^{(N)}$, our goal is to establish that there are some $0\le N_1, N_2\le K^{9}$ such that \eqref{eq:doubleRes:GreenLB_centered} holds; that is
\begin{equation}\label{eq:main:GreenLB}
\|G_{T^\zeta \omega,[-N_1, N_2]}^E \|
\geq
e^{K^{2}}.
\end{equation}
Having established \eqref{eq:main:GreenLB}, it then follows that expression \eqref{eq:doubleRes:FNupperBound_centered} fails for all choices of $K^{10} \leq r \leq \bar K$ and $m \in \{K,2K\}$. This supplies lower bounds on $F_K$ and $F_{2K}$ that we can then use to run the Avalanche Principle. This will lead to the modified Anderson localizer's dream and hence exponential decay of $u$. In this process, the constants and largeness conditions depend not only on $\omega$ and $\e$, but also on $u$. However, once we know that $u$ is exponentially decaying, we can pass it through the same argument centered around its global maximum to get much better control on the constants involved; in particular, we get uniformity of constants in $u$ at the expense of introducing a constant that grows subexponentially in the center of localization.

We wish to emphasize that to get spectral localization, it suffices to deal with $\zeta = 0$ or $1$, from which we may get estimates like $|u(\zeta+n)|\leq C(\omega,u)e^{-\xi n}$; here the constants depend not only on $\omega$ but also on the eigenfunction $u$. Thus we cannot get any meaningful estimates of dynamical quantities from the ``first pass'' through the localization argument, since such quantities tend to involve all eigenfunctions at once.

To tackle this issue, we need the full strength of the statements from Section~\ref{sec:green} and the present section to get all $\zeta\in\Z$ involved. In particular, we will see that the base scale for a function localized at $\zeta$ is $K\ge \e^{-1}\max(N, 2 \log^2(|\zeta|+1))$ for a suitable choice of $\e > 0$ and some large $N$ independent of $\zeta$.

\begin{proof}[Proof of Theorem~\ref{t:LyapConvGenEvals} {\rm(}modified Anderson localizer's dream{\rm)}]
We will show that $\Omega_*$ is the desired full-measure subset of $\Omega$. So, let $\omega \in \Omega_*$ and $E \in \CG(H_\omega)$ be given. Since $\Omega_*$ is $\CR$-invariant and $\CG(H_{\CR \omega}) = \CG(H_\omega)$, it suffices to show that
\begin{equation} \label{eq:main:LyapConvRHL}
\lim_{N\to\infty} \frac{1}{N}\log\|M_N^E(\omega)\|
=
L(E).
\end{equation}
Since $\omega \in \Omega_+(\e)$ for each $\e>0$, Corollary~\ref{cor:LEandGreenEst_general} yields
\[
\limsup_{N\to\infty} \frac{1}{N}\log\|M_N^E(\omega)\|
\leq
L(E),
\]
so, to prove \eqref{eq:main:LyapConvRHL}, it remains to show that
\begin{equation} \label{eq:main:FNLB}
\liminf_{N\to\infty} \frac{1}{N}\log\|M_N^E(\omega)\|
\geq
L(E).
\end{equation}
To that end, let $\e \in (0,1)$ be given and let $u$ denote a generalized eigenfunction of $H_\omega$ corresponding to energy $E$ and satisfying \eqref{eq:polyBounds}. After normalizing, we may assume that $u(\zeta) = 1$ for some choice of $\zeta \in \{0,1\}$. Define
\begin{equation} \label{eq:baseDef}
K
=
K(N, \zeta)
:=
\left\lceil \frac{1}{\e} \max\left(N, 2 \log^2(|\zeta|+1)\right) \right\rceil,
\end{equation}
where $N \in \Z_+$ is sufficiently large. More specifically, we take $N \ge N_0$, where $N_0 = N_0(\omega,\e) := \max \set{\widetilde n_0(\omega,\e), \widetilde n_1(\omega,\e),\widetilde n_2(\omega,\e)}$,  $\widetilde n_0$ comes from Proposition~\ref{prop:meanLDTshifted_sigma}, $\widetilde n_1$ is from Corollary~\ref{cor:LEandGreenEst_general}, and $\widetilde n_2$ is chosen so that $\omega \in \widetilde\Omega^{(N)} := \Omega_+(\e) \setminus\CD_N(\e)$ whenever $N \ge \widetilde n_2(\omega,\e)$.  Of course, right now, $\zeta \in\{0,1\}$, so $K \sim \e^{-1} N$. We keep up with the dependence on $\zeta$ to facilitate the ``second pass'' through the argument with $\zeta$ chosen to be a location at which $|u|$ is maximized.

Let us begin by first proving the following claim.
\begin{claim} \label{cl:grnfcts}
There exist integers $a_i,b_i$, $i =1,2$ such that
\begin{align}
\label{eq:main:a1range}
-K^9
\leq
a_1
& \leq
-K^3 +1 \\
\label{eq:main:a2range}
0
\leq a_2 & \leq K^9
\end{align}
and $b_i \in \{a_i + K^3-2, a_i+K^3-1, a_i+K^3\}$ such that
\begin{equation}\label{eq:main:ClaimGreenDecay}
\big|G_{T^\zeta\omega,\Lambda_i}^E(j,k)\big|
\leq
\exp\big(-|j-k|L(E) + C_0\e K^{3}\big),
\end{equation}
for any $j,k \in\Lambda_i := [a_i,b_i)$, where $C_0$ is a constant that depends only on $\widetilde\mu$.
\end{claim}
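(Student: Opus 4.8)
\emph{Strategy.} The plan is to produce $\Lambda_1$ and $\Lambda_2$ directly from Proposition~\ref{prop:meanLDTshifted_sigma} and Corollary~\ref{cor:LEandGreenEst_general}; the generalized eigenfunction $u$ plays no role in the Claim itself (it enters only afterwards, to force the Green-function lower bound \eqref{eq:main:GreenLB}). The mechanism is that a truncated Green function $G^E_{T^\zeta\omega,[a,b)}$ has the desired exponential off-diagonal decay precisely when $E$ is not too close to $\sigma(H_{T^\zeta\omega,[a,b)})$, i.e.\ when the characteristic determinant is not too small — and by \eqref{eq:transfertodet} such a determinant is bounded below by the transfer-matrix norm over the same block. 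So I would hunt, on each side of the origin in $T^\zeta\omega$-coordinates, for a block of length $\approx K^3$ on which $\|M^E_\bullet\|$ is at least $e^{K^3(L(E)-\e)}$, and then read off the block from \eqref{eq:transfertodet}.

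\emph{Producing the blocks.} Treat both sides in parallel: put $\zeta^{(1)}:=\zeta-K^9$ and $\zeta^{(2)}:=\zeta$, and apply Proposition~\ref{prop:meanLDTshifted_sigma} with $n=K^3$ and $\zeta_0=\zeta^{(i)}$. The choice $n=K^3$ is dictated by the window $K^9$ in \eqref{eq:main:a1range}--\eqref{eq:main:a2range}: the $n^2=K^6$ consecutive blocks $T^{\zeta^{(i)}+sK^3}\omega$, $0\le s<K^6$, tile an interval of length $K^9$. Since $\omega\in\Omega_+(\e)$, $|\zeta^{(i)}|\le|\zeta|+K^9$, and $K=K(N,\zeta)$ from \eqref{eq:baseDef} dominates $\widetilde n_0(\omega,\e)$ and $(\log(|\zeta^{(i)}|+1))^{2/3}$ once $N\ge N_0(\omega,\e)$ (enlarge $N_0$ if necessary), the proposition applies and gives
\[
\frac{1}{K^6}\sum_{s=0}^{K^6-1}F_{K^3}\!\left(T^{\zeta^{(i)}+sK^3}\omega,E\right)>L(E)-\e ,
\]
so some $s_i\in\{0,\dots,K^6-1\}$ has $F_{K^3}(T^{c_i}\omega,E)>L(E)-\e$ with $c_i:=\zeta^{(i)}+s_iK^3$; equivalently $\|M^E_{K^3}(T^{c_i}\omega)\|>e^{K^3(L(E)-\e)}$. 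Writing $a^{(i)}:=c_i-\zeta$, one has $a^{(1)}\in\{-K^9,-K^9+K^3,\dots,-K^3\}$ and $a^{(2)}\in\{0,K^3,\dots,K^9-K^3\}$. Now \eqref{eq:transfertodet} expresses the four entries of $M^E_{K^3}(T^{c_i}\omega)$ as $\pm\det(E-H_{T^\zeta\omega,J})$ for $J$ one of $[a^{(i)},a^{(i)}+K^3)$, $[a^{(i)}+1,a^{(i)}+K^3)$, $[a^{(i)},a^{(i)}+K^3-1)$, $[a^{(i)}+1,a^{(i)}+K^3-1)$; since the operator norm of a $2\times2$ matrix majorizes each entry, one of these intervals, which I take to be $\Lambda_i=[a_i,b_i)$, obeys $|\det(E-H_{T^\zeta\omega,\Lambda_i})|\ge\tfrac12 e^{K^3(L(E)-\e)}\ge e^{K^3L(E)-2\e K^3}$ for $K$ large. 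By inspection $a_i\in\{a^{(i)},a^{(i)}+1\}$ and $b_i-a_i\in\{K^3-2,K^3-1,K^3\}$, so the constraints \eqref{eq:main:a1range}, \eqref{eq:main:a2range} and the admissible values of $b_i$ all hold, and (recalling $E\in\Sigma\subseteq\hat\Sigma$) $E\notin\sigma(H_{T^\zeta\omega,\Lambda_i})$ because the determinant is nonzero.

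\emph{Converting to the Green-function estimate.} By translation covariance, $G^E_{T^\zeta\omega,\Lambda_i}(j,k)=G^E_{T^{\zeta+a_i}\omega,\ell_i}(j-a_i,k-a_i)$ with $\ell_i:=b_i-a_i$, so I would apply \eqref{eq:greenEst_K_general} of Corollary~\ref{cor:LEandGreenEst_general} with $n=\ell_i$ and $\zeta_0=\zeta+a_i$; its hypothesis $\ell_i\ge\e^{-1}\max(\widetilde n_1(\omega,\e),2\log^2(|\zeta+a_i|+1))$ holds for $K$ large since $\ell_i\ge K^3-2$, $|\zeta+a_i|\le|\zeta|+K^9$, and $\e^{-1}\cdot2\log^2(|\zeta|+1)\le K$ by \eqref{eq:baseDef}. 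Inserting the determinant lower bound and using $\ell_i\le K^3$ together with $L(E)\ge\gamma>0$ (so $\ell_iL(E)\le K^3L(E)$),
\[
\big|G^E_{T^\zeta\omega,\Lambda_i}(j,k)\big|\le\frac{\exp\!\big[(\ell_i-|j-k|)L(E)+C_0\e\ell_i\big]}{e^{K^3L(E)-2\e K^3}}\le\exp\!\big[-|j-k|L(E)+(C_0+2)\e K^3\big],
\]
which is \eqref{eq:main:ClaimGreenDecay} once $C_0+2$ is renamed $C_0$; the constant still depends only on $\widetilde\mu$. Every ``$K$ (equivalently $N$) sufficiently large'' invoked above is a threshold depending only on $\omega$ and $\e$, not on $\zeta$, and is absorbed into $N_0(\omega,\e)$; this $\zeta$-uniformity is exactly what lets the argument be reused in the second pass with $\zeta$ taken at a maximum of $|u|$.

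\emph{Main obstacle.} Conceptually the Claim is just an assembly of Proposition~\ref{prop:meanLDTshifted_sigma}, Corollary~\ref{cor:LEandGreenEst_general}, and the identity \eqref{eq:transfertodet}, so the real effort is bookkeeping. The one delicate point is that \eqref{eq:transfertodet} turns a lower bound on $\|M^E_{K^3}\|$ into a lower bound on the determinant of only \emph{one} of four truncations, with left endpoint shifted by $0$ or $1$ and length $K^3$, $K^3-1$, or $K^3-2$ — which is precisely why the Claim is stated with $b_i\in\{a_i+K^3-2,a_i+K^3-1,a_i+K^3\}$ and with one unit of slack in the ranges for $a_i$. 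The remaining work is routine: confirming that the scale $K^3$ and window $K^9=(K^3)^2$ meet the hypotheses of the two cited results and that all largeness conditions are uniform in $\zeta$.
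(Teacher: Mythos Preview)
Your proof is correct and follows essentially the same approach as the paper: apply Proposition~\ref{prop:meanLDTshifted_sigma} at scale $n=K^3$ on each side of $\zeta$, extract a block where $\|M^E_{K^3}\|$ is large, convert this via \eqref{eq:transfertodet} into a determinant lower bound on one of four candidate intervals, and feed the result into \eqref{eq:greenEst_K_general}. One minor slip in wording: ``the operator norm of a $2\times2$ matrix majorizes each entry'' points the wrong way --- what you need (and clearly intend) is $\|A\|\le 2\max_{i,j}|A_{ij}|$, so that some entry is at least $\tfrac12\|A\|$.
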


\begin{proof}
The claim follows by using Corollary~\ref{cor:LEandGreenEst_general} to estimate the Green function, \eqref{eq:transfertodet} to exchange the characteristic polynomial for a norm of a transfer matrix, and Proposition~\ref{prop:meanLDTshifted_sigma} to find a suitable starting point for said transfer matrix.

More precisely, we begin by applying Proposition~\ref{prop:meanLDTshifted_sigma} with $n=K^3$ twice: once with $\zeta_0 = \zeta$ and once with $\zeta_0 = \zeta-K^9$. With $\zeta_0 = \zeta$, we get
\[
\Big|L(E)-K^{-6}\sum_{s=0}^{K^{6}-1}
F_{K^3}(T^{\zeta + s K^3  }\omega)\Big|
<
\e;
\]
with $\zeta_0 = \zeta -K^9$, we have
\[
\Big|L(E)-K^{-6}\sum_{s=0}^{K^{6}-1}
F_{K^3}(T^{\zeta - K^9 + s K^3  }\omega)\Big|
<
\e.
\]
Note that the second invocation of Proposition~\ref{prop:meanLDTshifted_sigma} requires $K^3 \ge \log^{2/3}(|\zeta_0 - K^9|+1)$, which may be obtained by a direct computation.
	
Thus, a straightforward argument by contradiction yields $s_i$ satisfying $-K^9 \leq s_1 \leq -K^3$, $0 \leq s_2 \leq K^9 -K^3$, and
\begin{equation} \label{eq:largenorm_centered}
L(E)-\frac{1}{K^3}\log\|M_{K^3}^E(T^{\zeta + s_i}\omega)\|
<
\e.
\end{equation}
Since the norm of a $2\times 2$ matrix can be dominated by four times its greatest entry, this yields
\begin{equation}\label{eq:main:LEdetCompare_centered}
L(E) - \frac{1}{K^3}\log|\det[H_{T^{\zeta + a_i}\omega, k_i} - E]|\
<
2\e
\end{equation}
for some choice of $a_i \in \{s_i, s_i+1\}$ and $k_i \in \{K^3,K^3-1,K^3-2\}$ as long as $N_0$  is large enough that $N_0^{-3}\log 4 \leq \e$. Put $b_i = a_i + k_i$ and $\Lambda_i = [a_i,b_i)$.

Now, combining \eqref{eq:greenEst_K_general} with \eqref{eq:main:LEdetCompare_centered}, we obtain
\begin{align*}
\left|G_{T^\zeta\omega,\Lambda_i}^E(j,k)\right|
& =
\left|G_{T^{\zeta + a_i}\omega,k_i}^E(j,k)\right| \\
& \leq
\frac{\exp\left( (k_i - |j-k|)L(E) + C_0\e k_i \right)}{|\det[H_{T^{\zeta + a_i}\omega,k_i} - E]|} \\
& \leq
\frac{\exp\left((K^3 - |j-k|)L(E) + C_0\e K^3\right)}{\exp\left(K^3(L(E) - 2\e)\right)} \\
& =
\exp\left(-|j-k|L(E) + (C_0+2)\e K^3\right)
\end{align*}
for all $j,k \in \Lambda_i$. Note here we may need to uniformly enlarge $N_0$ to ensure $\e k_i \geq \max(\widetilde n_1, 2 \log^2(|\zeta+a_i|+1))$.
\end{proof}

\begin{claim}\label{claim:2}
Recall $u$ satisfies $|u(n)| \leq C_u(1+|n|)$ and $|u(\zeta)| = 1$. Then with $\Lambda_i$ as in Claim~\ref{cl:grnfcts}, put
\begin{equation*}
\ell_i
=
\left\lfloor \frac{a_i + b_i}{2} \right\rfloor.
\end{equation*}
Then we have
\begin{equation*}
|u(\zeta + \ell_i)|\leq e^{-2K^{2}},
\quad
i = 1,2.
\end{equation*}
whenever $N > N_0$ is large enough. Here, the size of $N_0$ depends on $\omega, \e$, and $u$ (with the dependence on $u$ entering solely through $C_u$).
\end{claim}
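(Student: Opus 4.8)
The plan is to run the Green-function representation \eqref{eq:deveGreen} on the box $\Lambda_i$ produced by Claim~\ref{cl:grnfcts}, evaluated at its center $\ell_i$, and to pit the exponential off-diagonal decay \eqref{eq:main:ClaimGreenDecay} against the merely polynomial growth of $u$. First I would set $\widetilde u(n):=u(\zeta+n)$; a one-line computation shows that $\widetilde u$ solves $H_{T^\zeta\omega}\widetilde u=E\widetilde u$, with $\widetilde u(0)=1$ and $|\widetilde u(n)|\le C_u(1+|\zeta|+|n|)$. Since $\Lambda_i=[a_i,b_i)=[a_i,b_i-1]\cap\Z$ and $E\notin\sigma(H_{T^\zeta\omega,\Lambda_i})$ --- this is built into the construction of $\Lambda_i$ in Claim~\ref{cl:grnfcts}, where the determinant $\det[H_{T^{\zeta+a_i}\omega,k_i}-E]$ was shown to be nonzero when deriving \eqref{eq:main:LEdetCompare_centered} --- formula \eqref{eq:deveGreen} gives
\[
\widetilde u(\ell_i)
=
-\,G^E_{T^\zeta\omega,\Lambda_i}(\ell_i,a_i)\,\widetilde u(a_i-1)
\;-\;G^E_{T^\zeta\omega,\Lambda_i}(\ell_i,b_i-1)\,\widetilde u(b_i).
\]

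Next I would estimate the two terms. Since $\ell_i$ is the midpoint of $\Lambda_i$ and $k_i=b_i-a_i\ge K^3-2$, both distances $|\ell_i-a_i|$ and $|\ell_i-(b_i-1)|$ are at least $\tfrac12 K^3-2$, so Claim~\ref{cl:grnfcts} bounds each Green-function entry by $\exp\bigl(-(\tfrac12 K^3-2)L(E)+C_0\e K^3\bigr)$. By \eqref{eq:main:a1range}--\eqref{eq:main:a2range} one has $|a_i|,|b_i|\le 2K^9$, hence $|\widetilde u(a_i-1)|,|\widetilde u(b_i)|\le C_u(1+|\zeta|+2K^9)$. Combining,
\[
|u(\zeta+\ell_i)|
\le
2\,C_u\bigl(1+|\zeta|+2K^9\bigr)\exp\!\Bigl(-\bigl(\tfrac12 K^3-2\bigr)L(E)+C_0\e K^3\Bigr).
\]
Taking logarithms and using $\gamma=\inf_E L(E)>0$ from \eqref{eq:UPLE} together with the upper bound $L(E)\le\log\Gamma$, the exponent is at most $-\bigl(\tfrac{\gamma}{2}-C_0\e\bigr)K^3+O(K^2)$, while the prefactor contributes only $O(\log K)+\log C_u+\log(1+|\zeta|)$ to $\log|u(\zeta+\ell_i)|$. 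Imposing on $\e$ the condition $C_0\e<\gamma/4$ --- a restriction depending only on $\widetilde\mu$, which costs nothing since we only ever need the conclusion for small $\e$ --- the cubic term $-\tfrac{\gamma}{4}K^3$ dominates: indeed $\log(1+|\zeta|)\le\sqrt{K}$ by the definition \eqref{eq:baseDef} of $K$, and $K^3$ overwhelms $K^2$, $\sqrt K$, $\log K$, and the fixed quantity $\log C_u$ as soon as $K$, equivalently $N$, is large enough. This yields $|u(\zeta+\ell_i)|\le e^{-2K^2}$ for all such $N$.

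The step demanding genuine care is the bookkeeping rather than any individual estimate: one must check that the threshold for $N_0$ depends on $u$ only through $C_u$ --- never through $\zeta$ or any geometric feature of $u$ --- and that every term carrying a factor of $|\zeta|$ is absorbed by $K^3$. This is exactly what the $\log^2(|\zeta|+1)$ safeguard in \eqref{eq:baseDef} buys, and it is what will make the claim reusable in the ``second pass'' through the localization argument, where $\zeta$ is the (possibly large) center of localization rather than $0$ or $1$. The secondary point worth flagging is that $\e$ must be taken small relative to $\gamma$ and $C_0$ for the Green-function decay to outrun the polynomial growth of $u$; since both constants depend only on $\widetilde\mu$, this is harmless.
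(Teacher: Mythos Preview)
Your proof is correct and follows essentially the same route as the paper: apply the Green-function identity \eqref{eq:deveGreen} at the midpoint $\ell_i$ of $\Lambda_i$, bound the Green entries by Claim~\ref{cl:grnfcts}, bound $u$ at the endpoints using the linear growth hypothesis together with $|\zeta|\le e^{\sqrt K}$ from \eqref{eq:baseDef}, and observe that the cubic-in-$K$ exponential decay swamps the polynomial prefactor. The paper uses the slightly cruder distance bound $K^3/3$ in place of your $K^3/2-2$, and leaves the smallness condition on $\e$ (your $C_0\e<\gamma/4$) implicit, but otherwise the arguments coincide; your explicit discussion of why $N_0$ depends on $u$ only through $C_u$ and not through $\zeta$ is in fact the content of the paper's Remark~\ref{r:N0:indep:u}.
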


\begin{proof}
From \eqref{eq:deveGreen}, we obtain
\begin{align*}
|u(\zeta+\ell_i)|
& \leq
\big|G_{\omega,\Lambda_i}^E(\ell_i,a_i)\big|\big|u(\zeta+a_i-1)| + \big|G_{\omega,\Lambda_i}^E(\ell_i, b_i-1)\big|\big|u(\zeta + b_i)\big|
\end{align*}
Recall that $\log^2(|\zeta|+1) \leq K$ which implies $|\zeta| \leq e^{\sqrt K}$. This yields
\begin{align*}
|u(\zeta + \ell_i)|
&\leq C_u(K^9+e^{\sqrt K})e^{C_0\varepsilon K^3}
 (e^{-|\ell_i-a_i|L(E)}
+ e^{-|\ell_i-(b_i-1)|L(E)})\\
&\leq
2 C_u(K^9+e^{\sqrt K})\exp\Big(-L(E) K^{3}/3 + C_0\e K^{3}\Big).
\end{align*}
Then, for $N$ large enough (depending only on $\omega, \varepsilon, u$) we can estimate the last line by $e^{-2K^2}$, which concludes the proof of the claim.
\end{proof}
\begin{remark}\label{r:N0:indep:u}
 Claim~\eqref{claim:2} is the only place that requires the dependence of $N_0$ on $u$, and the dependence comes from the $u$-dependent constant $C_u$. In particular, if it is known that $u$ is normalized eigenvector with $\max |u(n)| = 1$, then the largeness of $N$ only depends on $(\omega,\varepsilon)$. In the argument after this remark, the largeness of $N_0$ will be independent of $u$. In particular, whenever we say ``enlarge $N_0$ if necessary'', the reader may verify that such an enlargement may be performed in a $u$-independent fashion.
\end{remark}
Now we use $|u(\zeta)| = 1$ and \eqref{eq:deveGreen} with $a = \zeta+\ell_1+1$ and $b = \zeta+\ell_2-1$ to get
\begin{align*}
1
& =
|u(\zeta)| \\
&\leq
|G_{\omega,[a,b]}^E(\zeta,a)| |u(\zeta+\ell_1)| + |G_{\omega,[a,b]}^E(\zeta,b)| |u(\zeta+\ell_2)|\\
& \leq
\left(G_{\omega,[a,b]}^E(\zeta,a)|+|G_{\omega,[a,b]}^E(\zeta,b)|\right)e^{-2K^2}.
\end{align*}
From this, we deduce
\begin{equation*}
\left\| G_{\omega,[a,b]}^E \right\|
\geq
\left\| G_{\omega,[a,b]}^E \delta_\zeta \right\|
\geq
\frac{\sqrt{2}}{2}e^{2K^2}
\geq
e^{K^2}.
\end{equation*}
Thus, \eqref{eq:doubleRes:GreenLB_centered} follows; since $0\le -a,b \leq K^9$, we can in turn conclude that expression \eqref{eq:doubleRes:FNupperBound_centered} fails for every $K^{10} \leq r \leq \overline K$ and $m=K,2K$; that is, we have
\begin{equation}\label{eq:main:FNlowerBd_centered}
\frac{1}{m}\log \|M_m^E(T^{\zeta+r}\omega)\|
>
L(E)-\e
\end{equation}
whenever  $K^{10} \leq r \leq \overline{K} \text{ and } m \in \{K,2K\}$.

Now use \eqref{eq:main:FNlowerBd_centered} to apply the Avalanche principle. Concretely, choose $n \in \Z_+$ with $K^{10} \le n \leq K^{-1}\overline{K} - K^9$, define
\[
A^{(j)}
:=
M_{K}^E(T^{\zeta + K^{10}+(j-1)K} \omega),
\quad 1 \le j \le n.
\]
With $\lambda := \exp(K(L(E) - \e))$, \eqref{eq:main:FNlowerBd_centered} gives
\[
\|A^{(j)}\|
\geq
\lambda
\geq
n
\]
for all $j$, where the second inequality holds as long as $N_0$ is sufficiently large. Since $K \ge \widetilde n_1$ and $K \geq \log^2(|\zeta| + |\overline{K}|+1)$ (enlarge $N_0$ if necessary), we may use \eqref{eq:FiniteLyapUpperEstimate_general} to obtain
\[
\|A^{(j)}\|
\le
\exp\big( K(L(E)+2\e)\big),\; 1\le j\le n.
\]
Thus, \eqref{eq:main:FNlowerBd_centered} implies
\begin{align*}
&\left| \log\|A^{(j+1)}\| + \log\|A^{(j)}\| - \log\|A^{(j+1)} A^{(j)}\| \right|\\
& <
2\lbase(L(E) + 2\e) - 2\lbase(L(E) - \e)\\
& =
6\lbase\e \\
& \leq
\frac{1}{2}\log\lambda,
\end{align*}
where the final inequality needs $\e$ to be sufficiently small; it is easy to see that this smallness condition depends only on $\widetilde \mu$ (through $\gamma$). Thus, taking $\hat N = nK$ and $r_0 =  K^{10}$, we have $\hat N \in [ K^{11}, \overline{K} - K^{10}]$ and the Avalanche Principle (Lemma~\ref{l.avlanche-principle}) yields
\begin{align*}
\log\|M_{\hat N}(T^{\zeta+ r_0}\omega)\|
& =
\log\|A^{(n)} \cdots A^{(1)}\| \\
& \geq
\sum_{j=1}^{n-1} \log\|A^{(j+1)}A^{(j)}\|
- \sum_{j=2}^{n-1} \log\|A^{(j)}\|
- C \frac{n}{\lambda} \\
& \geq
(n-1)2K(L(E) - \e) - (n-2)K(L(E) + 2\e) - C \\
& \geq
\hat N (L(E) - 5\e)
\end{align*}
by choosing $N_0$ large.

Putting this together, we can control $\|M_\ell^E(T^\zeta\omega)\|$ for general $K^{11} + K^{10} \leq \ell \leq \bar K$ by interpolation. In particular, writing $\ell = nK + p$ with $0 \leq p < K$ and $n\ge K^{10} + K^9$, we have
\begin{equation} \label{eq:largeLE:on:largescale}
\begin{split}
\|M_\ell^E(T^\zeta \omega)\|
& \geq
\frac{\|M_{\ell-K^{10}}(T^{\zeta + K^{10}}\omega)  \|}{\| M_{K^{10}}(T^\zeta \omega) \|} \\
& \geq
\Gamma^{-K^{10}-p} \|M_{nK-K^{10}}(T^{\zeta + K^{10}}\omega)\| \\
& \geq
\Gamma^{-K^{10} - p} e^{(nK -K^{10})(L(E) -5\e) } \\
& \geq e^{\ell (L(E) - 6\e)},
\end{split}
\end{equation}
as long as $N$ is sufficiently large. Since the intervals $[K^{11}+K^{10},\bar K]$ cover all sufficiently large integers, the foregoing estimates yield
\begin{equation} \label{eq:lowerbouldLE}
\liminf_{n \to \infty} \frac{1}{n} \log\|M_n^E(T^\zeta \omega)\|
\geq
L(E) - 6\e.
\end{equation}
Since $\zeta \in \{0,1\}$ and \eqref{eq:lowerbouldLE} holds for all $\e > 0$, we obtain \eqref{eq:main:FNLB}, as desired.
\end{proof}

\begin{proof}[Proof of Theorem~\ref{t:sule} {\rm(}SULE{\rm)}]
Let $\omega \in \Omega_*$, $E \in \CG(H_\omega)$, and $\d > 0$ be given. By Theorem~\ref{t:LyapConvGenEvals}, the associated eigenvector $u$ decays exponentially, and hence (after normalization), we may define the center of localization $\bar{\zeta}$ via $u(\bar\zeta) = \|u\|_\infty =  1$. There is an unimportant ambiguity here, namely, that $|u|$ can obtain its maximum value multiple times. However, since $u \in \ell^2$, it may only do so finitely many times, and it does not matter which of those occurrences we use for $\bar \zeta$. Then, define $K = K(N,\bar\zeta)$ as in \eqref{eq:baseDef}, and suppose $N$ is large.

Now, fix $\e > 0$ small; it will be apparent that how small depends only on $\widetilde\mu$ and $\d$. Then, running the proof of Theorem~\ref{t:LyapConvGenEvals} with $\zeta$ replaced by $\bar \zeta$, we obtain everything from Claim~\ref{cl:grnfcts} to \eqref{eq:largeLE:on:largescale} (with $\bar\zeta$ replacing $\zeta$). However, this time, we get $N \ge N_0(\omega,\e)$, that is, $N_0$ no longer depends on $u$. Then, for any $\ell$ with $K^{11} + K^{10} \le \ell \le \overline{K}$, we get
\[
\frac{1}{\ell}\log\| M_\ell^E(T^{\bar\zeta}\omega) \|
\geq
L(E) - 6 \e.
\]
In particular, combining this with \eqref{eq:transfertodet} and \eqref{eq:greenEst_K_general} implies that there exist $\bar\zeta' \in \{\bar \zeta,\bar\zeta + 1\}$ and $\ell' \in \{\ell, \ell-1,\ell-2\}$ so that
\begin{equation*}
\vabs{G^E_{T^{\bar \zeta'}\omega,\ell'} (j,k)} \leq \exp\left(-|j-k|L(E) + (C_0 + 6)\e \ell' \right)
\end{equation*}
for any $j,k \in [0,\ell')$. Pick $n\in[\frac{1}{4}\ell',\frac{1}{2}(\ell'-1)]$ (notice that $\ell' - n \ge n$). Then, \eqref{eq:deveGreen} and the normalization $\|u\|_{\infty} = 1$ yield
\begin{align*}
|u(\bar\zeta' + n)|
& \leq
|G^E_{T^{\bar\zeta'}\omega, \ell'}(0,n)| + |G^E_{T^{\bar\zeta' }\omega,  \ell'}(n,\ell'-1)| \\
& \leq
\exp\left( -L(E) n + (C_0+6) \e \ell' \right) + \exp\left(  -L(E) (\ell'-n) + (C_0+6) \e \ell'  \right)\\
& \leq
2 \exp\left( -L(E) n + 4(C_0+6)\e n \right) \\
& \leq
e^{-(1-\delta)L(E)n},
\end{align*}
where the last line needs $\e$ sufficiently small, dependent on the choice of $\delta > 0$. Shifting things back to $\bar\zeta$ if necessary, we get
\begin{equation}\label{eq:ExpDecayOfEigenFuncAtSigma}
\left\lvert u(\bar \zeta + n) \right\rvert
\leq
C e^{-(1-\delta)L(E)n}
\text{ for every }
n\in \left[\frac{1}{4}(K^{11} + K^{10}), \frac{1}{2} \overline{K-3}\right]
\end{equation}
These intervals are overlapping for $N_0$ large enough (and here, the largeness does not depend on any parameters), so we conclude that \eqref{eq:ExpDecayOfEigenFuncAtSigma} holds true for any
\[
n
\geq
\frac{\varepsilon^{-11}}{4}\max(N_0, 2\log^2(|\bar \zeta|+1))^{11},
\]
where $N_0$ depends only on $\omega$ (recall that $\e$ depends on $\delta$). For $0 \leq n \leq \frac{\varepsilon^{-11}}{4}\max(N_0, 2\log^2(|\bar \zeta|+1))^{11}$, we may estimate $|u(\bar \zeta + n)|$ trivially and adjust constants accordingly:
\begin{align*}
|u(\zeta + n)|
& \leq
e^{(1-\delta) L(E) \frac{\varepsilon^{-11}}{4}\max(N_0,2\log^2(|\bar \zeta|+1))^{11}} e^{-(1-\delta)L(E)n} \\
& \leq
C_{\omega,\delta} e^{C_\delta\log^{22}(|\bar\zeta|+1)}e^{-(1-\delta)L(E)n}
\end{align*}
where $C_{\omega,\delta}$ depends on $\omega, \delta$ and $C_\delta$ depends only on $\delta$. This proves the estimates in \eqref{sule} for all $n \geq 0$. To deal with $n < 0$, simply use reflection-invariance of $\Omega_*$.
\end{proof}

We can use the version of SULE from Theorem~\ref{t:sule} to prove the following version of almost-sure exponential dynamical localization. Recall that $\gamma$ denotes the global uniform lower bound on the Lyapunov exponent, compare \eqref{eq:UPLE}, and that it coincides with  the minimum of the Lyapunov exponent on the almost sure spectrum $\Sigma$, compare \eqref{eq:UPLEspec}.

\begin{theorem}[exponential dynamical localization] \label{t:DL}
For any  $\omega \in \Omega_*$, $\epsilon > 0$, and $0 < \beta < \gamma$, there is a constant $\widetilde C = \widetilde C_{\omega,\beta,\epsilon} > 0$ such that
\begin{equation*}
\sup\limits_{t \in \R}
|\sprod{\delta_n}{e^{-itH_\omega}\delta_m}|
\leq
\widetilde C e^{\epsilon |m|} e^{-\beta|n-m|}
\end{equation*}
for all $m,n \in \Z$.
\end{theorem}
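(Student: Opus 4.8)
The plan is to derive Theorem~\ref{t:DL} from the SULE estimate of Theorem~\ref{t:sule} via the classical implication SULE $\Rightarrow$ SUDL $\Rightarrow$ dynamical localization, keeping careful track of how the constants depend on $\omega$, on the center of localization, and on the various small parameters. Fix $\omega \in \Omega_*$. By Theorem~\ref{t:LyapConvGenEvals} (hence Theorem~\ref{t.main}), $H_\omega$ has pure point spectrum, so we may fix an orthonormal basis $\{u_j\}_{j \ge 1}$ of $\ell^2(\Z)$ consisting of eigenfunctions, with $H_\omega u_j = E_j u_j$ and $E_j \in \Sigma \subseteq \hat\Sigma$. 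Expanding $e^{-itH_\omega}\delta_m$ in this basis gives, uniformly in $t \in \R$,
\[
\bigl| \langle \delta_n, e^{-itH_\omega}\delta_m \rangle \bigr|
\le
\sum_j |u_j(n)|\,|u_j(m)|,
\]
so it suffices to bound the right-hand side by $\widetilde C e^{\epsilon|m|}e^{-\beta|n-m|}$.

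Next I would feed SULE into this sum. Fix $\delta \in (0,1)$ to be specified at the end; it will depend only on $\widetilde\mu$ (through $\gamma$) and on $\beta$. Since each $u_j$ is $\ell^2$-normalized we have $\|u_j\|_\infty \le 1$; writing $\bar\zeta_j := \bar\zeta(u_j)$ for the center of localization of $u_j$ and using $L(E_j) \ge \gamma$ (Remark~\ref{r:LEconcave}), Theorem~\ref{t:sule} gives
\[
|u_j(k)|
\le
C_{\omega,\delta}\, e^{C_\delta \log^{22}(|\bar\zeta_j|+1)}\, e^{-(1-\delta)\gamma|k-\bar\zeta_j|},
\qquad k \in \Z.
\]
I also need a lower bound on $\|u_j\|_\infty$ in order to limit how many eigenfunctions may share a given center. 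Keeping the factor $\|u_j\|_\infty$ in the SULE bound, squaring, summing over $k$, and bounding the resulting geometric series yields $\|u_j\|_\infty \ge c_{\omega,\delta}\, e^{-C_\delta\log^{22}(|\bar\zeta_j|+1)}$ for a constant $c_{\omega,\delta} > 0$. Since $|u_j(\bar\zeta_j)| = \|u_j\|_\infty$ and Parseval's identity gives $\sum_j |u_j(\zeta)|^2 = 1$ for every $\zeta \in \Z$, this produces the multiplicity bound
\[
\#\{\, j : \bar\zeta_j = \zeta \,\}
\le
c_{\omega,\delta}^{-2}\, e^{2C_\delta\log^{22}(|\zeta|+1)},
\qquad \zeta \in \Z.
\]

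With these in hand I would regroup the eigenfunction sum as a sum over the possible centers $\zeta$ and apply the two displayed bounds to get
\[
\sum_j |u_j(n)|\,|u_j(m)|
\le
C\sum_{\zeta\in\Z} e^{4C_\delta\log^{22}(|\zeta|+1)}\, e^{-(1-\delta)\gamma\,(|n-\zeta| + |m-\zeta|)}.
\]
To separate the $|n-m|$-decay from the $|m|$-growth I would use the elementary inequality $|n-\zeta| + |m-\zeta| \ge \theta|m-\zeta| + (1-\theta)|n-m|$ for $\theta \in (0,1)$ (valid since the left side dominates both $|m-\zeta|$ and $|n-m|$), together with the sub-additivity estimate $\log^{22}(|\zeta|+1) \le C\bigl(\log^{22}(|m|+1) + \log^{22}(|m-\zeta|+1)\bigr)$, which follows from $|\zeta| \le |m| + |m-\zeta|$. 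These let me pull the factor $e^{C'\log^{22}(|m|+1)}$ out of the sum and bound what remains by the convergent series $\sum_{k\in\Z} e^{C'\log^{22}(|k|+1)}\, e^{-(1-\delta)\gamma\theta|k|}$ (convergence because $\log^{22}$ is sub-linear), obtaining
\[
\bigl| \langle \delta_n, e^{-itH_\omega}\delta_m \rangle \bigr|
\le
\widetilde C_{\omega,\delta,\theta}\, e^{C'\log^{22}(|m|+1)}\, e^{-(1-\delta)(1-\theta)\gamma|n-m|}.
\]
Finally, $e^{C'\log^{22}(|m|+1)} \le C''_\epsilon\, e^{\epsilon|m|}$ for any $\epsilon > 0$ (again because $\log^{22}$ is sub-linear), and given $0 < \beta < \gamma$ one chooses $\delta,\theta$ small enough that $(1-\delta)(1-\theta)\gamma \ge \beta$; absorbing all constants into $\widetilde C = \widetilde C_{\omega,\beta,\epsilon}$ yields the theorem. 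I expect the genuinely substantive step to be the multiplicity estimate for centers of localization: one must rule out accumulation of the $\bar\zeta_j$ near any fixed site, and this is where the two-sided nature of SULE is used — $\ell^2$-normalization forces $\|u_j\|_\infty$ not to be too small (the lower bound above), while Parseval forces $\sum_j |u_j(\zeta)|^2 = 1$, so that only sub-exponentially many $u_j$ can peak at a given $\zeta$. Everything else is summation of stretched-geometric series and the observation that $\log^{22}(|m|+1) = o(|m|)$, together with elementary bookkeeping of the triangle inequality.
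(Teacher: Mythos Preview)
Your argument is correct and follows the same overall skeleton as the paper---eigenfunction expansion, SULE, a center-of-localization counting lemma, and a summation---but you handle two of the steps differently.

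For the counting of localization centers, the paper proves (its Proposition~\ref{DistCentLoc}) the polynomial bound $\#\{\ell : |\zeta_\ell|\le L\}\le L^2$ by showing that the truncated eigenvectors $P_{[-4L,4L]}u_\ell$ are nearly orthonormal (via a diagonally dominant Gram matrix) and hence linearly independent in a space of dimension $8L+1$. You instead bound the pointwise multiplicity $\#\{j:\bar\zeta_j=\zeta\}$ by combining the SULE lower bound $\|u_j\|_\infty \ge c_{\omega,\delta}e^{-C_\delta\log^{22}(|\bar\zeta_j|+1)}$ (from $\|u_j\|_2=1$) with the Parseval identity $\sum_j|u_j(\zeta)|^2=1$. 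Your route is more elementary and avoids the Gram-matrix computation; the paper's route gives a sharper (polynomial rather than sub-exponential) count, but either suffices since only summability against $e^{-c|\zeta|}$-type weights is needed. For the exponent splitting, the paper uses $|n-\zeta|+|m-\zeta|\ge (|\zeta|-|m|)+|n-m|$ to extract decay in $|\zeta|$ directly, whereas you use the convex-combination bound $|n-\zeta|+|m-\zeta|\ge \theta|m-\zeta|+(1-\theta)|n-m|$ together with $\log^{22}(|\zeta|+1)\lesssim \log^{22}(|m|+1)+\log^{22}(|m-\zeta|+1)$; after the change of variable $k=m-\zeta$ these are equivalent in effect. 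Both approaches yield the stated bound with constants depending on $\omega,\beta,\epsilon$.
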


For the remainder of the present section, we fix $\omega \in \Omega_*$ and leave the dependence of various quantities on $\omega$ implicit. Let $\set{u_\ell : \ell \in \Z}$ denote an enumeration of the normalized eigenvectors of $H_\omega$. By Theorem~\ref{t:sule}, each $u_\ell$ satisfies \eqref{sule} for a suitable choice of localization center $\zeta_\ell \in \Z$. For the remainder of this section, fix an arbitrary $\beta \in (0, \gamma)$. We need the following proposition which concerns the distribution of centers of localization.

\begin{prop}\label{DistCentLoc}
There exists $L_0\in \Z_+$ large enough that
$$
\#\{\ell : |\zeta_\ell| \leq L\}
\leq
L^2
$$
for all $L \geq L_0$.
\end{prop}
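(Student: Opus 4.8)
The plan is to play the SULE estimate of Theorem~\ref{t:sule} off against the orthonormality of the eigenvectors. Heuristically, an eigenvector $u_\ell$ with localization center $\zeta_\ell$ stores essentially all of its $\ell^2(\Z)$-mass inside a window about $\zeta_\ell$ whose half-width grows only like $\log^{22}(|\zeta_\ell|+1)$; so if the centers of $m$ distinct eigenvectors all lie in $[-L,L]$, then all of these windows are contained in one interval of length $2L + o(L)$, and $m$ essentially-normalized, mutually orthogonal vectors of $\ell^2(\Z)$ that each live (up to an error of mass $1/2$) in an interval of length $\lambda$ must satisfy $m \lesssim \lambda$.

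First I would fix, once and for all, $\delta = \tfrac12$ in Theorem~\ref{t:sule} and recall from \eqref{eq:UPLE}--\eqref{eq:UPLEspec} that $L(E) \ge \gamma > 0$ for every $E \in \Sigma$, hence for every eigenvalue of $H_\omega$ (here $\omega \in \Omega_*$, so $\sigma(H_\omega) = \Sigma$). Then, for each normalized eigenvector $u_\ell$ with center $\zeta_\ell$, Theorem~\ref{t:sule} gives
\[
|u_\ell(\zeta_\ell + n)|
\le
C_\omega\, e^{C\log^{22}(|\zeta_\ell|+1)}\, e^{-\gamma|n|/2},
\qquad n \in \Z,
\]
with $C$ and $C_\omega$ independent of $\ell$ (this uniformity in $u$ is precisely the content of Theorem~\ref{t:sule}, the only $u$-dependence residing in the choice of $\zeta_\ell$). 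Squaring and summing the tail over $|n| > R$ produces a bound of the form $C' e^{2C\log^{22}(|\zeta_\ell|+1)} e^{-\gamma R}$, so (taking $C' \ge 1$) one may set
\[
R(L)
:=
\left\lceil \tfrac{2C}{\gamma}\log^{22}(L+1) + \tfrac{1}{\gamma}\log(2C') \right\rceil,
\]
and then for every $\ell$ with $|\zeta_\ell| \le L$ one has $\sum_{|k - \zeta_\ell| \le R(L)} |u_\ell(k)|^2 \ge \tfrac12$.

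Next I would set $I_L := [-L - R(L), L + R(L)] \cap \Z$, so that $\#I_L = 2(L+R(L))+1$ and $[\zeta_\ell - R(L), \zeta_\ell + R(L)] \subseteq I_L$ whenever $|\zeta_\ell| \le L$. Writing $S_L := \{\ell : |\zeta_\ell| \le L\}$ and $m := \#S_L$, summing the mass-concentration estimate over $\ell \in S_L$ and interchanging the order of summation yields
\[
\frac m2
\le
\sum_{\ell \in S_L}\ \sum_{k \in I_L} |u_\ell(k)|^2
=
\sum_{k \in I_L}\ \sum_{\ell \in S_L} |\sprod{u_\ell}{\delta_k}|^2
\le
\sum_{k \in I_L} \|\delta_k\|^2
=
\#I_L,
\]
where the penultimate inequality is Bessel's inequality for the orthonormal family $\{u_\ell\}$ (note completeness of the eigenbasis is not even needed). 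Hence $m \le 2\,\#I_L = 4L + 4R(L) + 2$. Since $R(L) = O\big((\log L)^{22}\big) = o(L)$ as $L \to \infty$, the right-hand side is eventually $\le L^2$; choosing $L_0$ so large that $4L + 4R(L) + 2 \le L^2$ for all $L \ge L_0$ finishes the proof.

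The argument is essentially routine; the only point needing a little attention is tracking the growth of $R(L)$ well enough to confirm $R(L) = o(L)$. This is immediate because the SULE prefactor grows like $e^{C(\log L)^{22}}$, so solving $e^{-\gamma R} \lesssim e^{-2C(\log L)^{22}}$ for $R$ gives $R(L)$ of order $(\log L)^{22}$, which is indeed negligible compared to $L$. One should also be sure to invoke exactly the \emph{uniform} version of SULE provided by Theorem~\ref{t:sule} — the exponential rate $(1-\delta)L(E) \ge \gamma/2$ and the prefactor constant $C_\delta$ must be independent of the eigenfunction — since the whole mechanism is that all $u$-dependence is confined to $\zeta_\ell$.
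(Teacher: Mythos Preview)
Your proof is correct. Both arguments use SULE to show that each $u_\ell$ with $|\zeta_\ell|\le L$ concentrates most of its $\ell^2$-mass in an interval of length $O(L)$ about the origin, and then leverage orthonormality to bound the count; the difference lies in how orthonormality is exploited.

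You sum $\sum_{k\in I_L}\sum_{\ell\in S_L}|\langle u_\ell,\delta_k\rangle|^2$ and apply Bessel's inequality to obtain $\#S_L\le 2\,\#I_L$, which is short and requires no estimate on cross-terms. The paper instead truncates each $u_\ell$ to $[-4L,4L]$, uses the tail bound to estimate the inner products $\langle u_{\ell,L},u_{\ell',L}\rangle$ for $\ell\neq\ell'$, deduces that the Gram matrix of the truncated vectors is strictly diagonally dominant (hence invertible), concludes that the truncations are linearly independent in $\R^{8L+1}$, and so $\#\CU_L\le 8L+1$. Your route is the standard one (this is essentially the argument in \cite{dRJLS1996}) and is slightly more economical; the paper's Gram-matrix argument yields the same linear bound but needs the extra off-diagonal computation. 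Both give a bound of order $L$, far stronger than the $L^2$ actually required.
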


\begin{proof}
Let  $\CU_L := \{\ell \in \Z: |\zeta_\ell| \leq L\}$. When $L$ is sufficiently large, \eqref{sule}, yields
\[
|u_\ell(n)|
\leq
e^{2\beta L}e^{-\beta |n|}
\leq
e^{-\frac12 \beta|n|} \mbox{ whenever } \ell \in \CU_L \text{ and } |n| \ge 4L.
\]
This in turn implies
\begin{equation} \label{eq:smallcenter2}
\sqrt{ \sum_{|n|\ge 4L} |u_\ell(n)|^2 }
\leq
e^{-\beta L}
\end{equation}
whenever $L$ is sufficiently large. Let $u_{\ell,L}\in \R^{8L+1}$ be $P_{[-4L,4L]}u_\ell$. Using \eqref{eq:smallcenter2} and the fact that $\{u_\ell : \ell \in \CU_{L}\}$ is an orthonormal set, a direct computation shows that
\begin{equation} \label{eq:truncatedDotProducts}
\left|\langle u_{\ell,L}, u_{\ell',L}\rangle\right|
\begin{cases}
\ge 1 - e^{-\beta L}, & \ell = \ell',\\
\le 3e^{-\beta L},   & \ell \neq \ell'
\end{cases}
\end{equation}
whenever $L$ is sufficiently large. Now, consider the Gram matrix associated to $\{u_{\ell,L} : \ell \in \CU_L\}$, that is, the matrix $M$ having entries
\[
M_{\ell,\ell'}
=
\langle u_{\ell,L}, u_{\ell',L}\rangle,
\quad
\ell,\ell' \in \CU_L.
\]
For sufficiently large $L$, \eqref{eq:truncatedDotProducts} implies that $M$ is strictly diagonally dominant, hence invertible. In particular, $\{u_{\ell,L}: \ell \in \CU_{L}\}$ is a linearly independent set. Since these vectors are elements of $\R^{8L+1}$, it follows that $\#\CU_L \leq 8L+1$. Since $8L+1 < L^2$ whenever $L \ge 9$, the proposition follows.
\end{proof}

Now we are ready to prove Theorem~\ref{t:DL}. In fact, it is well-known that SULE-type conditions imply almost-sure exponential dynamical localization. We supply the details to keep the paper self-contained, following the argument in \cite[Theorem~7.5]{dRJLS1996}.

\begin{proof}[Proof of Theorem~\ref{t:DL}]
Given $\beta$ and $\epsilon$, choose $\beta'$ with $\beta < \beta' < \gamma$ and $\beta' - \beta =: \eta < \epsilon$.

Expanding $\delta_m$ in the basis of eigenfunctions of $H_\omega$, we obtain
\[
|\langle{\delta_n,e^{-itH_\omega}\delta_m\rangle}|
\leq
\sum_{\ell \in \Z}|u_\ell(n)u_\ell(m)|.
\]
Then, using \eqref{sule} we get
\[
\sum_\ell |u_\ell(n)u_\ell(m)|
\leq
C_\omega^2\sum_{\ell} e^{C_2\log^{22}(|\zeta_\ell| + 1)} e^{-\beta'(|n - \zeta_\ell| + |m-\zeta_\ell|)}
\]
Next, we have
\begin{align*}
e^{-\beta'(|n-\zeta_\ell| + |m-\zeta_\ell|)}
& \leq
e^{-\eta|\zeta_\ell|} e^{\eta |m|}e^{-(\beta' - \eta)|m-n|} \\
& \leq
e^{-\eta|\zeta_\ell|} e^{\epsilon |m|}e^{-\beta|m-n|}
\end{align*}
by the triangle inequality. Consequently,
\[
\sup_{t\in\R} \left| \langle{\delta_n, e^{-itH_\omega} \delta_0 \rangle} \right|
\leq
C_\omega^2\sum_\ell e^{C_2\log^{22}(|\zeta_\ell|+1)}  e^{-\eta |\zeta_\ell|} e^{\epsilon |m|} e^{-\beta|n-m|}.
\]
But then, by Proposition~\ref{DistCentLoc}, we have
\[
A_0
:=
\sum_\ell e^{C_2 \log^{22}(|\zeta_\ell| + 1) - \eta |\zeta_\ell|}
=
\sum_{L\ge 0}\ \sum_{|\zeta_\ell| = L} e^{C_2\log^{22}(|\zeta_\ell| + 1) - \eta|\zeta_\ell|}
<
\infty.
\]
Thus, we obtain
\[
\sup_{t\in\R}|\langle{\delta_n, e^{-itH_\omega}\delta_0\rangle}|
\leq
\widetilde C e^{\epsilon|m|}e^{-\beta |n-m|}
\]
with $\widetilde C = A_0 C_\omega^2$.
\end{proof}

\section{Localization for CMV Matrices with Random Verblunsky Coefficients}\label{sec:CMV}

In the present section, we will describe how to prove spectral and dynamical localization for CMV matrices with i.i.d.\ random Verblunsky coefficients. The overall outline of the proof is identical to the Schr\"odinger case. We will describe carefully the places where the proofs differ.

Let $\D\subset\C$ be the open unit disk. Suppose that our probability space $(\CA, \widetilde \mu)$ consists of a compact set of complex numbers in $\D$; that is, we assume henceforth that
\[
\CA
=
\supp\widetilde\mu
\subset
\D
\]
and that $\CA$ is compact. As in the Schr\"odinger case, we assume that $\# \CA\ge 2$ to avoid trivialities. Let $(\Omega,\mu)=(\CA^\Z, \widetilde\mu^\Z)$ and $T:\Omega\to\Omega$ denote the left shift. As before, the function $\alpha:\Omega\to \D$ given by $\a(\omega) := \omega_0$ can be used to generate Verblunsky coefficients via
\[
\alpha_\omega(n)
:=
\alpha(T^n\omega)
=
\omega_n,
\quad
n \in \Z,
\]
and we can (and do) view $\alpha_\omega = \{\alpha_\omega(n)\}_{n \in \Z}$ as a sequence of i.i.d.\ random variables on $\D$ with common distribution $\widetilde\mu$. For each $\omega \in \Omega$, let $\CC_\omega$ denote the CMV matrix associated with the Verblunsky coefficients $\{\alpha_\omega(n)\}_{n=0}^\infty$, and let $\CE_\omega$ be the extended CMV matrix associated with the coefficient sequence $\alpha_\omega$. Our immediate goal is to prove Theorems~\ref{t:ALCMV} and \ref{t:ALCMVhalfline}. We will discuss the proof of Theorem~\ref{t:ALCMV} in detail and then comment at the end of the section on the necessary changes for the proof of Theorem~\ref{t:ALCMVhalfline}.

Theorem~\ref{t:ALCMV} will be a corollary of the following theorem from which we may see the difference between Schr\"odinger operators and CMV matrices. In short, there is an exceptional set $\CD \subset \pD$ containing no more than three spectral parameters at which the hypotheses of F\"urstenberg's theorem may fail; thus, we work on compact arcs away from $\CD$, and we may exhaust $\pD \setminus \CD$ by countably many such arcs.

In the theorems below, $L(z)$ denotes the Lyapunov exponent for the operator family $\{\CE_\omega\}$, which shall be defined presently.

\begin{theorem}[Anderson localization for CMV matrices]  \label{t:ALCMV2}
With $\alpha_\omega$ as above, there exists a finite set $\CD$ with $\#\CD\le 3$ such that the following holds true. For any compact interval $\CI\subset \pD \setminus \CD$, there is a full-measure set $\Omega_\CI \subset \Omega$ such that $\CE_\omega$ has pure point spectrum on $\CI$ for every $\omega \in \Omega_\CI$, and the eigenfunctions of $\CE_\omega$ corresponding to any eigenvalue $z \in \CI$ decay exponentially. Moreover, the rate of decay is exactly $L(z)$.
\end{theorem}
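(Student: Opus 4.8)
The plan is to carry out the entire program of Sections~\ref{s:positivity_continuity_LE}--\ref{sec:localization}, replacing the Schr\"odinger transfer matrices $M_n^E$ by the Szeg\H{o} (CMV) transfer matrices, the real spectral parameter $E$ by $z\in\pD$, and the compact interval $\hat\Sigma$ by an arbitrary compact arc $\CI\subset\pD\setminus\CD$. Concretely, for $\alpha\in\D$ and $z\in\pD$ put
\[
A(z,\alpha)
=
\frac{1}{\sqrt{1-|\alpha|^2}}
\begin{bmatrix} z & -\overline{\alpha}\\ -\alpha z & 1\end{bmatrix},
\qquad
\det A(z,\alpha)=z,
\]
so that, for fixed $z\in\pD$, $z^{-1/2}A(z,\alpha)$ lies in a fixed conjugate of $\SL(2,\R)$; let $A_n^z(\omega)$ be the associated $n$-step products over the shift $T$, and define $L(z)$ to be the corresponding Lyapunov exponent. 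These matrices transport the solutions of $\CE_\omega u=zu$ in the standard way, and a Schnol-type theorem for ergodic CMV matrices (compare \cite{S2}) shows that the spectral measures of $\CE_\omega$ are supported on the set $\CG(\CE_\omega)$ of $z$ admitting a nonzero polynomially bounded solution. Hence, as in the Schr\"odinger case, Theorem~\ref{t:ALCMV2} reduces to proving that for $\mu$-a.e.\ $\omega$ and every $z\in\CG(\CE_\omega)\cap\CI$ one has $\lim_{n\to\pm\infty}|n|^{-1}\log\|A_n^z(\omega)\|=L(z)$, with the corresponding eigenfunction decaying exponentially at rate $L(z)$.

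The first step is to feed F\"urstenberg's theorem. A direct $\SU(1,1)$ computation in the spirit of the proof of Theorem~\ref{t:positivityLE} --- using two distinct Verblunsky values $a\neq b$ in $\CA$ together with the auxiliary products $A(z,a)A(z,b)^{-1}$ and $A(z,a)^{-1}A(z,b)$ --- shows that $G_{\nu_z}$ is noncompact, strongly irreducible, and contracting (Definition~\ref{d:contraction}) for every $z\in\pD$ outside a finite set $\CD$ with $\#\CD\le 3$, the exceptional $z$'s being exactly those for which these generators become simultaneously elliptic or share a finite invariant subset of $\R\PP^1$. Off $\CD$, Theorem~\ref{furst} gives $L(z)>0$, Theorem~\ref{t:furst2} gives continuity of $z\mapsto L(z)$, and Proposition~\ref{prop:LEconvergentVects} (via \cite{bougerollacroix}) gives vectorwise convergence. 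Fixing a compact arc $\CI\subset\pD\setminus\CD$ then yields uniform positivity $\gamma_\CI:=\inf_{z\in\CI}L(z)>0$ and a uniform one-step bound $\Gamma_\CI<\infty$; because the proofs of the uniform LDT (Theorem~\ref{t:LDT}) and of H\"older continuity of $L$ on $\CI$ (Theorem~\ref{t:HolderContinuity}, via the Avalanche Principle, Lemma~\ref{l.avlanche-principle}) used only noncompactness, strong irreducibility, contractivity, the uniform bounds, and the Lipschitz dependence of $(z,\omega)\mapsto A_n^z(\omega)$ --- all available here uniformly on $\CI$ --- these results transfer with only notational changes.

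Next I would redo Sections~\ref{sec:green} and \ref{sec:localization}. Lemma~\ref{lem:mean_ldt}, Proposition~\ref{prop:meanLDTshifted_sigma}, and Corollary~\ref{cor:LEandGreenEst_general} rest only on the LDT and on the Cramer's-rule relation between truncated Green functions and transfer matrices; the CMV analog of that relation is standard (finite truncations of $\CE_\omega$ are unitarily equivalent to finite CMV matrices, and their resolvent entries are ratios of half-line ``Szeg\H{o} determinants'' controlled by products of transfer-matrix norms), the sole genuine difference being that CMV truncations come in two parities, which merely doubles the bookkeeping. One thus produces, for each $\epsilon\in(0,1)$, a full-measure $\Omega_{+,\CI}(\epsilon)$ on which block transfer-matrix norms are $\le e^{n(L(z)+2\epsilon)}$ after averaging over $n^2$ shifted blocks, and truncated Green functions decay like $e^{-|j-k|L(z)}$ off the truncated spectrum, uniformly in $z\in\CI$. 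The elimination-of-double-resonances bound (Proposition~\ref{prop:doubleRes}) then goes through essentially verbatim --- its ingredients being the Green-function/transfer-matrix estimates, H\"older continuity, and, crucially, the \emph{independence} of the Verblunsky coefficients, which makes the two relevant events depend on disjoint coordinate blocks --- so $\mu(\CD_N(\epsilon)\cap\{\text{configurations bad for some }z\in\CI\})\le Ce^{-\widetilde\eta N}$. Finally, taking $\Omega_\CI$ to be the intersection over rational $\epsilon\in(0,1)$ of $\Omega_{+,\CI}(\epsilon)$, of $\Omega\setminus\limsup_N\CD_N(\epsilon)$, and of a reflection-invariant full-measure set on which $\sigma(\CE_\omega)$ equals the a.s.\ spectrum, the two-pass Avalanche-Principle argument of Section~\ref{sec:localization} applies: the first pass (centered at $\zeta\in\{0,1\}$) establishes the modified Anderson localizer's dream --- the analog of Theorem~\ref{t:LyapConvGenEvals} --- for every $z\in\CG(\CE_\omega)\cap\CI$, whence every such generalized eigenfunction decays exponentially at rate $L(z)$ and $\CE_\omega$ is pure point on $\CI$, and the second pass (centered at the maximizer of $|u|$) upgrades this to the SULE bound of Theorem~\ref{t:sule}.

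The main obstacle, beyond routine bookkeeping, is the loss of global uniformity forced by $\CD$: in the Schr\"odinger case every $E$ satisfies the F\"urstenberg hypotheses, whereas here one must first pin down the exceptional set --- verifying $\#\CD\le 3$ by an explicit $\SU(1,1)$ analysis --- and then run the whole machine arc-by-arc, carrying the $\CI$-dependence of every constant and largeness threshold. Deducing full pure point spectrum (Theorem~\ref{t:ALCMV}) from the arcwise statement additionally requires noting that $\pD\setminus\CD$ is exhausted by countably many compact arcs and that, for each fixed $z_0\in\CD$, $z_0$ is $\mu$-a.s.\ not an eigenvalue of $\CE_\omega$ (a rank-one perturbation/Borel--Cantelli argument), so the finite set $\CD$ carries no spectral weight almost surely; the half-line Theorem~\ref{t:ALCMVhalfline} follows by applying the same argument to $\CC_\omega$ together with the usual reflection comparison between half-line and whole-line operators.
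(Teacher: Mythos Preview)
Your proposal is correct and follows essentially the same route as the paper's proof in Section~\ref{sec:CMV}: verify the F\"urstenberg hypotheses for the Szeg\H{o} cocycle on $\CI$, port the uniform LDT and H\"older continuity, redo the Green-function and double-resonance estimates, and run the two-pass Avalanche-Principle argument. Two implementation details differ from your description and are worth flagging. First, the exceptional set $\CD$ arises \emph{only} from possible failure of strong irreducibility: the paper shows (via a commutator computation and \cite[Lemma~10.4.14]{S2}) that $G_{\nu_z}$ is noncompact and contracting for \emph{every} $z\in\pD$, and then invokes \cite[Theorem~10.4.15]{S2} for the bound $\#\CD\le 3$; your description mixing ``simultaneously elliptic'' into the characterization of $\CD$ is off. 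Second, the CMV Green-function machinery is not a parity issue but a boundary-condition one: the paper uses Kr\"uger's framework \cite{Krueg2013IMRN}, in which truncations carry a pair $(\tau_1,\tau_2)\in\{\pm1\}^2$ of boundary phases, and the crucial analog of \eqref{eq:transfertodet} is \eqref{eq:szegoTM2Dets}, so that when $\|M_n^z\|$ is large one \emph{wiggles $\tau_1,\tau_2$} (not the interval endpoints) to make the characteristic polynomial large. This feeds into the double-resonance elimination, where one must allow all four boundary choices rather than two parities.
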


Using the same two-pass approach as in Section~\ref{sec:localization}, we can prove a CMV version of SULE.

\begin{theorem}[SULE for CMV matrices] \label{t:CMVSULE}
Let $\CI$ and $\Omega_\CI$ be as in Theorem~\ref{t:ALCMV2}, and suppose $\omega \in \Omega_\CI$. For every $\delta>0$, there exist constants $C_\delta, C_{\omega,\delta}$ such that for every eigenfunction of $u$ of $\CE_\omega$ having eigenvalue $z \in \CI$, one has
\[
|u(\bar \zeta + n)|
\leq
C_{\omega,\delta} \|u\|_\infty e^{C_\delta \log^{22}(|\bar \zeta| + 1)} e^{-(1-\d)L(z)|n|}
\]
for all $n \in \Z$ and some $\bar \zeta = \bar \zeta(u)$.
\end{theorem}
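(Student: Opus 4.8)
The plan is to re-run the two-pass argument of Section~\ref{sec:localization} essentially verbatim, with the Schr\"odinger transfer matrices $M_n^E$ replaced by the $2\times2$ Gesztesy--Zinchenko transfer matrices $M_n^z$ of the extended CMV matrix and with the spectral parameter restricted to a compact arc $\CI \subset \pD \setminus \CD$. First I would record the structural input obtained exactly as in Sections~\ref{s:positivity_continuity_LE}--\ref{sec:green} and summarized by Theorem~\ref{t:ALCMV2}: after the standard $z$-dependent normalization the CMV transfer cocycle over the full shift takes values in a group conjugate to $\SL(2,\R)$ (via the Cayley transform $\SU(1,1) \cong \SL(2,\R)$), F\"urstenberg's theorem applies whenever $G_{\nu_z}$ is noncompact and strongly irreducible, and this holds for every $z \in \pD$ outside the finite set $\CD$ with $\#\CD \le 3$; hence on each compact arc $\CI \subset \pD \setminus \CD$ one has uniform positivity $\gamma_\CI := \inf_{z\in\CI}L(z)>0$, H\"older continuity of $L$ on $\CI$, and a uniform (in $z \in \CI$) large deviation theorem for $\tfrac1n\log\|M_n^z(\omega)\|$. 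From the LDT one then obtains, exactly as in Corollary~\ref{cor:LEandGreenEst_general}, a full-measure set $\Omega_+(\e)$ on which the ``centered'' upper bounds $\tfrac1n\log\|M_n^z(T^{\zeta_0}\omega)\| \le L(z)+2\e$ and the corresponding bounds on truncated Green functions of $\CE_\omega$ hold for $n \ge \max(\widetilde n_1,\log^2(|\zeta_0|+1))$, uniformly over $z \in \CI$. The one genuinely new ingredient is the solution-to-Green-function and transfer-matrix-to-determinant dictionary for truncated CMV matrices: since $\CE_\omega$ is five-diagonal, the block truncations must be taken at the natural spots of the $LM$ factorization so as to remain unitary, and one then has standard OPUC analogs of \eqref{eq:greenstodet}, \eqref{eq:transfertodet}, \eqref{eq:greenBoundsFromTMBounds}, and of the solution--Green identity \eqref{eq:deveGreen}, with boundary terms involving one coordinate on each side; these I would simply quote.

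Next I would prove the CMV double resonance estimate, the exact analog of Proposition~\ref{prop:doubleRes} with $H$ replaced by $\CE$ and $\Sigma$ by $\CI$: since $\|G^z_{\CE_\omega,[-N_1,N_2]}\|$ and $M_m^z(T^{\zeta+r}\omega)$ depend on disjoint finite blocks of Verblunsky coefficients, independence and the uniform LDT and H\"older estimates give $\mu(\CD_N^\CI(\e)) \le Ce^{-\widetilde\eta N}$, and Borel--Cantelli yields a full-measure set $\Omega_-^\CI(\e)$. Intersecting $\Omega_+^\CI(\e)\cap\Omega_-^\CI(\e)$ over a countable sequence $\e \downarrow 0$, intersecting with the set on which $\sigma(\CE_\omega)\cap\CI$ is the almost sure spectrum in $\CI$, and symmetrizing under the reflection $[\CR\omega]_n = \omega_{-1-n}$ produces the full-measure set $\Omega_\CI$.

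The two passes then proceed exactly as in Section~\ref{sec:localization}. In the first pass, for $\omega \in \Omega_\CI$ and a generalized eigenvalue $z \in \CI$ (the CMV Schnol/Simon theorem guarantees the spectral measures in $\CI$ are carried by such $z$), with polynomially bounded solution $u$ normalized so $u(\zeta)=1$ for $\zeta \in \{0,1\}$ and $K = K(N,\zeta)$ as in \eqref{eq:baseDef}, the analogs of Claim~\ref{cl:grnfcts} and Claim~\ref{claim:2} produce blocks $\Lambda_1,\Lambda_2$ flanking $\zeta$ on which the truncated Green function decays at rate $L(z)$ and hence $|u(\zeta+\ell_i)| \le e^{-2K^2}$; inserting these into the CMV Green identity on the window between $\ell_1$ and $\ell_2$ forces $\|G^z_{\CE_\omega,[-N_1,N_2]}\| \ge e^{K^2}$, so $\omega \notin \CD_N^\CI(\e)$ forces \eqref{eq:doubleRes:FNupperBound_centered} to fail, giving lower bounds on $\tfrac1m\log\|M_m^z(T^{\zeta+r}\omega)\|$ for $m\in\{K,2K\}$; these, with the $\Omega_+$ upper bounds, feed the Avalanche Principle (Lemma~\ref{l.avlanche-principle}) and interpolation to yield $\liminf_n \tfrac1n\log\|M_n^z(T^\zeta\omega)\| \ge L(z)$, hence equality, and with Ruelle's theorem (Theorem~\ref{t:ruelle}) this gives exponential decay of $u$ at rate $L(z)$ at both ends --- this is Theorem~\ref{t:ALCMV2}. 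In the second pass, for a genuine eigenfunction $u$ with $\|u\|_\infty=1$ one takes $\bar\zeta$ a site of maximum and reruns the first pass centered there with $K = K(N,\bar\zeta) = \lceil \e^{-1}\max(N,2\log^2(|\bar\zeta|+1))\rceil$; by Remark~\ref{r:N0:indep:u} all largeness thresholds are now $u$-independent, so $\tfrac1\ell\log\|M_\ell^z(T^{\bar\zeta}\omega)\| \ge L(z)-6\e$ holds for $K^{11}+K^{10}\le\ell\le\overline K$, which converts back to $|u(\bar\zeta+n)| \le Ce^{-(1-\d)L(z)n}$ for $n \gtrsim \e^{-11}\max(N_0,2\log^2(|\bar\zeta|+1))^{11}$; estimating the finitely many remaining $n$ trivially produces the $e^{C_\d\log^{22}(|\bar\zeta|+1)}$ factor, choosing $\e=\e(\d,\widetilde\mu)$ small gives the rate $(1-\d)L(z)$, and reflection-invariance of $\Omega_\CI$ handles $n<0$, yielding \eqref{sule} in the CMV setting.

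The main obstacle I anticipate is the CMV transfer-matrix / truncated-Green-function dictionary: one must truncate $\CE_\omega$ at the right spots so the blocks are again unitary, carry along the two parity-dependent flavors of Gesztesy--Zinchenko transfer matrices, verify the OPUC analogs of \eqref{eq:greenstodet}, \eqref{eq:transfertodet}, \eqref{eq:greenBoundsFromTMBounds}, \eqref{eq:deveGreen} with constants uniform over the arc $\CI$, and confirm that $G_{\nu_z}$ is noncompact, strongly irreducible, and contracting uniformly in $z \in \CI$ --- which is exactly where excising the $\le 3$ points of $\CD$ is essential, and which forces the countable exhaustion of $\pD \setminus \CD$ by arcs in the deduction of Theorems~\ref{t:ALCMV} and \ref{t:ALCMV2}. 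Once this dictionary is in place, every estimate in Sections~\ref{sec:LDT}--\ref{sec:localization} transfers with only cosmetic changes.
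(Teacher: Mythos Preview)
Your proposal is correct and follows essentially the same two-pass strategy as the paper. One minor implementation difference worth noting: the paper carries out the LDT, Avalanche, and double-resonance steps with the Szeg\H{o} transfer matrices (not GZ) and uses Kr\"uger's finite-volume framework with boundary parameters $\tau_1,\tau_2\in\{\pm 1\}$, so that the analog of \eqref{eq:transfertodet} is obtained by \emph{wiggling the boundary conditions} rather than the interval endpoints (see \eqref{eq:szegoTM2Dets} and the surrounding discussion); the GZ matrices are invoked only at the final step to translate transfer-matrix growth into pointwise eigenfunction decay.
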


Following the same arguments that led to Theorem~\ref{t:DL}, Theorem~\ref{t:CMVSULE} implies a version of dynamical localization for CMV matrices.

\begin{theorem}[dynamical localization for CMV matrices] \label{t:DLCMV}
Let $\CI$ and $\Omega_\CI$ be as in Theorem~\ref{t:ALCMV2}. For any $\omega \in \Omega_\CI$, $\epsilon > 0$, and any $\beta > 0$ with
\[
\beta
<
\gamma(\CI)
:=
\min_{z \in \CI} L(z),
\]
there is a constant $\widetilde C = \widetilde C_{\omega,\beta,\epsilon} > 0$ such that
\[
\sup_{k \in \Z}
\left| \left\langle\delta_m, \CE_\omega^k P_{\CI,\omega}  \delta_n \right\rangle\right|
\leq
\widetilde C e^{\epsilon|m|} e^{-\beta |n-m|},
\]
for all $m,n \in \Z$, where $P_{\CI,\omega} $ denotes the spectral projection of $\CE_\omega$ to the interval $\CI$. \end{theorem}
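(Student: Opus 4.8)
The plan is to deduce Theorem~\ref{t:DLCMV} from the SULE estimate of Theorem~\ref{t:CMVSULE} by repeating, essentially verbatim, the passage from Theorem~\ref{t:sule} to Theorem~\ref{t:DL} carried out in Section~\ref{sec:localization}; the only adjustments are bookkeeping ones forced by the spectral projection $P_{\CI,\omega}$ and the replacement of $e^{-itH_\omega}$ by powers of the unitary $\CE_\omega$. \emph{Step 1 (distribution of centers of localization).} Fix $\omega \in \Omega_\CI$ and let $\{u_\ell\}$ enumerate the $\ell^2$-normalized eigenfunctions of $\CE_\omega$ whose eigenvalues $z_\ell$ lie in $\CI$. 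By Theorem~\ref{t:ALCMV2}, $\CE_\omega$ has pure point spectrum on $\CI$, so $\{u_\ell\}$ is an orthonormal basis of $\mathrm{ran}\,P_{\CI,\omega}$, and by Theorem~\ref{t:CMVSULE} each $u_\ell$ obeys the SULE bound with a localization center $\zeta_\ell \in \Z$. First I would establish the CMV analogue of Proposition~\ref{DistCentLoc}: there is $L_0$ with $\#\{\ell : |\zeta_\ell| \le L\} \le L^2$ for all $L \ge L_0$. The proof is identical to the Schr\"odinger one — restrict the $u_\ell$ with center in $[-L,L]$ to the window $[-4L,4L]$, use SULE together with the uniform boundedness of the prefactor $e^{C_\delta \log^{22}(|\zeta_\ell|+1)}$ over the bounded set $\{|\zeta_\ell|\le L\}$ to see the truncations are nearly orthonormal, deduce that their Gram matrix is strictly diagonally dominant hence invertible, so they are linearly independent in a space of dimension $8L+1$, whence $\#\{\ell : |\zeta_\ell|\le L\}\le 8L+1 \le L^2$.

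\emph{Step 2 (the dynamical bound).} Fix $\beta \in (0,\gamma(\CI))$ and $\epsilon>0$, and choose $\eta > 0$ with $\beta + \eta < \gamma(\CI)$ and $\eta < \epsilon$. Since each $z_\ell \in \partial\D$, we have $|z_\ell^k| = 1$ for every $k\in\Z$; expanding $\delta_n$ in the orthonormal basis $\{u_\ell\}$ of $\mathrm{ran}\,P_{\CI,\omega}$ therefore gives
\[
\sup_{k\in\Z} \left|\left\langle\delta_m, \CE_\omega^k P_{\CI,\omega}\delta_n\right\rangle\right|
=
\sup_{k\in\Z}\left|\sum_\ell z_\ell^k\,\overline{u_\ell(n)}\,u_\ell(m)\right|
\leq
\sum_\ell |u_\ell(m)|\,|u_\ell(n)|.
\]
Then I would insert the SULE bound from Theorem~\ref{t:CMVSULE} — recalling $L(z_\ell) \ge \gamma(\CI) > \beta + \eta$ since $z_\ell\in\CI$ — and use the triangle inequalities $|m-\zeta_\ell| + |n-\zeta_\ell| \ge |m-n|$ and $|m-\zeta_\ell| + |n-\zeta_\ell|\ge |m| - |\zeta_\ell|$ to peel off $e^{\epsilon|m|}e^{-\beta|m-n|}$, leaving a residual weight $e^{-\eta|\zeta_\ell|}$ attached to each summand, exactly as in the proof of Theorem~\ref{t:DL}. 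It then remains to check that
\[
A := \sum_\ell e^{C_\delta\log^{22}(|\zeta_\ell|+1) - \eta|\zeta_\ell|} < \infty,
\]
which follows by grouping summands according to the value of $|\zeta_\ell|$ and applying the count from Step 1: the number of centers at distance $\le L$ grows only polynomially, while $e^{-\eta|\zeta_\ell|}$ decays exponentially and beats the subexponential correction $e^{C_\delta \log^{22}(|\zeta_\ell|+1)}$. This yields the asserted bound with $\widetilde C_{\omega,\beta,\epsilon} = C_\omega^2 A$.

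I do not expect any step to be a serious obstacle, since this is a faithful transcription of Section~\ref{sec:localization}; the point requiring the most care is the near-orthonormality estimate underlying Step~1 — one must verify that the $\log^{22}$ prefactor in SULE does not spoil the strict diagonal dominance of the truncated Gram matrix, but because the centers under consideration all lie in a bounded window this prefactor is uniformly controlled, and the argument proceeds exactly as in Proposition~\ref{DistCentLoc}. The remaining items to keep straight are purely structural: that the eigenfunctions summed over are precisely those with eigenvalue in $\CI$, that their completeness in $\mathrm{ran}\,P_{\CI,\omega}$ comes from pure point spectrum on $\CI$ (Theorem~\ref{t:ALCMV2}), and that $\gamma(\CI)>0$ by continuity and positivity of $L$ on the compact arc $\CI$.
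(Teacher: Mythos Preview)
Your proposal is correct and matches the paper's approach exactly: the paper does not give a separate proof of Theorem~\ref{t:DLCMV} but simply asserts that it follows from Theorem~\ref{t:CMVSULE} by the same arguments that led from Theorem~\ref{t:sule} to Theorem~\ref{t:DL}, which is precisely the two-step plan (distribution of centers via a Gram-matrix argument, then the triangle-inequality peeling of the eigenfunction expansion) you have outlined. The only slip is in the second triangle inequality you quote---what is actually needed is $|m-\zeta_\ell| \ge |\zeta_\ell| - |m|$ so that one extracts $e^{-\eta|\zeta_\ell|}e^{\eta|m|}$---but this is a typo-level issue and the intended computation is identical to the one in the proof of Theorem~\ref{t:DL}.
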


Assume for the moment Theorem~\ref{t:ALCMV2} holds true.

\begin{proof}[Proof of Theorem~\ref{t:ALCMV}]
For $z \in \pD$, define
$$
\Omega_z
:=
\{\omega \in \Omega : z \mbox{ is not an eigenvalue of } \CE_\omega\}.
$$
By a standard argument, $\mu(\Omega_z) = 1$ for each $z\in\partial \D$ (for example, the arguments of \cite{Pastur1980} can easily be modified to the CMV setting). For each integer $n\ge 2$, let $\Omega_{n}$ be the full measure set obtained from Theorem~\ref{t:ALCMV2} for
$$
\CI_n:=\partial\D\setminus \left(\bigcup\limits_{z\in\CD} \set{ ze^{i\t} : - 1/n < \t < 1/n} \right).
$$
Then, take
$$
\Omega_*
:=
\left(\bigcap\limits_{n\ge 2}\Omega_n\right)
\cap
\left(\bigcap\limits_{z\in\CD}\Omega_z\right).
$$
Clearly, $\mu(\Omega_*) = 1$. Moreover, for each $\omega \in \Omega_*$, $\CE_\omega$ has pure point spectrum on $\pD\setminus \CD$, exponentially decaying eigenfunctions for all eigenvalues $z \in \pD \setminus \CD$, and $\CD$ contains no eigenvalue of $\CE_\omega$. Hence $\CE_\omega$ exhibits Anderson localization for each $\omega\in\Omega_*$.
\end{proof}

Theorem~\ref{t:ALCMV2} and ~\ref{t:DLCMV} hold true for $\CC_\omega$ as well and the proofs are nearly identical. Once one has the half-line analog of Theorem~\ref{t:ALCMV2} in hand, the proof of Theorem~\ref{t:ALCMVhalfline} is exactly the same as the proof of Theorem~\ref{t:ALCMV}. We will focus on the proof of Theorem~\ref{t:ALCMV2} in the remainder of the present section and point out the differences between $\CE_\omega$ and $\CC_\omega$ in Remark~\ref{r:CMVhalfline} at the end of the paper.

\subsection{Uniform Positivity, Continuity, and LDT of the Lyapunov exponent}\label{s:SzegoLE}
For each $\alpha \in \CA$ and $z \in \pD$, we define the corresponding \emph{Szeg\H{o} transfer matrices} by
\begin{equation*}
S^z(\a)
:=
\frac{1}{\rho_\a}
\begin{bmatrix}
z & - \bar\a \\
-\a z & 1
\end{bmatrix},
\quad
M^z(\a)
:=
z^{-1/2} S^z(\a)
=
\frac{1}{\rho_\a}
\begin{bmatrix}
\sqrt{z}  &  -\frac{\bar{\alpha}}{\sqrt{z}} \\
-\alpha \sqrt{z} & \frac{1}{\sqrt{z}}
\end{bmatrix},
\end{equation*}
where $\rho_\a := \sqrt{1 - |\alpha| ^{2}}$. For concreteness, we choose the branch of $\sqrt{\cdot\,}$ defined by
\[
\sqrt{e^{i\t}} = e^{i\t/2},
\quad
-\pi < \t \leq \pi.
\]
Notice that $\|S^z\| = \|M^z\|$ for all $z \in \pD$.

We note that $M^z(\a) \in \SU(1,1)$ for every $z \in \pD$ and $\a \in \D$, where $\SU(1,1)$ is defined by
\begin{equation} \label{eq:SU11:conjdef}
\SU(1,1)
:=
Q\cdot \SL(2,\R)\cdot Q^*
=
\{A\in\C^{2\times 2}: A=QBQ^*, \mbox{ for some } B\in\SL(2,\R)\},
\end{equation}	
and
$$
Q
:=
\frac{-1}{1+i}
\begin{bmatrix}
1& -i\\
1& i
\end{bmatrix}\in \mathbb{U}(2).
$$
Equivalently, $\SU(1,1)$ consists of all $2\times 2$ unimodular matrices that preserve the standard quadratic form of signature $(1,1)$, that is, $\SU(1,1) = \set{A \in \SL(2,\C) : A^*JA = J}$, with $J = \vec e_1 \vec e_1^\top - \vec e_2 \vec e_2^\top$. We will freely use facts about the group $\SU(1,1)$ throughout this section; the interested reader is referred to \cite[Section~10.4]{S2} for a thorough account.
For $\omega\in\Omega$, we define an $\SU(1,1)$-cocycle via $M^z(\omega) = M^z(\omega_0)$ and
$$
M^z_n(\omega) = M^z(T^{n-1}\omega)\cdots M^z(\omega) = M^z(\omega_{n-1})\cdots M^z(\omega_0)
$$
for $n \in \Z_+$, as before. The Lyapunov exponent of the cocycle is then given by
$$
L(z)
=
\lim\limits_{n\to\infty} \frac1n\int_{\Omega} \! \log\|M^z_n(\omega)\| \, \dd \mu(\omega).
$$

Our first main goal is to obtain positivity and continuity of $L$ and use those  characteristics to deduce a suitable uniform LDT. We will deduce positivity and continuity by appealing to the machinery of Section~\ref{s:positivity_continuity_LE} and using that $\SU(1,1)$ is unitarily conjugate to $\SL(2,\R)$, as in \eqref{eq:SU11:conjdef}. The M\"obius transformation induced by $Q$ maps the upper half plane to the unit disk and sends the real line to the unit circle. Thus, in view of \eqref{eq:SU11:conjdef}, the M\"obius transformation induced by any element $\mathrm{SU}(1,1)$ preserves the unit circle and unit disk just like $\mathrm{SL}(2,\R)$ preserves the real line and the upper half-plane. Henceforth, we use $\bar A$ to denote the M\"obius transformation induced by an element $A \in \GL(2,\C)$.

Thus to verify all the necessary conditions, we may treat $\mathrm{SU}(1,1)$ matrices just as $\mathrm{SL}(2,\R)$ matrices. More concretely, condition (i) of Theorem~\ref{furst} and the contraction property may be verified directly (and these properties are clearly invariant under conjugation by $Q$). Since $\bar Q$ maps the real line to the unit circle, we will say that a subgroup $G \subset \SU(1,1)$ is \emph{strongly irreducible} if there is no finite subset $\CF \subset \pD$ such that $\bar B(\CF) = \CF$ for all $B \in G$; in particular, $G \subset \SU(1,1)$ is strongly irreducible in this sense if and only if $Q^* G Q$ is a strongly irreducible subgroup of $\SL(2,\R)$. In fact, once we know that $G$ is noncompact, we only need to verify condition (ii') as stated in the remark following Theorem~\ref{furst}. If $\nu$ is supported in $\SU(1,1)$, the appropriate version of condition (ii') is the statement: there is no $\CF\subset\pD$ with cardinality $1$ or $2$ such that $\bar B(\CF) = \CF$ for every $B \in G_{\nu}$.

In view of the foregoing discussion, our goal is to show that $G_{\nu_z}$ is non-compact and contracting for every $z\in\pD$, and to find a finite set $\CD = \CD(\CA)$ such that $G_{\nu_z}$ is strongly irreducible (as a subgroup of $\SU(1,1)$) for every $z\in \pD \setminus \CD$ (here, $\nu_z = M^z_* \widetilde\mu$ as before). Then, Theorem~\ref{furst} ensures that $L(z) > 0$ for all $z \notin \CD$. Moreover, since $G_{\nu_z}$ is contracting, $L(z)$ is continuous on $\pD$. Consequently, we obtain $L(z) \geq \gamma > 0$ for all $z$ in any compact interval $\CI \subset \pD \setminus \CD$. From there, the strong irreducibility condition and the contraction property ensure the uniform LDT (uniform over $z\in\CI$). Then, H\"older continuity of $L(z)$ on $\CI$ follows from the uniform positivity and uniform LDT of $L(z)$ and the fact that $\SU(1,1)$ is conjugate to $\SL(2,\R)$ (which ensures the applicability of the Avalanche Principle).

First, the following proposition follows from \cite[Lemma~10.4.14]{S2}.

\begin{prop}\label{p:hyperMz}
 If $[M^z(\a),M^z(\b)] := M^z(\a) M^z(\b) - M^z(\b) M^z(\a) \neq 0$, then the subgroup generated by $\{M^z(\a), M^z(\b)\}$ contains a non-elliptic element.
\end{prop}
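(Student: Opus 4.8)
The plan is to reduce the statement to a single trace computation in $\SL(2,\R)$, exploiting that $\SU(1,1) = Q\,\SL(2,\R)\,Q^{*}$ by \eqref{eq:SU11:conjdef}: conjugation by $Q$ preserves traces and the elliptic/parabolic/hyperbolic trichotomy, and it is more comfortable to work in the upper half-plane $\HH$. Write $A = M^{z}(\a)$ and $B = M^{z}(\b)$; the hypothesis $[A,B]\ne 0$ says precisely that $A$ and $B$ do not commute as group elements (for invertible matrices, $AB - BA = 0 \iff ABA^{-1}B^{-1} = \bbI$). I will exhibit a \emph{hyperbolic} element of the subgroup generated by $\{A,B\}$, namely the group commutator $C := ABA^{-1}B^{-1}$, which manifestly lies in that subgroup.

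If $A$ or $B$ is already non-elliptic we are done, so assume both are elliptic; since they do not commute, neither equals $\pm\bbI$. As traces are conjugation-invariant, I may transport the pair $(A,B)$ to $\SL(2,\R)$ by conjugating with $Q^{*}$ and then normalize within $\SL(2,\R)$. Each elliptic element of $\SL(2,\R)$ has a unique fixed point in $\HH$ and is $\SL(2,\R)$-conjugate to a rotation $R(\vartheta) = \left[\begin{smallmatrix}\cos\vartheta & -\sin\vartheta\\ \sin\vartheta & \cos\vartheta\end{smallmatrix}\right]$ about $i$, with $\sin\vartheta\ne 0$ (otherwise the matrix is $\pm\bbI$). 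The fixed points of $A$ and $B$ must be distinct: if they coincided, conjugating that point to $i$ would place both matrices in $\mathrm{Stab}_{\SL(2,\R)}(i) = \SO(2)$, which is abelian, contradicting non-commutativity. Hence, after an $\SL(2,\R)$-conjugation — first move the fixed point of $A$ to $i$, then rotate about $i$ to bring the fixed point of $B$ onto the positive imaginary axis — I may assume
\[
A = R(\varphi), \qquad B = D\,R(\psi)\,D^{-1}, \qquad D = \begin{bmatrix}\sqrt{t} & 0\\ 0 & 1/\sqrt{t}\end{bmatrix},
\]
with $t > 0$, $t \ne 1$, and $\sin\varphi,\sin\psi \ne 0$.

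The main step is then routine. Using the Fricke identity $\mathrm{tr}(C) = \mathrm{tr}(A)^{2} + \mathrm{tr}(B)^{2} + \mathrm{tr}(AB)^{2} - \mathrm{tr}(A)\mathrm{tr}(B)\mathrm{tr}(AB) - 2$ (an elementary consequence of Cayley--Hamilton), together with $\mathrm{tr}(A) = 2\cos\varphi$, $\mathrm{tr}(B) = 2\cos\psi$ and the direct computation $\mathrm{tr}(AB) = 2\cos\varphi\cos\psi - (t + t^{-1})\sin\varphi\sin\psi$, one simplifies to
\[
\mathrm{tr}(C) = 2 + \bigl(t - t^{-1}\bigr)^{2}\sin^{2}\!\varphi\,\sin^{2}\!\psi = 2 + 4\,\sinh^{2}\!\bigl(d_{\HH}(i,it)\bigr)\,\sin^{2}\!\varphi\,\sin^{2}\!\psi,
\]
the second form being a convenient check with clear geometric meaning. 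Since $t \ne 1$ and $\sin\varphi,\sin\psi \ne 0$, this is strictly greater than $2$, so $C$ is hyperbolic; conjugating back by $Q$, the element $ABA^{-1}B^{-1}$ of the subgroup generated by $\{M^{z}(\a),M^{z}(\b)\}$ is hyperbolic, in particular non-elliptic. I do not expect a genuine obstacle: the only points needing care are the normal-form reduction — above all the verification that the two elliptic fixed points are distinct, which is exactly where the hypothesis $[A,B]\ne 0$ is used — and the bookkeeping in the trace simplification. (Note the argument uses nothing about $M^{z}(\a),M^{z}(\b)$ beyond their lying in $\SU(1,1)$.)
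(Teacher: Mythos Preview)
Your argument is correct. The reduction to $\SL(2,\R)$ and the normal form for two elliptic elements with distinct fixed points is carried out carefully (the observation that a common fixed point would place both in the abelian stabilizer $\SO(2)$ is exactly where the non-commutation hypothesis enters), and the Fricke trace identity then yields $\mathrm{tr}(ABA^{-1}B^{-1}) = 2 + (t - t^{-1})^{2}\sin^{2}\!\varphi\,\sin^{2}\!\psi > 2$, so the group commutator is hyperbolic. I checked the trace simplification and it is right.

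The paper's own proof is not a proof at all in the usual sense: it simply reduces to the elliptic case and then invokes \cite[Lemma~10.4.14]{S2} as a black box. Your argument is therefore a genuinely more self-contained route --- you are in effect supplying an explicit proof of that lemma, and moreover you identify the specific non-elliptic element (the group commutator) rather than merely asserting one exists. The cost is a short computation with the Fricke identity; the benefit is that the paper's goal of being ``complete and self-contained'' is better served by your version, since one fewer external reference is needed. As you note, nothing about the particular form of $M^{z}(\a)$, $M^{z}(\b)$ is used beyond membership in $\SU(1,1)$, so your argument proves the general fact that any two non-commuting elliptic elements of $\SU(1,1)$ (equivalently $\SL(2,\R)$) have a hyperbolic commutator.
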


\begin{proof}
If one of $M^z(\a)$ or $M^z(\b)$ is non-elliptic, there is nothing to do; otherwise, both are elliptic, in which case one may apply \cite[Lemma~10.4.14]{S2} to deduce the existence of a hyperbolic element in the subgroup of $\SU(1,1)$ that they generate.
\end{proof}

We first verify condition~(i) of Theorem~\ref{furst} and the contraction property.

\begin{prop}
For every $z \in \pD$, the group $G_{\nu_z}$ is noncompact and contracting.
\end{prop}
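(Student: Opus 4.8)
The plan is to exhibit, for every $z\in\pD$, a single hyperbolic element lying in $G_{\nu_z}$; both assertions then follow immediately. Since $\#\CA\ge 2$, fix two distinct points $\alpha,\beta\in\CA$. The Szeg\H{o} transfer matrices $M^z(\alpha)$ and $M^z(\beta)$ lie in $\supp\nu_z\subseteq G_{\nu_z}$, and because $M^z(\gamma)\in\SU(1,1)\subset\SL(2,\C)$ (so $\det M^z(\gamma)=1$) for every $\gamma\in\D$, the group $G_{\nu_z}$ also contains $A:=M^z(\alpha)\,M^z(\beta)^{-1}$. First I would compute $A$ explicitly: writing $M^z(\beta)^{-1}$ via the adjugate formula and multiplying out, the powers of $\sqrt z$ cancel and one obtains the $z$-\emph{independent} matrix
\[
A=\frac{1}{\rho_\alpha\rho_\beta}\begin{bmatrix}1-\bar\alpha\beta & \bar\beta-\bar\alpha\\ \beta-\alpha & 1-\alpha\bar\beta\end{bmatrix},\qquad \det A=1,\quad \mathrm{tr}\,A=\frac{2\bigl(1-\Re(\alpha\bar\beta)\bigr)}{\rho_\alpha\rho_\beta}.
\]

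Next I would bound the trace from below. Since $\Re(\alpha\bar\beta)\le|\alpha||\beta|$ and, after squaring, $1-|\alpha||\beta|\ge\sqrt{(1-|\alpha|^2)(1-|\beta|^2)}=\rho_\alpha\rho_\beta$ (this reduces to $(|\alpha|-|\beta|)^2\ge 0$), we get $\mathrm{tr}\,A\ge 2$, and equality would force simultaneously $|\alpha|=|\beta|$ and $\alpha\bar\beta\ge 0$, i.e.\ $\alpha=\beta$. As $\alpha\neq\beta$, we conclude $\mathrm{tr}\,A>2$; combined with $\det A=1$ this forces $A$ to have two real eigenvalues $\lambda>1>\lambda^{-1}>0$, so $A$ is hyperbolic. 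In particular $\|A^n\|\ge\lambda^n\to\infty$, so $G_{\nu_z}$ is noncompact. For the contraction property, diagonalize $A=P\,\mathrm{diag}(\lambda,\lambda^{-1})\,P^{-1}$; then $\lambda^{-n}A^n\to R:=P\,\mathrm{diag}(1,0)\,P^{-1}\neq 0$, a rank-one operator, whence $\|A^n\|^{-1}A^n=\lambda^{-n}A^n/\|\lambda^{-n}A^n\|\to R/\|R\|$, which is rank one. Taking $g_n=A^n$ in Definition~\ref{d:contraction} shows $G_{\nu_z}$ is contracting.

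There is no genuine obstacle here: everything comes down to the explicit $2\times 2$ product $A=M^z(\alpha)M^z(\beta)^{-1}$ and the elementary trace estimate. The one point worth flagging is the choice to work with this product rather than with a commutator together with Proposition~\ref{p:hyperMz}: the product $A$ is manifestly independent of $z$ and has trace $>2$ for all $z\in\pD$ at once, whereas $[M^z(\alpha),M^z(\beta)]$ degenerates at $z=1$ and would need a separate argument there. Thus the product route disposes of all $z\in\pD$ uniformly.
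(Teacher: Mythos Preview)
Your proof is correct and in fact cleaner than the paper's. The paper splits into two cases: for $z=1$ it picks a nonzero $\alpha\in\CA$ and observes that the single transfer matrix $M^1(\alpha)$ already has trace $2/\rho_\alpha>2$; for $z\neq 1$ it checks that the commutator $[M^z(\alpha),M^z(\beta)]$ is nonzero (the off-diagonal entries involve factors of $1-z$ and $1-z^{-1}$, which vanish at $z=1$) and then invokes Proposition~\ref{p:hyperMz} to extract a non-elliptic element. You instead build the single element $A=M^z(\alpha)M^z(\beta)^{-1}$, note that all $\sqrt z$ factors cancel so that $A$ is literally $z$-independent, and prove $\mathrm{tr}\,A>2$ by the elementary AM--GM-type estimate $1-\Re(\alpha\bar\beta)\geq 1-|\alpha||\beta|\geq\rho_\alpha\rho_\beta$ with equality only when $\alpha=\beta$. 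This handles all $z\in\pD$ at once, avoids the case split, and sidesteps the black-box appeal to Proposition~\ref{p:hyperMz}. The paper's route has the minor advantage that it reuses machinery already quoted from \cite{S2}, but your direct computation is self-contained and, as you observe, uniform in $z$.
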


\begin{proof}
Let $z \in \pD$ be given, denote $G = G_{\nu_z}$, and let us note that it suffices to find a non-elliptic element $A \in G$ to obtain both noncompactness and the contraction property; more specifically, if $A \in \SU(1,1)$ is hyperbolic or parabolic, then $\|A^n\|$ becomes unbounded as $n\to\infty$ and $\|A^n \|^{-1} A^n$ converges to a rank-one operator. There are two cases to consider:
\medskip

\noindent \textbf{Case 1: \boldmath $z=1$.} Since $\CA$ contains at least two points, choose $\a \neq 0$ in $\CA$. Then,
\[
M^z(\a)
=
\frac{1}{\rho_\a}
\begin{bmatrix}
1 & -\bar\a \\
-\a & 1
\end{bmatrix}
=:
A
\]
satisfies $\mathrm{tr}(A) = 2\rho_\alpha^{-1} > 2$, so $A$ is a hyperbolic element of $\SU(1,1)$.
\medskip

\noindent \textbf{Case 2: \boldmath $z \neq 1$.} In this case, choose $\a\neq\b$ in $\CA$. Then, one can check that
\begin{equation} \label{eq:Mzcommutator}
M^z(\a) M^z(\b)
-
M^z(\b) M^z(\a)
=
\frac{1}{\rho_\a \rho_\b}
\begin{bmatrix}
\bar{\a}\b - \a\bar{\b} & (\bar{\a} - \bar{\b})(1-z^{-1})  \\
(\a-\b)(1-z) & \a\bar{\beta} - \bar{\a}\beta
\end{bmatrix}
\neq
0.
\end{equation}
Therefore, $G_{\nu_z}$ contains a non-elliptic element by Proposition~\ref{p:hyperMz}, as desired.
\end{proof}

Next we note that strong irreducibility condtion (ii') essentially follows from \cite[Theorem 10.4.15]{S2}.

\begin{prop} \label{p:CMVfurst:ii'}
Suppose $\#\CA\ge 2$, then condition (ii') fails at most at a finite set $\CD\subset \pD$ with $\#\CD\le 3$.
\end{prop}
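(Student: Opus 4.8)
The plan is to make the exceptional set $\CD$ explicit by analysing the M\"obius action of $\SU(1,1)$ on $\pD$, in the spirit of \cite[Theorem~10.4.15]{S2}. The first reduction I would make is monotonicity in $\CA$: if $\CA'\subseteq\CA$ then $G_{\nu_z}$ contains the closed subgroup generated by $\{M^z(\a):\a\in\CA'\}$, so the set of $z$ for which there is a nonempty $G_{\nu_z}$-invariant $\CF\subset\pD$ with $\#\CF\le 2$ can only shrink as $\CA$ grows. Hence it suffices to fix two distinct points $\a\neq\b$ in $\CA$ — which, after a short separate check in the degenerate case $\CA=\{0,c\}$, I may take to be nonzero — set $A:=M^z(\a)$, $B:=M^z(\b)$, and bound the set $\CD$ of $z\in\pD$ for which the group generated by $A$ and $B$ admits such an $\CF$. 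The strategy is then to show that the existence of an invariant set of cardinality $1$, and of cardinality $2$, each forces $z$ to satisfy an explicit polynomial relation, and to carry out the resulting bookkeeping.

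For cardinality $1$: I would first record that the fixed points of the M\"obius transformation $\overline{M^z(\a)}$ are the roots of $\a z\,w^{2}-(1-z)\,w-\bar\a=0$. A point $w$ fixed by both $\overline A$ and $\overline B$ also solves the corresponding equation with $\b$; subtracting gives $(\a-\b)\,z\,w^{2}=\bar\a-\bar\b$, and substituting this back removes $w^{2}$ and leaves $(1-z)\,w\,(\a-\b)=2i\,\Im(\bar\a\b)$. If $\Im(\bar\a\b)=0$ this forces $z=1$ (since $w\neq 0$ on $\pD$); if $\Im(\bar\a\b)\neq 0$, eliminating $w$ between the last two identities produces a genuine quadratic in $z$ with leading coefficient $|\a-\b|^{2}\neq 0$, hence at most two exceptional $z$.

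For cardinality $2$, which I expect to be the hard part: write $\CF=\{w_1,w_2\}\subset\pD$ and note that each of $\overline A,\overline B$ either fixes both $w_i$ or transposes them. If a generator transposes $w_1\leftrightarrow w_2$, then conjugating the pair to $\{0,\infty\}$ shows that generator squares to the identity M\"obius map, i.e.\ to $\pm I$ in $\SU(1,1)$; excluding $B^{2}=I$ (which would force $B=\pm I$, impossible for $\b\neq 0$) leaves square $-I$, hence trace $0$, and since $\mathrm{tr}\,M^z(\a)=\rho_\a^{-1}(z+1)z^{-1/2}$ vanishes only at $z=-1$, this pins $z=-1$. If instead both $\overline A$ and $\overline B$ fix both $w_i$, they share the fixed-point pair $\{w_1,w_2\}$ and therefore commute as M\"obius maps, so $AB=\pm BA$ in $\SU(1,1)$: I would exclude the ``$+$'' case for $z\neq 1$ using the commutator formula \eqref{eq:Mzcommutator} (whose lower-left entry $(\a-\b)(1-z)$ is nonzero there), noting that at $z=1$ it collapses to $\Im(\bar\a\b)=0$ (already covered above), while the ``$-$'' case again forces $\mathrm{tr}\,A=\mathrm{tr}\,B=0$, i.e.\ $z=-1$.

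Assembling the cases: when $\Im(\bar\a\b)=0$ one gets $\CD\subseteq\{1,-1\}$, and when $\Im(\bar\a\b)\neq 0$ one gets $\CD$ contained in $\{-1\}$ together with the (at most two) roots of the quadratic from the cardinality-one step — a cheap estimate using $|\a|,|\b|<1$ moreover shows $-1$ is not one of those roots, so the bound $3$ is essentially sharp. Either way $\#\CD\le 3$, and separately $\#\CD\le 2$ in the degenerate case $\CA=\{0,c\}$ (where $\overline{M^z(0)}$ is diagonal, with fixed points $0,\infty\notin\pD$). The step I expect to require the most care is the cardinality-two analysis — specifically the two reductions ``a transposing generator squares to $-I$, hence $z=-1$'' and ``two generators fixing a common pair of points are projectively equal (killed by \eqref{eq:Mzcommutator}) or anticommute (again $z=-1$)'' — but once these are in hand the remainder is routine manipulation of a handful of polynomial identities in $z$, exactly as in the mechanism behind \cite[Theorem~10.4.15]{S2}.
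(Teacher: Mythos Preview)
Your argument is essentially correct and, in fact, supplies considerably more detail than the paper itself: the paper does not prove Proposition~\ref{p:CMVfurst:ii'} directly but simply cites \cite[Theorem~10.4.15]{S2} (and, in the surrounding discussion, \cite[Theorem~10.4.19]{S2} for the conditions under which $\CD=\emptyset$). What you have sketched is precisely the mechanism behind that result, carried out via the fixed-point analysis on $\pD$.

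One small redundancy worth flagging: in your ``both fix both'' subcase you allow for $AB=\pm BA$, but in fact the minus sign cannot occur. Two M\"obius maps sharing a pair of fixed points are simultaneously diagonalizable (conjugate $\{w_1,w_2\}$ to $\{0,\infty\}$), and diagonal matrices commute on the nose, so $AB=BA$ as matrices. This does not damage your argument---it just means the anticommuting branch is vacuous and the only contribution from that subcase is $z=1$ (with the constraint $\Im(\bar\alpha\beta)=0$ via \eqref{eq:Mzcommutator}), exactly as you conclude. Also, your closing summary phrase ``projectively equal'' is a slip; you mean ``projectively commuting,'' which is what you correctly wrote in the body of the argument.

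Your quadratic in the cardinality-one step is correct, and your remark that $z=-1$ is never among its roots (via $|\Im(\bar\alpha\beta)|<|\alpha-\beta|$, using $|\alpha|,|\beta|<1$) is also right; one can add that the discriminant of that quadratic is negative, so its two roots are genuine complex conjugates lying on $\pD$, confirming that the count of three is sharp in general.
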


From the proof of Proposition~\ref{p:CMVfurst:ii'} as in \cite{S2}, we see that the set $\CD$ arises from very particular geometric degeneracies, and it is in fact empty for many choices of a compact set $\CA \subset \D$. More precisely, in view of \cite[Theorem~10.4.19]{S2} we see that $\CD$ is empty as soon as the following two conditions are met:
\begin{itemize}
\item $\CA$ is not contained in a single circle or line that intersects $\pD$ orthogonally.
\item The set
\[
\set{\frac{|\Im(\bar \alpha \beta)|}{|\alpha-\beta|} : \alpha \neq \beta \text{ and } \alpha, \beta \in \CA}
\]
contains at least two elements.
\end{itemize}
In particular, one can simply take $\CI = \pD$ when these conditions hold true. Note this in particular implies that a full version of exponential dynamical localization in Theorem~\ref{t:DLCMV}, i.e. there is no need to add $P_\CI(\CE_\omega)$. One may find more details regarding the first condition in \cite{S2}. To meet the second condition, $\CA$ must contain at least three non-colinear points, and, if $\#\CA = 3$, then the incenter of the triangle\footnote{That is, the point of intersection of the angle bisectors of that triangle.} with vertices at the points of $\CA$ must be distinct from 0.

\subsection{Estimating Transfer Matrices and Green Functions}\label{s:GreenEstCMV}
From now on, we will focus on an arbitrary fixed compact interval $\CI\subset\pD\setminus\CD$, on which a uniform LDT for $L(z)$ shall hold. Following the arguments of previous sections, one can show the following analog of Proposition~\ref{prop:meanLDTshifted_sigma}. The proof is nearly identical; one need only replace $E \in \hat\Sigma$ by $z \in \CI$ and make small cosmetic modifications.

\begin{prop} \label{prop:CMVmeanLDTshifted}
For any $0 < \e < 1$, there exists a subset $\Omega_+ = \Omega_+(\e) \subset \Omega$ of full $\mu$-measure such that the following statement holds true. For every $\e \in (0,1)$ and every $\omega \in \Omega_+(\e)$, there exists $\widetilde n_0 = \widetilde n_0(\omega,\e)$ such that
\begin{equation} \label{eq:CMVpartsum}
\left|L(z) - \frac{1}{n^{2}}\sum_{s=0}^{n^{2}-1} \frac{1}{n} \log \left\| M^z_n(T^{sn +\zeta} \omega) \right\| \right|
<
\e
\end{equation}
for every $\zeta\in\Z$, every $n \geq \max\{\widetilde n_0, \log^{\frac23}(|\zeta|+1)\}$, and every $z \in \CI$.
\end{prop}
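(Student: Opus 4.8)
The plan is to transcribe the proof of Proposition~\ref{prop:meanLDTshifted_sigma} essentially verbatim, replacing $E \in \hat\Sigma$ by $z \in \CI$ and $M_n^E$ by $M_n^z$, once two preliminary ingredients have been set up for the $\SU(1,1)$ Szeg\H{o} cocycle over $\CI$. The first is the CMV analog of the perturbative Lipschitz bounds of Lemma~\ref{lem:Fnlipschitz}: with $\Gamma_\CI := \sup\{\|S^z(\alpha)\| : z \in \CI,\ \alpha \in \CA\}$, finite by compactness of $\CI$ and $\CA$, one has $\|M_n^z(\omega) - M_n^{z'}(\omega')\| \le n \Gamma_\CI^{n-1}\bigl(|z - z'| + \max_{0\le j<n}|\omega_j - \omega_j'|\bigr)$, and hence, writing $F_n(\omega,z) := |n|^{-1}\log\|M_n^z(\omega)\|$, also $|F_n(\omega,z) - F_n(\omega',z')| \le \Gamma_\CI^{n-1}\bigl(|z-z'| + \max_{0\le j<n}|\omega_j - \omega_j'|\bigr)$. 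The one point needing care is the branch of the square root in $M^z(\alpha) = z^{-1/2} S^z(\alpha)$; this is painlessly avoided by running the telescoping estimate for the Szeg\H{o} matrices $S^z(\alpha)$, whose entries are polynomials in $z$ (so there is no branch ambiguity), and noting that $|z| = 1$ forces $\|M_n^z(\omega)\| = \|S_n^z(\omega)\|$, so $F_n$ is unaffected. The second ingredient is the CMV analog of Lemma~\ref{lem:mean_ldt}: for every $\e > 0$ there is $n_0 = n_0(\e)$ such that $\mu\{\omega : |L(z) - r^{-1}\sum_{s=0}^{r-1} F_n(T^{sn}\omega,z)| \ge \e\} \le e^{-\e r/2}$ for all $z \in \CI$, $r \in \Z_+$, and $n \ge n_0$. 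Its proof is identical to the Schr\"odinger one: reduce to $\ell = 0$ by $T$-invariance; apply the triangle and Chebyshev inequalities to dominate the measure by $e^{-r\e}\int_\Omega \exp\bigl(\sum_{s=0}^{r-1}|L - F_n\circ T^{sn}|\bigr)\,\dd\mu$; observe that, since the blocks of coordinates of $\omega$ governing $F_n \circ T^{sn}$ are pairwise disjoint for distinct $s$, the integrand factors, so this equals $\bigl(e^{-\e}\int_\Omega e^{|L - F_n|}\,\dd\mu\bigr)^r$; and bound the single-factor integral by $C\Gamma_\CI^2 e^{-\eta n} + e^\delta$ using the uniform LDT for $L(z)$ on $\CI$ (the $\SU(1,1)$-version of Theorem~\ref{t:LDT}, valid on $\CI$ by Section~\ref{s:SzegoLE}), after which one picks $\delta < \e/2$ and $n_0$ large.

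Granting these, the proof follows Proposition~\ref{prop:meanLDTshifted_sigma} step for step. For large $n$ and $\zeta_0 \in \Z$ let $\CB_{n,\zeta_0}$ denote the set of $\omega$ with $\sup_{z \in \CI}|L(z) - n^{-2}\sum_{s=0}^{n^2-1} F_n(T^{\zeta_0 + sn}\omega, z)| \ge \e$. Fix a finite $\delta$-net $\CI_0 \subset \CI$ with $\delta = \e(3\Gamma_\CI^n)^{-1}$, so $\#\CI_0$ is $O(\Gamma_\CI^n/\e)$. Using the Lipschitz bound for $F_n$ and H\"older continuity of $L$ on $\CI$ (the $\SU(1,1)$-analog of Theorem~\ref{t:HolderContinuity}, again from Section~\ref{s:SzegoLE}) --- after enlarging $n$ so that $C(\e/3\Gamma_\CI^n)^\beta < \e/3$ --- one obtains $\CB_{n,\zeta_0} \subset \bigcup_{z \in \CI_0}\{\omega : |L(z) - n^{-2}\sum_{s=0}^{n^2-1} F_n(T^{\zeta_0 + sn}\omega, z)| \ge \e/3\}$. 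Applying the mean LDT with $r = n^2$ at each of the $O(\Gamma_\CI^n/\e)$ grid points, a union bound gives $\mu(\CB_{n,\zeta_0}) \le C\Gamma_\CI^n \e^{-1}\exp(-\e n^2/6)$, whence $\mu\bigl(\bigcup_{|\zeta_0| \le e^{n^{3/2}}} \CB_{n,\zeta_0}\bigr) \le e^{-c\e n^2}$ for $n$ large. By Borel--Cantelli, $\Omega_+ = \Omega_+(\e) := \Omega \setminus \limsup_n \bigcup_{|\zeta_0| \le e^{n^{3/2}}} \CB_{n,\zeta_0}$ has full $\mu$-measure; for $\omega \in \Omega_+$ choose $\widetilde n_0 = \widetilde n_0(\omega,\e)$ with $\omega \notin \CB_{n,\zeta_0}$ whenever $n \ge \widetilde n_0$ and $|\zeta_0| \le e^{n^{3/2}}$, and then exchange the roles of $n$ and $\zeta_0$ to conclude $\omega \notin \CB_{n,\zeta_0}$ whenever $n \ge \max\bigl(\widetilde n_0, (\log(|\zeta_0|+1))^{2/3}\bigr)$. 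This is exactly the assertion of the proposition.

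I expect no conceptual obstacle here; the difficulty is entirely bookkeeping, namely checking that each estimate used in the Schr\"odinger proof has a faithful $\SU(1,1)$ counterpart over the arc $\CI$. Concretely, the only inputs not already transcribed above are the uniform LDT and the H\"older continuity of $L(z)$ on $\CI$, both of which rest on $G_{\nu_z}$ being noncompact, strongly irreducible, and contracting for $z \in \CI \subset \pD \setminus \CD$, as established in Section~\ref{s:SzegoLE}; granting these, the argument is a cosmetic substitution, with the square-root branch in the definition of $M^z$ the only genuinely new wrinkle, dispatched by passing to the Szeg\H{o} matrices $S^z$ as noted.
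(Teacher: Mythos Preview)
Your proposal is correct and follows exactly the approach the paper takes: the paper simply states that the proof is nearly identical to that of Proposition~\ref{prop:meanLDTshifted_sigma}, replacing $E \in \hat\Sigma$ by $z \in \CI$ with small cosmetic modifications. You have supplied more detail than the paper does, including the observation about handling the square-root branch via the Szeg\H{o} matrices $S^z$, but the underlying argument is the same transcription.
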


Now we consider the results in Section~\ref{sec:green} for finite-volume truncations of CMV matrices. We will exploit the perspective on CMV Green functions developed in \cite{Krueg2013IMRN}. Here it will helpful to use the following factorization of $\CE_\omega$. Writing
\[
\Theta(\alpha)
=
\begin{bmatrix} \overline\alpha & \sqrt{1-|\alpha|^2} \\ \sqrt{1-|\alpha|^2} & - \alpha
\end{bmatrix},
\quad
\CL_\omega
=
\bigoplus_{j\in\Z}\Theta(\alpha_{2j}(\omega)),
\quad
\CM_\omega
=
\bigoplus_{j\in\Z} \Theta(\alpha_{2j+1}(\omega)),
\]
where $\Theta(\alpha_n(\omega))$ acts on coordinates $n$ and $n+1$, one can confirm that $\CL_\omega$ and $\CM_\omega$ are unitary and that $\CE_\omega = \CL_\omega\CM_\omega$. Of course, $\CE_\omega \psi = z\psi$ if and only if $(z\CL_\omega^* - \CM_\omega)\psi = 0$. Given $\tau_1,\tau_2 \in \overline{\D}$ and an interval $\Lambda = [a,b] \subseteq \Z$, we define $\CE_{\omega}^{\tau_1,\tau_2}$ to be the CMV matrix whose Verblunsky coefficients coincide with those of $\CE_\omega$, except $\alpha_{a-1} = \tau_1$ and $\alpha_b = \tau_2$. We then define
\[
\CE_{\omega,\Lambda}^{\tau_1,\tau_2}
=
P_\Lambda \CE_{\omega}^{\tau_1,\tau_2} P_\Lambda^*.
\]
One can verify that $\CE_{\omega,\Lambda}^{\tau_1,\tau_2}$ is unitary whenever $\tau_1,\tau_2 \in \pD$. Following the convention of \cite{Krueg2013IMRN}, we use $\bullet$ to indicate that the corresponding Verblunsky coefficient is unaltered, e.g.
\[
\CE_\omega^{\bullet,\tau_2} = \CE_{\omega}^{\alpha_{a-1},\tau_2},
\quad
\CE_{\omega}^{\tau_1,\bullet} = \CE_{\omega}^{\tau_1,\alpha_b}.
\]
As before, we abbreviate $\CE_{\omega,N}^{\tau_1,\tau_2} = \CE_{\omega,[0,N)}^{\tau_1,\tau_2}$. The truncations $\CL_{\omega,\Lambda}^{\tau_1,\tau_2}$ and $\CM_{\omega,\Lambda}^{\tau_1,\tau_2}$ with $\tau_j \in \overline{\D} \cup \{\bullet\}$ are defined similarly. Then, we define the polynomials
\[
\varphi_{\omega,\Lambda}^{\tau_1,\tau_2}(z)
=
\det\left[z - \CE_{\omega,\Lambda}^{\tau_1,\tau_2}\right],
\quad
z \in \C, \; \tau_j \in \overline{\D}\cup\{\bullet\}.
\]
The associated finite-volume Green functions are defined by
\[
G_{\omega,\Lambda}^{\tau_1,\tau_2}(z)
=
\left(z\left[\CL_{\omega,\Lambda}^{\tau_1,\tau_2}\right]^* - \CM_{\omega,\Lambda}^{\tau_1,\tau_2} \right)^{-1}
\]
and
\[
G_{\omega,\Lambda}^{\tau_1,\tau_2}(j,k;z)
=
\langle \delta_j, G_{\omega,\Lambda}^{\tau_1,\tau_2}(z)
 \delta_k \rangle,
\quad
j,k \in \Lambda.
\]
By \cite[Proposition~3.8]{Krueg2013IMRN}, for $\tau_j \in \pD$, these objects are related via
\[
\left| G_{\omega,\Lambda}^{\tau_1,\tau_2}(j,k;z) \right|
=
\frac{1}{\rho_j\rho_k}
\left| \frac{\varphi_{\omega,[a,j-1]}^{\tau_1,\bullet} (z) \varphi_{\omega,[k+1,b]}^{\bullet,\tau_2}(z) }{\vspace{.5cm}\varphi_{\omega,[a,b]}^{\tau_1,\tau_2}(z)} \right|,
\quad
 a \leq j \leq k \leq b,
\]
which furnishes the CMV analog of \eqref{eq:greenstodet}. To connect Green functions and transfer matrices \`a la \eqref{eq:transfertodet}, we use \cite[Corollary~3.11 and Lemma~3.12]{Krueg2013IMRN}, which gives
\[
\left| \varphi_{\omega,[a,j-1]}^{\tau_1,\bullet} (z) \right|
\leq
\sqrt{2} \|S^z_j(T^a \omega)\|,
\quad
\left| \varphi_{\omega,[k+1,b]}^{\bullet,\tau_2}(z) \right|
\leq
\sqrt{2}\|S^z_{b-k}(T^{k+1}\omega)\|.
\]
as well as
\begin{equation} \label{eq:szegoTM2Dets}
\varphi_{\omega,[a,b]}^{\tau_1,\tau_2}(z)
=
\left\langle
\begin{bmatrix} 1 \\ -\overline{\tau_2} \end{bmatrix}, \;
S^z_{b-a}(T^a\omega)
\begin{bmatrix} 1 \\ \overline{\tau_1} \end{bmatrix}
\right\rangle
\end{equation}
Consequently, the minor twist here is the following: when we want to deduce good Green function estimates from largeness of the transfer matrices, we wiggle the boundary condition instead of the interval. That is, if $\|S^z_n(\omega)\|$ is large, then \eqref{eq:szegoTM2Dets} implies that $\varphi^{\tau_1,\tau_2}_{\omega,[0,n)}(z)$ is large for at least one choice of $\tau_j \in \{\pm 1\}$.

\begin{prop} \label{prop:LEandGreenEst:CMV}
For any $0 < \e < 1$ and $\omega \in \Omega_+(\e)$, there exists $\widetilde n_1 = \widetilde n_1(\omega,\e)$ large enough so that the following statements hold true.
\begin{equation} \label{eq:normupbound}
\frac{1}{N} \log\|M_N^z(T^{\zeta}\omega) \|
\leq
L(z) + 2\e
\end{equation}
for all $\zeta\in\Z$, $n \ge \max\{\widetilde n_1, \log^2(|\zeta|+1)\}$, $z\in\CI$. Moreover,  the following holds with $C_0$ dependent only on $\widetilde \mu$:
\begin{equation} \label{eq:greenfnest1:CMV}
\left| G^{\tau_1,\tau_2}_{T^{\zeta}\omega,N}(j,k;z) \right|
\leq
C_0 \frac{e^{(N-|j-k|)L(z) + C_0 \e N}}{\left| \varphi_{T^\zeta\omega,[0,N)}^{\tau_1,\tau_2}(z) \right|}
\end{equation}
for all $j,k \in [0,n)$, $\zeta\in \Z$, $n\ge \frac{1}{\e}\max\{\widetilde n_1, 2\log^2(|\zeta|+1)\}$, $\tau_j \in \{\pm 1\}$, $z \in \CI \setminus \sigma(\CE_{T^{\zeta}\omega , N}^{\tau_1,\tau_2})$.
\end{prop}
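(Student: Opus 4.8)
The plan is to follow the proof of Corollary~\ref{cor:LEandGreenEst_general} essentially verbatim, replacing $E \in \hat\Sigma$ by $z$ in the fixed compact arc $\CI \subset \pD \setminus \CD$, and replacing the interval-truncation identities \eqref{eq:greenstodet}--\eqref{eq:transfertodet} by their CMV counterparts from \cite{Krueg2013IMRN}. First I would record the two uniform constants the argument needs: $\Gamma := \sup\{\|M^z(\alpha)\| : z \in \CI, \alpha \in \CA\}$, which is finite by compactness of $\CI$ and $\CA$, and $\rho_{\min} := \min_{\alpha \in \CA}\sqrt{1-|\alpha|^2} > 0$, which is positive because $\CA$ is a compact subset of $\D$. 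I would also note that on $\CI$ the Lyapunov exponent is uniformly positive, continuous and H\"older continuous and satisfies the uniform LDT, so that Proposition~\ref{prop:CMVmeanLDTshifted} is in force.

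For \eqref{eq:normupbound} I would argue exactly as for \eqref{eq:FiniteLyapUpperEstimate_general}: given $n \ge \max(\widetilde n_1, \log^2(|\zeta|+1))$, set $m = \lceil n^{1/3}\rceil$, so $0 \le m^3 - n \le 3m^2$; submultiplicativity gives $\|M_n^z(T^\zeta\omega)\| \le \Gamma^{3m^2}\prod_{s=0}^{m^2-1}\|M_m^z(T^{\zeta+sm}\omega)\|$, and choosing $\widetilde n_1 = \widetilde n_1(\omega,\e)$ large enough that $m \ge \widetilde n_0(\omega,\e)$, $m \ge \log^{2/3}(|\zeta|+1)$ and $12\log\Gamma/(m-3) < \e$, Proposition~\ref{prop:CMVmeanLDTshifted} yields $\tfrac1n\log\|M_n^z(T^\zeta\omega)\| \le \tfrac{3m^2\log\Gamma}{n} + \tfrac{m^3}{n}(L(z)+\e) \le L(z)+2\e$. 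Next, mimicking \eqref{eq:TransferMatrixFromGreenF1}, I would upgrade this to a bound valid at all intermediate scales: for $0 \le j \le n$ and $h := \lceil \e n\rceil$, unimodularity of the $\SU(1,1)$ transfer matrices (so $\|[M_h^z]^{-1}\| = \|M_h^z\|$) gives $\|M_j^z(T^\zeta\omega)\| \le \|M_{j+h}^z(T^{\zeta-h}\omega)\|\cdot\|M_h^z(T^{\zeta-h}\omega)\|$, and when $n \ge \e^{-1}\max(\widetilde n_1, 2\log^2(|\zeta|+1))$ a direct computation (identical to the Schr\"odinger case) checks that the largeness and the $\log^2(|\zeta-h|+1)$-conditions needed to apply the previous paragraph to both factors are met, giving $\|M_j^z(T^\zeta\omega)\| \le e^{jL(z)+C_0\e n}$, and likewise $\|M_{n-k}^z(T^{\zeta+k}\omega)\| \le e^{(n-k)L(z)+C_0\e n}$, with $C_0$ depending only on $\widetilde\mu$.

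For the Green function bound \eqref{eq:greenfnest1:CMV} I would combine the identity from \cite[Proposition~3.8]{Krueg2013IMRN},
\[
\big|G^{\tau_1,\tau_2}_{T^\zeta\omega,n}(j,k;z)\big| = \frac{1}{\rho_j\rho_k}\left|\frac{\varphi^{\tau_1,\bullet}_{T^\zeta\omega,[0,j-1]}(z)\,\varphi^{\bullet,\tau_2}_{T^\zeta\omega,[k+1,n-1]}(z)}{\varphi^{\tau_1,\tau_2}_{T^\zeta\omega,[0,n-1]}(z)}\right|, \qquad 0 \le j \le k < n,
\]
with the bounds $|\varphi^{\tau_1,\bullet}_{T^\zeta\omega,[0,j-1]}(z)| \le \sqrt2\,\|M_j^z(T^\zeta\omega)\|$ and $|\varphi^{\bullet,\tau_2}_{T^\zeta\omega,[k+1,n-1]}(z)| \le \sqrt2\,\|M_{n-1-k}^z(T^{\zeta+k+1}\omega)\|$ from \cite[Corollary~3.11, Lemma~3.12]{Krueg2013IMRN} (using $\|S^z\|=\|M^z\|$), then absorb $\rho_j^{-1},\rho_k^{-1}\le\rho_{\min}^{-1}$ and the stray factor $\Gamma$ relating $M_{n-1-k}$ and $M_{n-k}$ into $C_0$, and insert the scalewise estimates from the previous step. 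This gives $\big|G^{\tau_1,\tau_2}_{T^\zeta\omega,n}(j,k;z)\big| \le C_0 \exp[(n-(k-j))L(z) + C_0\e n]/|\varphi^{\tau_1,\tau_2}_{T^\zeta\omega,[0,n)}(z)|$ for $j\le k$ after relabeling $C_0$; the case $k \le j$ is handled by the analogous formula from \cite{Krueg2013IMRN} with $j$ and $k$ interchanged (here self-adjointness is unavailable since $\CE_\omega$ is only unitary and $z$ complex, so one genuinely needs the symmetric form of the identity). The conceptually new ingredient — already flagged before the proposition — is that one perturbs the boundary condition $(\tau_1,\tau_2)$ rather than truncating an interval, so the Szeg\H{o} transfer matrices enter through \eqref{eq:szegoTM2Dets} and the bounds of \cite[Corollary~3.11, Lemma~3.12]{Krueg2013IMRN} instead of through a determinant identity like \eqref{eq:transfertodet}. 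The main nuisance I anticipate is the index bookkeeping in the intermediate-scale step (verifying that $h$ and $j+h$ dominate $\log^2(|\zeta-h|+1)$ under the stated hypotheses, exactly as in Corollary~\ref{cor:LEandGreenEst_general}) together with the asymmetry between $\varphi^{\tau_1,\bullet}$ and $\varphi^{\bullet,\tau_2}$; none of this is hard once the machinery of \cite{Krueg2013IMRN} is available.
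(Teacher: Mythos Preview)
Your proposal is correct and follows essentially the same approach as the paper, which simply states that the proof is ``entirely analogous to that of Corollary~\ref{cor:LEandGreenEst_general}'' and that the extra constant $C_0$ in front comes from the factor $\rho_j^{-1}\rho_k^{-1}$. Your additional remark that the case $k \le j$ must be handled via the symmetric Green-function identity from \cite{Krueg2013IMRN} rather than by self-adjointness is a correct and useful observation that the paper leaves implicit.
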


The proof of Proposition~\ref{prop:LEandGreenEst:CMV} is entirely analogous to that of Corollary~\ref{cor:LEandGreenEst_general}. The $C_0$ in front comes from the factor $\rho_j^{-1}\rho_k^{-1}$ and hence only depends on the support of $\widetilde \mu$.

\subsection{Proof of Anderson Localization}\label{s:ProofALCMV}

The statement and proof of the CMV analog of Proposition~\ref{prop:doubleRes} (elimination of double resonances) is almost identical to the Schr\"odinger operator setting, with the twist that we need to allow for four different boundary conditions. Concretely, one defines $\CD_N(\varepsilon)$ to denote those $\omega$ such that
\[
|F_m(T^{\zeta+r}\omega,z)| \leq L(z) - \varepsilon
\]
and
\[
\| G_{T^\zeta\omega,[-N_1,N_2]}^{\tau_1,\tau_2}(z)\|
\geq
e^{K^2}
\]
for some choice of $m$, $\zeta$, $r$, $K$, $N_j$ as before, some $z \in \CI$, and some choice of $\tau_j \in \{\pm 1\}$.

Once we eliminate double resonances, we are ready to prove our Theorem~\ref{t:ALCMV2}. We need an appropriate version of Schnol's theorem to guarantee that spectrally almost every $z \in \pD$ is a generalized eigenvalue of $\CE$, which is supplied by \cite{DFLY2}. So, as before, we may work with $\omega$ in a full-measure set and $\xi$ a linearly bounded generalized eigenfunction of $\CE_\omega$, normalized by $\xi_0 = 1$. We note the following difference in the proofs of Claims~1 and 2 in this setting.

 In the proof of Claim 1, the appropriate CMV analog of \eqref{eq:largenorm_centered} is the same as in the Schr\"odinger case. Then, by \eqref{eq:szegoTM2Dets}, one can choose $\tau_1,\tau_2 \in \{\pm 1\}$ so that
\[
\left| \varphi_{T^\zeta\omega,\Lambda_i}^{\tau_1,\tau_2}(z) \right|
\geq
e^{K^3(L(z) - 2\varepsilon)}.
\]
Combining this with \eqref{eq:greenfnest1:CMV}, we obtain the CMV version of \eqref{eq:main:ClaimGreenDecay}, i.e.
\begin{equation}\label{eq:main:ClaimGreenDecayCMV}
\big|G_{T^\zeta \omega,\Lambda_i}^{\tau_1,\tau_2}(j,k;z)\big|
\leq
\exp\big(-|j-k|L(z) + C\e K^{3}\big),
\end{equation}
for any $j,k \in\Lambda_i$ and this same choice of $\tau_i$, which is enough for our purpose.

Finally, in the proof of Claim~2, we need the CMV version of \eqref{eq:deveGreen}, which is supplied by \cite[Lemma~3.9]{Krueg2013IMRN}. Concretely, if $u$ is a solution of the difference equation $\CE u = zu$, define
\[
\widetilde\psi(a)
=
\begin{cases}
\left(z\overline{\tau_1}-\alpha_a\right)u(a) - \rho_a u(a+1)
& a \text{ is even,} \\
\left(z\alpha_a - \tau_1\right)u(a) + z\rho_a u(a+1)
& a \text{ is odd,}
\end{cases}
\]
and
\[
\widetilde\psi(b)
=
\begin{cases}
\left(z\overline{\tau_2}-\alpha_b\right)u(b) - \rho_b u(b-1)
& b \text{ is even,} \\
\left(z\alpha_b - \tau_2\right)u(b) + z\rho_{b-1} u (b-1)
& b \text{ is odd.}
\end{cases}
\]
Then, we have
\begin{equation}\label{eq:CMV:deveGreen}
u(n)
=
G_{[a,b]}^{\tau_1,\tau_2}(n,a;z)\widetilde\psi(a)
+ G_{[a,b]}^{\tau_1,\tau_2}(n,b;z)\widetilde\psi(b)
\end{equation}
for $a < n < b$, and of course, $\widetilde\psi$ is linearly bounded whenever $u$ is. Then, combining \eqref{eq:CMV:deveGreen} with \eqref{eq:main:ClaimGreenDecayCMV} and following the proof of Claim~2, we obtain for all $k$ near the center of $\Lambda_i$,
\begin{equation} \label{eq:smallelement}
|\xi_k|\le Ce^{-2K^2}.
\end{equation}
Now, writing $\Lambda_i = [a_i,b_i]$, take $a = \lfloor (a_1+b_1)/2\rfloor$, $b = \lfloor (a_2+b_2)/2\rfloor$, and consider $G_{T^\zeta,[a,b]}$. By the foregoing arguments, $\xi_\ell$ satisfy \eqref{eq:smallelement} for $\ell$ near or at $a,b$, and $\xi$ is normalized so that $\xi_0=1$, we obtain
$$
\|G^z_{T^\zeta\omega, [a,b]}\|\ge |G^z_{T^\zeta,[a,b]}(0,\ell)|\ge ce^{K^2}
$$
for some $\ell$ near or at $a$ or $b$.

Then, the remainder of the proof of Theorem~\ref{t:ALCMV2} is almost identical to the corresponding arguments for the Schr\"odinger case. In particular, we get that
\[
\lim_{|n| \to\infty} \frac{1}{|n|} \log\|M^z_n(\omega)\| = L(z) > 0.
\]
To relate this back to quantitative exponential decay estimates for the generalized eigenfunctions of $\CE_\omega$, we need to use the Gesztesy--Zinchenko (GZ) transfer matrices \cite{GZ2006}, not the Szeg\H{o} transfer matrices. However, this is not a big deal, because there is a simple connection between these matrices \cite{DFLY2}. Thus, we conclude that the generalized eigenfunctions of $\CE_\omega$ are exponentially decaying at $\pm\infty$ (at the rate $L(z)$).

Finally, using the CMV tools, we can make a second pass through the argument and prove Theorem~\ref{t:CMVSULE} (SULE), which in turn implies Theorem~\ref{t:DLCMV} (Dynamical Localization).

\begin{remark}\label{r:CMVhalfline}
For the proof of the half-line version of Theorem~\ref{t:ALCMV2} and Theorem~\ref{t:DLCMV}, we note that $\CC_{[a,b]} = \CE_{[a,b]}$ whenever $1 \le a \le b$; when $a = 0$, we have $\CC_{[0,b]} = \CE_{[0,b]}$ with the modification $\alpha_{-1} = -1$. Moreover, the Szeg\H{o} transfer matrices $M^z_n(\omega)$ remain the same as long as $n\ge 0$. We can then obtain the half-line analogs of the results in Sections~\ref{s:SzegoLE} and \ref{s:GreenEstCMV} simply by following the arguments in those sections and suitably restricting the domains of $n$ and $[a,b]$.

The main difference is in the statement of elimination of double resonances. Here, we need to change $G^z_{T^\zeta\omega, [-N_1,N_2]}$ to $G^z_{T^\zeta\omega, [0,N_2]}$ in one of the conditions. Note that $G^z_{[0,N)}$ now refers to the Green function for $z - \CC_{[0,N)}$. After that, the remainder of the proof follows the same argument as before.
\end{remark}

\section*{Acknowledgments}

The main idea of the new proof of the LDT in Section~\ref{sec:LDT} was communicated to D.D.\ and Z.Z.\ by Artur Avila while they were visiting IMPA, Rio de Janeiro. They would like to thank Artur Avila for sharing his idea and IMPA for the hospitality. We are grateful to Anton Gorodetski and G\"unter Stolz for useful input.

\end{document}